\documentclass[12pt]{article}

\usepackage{ae,aecompl}

\usepackage{amsmath}
\usepackage{amssymb} 
\usepackage{amsthm} 
\usepackage{bm} 
\usepackage{commath} 
\usepackage{geometry}
\usepackage{graphicx} 
\usepackage{mathrsfs} 
\usepackage{mleftright} 
\mleftright 
\usepackage[authoryear]{natbib}
\usepackage{paralist} 
\usepackage{setspace}
\usepackage{verbatim} 

\usepackage[backref=page, hyperindex=true]{hyperref}
\hypersetup{%
    breaklinks=true,
    colorlinks=true,
    linkcolor=magenta,
    urlcolor=blue,
    citecolor=blue,
    pdfstartview={FitH},
    unicode=true
 }
\usepackage{bookmark} 

\let \backreforig \backref
\renewcommand*{\backref}[1]{[\backreforig{#1}]}

\theoremstyle{plain}
\newtheorem{assumption}{Assumption}
\numberwithin{assumption}{section}
\theoremstyle{plain}
\newtheorem{lem}{Lemma}
\numberwithin{lem}{section}
\theoremstyle{plain}
\newtheorem{thm}{Theorem}
\numberwithin{thm}{section}
\theoremstyle{definition}
\newtheorem{rem}{Remark}
\numberwithin{rem}{section}

\def\E{\mathrm{E}}

\def\bbG{\mathbb{G}}

\def\bbP{\mathbb{P}}
\def\N{\mathbb{N}}
\def\R{\mathbb{R}}
\def\var{\mathrm{var}}

\def\bzero{\boldsymbol{0}}
\def\bone{\boldsymbol{1}}
\def\be{\boldsymbol{e}}
\def\bg{\boldsymbol{g}}
\def\bG{\boldsymbol{G}}
\def\bh{\boldsymbol{h}}
\def\bfH{\mathbf{H}}
\def\bfJ{\mathbf{J}}
\def\bfL{\mathbf{L}}
\def\bfR{\mathbf{R}}
\def\bu{\boldsymbol{u}}

\def\bfV{\mathbf{V}}
\def\bw{\boldsymbol{w}}
\def\bW{\boldsymbol{W}}
\def\bx{\boldsymbol{x}}
\def\bX{\boldsymbol{X}}
\def\tbX{\widetilde{\boldsymbol{X}}}
\def\by{\boldsymbol{y}}
\def\bY{\boldsymbol{Y}}
\def\tbY{\widetilde{\boldsymbol{Y}}}
\def\bz{\boldsymbol{z}}
\def\bZ{\boldsymbol{Z}}
\def\tbZ{\widetilde{\boldsymbol{Z}}}

\def\cA{\mathcal{A}}
\def\cB{\mathcal{B}}
\def\cC{\mathcal{C}}
\def\cE{\mathcal{E}}
\def\cF{\mathcal{F}}
\def\cG{\mathcal{G}}
\def\cH{\mathcal{H}}
\def\cN{\mathcal{N}}
\def\cR{\mathcal{R}}
\def\cW{\mathcal{W}}
\def\cZ{\mathcal{Z}}

\def\bdelta{\boldsymbol{\delta}}
\def\bepsilon{\boldsymbol{\varepsilon}}
\def\bpsi{\boldsymbol{\psi}}
\def\btheta{\boldsymbol{\theta}}
\def\tbtheta{\widetilde{\boldsymbol{\theta}}}
\def\bvtheta{\boldsymbol{\vartheta}}
\def\bGamma{\boldsymbol{\Gamma}}
\def\bSigma{\boldsymbol{\Sigma}}

\def\odelta{\overline{\delta}}

\newcommand*{\argmin}{\operatornamewithlimits{argmin}\limits}

\usepackage[affil-it]{authblk} 

\begin{document}

\title{Asymptotic Variance Theory for Trimmed Least Squares and Trimmed Least
Absolute Deviations in Censored Panel Models with Fixed Effects\footnote{We
thank Matias Cattaneo, Jin Hahn, Michael Jansson, and Whitney Newey for helpful
comments and suggestions.}}

\author[1]{Denis Chetverikov}
\author[2,4]{Jesper R.-V.~S{\o}rensen}
\author[3,4]{Bo Honor{\'e}}
\affil[1]{\small{University of California, Los Angeles}}
\affil[2]{\small{University of Copenhagen}}
\affil[3]{\small{Princeton University}}
\affil[4]{\small{Aarhus Center for Econometrics (ACE)}}

\date{\today}

\maketitle

\begin{abstract}
We study inference using trimmed least squares (TLS) and trimmed least
absolute deviations (TLAD) estimators of \citet{honore_trimmed_1992} in censored
two-period panel-data models with fixed effects. We show that the published
asymptotic variance formulas rely on additional regularity conditions that are
not fully stated in the original analysis. For TLS, the published Hessian formula requires that the regressor-difference index vanish only when the regressor difference itself is zero, a restriction not explicitly stated in the original paper and violated, for instance, with a zero parameter vector.
We derive the correct Hessian, establish asymptotic normality without imposing this restriction, and obtain a consistent plug-in variance estimator. We also show that the Hessian estimator proposed in \citet{honore_trimmed_1992} {\em is} actually consistent for the {\em correct} TLS asymptotic
variance. For TLAD, we show that the published variance formula omits a
conditional-probability term and that asymptotic normality requires additional
continuity conditions. Under these conditions, we derive the corrected
asymptotic variance and provide a tuning-parameter-free bootstrap variance
estimator.

\end{abstract}

\section{Introduction}

Inference for censored panel-data models with fixed effects often relies on
trimmed estimators whose objective functions are convex but nonsmooth.
\citet{honore_trimmed_1992} introduced two influential examples, the trimmed
least squares (TLS) and trimmed least absolute deviations (TLAD) estimators, and
derived their asymptotic distributions for a two-period censored regression
model with fixed effects. These estimators remain natural benchmark procedures
in this setting, both because they avoid parametric restrictions on the fixed
effects and because they provide tractable ways to exploit the structure of
two-period panel data under censoring. Valid inference for such estimators,
however, depends delicately on the form of the Hessian of the expected loss, and
the nonsmoothness created by trimming and censoring makes that Hessian less
straightforward than the published formulas suggest.

This paper revisits the asymptotic variance theory for the TLS and TLAD
estimators of \citet{honore_trimmed_1992}. We show that the published asymptotic
variance formulas rely on additional regularity conditions that are not fully
reflected in the original theorem statements. For TLS, the problem is that
censoring creates boundary points at which the score of the trimmed square loss
is nondifferentiable with positive probability. At such points, differentiation
under expectation need not be valid, so the Hessian of the expected loss can
differ from the expression obtained by differentiating inside the expectation.
For TLAD, the problem is that the expected loss can fail to be twice
differentiable when the conditional distribution of the model errors is not
sufficiently smooth. Our results identify how these issues enter the asymptotic
variance analysis for TLS and TLAD, derive the corrected variance formulas, and
clarify the scope of valid legacy inference.

Our starting point is the two-period censored regression model in which the
outcome variable $Y_\tau$ in period $\tau\in\{1,2\}$ is generated as
\begin{equation}\label{eq:Outcomes}
Y_\tau=\max\{0,Y_\tau^\ast\},\qquad
Y_\tau^\ast=\alpha+\bX_\tau^\top\btheta_0+\varepsilon_\tau,
\end{equation}
where $\bX_\tau\in\R^K$ is a vector of time-varying regressors, $\btheta_0\in
\R^K$ is the parameter of interest, $\alpha$ is an unobserved fixed effect, and
$\varepsilon_\tau$ is an unobserved error term. Let
$\{(Y_{i1},\bX_{i1},Y_{i2},\bX_{i2})\}_{i=1}^n$ be a random sample from the
distribution of $(Y_1,\bX_1,Y_2,\bX_2)$. \citet{honore_trimmed_1992} proposed
M-estimators of $\btheta_0$ based on a trimmed loss function
$m_\Xi:\R\times[0,\infty)\times[0,\infty)\to\R$ defined by
\begin{equation}\label{eq:TrimmedGenericLoss}
  m_{\Xi}(t,\by):=\begin{cases}
  \Xi(y_{1})-\left(y_{2}+t\right)\xi(y_{1}), &t\leqslant-y_2,\\
  \Xi(y_{1}-y_{2}-t), &t\in\left(-y_{2},y_{1}\right),\\
  \Xi(-y_{2})-\left(t-y_{1}\right)\xi(-y_{2}), &t\geqslant y_{1},
  \end{cases}
\end{equation}
where $\Xi(\cdot)$ represents either the absolute loss $\left|\cdot\right|$ or
the (one-half) square loss
$\textstyle\frac{1}{2}(\cdot)^2$,\footnote{\citet[Assumption
L1]{honore_pairwise_1994} provide a list of conditions allowing other choices of
$\Xi$.} and $\xi(\cdot)$ is its derivative.\footnote{In the case of the absolute
loss $\Xi(\cdot)=\left|\cdot\right|$, we set $\xi(0) = 0$.} These choices yield
the TLAD and TLS estimators, respectively, defined for the corresponding
$\Xi(\cdot)$ as any solution
\begin{align}\label{eq:TrimmedEstimator}
  \widehat{\btheta}_{\Xi} &\in \argmin_{\btheta\in\R^K}\cbr[3]{\frac{1}{n}\sum_{i=1}^n m_{\Xi}\big(\Delta\bX_i^\top\btheta,\bY_i\big)},
\end{align}
where we introduced the shorthands $\Delta\bX_i := \bX_{i1} - \bX_{i2}$ and
$\bY_i:=(Y_{i1},Y_{i2})$. Under conditional exchangeability of
$(\varepsilon_1,\varepsilon_2)$ given $(\bX_1,\bX_2,\alpha)$ and other
regularity conditions, \citet{honore_trimmed_1992} established consistency and
asymptotic normality of both estimators.

Our first set of results concerns the TLS estimator. We show that the Hessian
formula appearing in \citet{honore_trimmed_1992} is valid only if
\begin{equation}\label{eq: Bo condition}
\bbP\del[1]{\Delta\bX^\top\btheta_0 = 0\text{ and }\Delta\bX\neq\bzero} = 0,
\end{equation}
a condition not explicitly stated in the original paper. This condition is
satisfied, for example, when the index $\Delta\bX^\top\btheta_0$ has a
continuous distribution, but it fails when all components of
$\btheta_0$ are equal to zero unless $\Delta\bX = \bzero$ almost surely. When condition \eqref{eq: Bo condition} is not satisfied, the TLS score is nondifferentiable with positive probability, and the published Hessian formula need not equal the derivative of the expected score. We derive
the correct Hessian, provide an explicit representation of it, establish
asymptotic normality without excluding mass at zero, and obtain a consistent
plug-in estimator of the corresponding asymptotic variance.

The TLS analysis also yields a second result. Although the population Hessian
formula in \citet{honore_trimmed_1992} is not generally correct, the Hessian
estimator proposed there remains consistent for the correct TLS asymptotic
variance under the maintained assumptions. Thus, the paper does not only
identify where the published population formula fails; it also clarifies which
parts of the original inference procedure remain usable.

Our second set of results concerns the TLAD estimator. We show, first, that the
published asymptotic variance formula omits a conditional-probability term. More
importantly, we show that asymptotic normality requires additional continuity
conditions to ensure existence of the Hessian of the expected loss. We provide a
counterexample satisfying the conditions stated in \citet{honore_trimmed_1992}
for which that Hessian does not exist. Under additional continuity conditions,
we derive the corrected TLAD Hessian and corresponding asymptotic variance
formula. We then provide a tuning-parameter-free bootstrap variance estimator
for TLAD.

Taken together, these results clarify the role of boundary nondifferentiability
in the asymptotic variance theory for censored trimmed estimators. More
specifically, they show when legacy inference remains valid, when corrected
variance formulas are required, and how to implement inference under the
corrected theory.

As a by-product, our TLS analysis also yields a plug-in perspective on the
asymptotic variance of the cross-sectional trimmed least squares estimator
studied by \citet{honore_pairwise_1994}, thus avoiding the choice of tuning
parameters required by the numerical-differentiation-based variance estimator
proposed there.

The remainder of the paper is organized as follows. Section~\ref{sec:TLS}
revisits the TLS estimator, beginning with the asymptotic normality result in
\citet{honore_trimmed_1992}, then identifying the source of the problem,
deriving the correct Hessian, establishing asymptotic normality under the
corrected theory, and discussing variance estimation, including the status of
the legacy Hessian estimator. Section~\ref{sec:TLAD} carries out the analogous
analysis for TLAD, derives the corrected Hessian and asymptotic variance
formula, and develops bootstrap-based variance estimation. Proofs for the TLS results are in the appendix, while proofs for the TLAD results are in the supplemental appendix, which also contains technical lemmas establishing a generalized dominated convergence theorem, the existence of conditional PDFs,
and related measurability issues.

\paragraph{Notation.} 
For a positive integer $k$, we write $\left[k\right]:=\{1,\ldots,k\}$. We use
$\left\|\cdot\right\|_2$ to denote the Euclidean norm. For any (suitably
differentiable) real-valued function $f$, whose first argument is a scalar, we
let $\dot{f}_1$ and $\ddot{f}_{11}$ denote the first- and second-order partial
derivatives of $f$ with respect to its first argument, respectively. For a
possibly vector-valued differentiable function $\bg$, we let $\nabla \bg$
denote the Jacobian of $\bg$. If $\bg$ is itself the gradient mapping
corresponding to some function $f$, then we write $\nabla^2 f$ for the Hessian
of $f$. The indicator $\mathbf{1}\{A\}$ equals one (zero) if the logical
statement $A$ is true (false). We use comma-separated events to denote the
intersection of events, e.g.~$\mathbf 1\{A,B\} = \mathbf 1\{A\cap B\}$. Unless
otherwise stated, all limits are understood as the sample size $n$ grows
without bound, holding everything else fixed. The symbols $\rightsquigarrow$,
$\to_{\text{a.s.}}$, and $\to_{\bbP}$ denote convergences in distribution,
almost surely, and in probability, respectively. We write $o_{\bbP}(1)$,
$o_{\mathrm{a.s.}}(1)$, and $o_{L^1}(1)$ to denote sequences of random
variables that converge to zero in probability, almost surely, and in
$L^1(\bbP)$, respectively.

\section{Trimmed Least Squares}\label{sec:TLS} In this section, we focus on the
TLS estimator. We abbreviate the trimmed square loss, $m_{\Xi}$ in
\eqref{eq:TrimmedGenericLoss} with $\Xi=\textstyle{\frac{1}{2}}(\cdot)^2$, by
$m^{\texttt{tls}}$, which takes the explicit form
\begin{equation}\label{eq:TrimmedSqLoss}
m^{\texttt{tls}}(t,\by)=\frac{1}{2}\cdot\begin{cases}
y_{1}^2-2\left(y_{2}+t\right)y_{1}, &t\leqslant-y_2,\\
\left(y_{1}-y_{2}-t\right)^2, &t\in\left(-y_{2},y_{1}\right),\\
y_{2}^2+2\left(t-y_{1}\right)y_{2}, &t\geqslant y_{1}.
\end{cases}
\end{equation}
This loss is continuously differentiable in its first argument with partial derivative given by
\begin{equation}\label{eq:TrimmedSqLossDerivative}
\dot{m}^{\texttt{tls}}_1(t,\by)=\begin{cases}
  -y_{1}, &t\leqslant-y_{2},\\
  y_{2}-y_{1}+t, &t\in(-y_{2},y_{1}),\\
  y_{2}, &t\geqslant y_{1}.
  \end{cases}
\end{equation}
To facilitate our presentation below, note that it follows from
\eqref{eq:TrimmedSqLossDerivative} that $\dot{m}^{\texttt{tls}}_1(\cdot,\by)$ is
Lipschitz continuous on $\R$ with Lipschitz constant one, uniformly in
$\by\in[0,\infty)\times[0,\infty)$. However, it fails to be differentiable at
$-y_2$ and $y_1$ (unless both $y_1$ and $y_2$ are zero), implying that the
trimmed square loss is not necessarily twice differentiable. Specifically, the
points of second-order \emph{non}-differentiability of
$m^{\texttt{tls}}(\cdot,\by)$ are captured by set-valued function
$N:[0,\infty)\times[0,\infty)\rightrightarrows\R$, defined by
\begin{equation}\label{eq:TrimmedSqLossNonTwiceDiff}
N(\by):=\left\{t\in\R\middle| \ddot{m}_{11}^{\texttt{tls}}(t,\by)\text{ does not exist}\right\}=
\begin{cases}
\emptyset, &y_{1}+y_{2}=0,\\
\{-y_{2},y_{1}\}, &y_{1}+y_{2}>0.
\end{cases}
\end{equation}

\subsection{Assumptions for Trimmed Least Squares}\label{sec:Assumptions-TLS}
For notational convenience, abbreviate the TLS estimator $\widehat\btheta_\Xi$
in \eqref{eq:TrimmedEstimator} with
$\Xi=\textstyle{\frac{1}{2}}\left(\cdot\right)^2$ by
$\widehat{\btheta}^{\texttt{tls}}$. Also, denote $\bW:=(\bX_1,\bX_2,\alpha)$ and
let $\cW\subseteq\R^{K}\times \mathbb R^K\times\mathbb R$ be the support of
$\bW$, with $\bw := (\bx_1,\bx_2,a)$ standing for a generic element of $\cW$. In
addition, denote $\bepsilon:=(\varepsilon_1,\varepsilon_2)$ and let $\mathcal E
:= \mathbb R\times\mathbb R$, with $\be = (e_1,e_2)$ standing for a generic
element of $\mathcal E$. \citet{honore_trimmed_1992} derived the asymptotic
distribution of the TLS estimator under the following
assumptions.\footnote{Assumptions
\ref{assu:Non-Degeneracy-TLS}--\ref{assu:RankRegressors-TLS}
are from Assumptions S.2, M.3, E.1, E.3 and R.1
, respectively, in \citet{honore_trimmed_1992}.}
\begin{assumption}[\textbf{Non-Degeneracy}]\label{assu:Non-Degeneracy-TLS}
The probability $\bbP(Y_1>0, Y_2>0)$ is strictly positive.
\end{assumption}
\begin{assumption}[\textbf{Integrability}]\label{assu:MomentConditions-TLS}
All of the following expectations are finite:
\[
 \E[\|\bX_1\|_2^4],\;\E[\|\bX_2\|_2^4],\;\E[\alpha^2\|\Delta\bX\|_2^4],\;
 \E[\varepsilon_1^2\|\Delta\bX\|_2^4],\;\text{and}\;
 \E[\varepsilon_2^2\|\Delta\bX\|_2^4].
\]
\end{assumption} 
\begin{assumption}[\textbf{Continuity}]\label{assu:continuity tls} 
The conditional distribution of $(\varepsilon_1,\varepsilon_2)$ given 
$\bW$ is absolutely continuous with respect to the Lebesgue measure.
\end{assumption}
\begin{assumption}[\textbf{Exchangeability}]\label{assu:Exchangeability-TLS}
Conditional on $\bW$, 
$\varepsilon_1$ and $\varepsilon_2$ are exchangeable.
\end{assumption} 

\begin{assumption}[\textbf{Rank of Regressors}]\label{assu:RankRegressors-TLS}
There is no proper linear subspace of $\R^K$ containing the random variable
$\mathbf{1}\{\bbP\left(Y_1>0, Y_2>0\middle|\bX_1,\bX_2\right)>0\}\Delta\bX$ with probability one.
\end{assumption} 

Let $(\bw,\be)\mapsto f_{\bepsilon\mid\bw}(\be)$, mapping $\mathcal
W\times\mathcal E$ to $[0,\infty)$, be a version of the joint probability
density function (PDF) of the pair $\bepsilon = (\varepsilon_1,\varepsilon_2)$
conditional on $\bW = \bw$ that is measurable in $(\bw,\be)$ and is such that
$f_{\bepsilon\mid\bw}(e_1,e_2) = f_{\bepsilon\mid\bw}(e_2,e_1)$ for all
$\bw\in\mathcal W$ and $\be = (e_1,e_2)\in\mathcal E$. As we
show in Appendix \ref{sec: measurability appendix}, the function
$(\bw,\be)\mapsto f_{\bepsilon\mid\bw}(\be)$ does exist under Assumptions
\ref{assu:continuity tls} and \ref{assu:Exchangeability-TLS}. Also, let
$(\bw,\be)\mapsto F_{\bepsilon\mid\bw}(\be) =
\int_{-\infty}^{e_1}\int_{-\infty}^{e_2}f_{\bepsilon\mid\bw}(u_1,u_2)\dif u_2\dif u_1$
be the corresponding version of the joint cumulative distribution function (CDF)
of $(\varepsilon_1,\varepsilon_2)$ conditional on $\bW = \bw$, and let
$(\bw,e)\mapsto F_{\varepsilon\mid\bw}(e) = \lim_{u\to\infty}
F_{\bepsilon\mid\bw}(e,u)$ be the corresponding version of the common marginal
CDF of $\varepsilon_1$ and $\varepsilon_2$ conditional on $\bW = \bw$.


\subsection{Asymptotic Normality Result in Honor{\'e} (1992)} To state the TLS normality
result in \citet{honore_trimmed_1992}, introduce the $K\times K$ matrices
\[
  \bfV_0^{\tt{tls}}:=\E\left[\dot{m}_1^{\tt{tls}}(\Delta\bX^\top\btheta_0,\bY)^2\Delta\bX\Delta\bX^\top\right]
\]
and
\begin{equation}\label{eq:VarianceSandwichBread}
\bGamma_0^{\tt{tls}}:=\E\left[\mathbf{1}\cbr[0]{-Y_2<\Delta\bX^\top\btheta_0<Y_1}\Delta\bX\Delta\bX^\top\right].\footnote{In
\citet{honore_trimmed_1992}, these matrices are denoted $V_4$ and $\Gamma_4$,
respectively.}
\end{equation}
\citet[Theorem 2(iv)]{honore_trimmed_1992} states that if Assumptions
\ref{assu:Non-Degeneracy-TLS}--\ref{assu:RankRegressors-TLS} hold, the
expectations involved in defining the matrices $\bfV_0^{\tt{tls}}$ and
$\bGamma_0^{\tt{tls}}$ exist (in $\R^{K\times K}$), and both matrices are of full rank, then
\begin{equation}\label{eq:HonoreAsymptoticNormality-TLS}
\sqrt{n}\del[1]{\widehat\btheta^{\texttt{tls}}-\btheta_0}\rightsquigarrow\cN\left(\mathbf{0},(\bGamma_0^{\tt{tls}})^{-1}\bfV_0^{\tt{tls}}(\bGamma_0^{\tt{tls}})^{-1}\right)\;\text{in}\;\R^K,
\end{equation}
with $\cN(\boldsymbol{\mu},\boldsymbol{\Sigma})$ denoting the normal
distribution with mean $\boldsymbol{\mu}$ and covariance matrix
$\boldsymbol{\Sigma}$.

Among several steps, the proof of this result in
\citet{honore_trimmed_1992} involves:
\begin{enumerate}
  \item Arguing that the gradient of the expected loss
  $\btheta\mapsto\E[m^{\tt{tls}}(\Delta\bX^\top\btheta,\bY)]$ is equal to the
  expected value of the model score [see \eqref{eq:ExpectedScore-TLS} below].
  \item Establishing differentiability of the expected value of the model score
  at the true parameter  $\btheta_0$.
\end{enumerate}
In our notation, the second task amounts to arguing that the function
  $\bG:\R^K\to\R^K$ defined by
\begin{equation}\label{eq:ExpectedScore-TLS}
  \bG(\btheta):=\E\left[\frac{\partial}{\partial\btheta}m^{\tt{tls}}\del[0]{\Delta\bX^\top\btheta,\bY}\right]
  =\E\left[\dot{m}_1^\texttt{tls}\del[0]{\Delta\bX^\top\btheta,\bY}\Delta\bX\right],\quad\btheta\in\R^K,
\end{equation}
is differentiable at $\btheta=\btheta_0$. Its Jacobian $\nabla\bG(\btheta_0)\in\R^{K\times K}$ is then the Hessian of the expected loss.\footnote{In \citet{honore_trimmed_1992}, the function $\bG$
is denoted $G_4$.} To carry out this task, \citet{honore_trimmed_1992} invoked the LDCT to interchange the
order of differentiation and expectation, and the matrix $\bGamma_0^{\tt{tls}}$
is the result of that calculation. In the next subsection, however, we will show by example that
$\bGamma_0^{\tt{tls}}$ is not always equal to the Jacobian of $\bG$ at
$\btheta_0$.

\subsection{Counterexample}\label{sec:Counterexample} We here provide a DGP
which satisfies Assumptions
\ref{assu:Non-Degeneracy-TLS}--\ref{assu:RankRegressors-TLS}, for which the
function $\bG$ \emph{is} differentiable at $\btheta=\btheta_0$, but the
resulting Jacobian $\nabla\bG(\btheta_0)$ differs from $\bGamma_0^{\tt{tls}}$.
To this end, let $K=1$, $\alpha\equiv0$, $\btheta_0=0$, $\bX_1\equiv2$ and
$\bX_2\equiv1$, and let $\varepsilon_1$ and $\varepsilon_2$ be independent of
each other and standard normally distributed. Then the outcome variables
$Y_{\tau}=\max\{0,\varepsilon_{\tau}\},\tau\in\{1,2\}$, are independent and
identically distributed as standard normals censored from below at zero. Case by
case inspection shows that Assumptions
\ref{assu:Non-Degeneracy-TLS}--\ref{assu:RankRegressors-TLS} are satisfied and
that (the here scalar) $\bfV_0^{\tt{tls}}$ and $\bGamma_0^{\tt{tls}}$ are both
well-defined and of full rank (non-zero).

In this example DGP, the regressor difference $\Delta \bX = \bX_1 - \bX_2$ is
identically one, so the (here scalar-valued) function $\bG$ simplifies to
$G(\theta)=\E[\dot{m}^{\texttt{tls}}_1(\theta,\bY)],\theta\in\R$. It is then straightforward to check that the function $G$ is given by
\[
G(\theta) =
\begin{cases}
  \theta+\int_{0}^{-\theta}\Phi(u)\dif u, &\theta<0,\\
  0, &\theta=0,\\
  \theta-\int_{0}^{\theta}\Phi(u)\dif u, &\theta>0,
  \end{cases}
\]
where $\Phi$ is the standard normal CDF. For example, if $\theta < 0$, then
letting $\varphi$ denote the standard normal PDF,
\begin{align*}
\E[\dot{m}^{\texttt{tls}}_1(\theta,\bY)]
& = \E[-Y_1 + (Y_2 + \theta)\mathbf{1}\{Y_2 + \theta > 0\}] 
= \theta - \E[(Y_1 + \theta)\mathbf{1}\{Y_1 + \theta \leqslant 0\}] \\
& = \theta - \int_{0}^{-\theta}u\varphi(u)\dif u - \theta\Phi(-\theta)
= \theta + \int_0^{-\theta}\Phi(u)\dif u,
\end{align*}
where the first equality follows from \eqref{eq:TrimmedSqLossDerivative}, the
second from (the here) identical distributions of $Y_1$ and $Y_2$, the third
from noting that $Y_1$ is a standard normal random variable truncated from
below at zero, and the fourth from integration by parts.

The graph of $G$ is shown in Figure
\ref{fig:ExpectedLoss}.
\begin{figure}
\caption{Graph of $\theta\mapsto G(\theta)$ in the
counterexample\label{fig:ExpectedLoss}}
\centering
\includegraphics{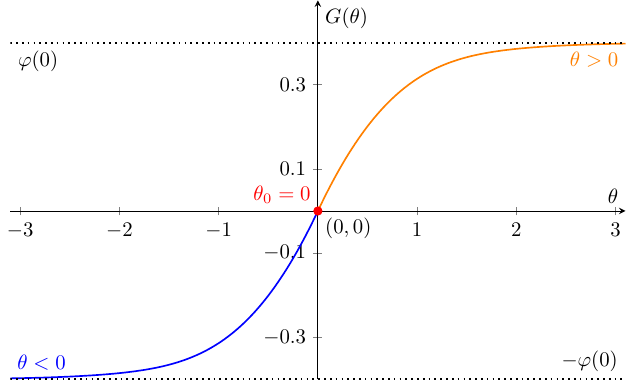}
\end{figure}
As indicated by the figure, and immediately follows analytically, $G$
\emph{is} differentiable at $\theta_0(=0)$, with derivative $\dot{G}(0)=1 -
\Phi(0)=\textstyle{\frac{1}{2}}$. However, since $Y_1$ and $Y_2$ are
independently distributed as standard normals censored from below at zero, (the
here scalar) $\bGamma_0^{\tt{tls}}$ equals
\[
\Gamma_0^{\tt{tls}}=\E[\mathbf{1}\{Y_1>0>-Y_2\}]=\bbP(Y_1>0)\cdot\bbP(Y_2>0)=\frac{1}{4},
\]
showing that $\dot{G}(0)\neq\Gamma_0^{\tt{tls}}$.

How does this discrepancy arise? When arguing differentiability of the function
$\bG(\btheta) =
\E\sbr[1]{\dot{m}_1^\texttt{tls}\del[0]{\Delta\bX^\top\btheta,\bY}\Delta\bX}$ at
$\btheta = \btheta_0$, \citet{honore_trimmed_1992} called upon the LDCT to
interchange the order of differentiation and expectation. However, as captured
by $N(\by)$ in \eqref{eq:TrimmedSqLossNonTwiceDiff}, the function
$\dot{m}^{\texttt{tls}}_1(\cdot,\by)$ fails to be differentiable at $-y_2$ and
$y_1$, unless both $y_1$ and $y_2$ are zero. In turn, in a model with censoring,
one may see exactly one outcome equal to zero with (strictly) positive
probability. Thus, whenever the inner product $\Delta\bX^\top\btheta_0$ places
mass at zero, the (random) function $t\mapsto \dot{m}_1^\texttt{tls}(t,\bY)$
fails to be differentiable at $t=\Delta\bX^{\top}\btheta_0$ with (strictly)
positive probability, invalidating the use of the LDCT.


\subsection{Extended Asymptotic Normality Result}\label{sec:DiffOfG} In this
subsection, we modify the argument in \citet{honore_trimmed_1992} to allow the
distribution of $\Delta\bX^\top\btheta_0$ to place mass at zero. We show that
Honor{\'e}'s assumptions, corresponding to our Assumptions
\ref{assu:Non-Degeneracy-TLS}--\ref{assu:RankRegressors-TLS}, imply
differentiability of $\bG$ at $\btheta=\btheta_0$. We also derive an explicit
expression for the Jacobian $\nabla\bG(\btheta_0)$, which coincides with the
Hessian of the expected loss. Our calculations further show that
$\nabla\bG(\btheta_0)$ generally differs from $\bGamma_0^{\tt{tls}}$, unless
mass at zero in the distribution of $\Delta\bX^\top\btheta_0$ is ruled out a
priori. By deriving equivalent representations of this Hessian, we obtain easily
interpretable necessary and sufficient conditions for its invertibility. We then
state an asymptotic normality result that allows $\Delta\bX^\top\btheta_0$ to
have mass at zero.
\begin{thm}[\textbf{Trimmed Least Squares Hessian
Existence}]\label{thm:JacobianExistence-TLS} Under Assumptions
\ref{assu:Non-Degeneracy-TLS}--\ref{assu:RankRegressors-TLS}, the function
$\bG:\R^K\to\R^K$ defined by \eqref{eq:ExpectedScore-TLS} is differentiable at
$\btheta=\btheta_0$ with the resulting matrix
$\bfJ^{\tt{tls}}_0:=\nabla\bG(\btheta_0)$ given by 
\begin{equation}\label{eq:Jacobian}
\bfJ^{\tt{tls}}_0 =\E\sbr[2]{\del[2]{1-F_{\varepsilon\mid\bW}\del[1]{-\alpha-\min\cbr[0]{\bX_1^\top\btheta_0,\bX_2^\top\btheta_0}}}\Delta\bX\Delta\bX^\top}.
\end{equation}
Equivalently,
\begin{align}
\bfJ^{\tt{tls}}_0 =\E\Big[\Big(
  &\mathbf{1}\{Y_1>0\}\del[1]{\mathbf{1}\{\Delta\bX^\top\btheta_0<0\}+\tfrac{1}{2}\mathbf{1}\{\Delta\bX^\top\btheta_0=0\}}\notag\\ 
  +&\mathbf{1}\{Y_2>0\}\del[1]{\mathbf{1}\{\Delta\bX^\top\btheta_0>0\}+\tfrac{1}{2}\mathbf{1}\{\Delta\bX^\top\btheta_0=0\}} \Big)\Delta\bX\Delta\bX^\top\Big].\label{eq:JacobianMidpoint}
\end{align}
\end{thm}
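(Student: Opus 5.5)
The plan is to reduce the claim to a one-dimensional, conditional-on-$\bW$ differentiability statement, and then lift it to $\bG$ by dominated convergence. Write $c_\tau:=\alpha+\bX_\tau^\top\btheta_0$, so that $Y_\tau=\max\{0,c_\tau+\varepsilon_\tau\}$ and $t_0:=\Delta\bX^\top\btheta_0=c_1-c_2$. Define
\[
g(t,\bw):=\E\sbr[1]{\dot m_1^{\tt tls}(t,\bY)\mid\bW=\bw}.
\]
This is well defined and jointly measurable using the version of $f_{\bepsilon\mid\bw}$ from Appendix \ref{sec: measurability appendix}, and it is finite since $|\dot m_1^{\tt tls}(t,\by)|\leqslant y_1+y_2+|t|$. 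By \eqref{eq:TrimmedSqLossDerivative}, $\dot m_1^{\tt tls}(\cdot,\by)$ is $1$-Lipschitz with a.e. derivative $\mathbf 1\{-y_2<s<y_1\}$. Hence, by Fubini,
\[
g(t,\bw)-g(t',\bw)=\int_{t'}^{t}p(s,\bw)\dif s,\qquad p(s,\bw):=\bbP(Y_1>s,\,Y_2>-s\mid\bW=\bw).
\]
Therefore $g(\cdot,\bw)$ is differentiable at $t_0(\bw)$ as soon as $p(\cdot,\bw)$ has equal one-sided limits at $t_0$, and the derivative is that common limit.

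The core step is to compute these one-sided limits, splitting into cases by the sign of $t_0$. For $s>0$ the event $\{Y_2>-s\}$ is sure, so $p(s,\bw)=\bbP(\varepsilon_1>s-c_1\mid\bw)$. For $s<0$ the event $\{Y_1>s\}$ is sure, so $p(s,\bw)=\bbP(\varepsilon_2>-s-c_2\mid\bw)$. By Assumption \ref{assu:continuity tls} the conditional marginals are continuous, and by Assumption \ref{assu:Exchangeability-TLS} they share the CDF $F_{\varepsilon\mid\bw}$.
\begin{itemize}
\item If $t_0>0$, then $p$ is continuous at $t_0$ with value $1-F_{\varepsilon\mid\bw}(t_0-c_1)=1-F_{\varepsilon\mid\bw}(-c_2)$. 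Here $c_2=\min\{c_1,c_2\}$ because $t_0>0$.
\item If $t_0<0$, we get $1-F_{\varepsilon\mid\bw}(-c_1)$ symmetrically.
\item If $t_0=0$, the right limit is $1-F_{\varepsilon\mid\bw}(-c_1)$ and the left limit is $1-F_{\varepsilon\mid\bw}(-c_2)$. These agree precisely because $c_1=c_2$ when $t_0=0$.
\end{itemize}
The last case is exactly where the kinks of $\dot m_1^{\tt tls}$ carry positive probability, yet the one-sided derivatives still match. In all cases the derivative is $j(\bw):=1-F_{\varepsilon\mid\bw}(-a-\min\{\bx_1^\top\btheta_0,\bx_2^\top\btheta_0\})$.

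To lift this to $\bG$, use the tower property to write $\bG(\btheta)=\E[g(\Delta\bX^\top\btheta,\bW)\Delta\bX]$; this is legitimate by the integrability in Assumption \ref{assu:MomentConditions-TLS}. For any sequence $\bh_k\to\bzero$, consider
\[
R_k:=\frac{\big(g(\Delta\bX^\top(\btheta_0+\bh_k),\bW)-g(t_0,\bW)-j(\bW)\Delta\bX^\top\bh_k\big)\Delta\bX}{\|\bh_k\|_2}.
\]
The previous step gives $R_k\to\bzero$ pointwise. Since $g$ is $1$-Lipschitz in $t$ and $0\leqslant j\leqslant1$, we have $\|R_k\|_2\leqslant2\|\Delta\bX\|_2^2$, which is integrable by Assumption \ref{assu:MomentConditions-TLS}. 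The dominated convergence theorem then yields differentiability with Jacobian $\E[j(\bW)\Delta\bX\Delta\bX^\top]$, which is \eqref{eq:Jacobian}.

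For \eqref{eq:JacobianMidpoint}, note that $\bbP(Y_\tau>0\mid\bW)=1-F_{\varepsilon\mid\bW}(-c_\tau)$. Then:
\begin{itemize}
\item on $\{t_0<0\}$, $j(\bW)=\bbP(Y_1>0\mid\bW)$;
\item on $\{t_0>0\}$, $j(\bW)=\bbP(Y_2>0\mid\bW)$;
\item on $\{t_0=0\}$ the two probabilities coincide, so $j(\bW)$ equals their average.
\end{itemize}
Since $\Delta\bX$ and $t_0$ are $\bW$-measurable, the law of iterated expectations replaces these conditional probabilities by the indicators $\mathbf 1\{Y_\tau>0\}$.

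I expect the main obstacle to be the measure-theoretic bookkeeping rather than the calculus. One needs a jointly measurable version of $g$ and $F_{\varepsilon\mid\bw}$ for which the pointwise (in $\bw$) limits hold for every $\bw$, not merely almost every $\bw$. One also needs the Fubini interchange behind the integral representation of $g(t,\bw)-g(t',\bw)$. I would handle both using the exchangeable, jointly measurable density from Appendix \ref{sec: measurability appendix}.
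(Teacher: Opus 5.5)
Your proof is correct. It shares the paper's architecture: condition on $\bW$, prove differentiability of the conditional expected score at $t_0=\Delta\bx^\top\btheta_0$, lift by dominated convergence, and convert to \eqref{eq:JacobianMidpoint} by iterated expectations. The one-dimensional step is where you differ.

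The paper first derives a closed form for $M^{\tt{tls}}(t,\bw)$ in Lemma~\ref{lem:ExpectedTrimmedSqLossDerivative}, using the $M_1,M_2,M_3$ decomposition, cancellations and integration by parts. It then differentiates that formula, getting one-sided derivatives $1-F_{\varepsilon\mid\bw}(v_2(\bw))$ and $1-F_{\varepsilon\mid\bw}(v_1(\bw))$ at zero. You instead integrate the a.e.\ derivative of the score, so the derivative of $g(\cdot,\bw)$ at $t_0$ is simply $\lim_{s\to t_0}p(s,\bw)$. This needs only continuity of $F_{\varepsilon\mid\bw}$ and avoids all the algebra.

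Your route also makes the source of the discrepancy transparent. Honor{\'e}'s $\bGamma_0^{\tt{tls}}$ equals $\E[p(t_0,\bW)\Delta\bX\Delta\bX^\top]$, so it evaluates $p$ \emph{at} $t_0$. When $t_0=0$, $p(\cdot,\bw)$ can have a removable discontinuity there, which the integral does not see: $p(0,\bw)=\bbP(Y_1>0,Y_2>0\mid\bW=\bw)$, whereas the common limit is $1-F_{\varepsilon\mid\bw}(v_1(\bw))$. What the paper's route buys is an explicit formula for $M^{\tt{tls}}(t,\bw)$ at every $t$, which it uses for the counterexample's $G$. That formula is recoverable from yours by integrating $p$.

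In the lifting step, you bound the normalized Fr\'echet remainder by the fixed function $2\|\Delta\bX\|_2^2$ and apply the ordinary DCT, which gives Fr\'echet differentiability directly. The paper instead runs the GLDCT along $\tau_m\bvtheta_m$ and appeals to the equivalence of Hadamard and Fr\'echet differentiability in $\R^K$. Both are fine.

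One bookkeeping caveat concerns finiteness of $g$ via $|\dot{m}_1^{\tt{tls}}|\leqslant y_1+y_2+|t|$, which needs $\E[Y_\tau\mid\bW]<\infty$. Assumption~\ref{assu:MomentConditions-TLS} weights its $\alpha$ and $\varepsilon$ moments by $\|\Delta\bX\|_2^4$, so it delivers this only almost surely on $\{\Delta\bX\neq\bzero\}$. Set $g:=0$ on the complement. This is harmless because those points are annihilated by the factor $\Delta\bX$, and the paper's bound $|t|+|\Delta y|$ faces the same restriction.
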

The expressions in \eqref{eq:Jacobian} and \eqref{eq:JacobianMidpoint}
make it clear that $(Y_1,\bX_1)$ and $(Y_2,\bX_2)$ enter the Hessian
$\bfJ^{\tt{tls}}_0$ in a symmetric manner, so that the (time) labeling is
irrelevant. The version in \eqref{eq:JacobianMidpoint} facilitates plug-in
estimation of $\bfJ^{\tt{tls}}_0$, which we cover in Section
\ref{sec:TLSAsymptoticNormalityAndVarianceEstimation}. In turn, the version in
\eqref{eq:Jacobian} facilitates comparison with Honor{\'e}'s
$\bGamma_0^{\tt{tls}}$. To this end, iterate expectations to write the latter as
\[
 \bGamma_0^{\tt{tls}}=\E\sbr[2]{\E\sbr[1]{\mathbf{1}\cbr[0]{-Y_2<\Delta\bX^\top\btheta_0<Y_1}\mid\bW}\Delta\bX\Delta\bX^\top}.
\]
We next expand on the inner expectation. Rewrite the inner
expectation indicator as
\[
 1-\mathbf{1}\{ Y_{1}\leqslant\Delta\bX^\top\btheta_0\} -\mathbf{1}\{ Y_{2}\leqslant-\Delta\bX^\top\btheta_0\} +\mathbf{1}\{ Y_{1}\leqslant \Delta\bX^\top\btheta_0\} \mathbf{1}\{ Y_{2}\leqslant-\Delta\bX^\top\btheta_0\}.
\]
Because $Y_{1}$ and $Y_{2}$ are non-negative, the only way both indicators
appearing in the previous display can be turned on is if
$\Delta\bX^\top\btheta_0=0$. Again using non-negativity, we deduce
\[
  \mathbf{1}\{ Y_{1}\leqslant \Delta\bX^\top\btheta_0\} \mathbf{1}\{ Y_{2}\leqslant-\Delta\bX^\top\btheta_0\}
  =\mathbf{1}\{ \Delta\bX^\top\btheta_0=0\} \mathbf{1}\{ Y_{1}=0\} \mathbf{1}\{ Y_{2}=0\},
\]
from which it follows that $\mathbf{1}\{-Y_2<\Delta\bX^\top\btheta_0<Y_1\}$ can be written as
\begin{align*}
  &1-\mathbf{1}\{ Y_{1}\leqslant\Delta\bX^\top\btheta_0\} -\mathbf{1}\{ Y_{2}\leqslant-\Delta\bX^\top\btheta_0\} + \mathbf{1}\{ \Delta\bX^\top\btheta_0=0\} \mathbf{1}\{ Y_{1}=0\} \mathbf{1}\{ Y_{2}=0\}.
\end{align*}
Taking conditional expectations, the inner expectation in $\bGamma_0^{\tt{tls}}$
works out as
\begin{align*}
  &\E\sbr[1]{\mathbf{1}\{-Y_2<\Delta\bX^\top\btheta_0<Y_1\}\mid\bW}\\
  &\qquad =\begin{cases}
  1-F_{\varepsilon\mid\bW}\del[0]{V_1}, & \Delta\bX^\top\btheta_0<0,\\
  1-F_{\varepsilon\mid\bW}\del[0]{V_1}-F_{\varepsilon\mid\bW}\del[0]{V_2}+F_{\bepsilon\mid\bW}\del[0]{V_1,V_2}, & \Delta\bX^\top\btheta_0=0,\\
  1-F_{\varepsilon\mid\bW}\del[0]{V_2}, & \Delta\bX^\top\btheta_0>0,
  \end{cases}
\end{align*}
where (due to space constraints) we introduced the shorthand notations
$V_1:=-(\alpha+\bX_1^\top\btheta_0)$ and
$V_2:=-(\alpha+\bX_2^\top\btheta_0)$.
Since $V_2-V_1=\bX_1^\top\btheta_0-\bX_2^\top\btheta_0=\Delta\bX^\top\btheta_0$,
we have $V_1\gtreqqless V_2$ if
$\Delta\bX^\top\btheta_0\lesseqqgtr0$. With this newly introduced shorthand
notation, the inner expectation in our Hessian \eqref{eq:Jacobian} can be
written as
$
1-F_{\varepsilon\mid\bW}\del[0]{\max\cbr{V_1,V_2}},
$
which agrees with the previous display whenever $\Delta\bX^\top\btheta_0\neq0$.
Hence, if this case can be ruled out
[i.e.,~$\bbP(\Delta\bX^\top\btheta_0=0)=0$], then Honor{\'e}'s
$\bGamma_0^{\tt{tls}}$ \emph{is} equal to $\bfJ^{\tt{tls}}$.

However, when $\Delta\bX^\top\btheta_0=0$, so that $V_1$ and $V_2$ take a common
value (here:~$V$), the inner expectations differ by
\begin{align*}
 &\sbr[1]{1-F_{\varepsilon\mid\bW}\del[0]{V}}-\E\sbr[1]{\mathbf{1}\{ -Y_{2}<\Delta\bX^{\top}\btheta_{0}<Y_{1}\} \mid\bW}\\
 &\qquad =F_{\varepsilon\mid\bW}\del[0]{V}-F_{\bepsilon\mid\bW}\del[0]{V,V}=\lim_{u\to\infty}F_{\bepsilon\mid\bW}\del[0]{V,u}-F_{\bepsilon\mid\bW}\del[0]{V,V},
\end{align*}
which is non-negative by monotone increasingness of CDFs. In the Section
\ref{sec:Counterexample} counterexample, we have both $V_{1}\equiv0$ and
$V_{2}\equiv0$, and the (conditional) joint CDF factors into the product of the
(standard normal) marginals,
$F_{\bepsilon\mid\bW}(u_{1},u_{2})=\Phi(u_{1})\Phi(u_{2})$. Since there
$\bbP(\Delta\bX^\top\btheta_0=0)=1$, the difference in the previous display
captures the only relevant case. The difference is then
$\Phi(0)-\Phi(0)\Phi(0)=\frac{1}{4}$, which is precisely the previously
demonstrated discrepancy.

\begin{rem}
[\textbf{Alternative Hessian Expressions}]\label{rem:JacobianAlt1} As
established in the proof of Theorem \ref{thm:JacobianExistence-TLS}, the Hessian
$\bfJ^{\tt{tls}}_0$ can also be expressed using either of the following
expressions:
\begin{align}
  \bfJ^{\tt{tls}}_0&=\E\sbr[2]{\del[2]{\mathbf{1}\{Y_1>0\}\mathbf{1}\{\Delta\bX^\top\btheta_0\leqslant0\}+\mathbf{1}\{Y_2>0\}\mathbf{1}\{\Delta\bX^\top\btheta_0>0\}}\Delta\bX\Delta\bX^\top},\label{eq:JacobianAlt1} \\ 
  \bfJ^{\tt{tls}}_0&=\E\sbr[2]{\del[2]{\mathbf{1}\{Y_1>0\}\mathbf{1}\{\Delta\bX^\top\btheta_0<0\}+\mathbf{1}\{Y_2>0\}\mathbf{1}\{\Delta\bX^\top\btheta_0\geqslant0\}}\Delta\bX\Delta\bX^\top}\label{eq:JacobianAlt2}.
\end{align}
The version in \eqref{eq:JacobianMidpoint} is the average of the
\eqref{eq:JacobianAlt1} and \eqref{eq:JacobianAlt2} right-hand
sides.\hfill$\diamondsuit$
\end{rem}

\begin{rem}[\textbf{Relaxing
Exchangeability}]\label{rem:RelaxingExchangeability-TLS} Inspecting the proof of
Theorem \ref{thm:JacobianExistence-TLS} (specifically, the proof of Lemma
\ref{lem:ExpectedTrimmedSqLossDerivative}), we see that it does not actually
require full conditional exchangeability of $\varepsilon_1$ and $\varepsilon_2$
given $\bW$. For the conclusions of Theorem \ref{thm:JacobianExistence-TLS} to
hold, Assumption \ref{assu:Exchangeability-TLS} can be replaced with the weaker
condition of conditional \emph{stationarity}: conditional on $\bW$,
$\varepsilon_1$ and $\varepsilon_2$ are identically distributed. Some of the
generalizations made in \citet[Section 7.1]{arellano2001panel} for Tobit-type
models with fixed effects similarly only require the weaker assumption of
conditional stationarity. When their $\psi(\cdot)$ and $\xi(\cdot)$ are both the
identity mapping on the real line, the resulting estimator is precisely the
\citet{honore_trimmed_1992} TLS estimator studied here, but motivated
differently. \citet[p.~3274]{arellano2001panel} observe that ``[i]t follows from
standard results about extremum estimators that the resulting estimator will be
consistent and $\sqrt n$ asymptotically normal'', but do not comment on the form
of the limit variance components. The Hessian expressions in Theorem
\ref{thm:JacobianExistence-TLS} and Remark \ref{rem:JacobianAlt1} are therefore
also relevant for this \citet{arellano2001panel} generalization of the TLS
estimator.\hfill$\diamondsuit$
\end{rem}

To state the asymptotic normality result, we need to make sure that the Hessian
$\bfJ^{\tt{tls}}_0$ is invertible. To this end, we will impose the following
condition:

\begin{assumption}[\textbf{Trimmed Least Squares Score Jacobian
Invertibility}]\label{assu:JacobianInvertibility-TLS} There is no proper
linear subspace of $\R^K$ containing the random variable
$(\mathbf{1}\{Y_1>0\}\mathbf{1}\{\Delta\bX^\top\btheta_0\leqslant0\}+\mathbf{1}\{Y_2>0\}\mathbf{1}\{\Delta\bX^\top\btheta_0>0\})\Delta\bX$
with probability one.
\end{assumption}

This assumption is easy to interpret and implies via
\eqref{eq:JacobianAlt1} that the matrix $\bfJ^{\tt{tls}}_0$ is invertible.
Alternatively, we could state an assumption based on the expression for
$\bfJ^{\tt{tls}}_0$ in \eqref{eq:JacobianAlt2}. Strictly speaking, Assumption
\ref{assu:JacobianInvertibility-TLS} does not appear in
\citet{honore_trimmed_1992} but the only purpose of this assumption is to ensure
the invertibility of the Hessian $\bfJ^{\tt{tls}}_0$, and \citet[Theorem
2(iv)]{honore_trimmed_1992} did impose invertibility of the corresponding
Hessian $\bGamma_0^{\tt{tls}}$.

Using our new understanding of the Hessian $\bfJ^{\tt{tls}}_0$, we next state an
extended asymptotic normality result for the TLS estimator that does not require
the inner product $\Delta\bX^\top\btheta_0$ to have no mass at zero.
\begin{thm}[\textbf{Asymptotic Normality of Trimmed Least
Squares}]\label{thm:AsymptoticNormality-TLS} Let Assumptions
\ref{assu:Non-Degeneracy-TLS}--\ref{assu:JacobianInvertibility-TLS} hold, and
suppose that the expectations involved in defining the matrix
$\bfV_0^{\tt{tls}}$ exist (in $\R^{K\times K}$), and that this matrix is of full
rank. Then the TLS estimator satisfies
\[
\sqrt{n}\del[1]{\widehat\btheta^{\tt{tls}}-\btheta_0}\rightsquigarrow\mathcal{N}\left(\mathbf{0},(\bfJ_0^{\tt{tls}})^{-1}\bfV_0^{\tt{tls}}(\bfJ_0^{\tt{tls}})^{-1}\right)\;\text{in}\;\R^K.
\]
\end{thm}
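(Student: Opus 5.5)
The proof will use the standard convexity argument for M-estimators with convex objectives (Pollard 1991; Hjort and Pollard 1993), applied to the local reparametrization $\btheta=\btheta_0+\bh/\sqrt n$. We need only three ingredients: convexity of the sample criterion, differentiability of the population score $\bG$ at $\btheta_0$ (Theorem \ref{thm:JacobianExistence-TLS}), and a CLT for the score at $\btheta_0$.

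Convexity is immediate. By \eqref{eq:TrimmedSqLossDerivative}, $\dot m^{\tt tls}_1(\cdot,\by)$ is nondecreasing: it has slope $0$, then $1$, then $0$, and it is continuous. Hence $m^{\tt tls}(\cdot,\by)$ is convex, and so is $\btheta\mapsto Q_n(\btheta):=n^{-1}\sum_i m^{\tt tls}(\Delta\bX_i^\top\btheta,\bY_i)$.

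The plan then proceeds in four steps.

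\textbf{Step 1 (score at the truth).} Show $\bG(\btheta_0)=\bzero$ and $\E[\|\dot m_1^{\tt tls}(\Delta\bX^\top\btheta_0,\bY)\Delta\bX\|_2^2]<\infty$.
\begin{itemize}
\item The zero-mean property is Honor\'e's identification argument. Conditional on $\bW$, exchangeability makes the trimmed differences $Y_1-Y_2-\Delta\bX^\top\btheta_0$, restricted to the trimming region, symmetrically distributed about zero. This gives $\E[\dot m_1^{\tt tls}(\Delta\bX^\top\btheta_0,\bY)\mid\bW]=0$. I will take this as established earlier (it underlies Theorem \ref{thm:JacobianExistence-TLS} and Lemma \ref{lem:ExpectedTrimmedSqLossDerivative}).
\item Finite variance follows from $|\dot m_1^{\tt tls}|\leqslant \max\{Y_1,Y_2\}\leqslant |\alpha+\bX_\tau^\top\btheta_0+\varepsilon_\tau|$ together with Assumption \ref{assu:MomentConditions-TLS}.
\end{itemize}
Then the CLT gives $\bZ_n:=\sqrt n\,n^{-1}\sum_i \dot m_1^{\tt tls}(\Delta\bX_i^\top\btheta_0,\bY_i)\Delta\bX_i\rightsquigarrow\cN(\bzero,\bfV_0^{\tt tls})$.

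\textbf{Step 2 (quadratic approximation of the expected loss).} Since $\bG$ is the gradient of $M(\btheta):=\E[m^{\tt tls}(\Delta\bX^\top\btheta,\bY)-m^{\tt tls}(\Delta\bX^\top\btheta_0,\bY)]$ and is differentiable at $\btheta_0$ with Jacobian $\bfJ_0^{\tt tls}$, we can integrate along the segment. This gives $M(\btheta_0+\bh/\sqrt n)=\tfrac{1}{2n}\bh^\top\bfJ_0^{\tt tls}\bh+o(1/n)$ for each fixed $\bh$.
\begin{itemize}
\item The interchange of gradient and expectation is justified by dominated convergence. The difference quotients of $m^{\tt tls}$ are bounded by $|\dot m_1^{\tt tls}|\|\Delta\bX\|_2+\|\Delta\bX\|_2^2\|\bh\|_2$, which is integrable.
\item Crucially, only the first-derivative interchange is used; this is legitimate because $\dot m_1^{\tt tls}$ is continuous. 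The second derivative comes from Theorem \ref{thm:JacobianExistence-TLS}, not from differentiating inside the expectation.
\end{itemize}
Invertibility of $\bfJ_0^{\tt tls}$ follows from Assumption \ref{assu:JacobianInvertibility-TLS} via \eqref{eq:JacobianAlt1}.

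\textbf{Step 3 (remainder).} Write
\[
n\big[Q_n(\btheta_0+\bh/\sqrt n)-Q_n(\btheta_0)\big]=\bh^\top\bZ_n+\tfrac12\bh^\top\bfJ_0^{\tt tls}\bh+R_n(\bh).
\]
The remainder $R_n(\bh)$ is a centered sum of terms $r_i=m^{\tt tls}(\Delta\bX_i^\top\btheta_0+\Delta\bX_i^\top\bh/\sqrt n,\bY_i)-m^{\tt tls}(\Delta\bX_i^\top\btheta_0,\bY_i)-\dot m_1^{\tt tls}(\Delta\bX_i^\top\btheta_0,\bY_i)\Delta\bX_i^\top\bh/\sqrt n$, scaled by $n$ and minus its mean.
\begin{itemize}
\item The 1-Lipschitz property of $\dot m_1^{\tt tls}$ gives $|r_i|\leqslant \tfrac12(\Delta\bX_i^\top\bh)^2/n$.
\item Therefore $\var(R_n(\bh))\leqslant n\cdot\E[r_i^2]\,n^2\leqslant \E[(\Delta\bX^\top\bh)^4]/(4n)\to0$, using $\E\|\Delta\bX\|_2^4<\infty$.
\end{itemize}
So $R_n(\bh)=o_{\bbP}(1)$ pointwise in $\bh$.

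\textbf{Step 4 (convexity lemma).} Pointwise convergence of convex random functions to the strictly convex quadratic $\bh^\top\bZ_n+\frac12\bh^\top\bfJ_0^{\tt tls}\bh$ upgrades to uniform convergence on compacts. The basic corollary of Hjort and Pollard then yields
\[
\sqrt n(\widehat\btheta^{\tt tls}-\btheta_0)=-(\bfJ_0^{\tt tls})^{-1}\bZ_n+o_{\bbP}(1).
\]
The result follows by Slutsky's theorem. This conclusion holds for any minimizer, and consistency is not needed separately.

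\textbf{Main obstacle.} I expect the delicate point to be Step 2: making sure the expansion of the expected loss uses the true derivative $\nabla\bG(\btheta_0)$ rather than an inside-the-expectation second derivative. The remaining concern is checking the moment bounds needed for dominated convergence and for Step 1's variance. With Theorem \ref{thm:JacobianExistence-TLS} available, the remainder of the argument is standard.
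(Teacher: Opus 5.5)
Your proposal is correct, but it takes a genuinely different route from the paper.

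\textbf{The paper's route.} The proof is a minimal patch of \citet{honore_trimmed_1992}. It applies \citet[Theorem 3.3]{pakes1989simulation} to the score. From Honor\'e's original argument it inherits the verification of every condition except (ii): an approximate zero of the sample score, consistency, stochastic equicontinuity of the score process, and the CLT. It then replaces only condition (ii), using Theorem \ref{thm:JacobianExistence-TLS} together with Assumption \ref{assu:JacobianInvertibility-TLS}.

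\textbf{Your route.} You instead give a self-contained convexity argument on the criterion itself, in the style of Pollard and of Hjort and Pollard. Convexity of $m^{\tt{tls}}(\cdot,\by)$ and the $1$-Lipschitz property of $\dot m^{\tt{tls}}_1(\cdot,\by)$ give a pointwise quadratic expansion whose stochastic remainder has variance $O(1/n)$. The convexity lemma then delivers $\sqrt n(\widehat\btheta^{\tt{tls}}-\btheta_0)=-(\bfJ_0^{\tt{tls}})^{-1}\bZ_n+o_{\bbP}(1)$ for any minimizer. This needs no separate consistency step and no empirical-process machinery.

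\textbf{Common ground.} Both proofs hinge on the same corrected input: Fr\'echet differentiability of $\bG$ at $\btheta_0$ with Jacobian $\bfJ_0^{\tt{tls}}$. Both also take $\bG(\btheta_0)=\bzero$ from Honor\'e. You use the differentiability correctly, only through the integrated expansion of the expected loss, so the boundary nondifferentiability of $\dot m_1^{\tt{tls}}$ never enters.

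\textbf{What each buys.} The paper's route is shorter and pinpoints Honor\'e's condition (ii) as the only defective step. Yours is more elementary and transparent, but it relies on convexity of the loss, which happens to hold here.

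Two minor slips, neither affecting validity:
\begin{itemize}
\item You claim that finiteness of $\E[\dot m_1^{\tt{tls}}(\Delta\bX^\top\btheta_0,\bY)^2\|\Delta\bX\|_2^2]$ follows from Assumption \ref{assu:MomentConditions-TLS}. It does not: that assumption controls $\alpha^2\|\Delta\bX\|_2^4$ and $\varepsilon_\tau^2\|\Delta\bX\|_2^4$. Your bound via $\max\{Y_1,Y_2\}$ needs the corresponding products with $\|\Delta\bX\|_2^2$, which can be infinite when $\|\Delta\bX\|_2$ is small. However, existence of $\bfV_0^{\tt{tls}}$ is an explicit hypothesis of the theorem, so nothing is lost.
\item The stochastic remainder is $\sum_{i}(r_i-\E r_i)$ with no extra factor of $n$. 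Its variance is therefore at most $n\E[r_1^2]\leqslant\E[(\Delta\bX^\top\bh)^4]/(4n)$. Your final bound is right; the intermediate factor $n^2$ is spurious.
\end{itemize}
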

The asymptotic normality result in Theorem \ref{thm:AsymptoticNormality-TLS}
differs from \citet[Theorem 2(iv)]{honore_trimmed_1992} in the sense that it
replaces the matrix $\bGamma_0^{\tt{tls}}$ in
\eqref{eq:HonoreAsymptoticNormality-TLS} by $\bfJ_0^{\tt{tls}}$. Whenever
$\Delta\bX^\top\btheta_0\neq 0$ with probability one, the two matrices coincide.
However, the matrices are in general different if $\Delta\bX^\top\btheta_0 = 0$
with (strictly) positive probability.

\subsection{Asymptotic Variance
Estimation}\label{sec:TLSAsymptoticNormalityAndVarianceEstimation} For the
asymptotic normality to yield a practical approximation, we need to consistently
estimate the asymptotic variance components. For $\bfV_0^{\tt{tls}}$, we use the
plug-in estimator from \citet{honore_trimmed_1992}:
\[
  \widehat{\bfV}^{\tt{tls}}:=\frac{1}{n}\sum_{i=1}^n \dot{m}_1^{\tt{tls}}(\Delta\bX_i^\top\widehat\btheta^{\tt{tls}},\bY_i)^2\Delta\bX_i\Delta\bX_i^\top.\footnote{In \citet{honore_trimmed_1992}, this estimator is denoted $\hat{V}_4$.}
\]
For the Hessian $\bfJ_0^{\tt{tls}}$, we apply the analogy principle to
\eqref{eq:JacobianMidpoint} to arrive at
\begin{align}
  \widehat\bfJ^{\tt{tls}}=\frac{1}{n}\sum_{i=1}^n\bigg(
    &\mathbf{1}\{Y_{i1}>0\}\del[2]{\mathbf{1}\{\Delta\bX_i^\top\widehat\btheta^{\tt{tls}}<0\}+\frac{1}{2}\mathbf{1}\{\Delta\bX_i^\top\widehat\btheta^{\tt{tls}}=0\}}\notag\\ 
    +&\mathbf{1}\{Y_{i2}>0\}\del[2]{\mathbf{1}\{\Delta\bX_i^\top\widehat\btheta^{\tt{tls}}>0\}+\frac{1}{2}\mathbf{1}\{\Delta\bX_i^\top\widehat\btheta^{\tt{tls}}=0\}} \bigg)\Delta\bX_i\Delta\bX_i^\top.\label{eq:JacobianMidpointEstimator}
\end{align}
Alternatively, in order to estimate the Hessian $\bfJ_0^{\tt{tls}}$, we could
use the equivalent expressions in \eqref{eq:JacobianAlt1} and
\eqref{eq:JacobianAlt2}.

These estimators are (strongly) consistent under the assumptions of Theorem
\ref{thm:AsymptoticNormality-TLS}:
\begin{thm}[\textbf{Plug-in Variance Estimator Consistency for
TLS}]\label{thm:VarianceConsistency-TLS} Let the assumptions of Theorem
\ref{thm:AsymptoticNormality-TLS} hold. Then
$\widehat\bfV^{\tt{tls}}\to_{\mathrm{a.s.}}\bfV_0^{\tt{tls}}$ and
$\widehat\bfJ^{\tt{tls}}\to_{\mathrm{a.s.}}\bfJ_0^{\tt{tls}}$.
\end{thm}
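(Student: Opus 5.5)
The plan is to handle $\widehat\bfV^{\tt{tls}}$ by a Lipschitz argument combined with the strong law of large numbers, and $\widehat\bfJ^{\tt{tls}}$ by a uniform (Glivenko--Cantelli) law over half-space indicators combined with a continuity argument at $\btheta_0$. Throughout I take as given that $\widehat\btheta^{\tt{tls}}\to_{\mathrm{a.s.}}\btheta_0$. This is the consistency step underlying Theorem~\ref{thm:AsymptoticNormality-TLS}, coming from the convexity argument of \citet{honore_trimmed_1992} and the strong law. If only convergence in probability were available, the same argument would give convergence in probability instead.

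\emph{Step 1: $\widehat\bfV^{\tt{tls}}$.} From \eqref{eq:TrimmedSqLossDerivative}, in the middle region $t\in(-y_2,y_1)$ we have $y_2-y_1+t\in(-y_1,y_2)$. Hence $|\dot m_1^{\tt{tls}}(t,\by)|\leqslant y_1+y_2$ for all $t$. The map $t\mapsto\dot m_1^{\tt{tls}}(t,\by)$ is also $1$-Lipschitz. Using $|a^2-b^2|\leqslant|a-b|(|a|+|b|)$, the norm of $\widehat\bfV^{\tt{tls}}-\frac1n\sum_i\dot m_1^{\tt{tls}}(\Delta\bX_i^\top\btheta_0,\bY_i)^2\Delta\bX_i\Delta\bX_i^\top$ is bounded by
\[
\|\widehat\btheta^{\tt{tls}}-\btheta_0\|_2\cdot\frac1n\sum_{i=1}^n 2(Y_{i1}+Y_{i2})\|\Delta\bX_i\|_2^3 .
\]
Next, $Y_\tau\leqslant|\alpha|+\|\bX_\tau\|_2\|\btheta_0\|_2+|\varepsilon_\tau|$. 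Cauchy--Schwarz together with Assumption~\ref{assu:MomentConditions-TLS} then gives $\E[(Y_1+Y_2)\|\Delta\bX\|_2^3]<\infty$; for example, $\E[|\alpha|\|\Delta\bX\|_2^3]\leqslant(\E[\alpha^2\|\Delta\bX\|_2^4]\,\E\|\Delta\bX\|_2^2)^{1/2}$. By the strong law the average above stays bounded a.s., so the display vanishes a.s. The remaining centred average converges a.s. to $\bfV_0^{\tt{tls}}$ by the strong law.

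\emph{Step 2: $\widehat\bfJ^{\tt{tls}}$.} Write the summand as $\bh(\btheta;Y_{i1},Y_{i2},\Delta\bX_i)$, with population counterpart $\bH(\btheta):=\E[\bh(\btheta;Y_1,Y_2,\Delta\bX)]$, so that $\bH(\btheta_0)=\bfJ_0^{\tt{tls}}$ by \eqref{eq:JacobianMidpoint}. The dependence on $\btheta$ is only through the indicators $\mathbf 1\{\Delta\bX^\top\btheta<0\}$, $\mathbf 1\{\Delta\bX^\top\btheta=0\}$ and $\mathbf 1\{\Delta\bX^\top\btheta>0\}$. These are indicators of half-spaces and hyperplanes, a VC class. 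Multiplied by the fixed functions $\mathbf 1\{Y_\tau>0\}$ and entries of $\Delta\bX\Delta\bX^\top$, each entry of $\bh$ therefore ranges over a Glivenko--Cantelli class with integrable envelope $\|\Delta\bX\|_2^2$. Consequently $\sup_{\btheta}\|\frac1n\sum_i\bh(\btheta;\cdot_i)-\bH(\btheta)\|\to_{\mathrm{a.s.}}0$. Given this uniform convergence and $\widehat\btheta^{\tt{tls}}\to_{\mathrm{a.s.}}\btheta_0$, it suffices to show that $\bH$ is continuous at $\btheta_0$.

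\emph{Step 3: continuity of $\bH$ at $\btheta_0$ (main obstacle).} Split $\bH$ according to whether $\Delta\bX^\top\btheta_0\neq0$ or $\Delta\bX^\top\btheta_0=0$.
\begin{itemize}
\item On $\{\Delta\bX^\top\btheta_0\neq0\}$, the sign of $\Delta\bX^\top\btheta$ equals that of $\Delta\bX^\top\btheta_0$ for all $\btheta$ sufficiently close to $\btheta_0$, pointwise. Dominated convergence with envelope $\|\Delta\bX\|_2^2$ then gives continuity of this part.
\item The delicate part is $\{\Delta\bX^\top\btheta_0=0,\ \Delta\bX\neq\bzero\}$, where the sign of $\Delta\bX^\top\btheta$ near $\btheta_0$ is arbitrary. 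Here the weight has the form $c_1\mathbf 1\{Y_1>0\}+c_2\mathbf 1\{Y_2>0\}$ with $(c_1,c_2)\in\{(1,0),(\tfrac12,\tfrac12),(0,1)\}$, always satisfying $c_1+c_2=1$, and $(c_1,c_2)$ is a function of $\bW$ for each fixed $\btheta$.
\item On this event $\bX_1^\top\btheta_0=\bX_2^\top\btheta_0$. Exchangeability (stationarity suffices) in Assumption~\ref{assu:Exchangeability-TLS} then gives $\bbP(Y_1>0\mid\bW)=\bbP(Y_2>0\mid\bW)=:p(\bW)$.
\item Iterating expectations, the contribution of this event to $\bH(\btheta)$ equals $\E[p(\bW)\mathbf 1\{\Delta\bX^\top\btheta_0=0\}\Delta\bX\Delta\bX^\top]$ for every $\btheta$. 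Since this does not depend on $\btheta$, continuity holds trivially.
\end{itemize}
Combining Steps 2 and 3 gives $\widehat\bfJ^{\tt{tls}}\to_{\mathrm{a.s.}}\bH(\btheta_0)=\bfJ_0^{\tt{tls}}$. The only points needing care are the conditioning step just above and the measurability of the supremum in the Glivenko--Cantelli step. For the latter I would rely on the measurability lemmas in the appendix, or restrict the supremum to rational $\btheta$, using the fact that the sample criterion is piecewise constant over finitely many cells determined by the hyperplanes $\{\btheta:\Delta\bX_i^\top\btheta=0\}$.
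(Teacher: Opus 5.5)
Your proposal is correct. For $\widehat\bfJ^{\tt{tls}}$ it is essentially the paper's argument, in three steps:
\begin{enumerate}
\item a strong ULLN over a VC-type class with envelope $\|\Delta\bX\|_2^2$;
\item continuity of the population map at $\btheta_0$, obtained by splitting on $\{\Delta\bX^\top\btheta_0\neq0\}$ (signs eventually agree, then dominated convergence) versus $\{\Delta\bX^\top\btheta_0=0\}$ (conditional stationarity gives $\bbP(Y_1>0\mid\bW)=\bbP(Y_2>0\mid\bW)$ there, so that contribution is constant in $\btheta$);
\item strong consistency of $\widehat\btheta^{\tt{tls}}$.
\end{enumerate}
The only cosmetic difference is that you treat the midpoint estimator \eqref{eq:JacobianMidpointEstimator} directly, so your class also contains the hyperplane indicators $\mathbf 1\{\Delta\bx^\top\btheta=0\}$. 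The paper instead proves consistency of the two one-sided estimators built on \eqref{eq:JacobianAlt1} and \eqref{eq:JacobianAlt2} and averages them.

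For $\widehat\bfV^{\tt{tls}}$ the routes genuinely differ. The paper just invokes the plug-in strategy of \citet[pp.~1043--1044]{pakes1989simulation}, taking the underlying conditions as verified in \citet{honore_trimmed_1992}. You instead use that $t\mapsto\dot m_1^{\tt{tls}}(t,\by)$ is $1$-Lipschitz and bounded by $y_1+y_2$. This gives a deterministic bound equal to $\|\widehat\btheta^{\tt{tls}}-\btheta_0\|_2$ times a sample average, so you need only $\E[(Y_1+Y_2)\|\Delta\bX\|_2^3]<\infty$ and the SLLN; you correctly derive that moment bound from Assumption~\ref{assu:MomentConditions-TLS}. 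Your version is self-contained, needs no empirical-process machinery for $\widehat\bfV^{\tt{tls}}$, and makes explicit which moments are used. The paper's version is shorter but relies on external verification.

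One minor caveat concerns your measurability aside. Restricting the supremum to rational $\btheta$ is not justified as stated. A sample hyperplane $\{\btheta:\Delta\bX_i^\top\btheta=0\}$ with irrational normal (e.g.\ $\Delta\bX_i=(1,\sqrt2)^\top$) contains no nonzero rational point. So sign patterns involving $\Delta\bX_i^\top\btheta=0$ are not attained on $\mathbb{Q}^K$. The appendix lemma does not help either, since it only concerns versions of conditional densities. The standard remedy, implicit in the paper's appeal to the Pakes--Pollard ULLN, is permissibility. The class is indexed by the Euclidean parameter $\btheta$ and $(\bz,\btheta)\mapsto\bh$ is jointly Borel measurable, which suffices for the supremum to be measurable.
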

Theorem \ref{thm:VarianceConsistency-TLS} implies that the limit
variance $(\bfJ_0^{\tt{tls}})^{-1}\bfV_0^{\tt{tls}}(\bfJ_0^{\tt{tls}})^{-1}$ is
(strongly) consistently estimated by
$(\widehat\bfJ^{\tt{tls}})^{-1}\widehat\bfV^{\tt{tls}}(\widehat\bfJ^{\tt{tls}})^{-1}$,
which facilitates hypothesis testing and the construction of confidence
intervals.

\begin{rem}[\textbf{Comparison with \citet{honore2000estimation}}]
\citet[Section 2.1]{honore2000estimation} discuss estimation of the censored
regression model with fixed effects, allowing the number of time periods $T_i$
to exceed two and/or be individual specific (unbalanced panel data). While
\emph{ibid.}~(Section 2.1) covers a whole class of estimators, in the special
case of their $\xi(\cdot)$ being the identity mapping on the real line, the
estimator in their (6) becomes a TLS estimator based on all pairs of
time periods. In the balanced case $(T_i=T\geqslant2)$, their estimator is (in
our notation) any minimizer of
\[
  \btheta\mapsto \frac{1}{n}\sum_{i=1}^n \sum_{1\leqslant \tau < \tau' \leqslant T} 
 {m}^{\tt{tls}}\del[1]{(\bX_{i\tau}-\bX_{i\tau'})^\top\btheta,(Y_{i\tau},Y_{i\tau'})},
\]
which naturally generalizes the two-period TLS estimator studied in this paper.
Appropriately extending our assumptions to their multi-period case, one can
establish ($\sqrt n$-)consistency and asymptotic normality of their multi-period
TLS estimator and reuse the argument leading to Theorem
\ref{thm:JacobianExistence-TLS} to show that the Hessian of the expected loss in
the multi-period setting is given by an expression that naturally generalizes
\eqref{eq:JacobianMidpoint} to accommodate more than one pair of time periods.
One can then construct a (strongly) consistent estimator of this Hessian by
applying the analogy principle, as we did in
\eqref{eq:JacobianMidpointEstimator} for the $T=2$ case. \hfill$\diamondsuit$
\end{rem}

\begin{rem}[\textbf{A Plug-In Estimator for
\citet{honore_pairwise_1994}}]\label{rem:ComparisonWithHonorePowellJacobian}
Consider the {\em cross-sectional} censored regression model $Y =
\max\{0,Y^*\}$, where $Y^* = \bX^{\top}\btheta_0 + \varepsilon$ and
$\varepsilon$ is independent of $\bX$. This model was studied in
\citet{honore_pairwise_1994}, who developed, among other things, the TLS
estimator of the vector of parameters $\btheta_0$ in this model, proved its
($\sqrt{n}$-)consistency, and derived the corresponding asymptotic normality result. On
\emph{ibid.}~(p.~260), they gave an expression for the Hessian of the expected
loss, which enters the asymptotic variance formula. In our notation, their
Hessian takes the following form:
\begin{align}
\E\sbr[2]{\del[2]{1-F_{\varepsilon}\del[1]{-\min\cbr[0]{\bX_1^\top\btheta_0,\bX_2^\top\btheta_0}}}\Delta\bX\Delta\bX^\top},\label{eq: hessian cross section}
\end{align}
where the pairs $(Y_1,\bX_1)$ and $(Y_2,\bX_2)$ represent independent units of
observation and $F_{\varepsilon}$ is the CDF of $\varepsilon$. On \emph{ibid.}
(p.~261), they also proposed a generic numerical derivative estimator of this
Hessian. Due to numerical differentiation, however, their estimator introduces a
tuning parameter through the choice of the stepsize. 

Our point in this remark is to show that one can actually estimate the Hessian
in \eqref{eq: hessian cross section} without introducing extra tuning
parameters. To see that, let  $\bZ_i := (Y_i,\bX_i)$, $i\in\{1,2,\dots,n\}$, be
a random sample from the distribution of $\bZ:=(Y,\bX)$, and observe that our
panel-data Hessian $\bfJ^{\tt{tls}}_0$ in \eqref{eq:Jacobian} reduces to the
cross-sectional Hessian in \eqref{eq: hessian cross section} upon setting
$\alpha\equiv0$ and imposing independence of $\varepsilon_1$ and $\varepsilon_2$
from $\bW$ in the former expression. Theorem \ref{thm:JacobianExistence-TLS}
thus reveals, via \eqref{eq:JacobianMidpoint}, that the cross-sectional Hessian
in \eqref{eq: hessian cross section} is equal to $\E[h(\bZ_1,\bZ_2;\btheta_0)]$,
where the function $h:\R^{1+K}\times\R^{1+K}\times\R^K\to\R^{K\times K}$ is
given by
\begin{align*}
h(\bz_1,\bz_2;\btheta)
  :=\Big(&\mathbf{1}\{y_1>0\}\del[1]{\mathbf{1}\{\Delta\bx^\top\btheta<0\}+\tfrac{1}{2}\mathbf{1}\{\Delta\bx^\top\btheta=0\}}\\ 
  +&\mathbf{1}\{y_2>0\}\del[1]{\mathbf{1}\{\Delta\bx^\top\btheta>0\}+\tfrac{1}{2}\mathbf{1}\{\Delta\bx^\top\btheta=0\}}\Big)\Delta\bx\Delta\bx^\top,
\end{align*}
and we denote $\bz_1:=(y_1,\bx_1)$, $\bz_2:=(y_2,\bx_2)$, and $\Delta\bx :=
\bx_1 - \bx_2$. Let $\widehat{\btheta}$ be the \citet{honore_pairwise_1994} TLS
estimator, i.e.~their estimator with
$\Xi\left(\cdot\right)=\textstyle{\frac{1}{2}}\left(\cdot\right)^2$. Then a
natural estimator of the cross-sectional Hessian in \eqref{eq: hessian cross
section} is of the plug-in form
\begin{align}
\frac{1}{n(n-1)}\sum_{1\leqslant i\neq j\leqslant n}h(\bZ_i,\bZ_j;\widehat\btheta),\label{eq:HonorePowellPlugInEstimator}
\end{align}
which is an approximate second-order U-statistic. Note that this estimator
involves no tuning parameter, and is therefore free of the stepsize problem
arising due to numerical differentiation. The consistency of this plug-in
estimator can be established using an argument paralleling the one used in the
proof of Theorem \ref{thm:VarianceConsistency-TLS}. The main difference is that, instead of establishing a uniform law of large
numbers involving \emph{simple} averages (or, equivalently, an empirical
process), to accommodate the expression in
\eqref{eq:HonorePowellPlugInEstimator}, one must now work with
\emph{generalized} averages (leading to a second-order U-process). To this end,
we refer the reader to \citet{nolan1987_u_processes_rates}.\hfill$\diamondsuit$
\end{rem}


\begin{rem}[\textbf{Comparison with
\citet{powell1984least,powell1986symmetrically}}]\label{rem:ComparisonWithPowell} 

As noted in \citet{honore_trimmed_1992}, the rationale behind the TLS and TLAD
estimators considered in this paper can be viewed as a bivariate generalization
of the idea behind the \citet{powell1986symmetrically} symmetrically trimmed LS
estimators for censored Tobit models without fixed effects. The approach adopted
in \citet{powell1986symmetrically} is, in turn, closely related to the censored
LAD (CLAD) estimator proposed in \citet{powell1984least}---a key paper in the
censored regression literature.
In our notation, \citet{powell1984least} models the conditional median of a
non-negative response variable $Y$ given covariates $\bX$ as
$\mathrm{Med}(Y|\bX)=\max\{0,\bX^\top\btheta_0\}$. To ensure asymptotic
normality of the CLAD estimator, \citet[Assumption R.2]{powell1984least} rules
out regressors that are orthogonal to $\btheta_0$ with positive probability. The
possibility of such regressors is precisely the cause of difficulty in our
analysis, cf.~the discussion following Theorem
\ref{thm:JacobianExistence-TLS}.\hfill$\diamondsuit$
\end{rem}

\subsection{Consistency of the Honor{\'e} (1992) Hessian
Estimator}\label{sec:ConsistencyHonoreHessianEstimator-TLS} In
\citet{honore_trimmed_1992} the matrix $\bGamma_0^{\tt{tls}}$ in
\eqref{eq:VarianceSandwichBread} is estimated by its empirical analogue
\begin{equation}\label{eq:Honore1992HessianEstimator}
  \widehat\bGamma^{\tt{H92}}:=\frac{1}{n}
  \sum_{i=1}^{n}\bone\{-Y_{i2}<\Delta\bX_{i}^{\top}\widehat{\btheta}^{\tt{tls}}<Y_{i1}\}
  \Delta\bX_{i}\Delta\bX_{i}^{\top}.\footnote{In
\cite{honore_trimmed_1992}, this estimator is denoted by $\widehat{\Gamma}_4$.}
\end{equation}
The TLS sandwich variance consistency results in \citet[Theorem
3]{honore_trimmed_1992} involve showing that $\widehat\bGamma^{\tt{H92}}$ is a
consistent estimator of $\bGamma_0^{\tt{tls}}$. As we have shown by example in
Section \ref{sec:Counterexample}, $\bGamma_0^{\tt{tls}}$ is not equal to the
Hessian $\bfJ_0^{\tt{tls}}$ of the expected loss in general. As
$\widehat\bGamma^{\tt{H92}}$ is targeting $\bGamma_0^{\tt{tls}}$, one would
expect that $\widehat\bGamma^{\tt{H92}}$ is not a consistent estimator of the
TLS Hessian $\bfJ_0^{\tt{tls}}$ in general. Surprisingly, as we establish below,
$\widehat\bGamma^{\tt{H92}}$ nevertheless converges to the TLS Hessian
$\bfJ_0^{\tt{tls}}$.
\begin{thm}[\textbf{Consistency of the \citet{honore_trimmed_1992} Hessian
Estimator for TLS}]\label{thm:ConsistencyHonoreHessianEstimator-TLS} Let the
assumptions of Theorem \ref{thm:AsymptoticNormality-TLS} hold. Then
$\widehat\bGamma^{\tt{H92}}=\bfJ_0^{\tt{tls}}+o_{L^1}(1)+o_{\mathrm{a.s.}}(1)=\bfJ_0^{\tt{tls}}+o_{\bbP}(1)$.
\end{thm}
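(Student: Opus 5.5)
The plan is to compare $\widehat\bGamma^{\tt{H92}}$ with the plug-in estimator $\widehat\bfJ^{\tt{tls}}$ in \eqref{eq:JacobianMidpointEstimator}, which converges almost surely to $\bfJ_0^{\tt{tls}}$ by Theorem~\ref{thm:VarianceConsistency-TLS}. It then suffices to show $\widehat\bGamma^{\tt{H92}}-\widehat\bfJ^{\tt{tls}}=o_{L^1}(1)+o_{\mathrm{a.s.}}(1)$. Write $t=\Delta\bx^\top\btheta$, $s_0=\Delta\bx^\top\btheta_0$, and define the difference kernel
\[
\bfR(\by,\bx_1,\bx_2;\btheta):=\Big(\mathbf{1}\{-y_2<t<y_1\}-\mathbf{1}\{y_1>0\}\big(\mathbf{1}\{t<0\}+\tfrac12\mathbf{1}\{t=0\}\big)-\mathbf{1}\{y_2>0\}\big(\mathbf{1}\{t>0\}+\tfrac12\mathbf{1}\{t=0\}\big)\Big)\Delta\bx\Delta\bx^\top ,
\]
so that $\widehat\bGamma^{\tt{H92}}-\widehat\bfJ^{\tt{tls}}=\frac1n\sum_i\bfR(\bY_i,\bX_{i1},\bX_{i2};\widehat\btheta^{\tt{tls}})$.

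\textbf{Step 1: the part away from $t=0$.} I split $\bfR$ into its piece on $\{t\neq 0\}$ and its piece on $\{t=0\}$. For the $t\neq0$ piece, indexed by $\btheta$, I use a uniform strong law. The class is built from indicators of half-spaces and intervals in $(\Delta\bx^\top\btheta, y_1,y_2)$, so it is VC. Its envelope is $2\|\Delta\bX\|_2^2$, which is integrable by Assumption~\ref{assu:MomentConditions-TLS}. Together with $\widehat\btheta^{\tt{tls}}\to_{\mathrm{a.s.}}\btheta_0$, this reduces the sample average to $\bar\bfR(\widehat\btheta^{\tt{tls}})+o_{\mathrm{a.s.}}(1)$, where $\bar\bfR(\btheta)$ is the population mean of this piece.

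\textbf{Step 2: showing $\bar\bfR(\btheta)\to\bzero$ as $\btheta\to\btheta_0$.} I condition on $\bW$, using $V_1,V_2$ as in the discussion after Theorem~\ref{thm:JacobianExistence-TLS}. For $t<0$, nonnegativity of $Y_1$ gives $\mathbf{1}\{-Y_2<t<Y_1\}=\mathbf{1}\{Y_2>-t\}$. Stationarity of the marginals then gives a conditional mean of $F_{\varepsilon\mid\bW}(V_1)-F_{\varepsilon\mid\bW}(V_2-t)$. Since $V_2-V_1=s_0$, this tends to $0$ as $t\to s_0$ on $\{s_0<0\}$ and also on $\{s_0=0\}$, by continuity of $F_{\varepsilon\mid\bW}$ (Assumption~\ref{assu:continuity tls}). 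The case $t>0$ is symmetric. The sets $\{\Delta\bX^\top\btheta<0<s_0\}$ and $\{\Delta\bX^\top\btheta>0>s_0\}$ shrink pointwise to empty sets, and the contribution of $\{t=0,\ s_0\neq0\}$ vanishes pointwise. Dominated convergence with envelope $\|\Delta\bX\|_2^2$ then handles all of these pieces.

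\textbf{Step 3: the exact-zero piece (main obstacle).} What remains is
\[
\frac1n\sum_i\mathbf{1}\{\Delta\bX_i^\top\widehat\btheta^{\tt{tls}}=0,\ \Delta\bX_i\neq\bzero\}\Big(\mathbf{1}\{Y_{i1}>0,Y_{i2}>0\}-\tfrac12\mathbf{1}\{Y_{i1}>0\}-\tfrac12\mathbf{1}\{Y_{i2}>0\}\Big)\Delta\bX_i\Delta\bX_i^\top .
\]
Its population counterpart does not vanish, for example in the counterexample of Section~\ref{sec:Counterexample}. So I must show directly that, in $L^1$, only a negligible fraction of observations satisfy $\Delta\bX_i^\top\widehat\btheta^{\tt{tls}}=0$ exactly while $Y_{i1}+Y_{i2}>0$. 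The argument is by exchangeability of observations: it suffices to bound $\bbP(\Delta\bX_1^\top\widehat\btheta^{\tt{tls}}=0,\ Y_{11}+Y_{12}>0)$ weighted by $\|\Delta\bX_1\|_2^2$, and uniform integrability supplies the weight. Fix all data except $\bepsilon_1$, and condition on $\bW_1$.

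If $\Delta\bX_1^\top\widehat\btheta^{\tt{tls}}=0$, the first-order condition reads $\sum_{j\ge2}\dot m^{\tt{tls}}_1(\Delta\bX_j^\top\widehat\btheta^{\tt{tls}},\bY_j)\Delta\bX_j+(Y_{12}-Y_{11})\Delta\bX_1=\bzero$. This uses that $\dot m_1^{\tt{tls}}(0,\by)=y_2-y_1$, which holds for all $\by$, and that $\dot m_1^{\tt{tls}}$ is continuous, so the first-order condition holds as an equality. The constraint forces $\widehat\btheta^{\tt{tls}}$ to coincide with the minimizer of the problem with observation $1$ removed and restricted to the hyperplane $\{\Delta\bX_1^\top\btheta=0\}$, with a compensating multiplier $Y_{12}-Y_{11}$. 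That pins $Y_{12}-Y_{11}$ to a value determined by the other observations. Conditional on everything else, this is a Lebesgue-null event in $\bepsilon_1$ on $\{Y_{11}+Y_{12}>0\}$, because $(Y_{11},Y_{12})$ then has an absolutely continuous component.

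Making this rigorous is the delicate part. It requires handling non-uniqueness of the minimizer via a measurable selection, and the boundary case where exactly one outcome is censored. If the exact-null argument proves fragile, my fallback is to show the expected fraction is $O(n^{-1/2})$ using a leave-one-out estimator $\widehat\btheta_{-1}$, which is independent of $\bepsilon_1$ given the other data, with $\|\widehat\btheta^{\tt{tls}}-\widehat\btheta_{-1}\|_2=O(\|\Delta\bX_1\|_2/n)$ by strong convexity near $\btheta_0$ (from Assumption~\ref{assu:JacobianInvertibility-TLS}). This yields the $o_{L^1}(1)$ term and completes the decomposition $\widehat\bGamma^{\tt{H92}}=\bfJ_0^{\tt{tls}}+o_{L^1}(1)+o_{\mathrm{a.s.}}(1)$.
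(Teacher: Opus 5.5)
Your proposal is correct in substance, but at the decisive step it takes a different route from the paper.

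\textbf{The paper's route.} The paper splits $\widehat\bGamma^{\tt{H92}}=\widehat\bfL^{\tt{H92}}+\widehat\bfR^{\tt{H92}}$ and applies a strong ULLN to both pieces, including the exact-zero one. It then shows that the resulting population functions, evaluated at $\widehat\btheta^{\tt{tls}}$, converge in $L^1$. By Fubini over an independent copy of $\bW$, the troublesome case reduces to $\bbP(\Delta\bx^\top\widehat\btheta^{\tt{tls}}=0)\to0$ for each fixed $\bx$ with $\Delta\bx\neq\bzero$ and $\Delta\bx^\top\btheta_0=0$. That limit follows from Theorem~\ref{thm:AsymptoticNormality-TLS} and the portmanteau theorem. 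So you do not need to work directly with the sample points just because the population counterpart of the exact-zero piece is nonzero at $\btheta_0$; it only has to vanish at $\widehat\btheta^{\tt{tls}}$.

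\textbf{Your route.} Your Steps 1--2 are sound. Your Step 3 uses the exact first-order condition at the sample points, and once tightened it proves more than the paper does. For each $i$, the event $\{\Delta\bX_i^\top\widehat\btheta^{\tt{tls}}=0,\ \Delta\bX_i\neq\bzero,\ Y_{i1}+Y_{i2}>0\}$ is null. Hence the exact-zero piece vanishes almost surely for every $n$, and you get $\widehat\bGamma^{\tt{H92}}\to_{\mathrm{a.s.}}\bfJ_0^{\tt{tls}}$ without invoking asymptotic normality.

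\textbf{What needs repair.} The gap is non-uniqueness, and a measurable selection does not close it. Let $Q_{-1}(\btheta):=\sum_{j\geqslant2}m^{\tt{tls}}(\Delta\bX_j^\top\btheta,\bY_j)$. You need the multiplier to be the same at every minimizer of $Q_{-1}$ over $\{\Delta\bX_1^\top\btheta=0\}$; otherwise $Y_{11}-Y_{12}$ is pinned only to an interval. This does hold, because $\dot m_1^{\tt{tls}}(\cdot,\by)$ is nondecreasing and $1$-Lipschitz. Take two constrained minimizers $\btheta,\btheta'$. Both gradients lie in the span of $\Delta\bX_1$, and $\btheta-\btheta'\perp\Delta\bX_1$. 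Therefore $0=(\nabla Q_{-1}(\btheta)-\nabla Q_{-1}(\btheta'))^\top(\btheta-\btheta')\geqslant\sum_{j\geqslant2}\big(\dot m_1^{\tt{tls}}(\Delta\bX_j^\top\btheta,\bY_j)-\dot m_1^{\tt{tls}}(\Delta\bX_j^\top\btheta',\bY_j)\big)^2$, which forces equal gradients.

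\textbf{Two smaller points.} A union bound over $i$ replaces your exchangeability step, which would otherwise require a permutation-symmetric selection of $\widehat\btheta^{\tt{tls}}$. Also drop the leave-one-out fallback. Sample-level strong convexity does not follow directly from Assumption~\ref{assu:JacobianInvertibility-TLS}, and asymptotic normality gives only $o(1)$ anti-concentration, not the claimed $O(n^{-1/2})$ rate.

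\textbf{Comparison.} The paper's route is shorter and uses Theorem~\ref{thm:AsymptoticNormality-TLS} as a black box, but it yields only $o_{L^1}(1)+o_{\mathrm{a.s.}}(1)$. Yours needs a finite-sample argument about the constrained argmin, and in return delivers strong consistency.
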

This counterintuitive result can be explained as follows. Decompose
$\widehat\bGamma^{\tt{H92}}$ into two parts:
\begin{align*}
\widehat{\mathbf{\Gamma}}^{\tt{H92}} & =\frac{1}{n}\sum_{i=1}^{n}\left(\bone\{-Y_{i2}<\Delta\bX_{i}^{\top}\widehat{\btheta}^{\tt{tls}}<0\}+\bone\{0<\Delta\bX_{i}^{\top}\widehat{\btheta}^{\tt{tls}}<Y_{i1}\}\right)\Delta\bX_{i}\Delta\bX_{i}^{\top}\tag{=:\ensuremath{\widehat{\bfL}^{\tt{H92}}}}\\
 & \quad+\frac{1}{n}\sum_{i=1}^{n}\bone\{Y_{i1}>0,Y_{i2}>0,\Delta\bX_{i}^{\top}\widehat{\btheta}^{\tt{tls}}=0\}\Delta\bX_{i}\Delta\bX_{i}^{\top}\tag{=:\ensuremath{\widehat{\bfR}^{\tt{H92}}}}.
\end{align*}
Although $\widehat\bfL^{\tt{H92}}$ and $\widehat\bfJ^{\tt{tls}}$ are not
identical term by term, the leading component $\widehat\bfL^{\tt{H92}}$ is
closely related to our Hessian estimator in
\eqref{eq:JacobianMidpointEstimator}, which is the empirical analogue of
$\bfJ_0^{\tt{tls}}$. The proof of Theorem
\ref{thm:ConsistencyHonoreHessianEstimator-TLS} shows that
$\widehat\bfL^{\tt{H92}}$ converges to $\bfJ_0^{\tt{tls}}$, while the remainder
term $\widehat\bfR^{\tt{H92}}$ converges to the zero matrix. Key to these
findings is our extended asymptotic normality result in Theorem
\ref{thm:AsymptoticNormality-TLS}. Specifically, even if
$\Delta\bX^\top\btheta_0$ has mass at zero, the absolutely continuous limiting
distribution of $\sqrt n(\widehat\btheta^{\tt{tls}}-\btheta_0)$ implies that,
for any fixed $\bx\in\R^K\backslash\{\bzero\}$,
$\bbP(\bx^\top\widehat\btheta^{\tt{tls}}=0)\to0$. We stress that Theorem
\ref{thm:ConsistencyHonoreHessianEstimator-TLS} is \emph{not} a simple
consequence of consistency of $\widehat\btheta^{\tt{tls}}$, established in
\citet[Theorem 1(iv)]{honore_trimmed_1992}. Indeed, even with a uniform law of
large numbers allowing us to replace sample averages in
$\widehat{\bGamma}^{\tt{H92}}$ by their population counterparts (see the proof
of Theorem \ref{thm:ConsistencyHonoreHessianEstimator-TLS}), the resulting map
$\btheta\mapsto\bGamma(\btheta)$ need not be continuous at $\btheta_0$.
Consequently, a continuous mapping argument need not apply. In the
counterexample of Section \ref{sec:Counterexample}, one finds
$\Gamma(\theta)=[1-\Phi(|\theta|)]\bone\{\theta\neq0\}+\tfrac{1}{4}\bone\{\theta=0\}$,
which has a jump discontinuity at $\theta=0=\theta_0$. The jump size
$(\tfrac{1}{4})$ is exactly the difference between $\bfJ_0^{\tt{tls}}$ and
$\bGamma_0^{\tt{tls}}$ in that example.

A by-product of Theorem \ref{thm:ConsistencyHonoreHessianEstimator-TLS}
is that, under our maintained assumptions, the TLS Hessian consistency
statement in \citet[Theorem 3]{honore_trimmed_1992} does not hold
without additional regularity conditions ensuring continuity at $\btheta_0$.

\section{Trimmed Least Absolute Deviations}\label{sec:TLAD} In this section, we
focus on the TLAD estimator. We abbreviate the trimmed absolute loss, $m_{\Xi}$
in \eqref{eq:TrimmedGenericLoss} with $\Xi=\left|\cdot\right|$, by
$m^{\texttt{tlad}}$, which takes the form
\begin{equation}\label{eq:TrimmedAbsLoss}
m^{\texttt{tlad}}(t,\by)=
\begin{cases}
  |y_{1}|-\left(t+y_{2}\right)\mathrm{sgn}(y_{1}), & t\leqslant-y_{2},\\
  |y_{1}-y_{2}-t|, & t\in\left(-y_{2},y_{1}\right),\\
  |-y_{2}|-\left(t-y_{1}\right)\mathrm{sgn}(-y_{2}), & t\geqslant y_{1}.
\end{cases}
\end{equation}

\subsection{Assumptions for Trimmed Least Absolute
Deviations}\label{sec:Assumptions-TLAD} For notational convenience, abbreviate
the TLAD estimator $\widehat\btheta_\Xi$ in \eqref{eq:TrimmedEstimator} with
$\Xi=\left|\cdot\right|$ by $\widehat{\btheta}^{\texttt{tlad}}$. Also, let
$\bW$, $\mathcal W$, $\bw$, $\bepsilon$, $\mathcal E$, and $\be$ be the same as
in Section \ref{sec:TLS}. Consider the following
assumptions.\footnote{Assumptions
\ref{assu:Non-Degeneracy-TLAD}--\ref{assu:RankRegressors-TLAD} are from
Assumptions S.2, M.2, E.1, E.3, E.4 and R.1, respectively, in
\citet{honore_trimmed_1992}.}
\begin{assumption}[\textbf{Non-Degeneracy}]\label{assu:Non-Degeneracy-TLAD} The
  probability $\bbP(Y_1>0, Y_2>0)$ is strictly positive. \end{assumption} 
\begin{assumption}[\textbf{Integrability}]\label{assu:MomentConditions-TLAD} All
  of the following expectations are finite:
  \[
    \E[\|\bX_1\|_2^2],\;\E[\|\bX_2\|_2^2],\;\E[\|\alpha\Delta\bX\|_2],\;
    \E[\|\varepsilon_1\Delta\bX\|_2]\quad\text{and}\quad
    \E[\|\varepsilon_2\Delta\bX\|_2].
  \]
\end{assumption} 
\begin{assumption}[\textbf{Continuity}]\label{assu:AbsoluteContinuity-TLAD} The
  conditional distribution of $(\varepsilon_1,\varepsilon_2)$ given $\bW$ is
  absolutely continuous with respect to the Lebesgue measure. \end{assumption}
\begin{assumption}[\textbf{Exchangeability}]\label{assu:Exchangeability-TLAD}
  Conditional on $\bW$, $\varepsilon_{1}$ and $\varepsilon_{2}$ are
  exchangeable. \end{assumption}
Since Assumptions \ref{assu:AbsoluteContinuity-TLAD} and
\ref{assu:Exchangeability-TLAD} are the same as Assumptions \ref{assu:continuity
tls} and \ref{assu:Exchangeability-TLS}, following the reasoning in Section
\ref{sec:Assumptions-TLS}, there exists a function $(\bw,\be)\mapsto
f_{\bepsilon\mid\bw}(\be)$, mapping $\mathcal W\times\mathcal E$ to
$[0,\infty)$, that is a version of the PDF of the pair $\bepsilon =
(\varepsilon_1,\varepsilon_2)$ conditional on $\bW = \bw$, which is measurable
in $(\bw,\be)$, and is such that $f_{\bepsilon\mid\bw}(e_1,e_2) =
f_{\bepsilon\mid\bw}(e_2,e_1)$ for all $\bw\in\mathcal W$ and $\be =
(e_1,e_2)\in\mathcal E$. Also, let $(\bw,e)\mapsto f_{\varepsilon\mid\bw}(e) =
\int_{\mathbb R}f_{\bepsilon\mid\bw}(e,u)\dif u$ be the corresponding version of
the common marginal PDF of $\varepsilon_1$ and $\varepsilon_2$ conditional on
$\bW = \bw$ and let $(\bw,e)\mapsto f_{\varepsilon_1 - \varepsilon_2\mid\bw}(e)
= \int_{\mathbb R} f_{\bepsilon\mid\bw}(u+e,u)\dif u$ be the corresponding
version of the PDF of the difference $\varepsilon_1 - \varepsilon_2$ conditional
on $\bW = \bw$.

\begin{assumption}[\textbf{Regularity}]\label{assu:Regularity-TLAD} There is a
constant $C\in(0,\infty)$ such that
$\sup_{e\in\R}f_{\varepsilon_{1}-\varepsilon_{2}\mid\bW}(e)\leqslant C$ and
$\sup_{e\in\R}f_{\varepsilon\mid\bW}(e)\leqslant C$ with probability one.
\end{assumption} 
\begin{assumption}[\textbf{Rank of Regressors}]\label{assu:RankRegressors-TLAD}
There is no proper linear subspace of $\R^K$ containing the random variable
$\mathbf{1}\{\bbP\left(Y_1>0, Y_2>0\middle|\bX_1,\bX_2\right)>0\}\Delta\bX$ with
probability one. \end{assumption} 

\begin{assumption}[\textbf{Continuity, II}]\label{assu:extra continuity} The
functions $\be\mapsto f_{\bepsilon\mid \bW}(\be)$, $e\mapsto
f_{\varepsilon\mid\bW}(e)$, and $e\mapsto f_{\varepsilon_1 -
\varepsilon_2\mid\bW}(e)$ are continuous with probability one. 
\end{assumption}

Assumptions
\ref{assu:Non-Degeneracy-TLAD}--\ref{assu:RankRegressors-TLAD} are the same as
the corresponding assumptions in \citet{honore_trimmed_1992}. Assumption
\ref{assu:extra continuity}, however, is not present in
\citet{honore_trimmed_1992}. We consider this assumption because the asymptotic
normality result in \citet{honore_trimmed_1992} may not hold without it. In
particular, in Section \ref{sec:Counterexample 2} below we provide a DGP that
satisfies Assumptions
\ref{assu:Non-Degeneracy-TLAD}--\ref{assu:RankRegressors-TLAD} and is such that
the Hessian of the expected loss [see \eqref{eq:PopulationLossFunction-TLAD}
below], which appears in the asymptotic variance formula in
\citet{honore_trimmed_1992}, does not exist. We note also that Assumption
\ref{assu:extra continuity} may be stronger than necessary. For example, it
seems possible to obtain the asymptotic normality result assuming continuity of
the functions in Assumption \ref{assu:extra continuity} only on their respective
supports but we opt for a stronger than necessary conditions for clarity of the
argument.

Note also that Assumptions \ref{assu:Non-Degeneracy-TLAD},
\ref{assu:AbsoluteContinuity-TLAD}, \ref{assu:Exchangeability-TLAD}, and
\ref{assu:RankRegressors-TLAD} are the same as the corresponding assumptions for
the TLS estimator (Assumptions \ref{assu:Non-Degeneracy-TLS},
\ref{assu:continuity tls}, \ref{assu:Exchangeability-TLS}, and
\ref{assu:RankRegressors-TLS}, respectively). The TLAD moment conditions in
Assumption \ref{assu:MomentConditions-TLAD} are weaker than those for the TLS
estimator (Assumption \ref{assu:MomentConditions-TLS}), which reflects the fact
that the trimmed absolute loss is less sensitive to outliers than the trimmed
square loss. Assumptions \ref{assu:Regularity-TLAD} and \ref{assu:extra
continuity}, requiring certain PDFs to be bounded and continuous, do not have
analogs in the case of the TLS estimator.

\subsection{Asymptotic Normality in Honor{\'e} (1992)} To state the TLAD normality
result in \citet{honore_trimmed_1992}, introduce the $K\times K$ matrices\footnote{In
\citet{honore_trimmed_1992}, these matrices are denoted $V_3$ and $\Gamma_3$,
respectively. Honor{\'e}'s $\Gamma_3$ is stated in terms of probabilities and
densities of the censored $Y_1$ and $Y_2$, but it is clear from the underlying
proof that he means the latent variables $Y_1^\ast$ and $Y_2^\ast$,
respectively.}
\begin{equation}\label{eq:VarianceSandwichMeat-TLAD}
\left.\begin{aligned}
\bfV_0^{\tt{tlad}}:=
\E\Big[\Big(&\bone\cbr[1]{Y_1>0}\bone\cbr[1]{Y_1-Y_2>\Delta\bX^\top\btheta_0}\\
+&\bone\cbr[1]{Y_2>0}\bone\cbr[1]{Y_1-Y_2<\Delta\bX^\top\btheta_0}\Big)\Delta\bX\Delta\bX^\top\Big]
\end{aligned}\right\}
\end{equation}
and
\begin{equation}\label{eq:VarianceSandwichBread-TLAD}
  \left.\begin{aligned}
  \bGamma_0^{\tt{tlad}}
    &:=\E\Big[\Big(2f_{Y_1^\ast-Y_2^\ast\mid\bW,Y_1^\ast>0,Y_2^\ast>0}\del[1]{\Delta\bX^\top\btheta_0} \\ 
    &\qquad+\mathbf{1}\cbr[1]{\Delta\bX^\top\btheta_0\geqslant0}\bbP\left(Y_2^\ast\leqslant0\middle|\bW\right)f_{Y^\ast_1\mid\bW,Y_2^\ast\leqslant0}\del[1]{\Delta\bX^\top\btheta_0} \\ 
    &\qquad+\mathbf{1}\cbr[1]{\Delta\bX^\top\btheta_0<0}\bbP\left(Y_1^\ast\leqslant0\middle|\bW\right)f_{Y^\ast_2\mid\bW,Y_1^\ast\leqslant0}\del[1]{-\Delta\bX^\top\btheta_0}\Big)\Delta\bX\Delta\bX^\top\Big],
  \end{aligned}\right\}
\end{equation}
where $(\bw,e) \mapsto
f_{Y_1^\ast-Y_2^\ast\mid\bw,Y_1^\ast>0,Y_2^\ast>0}\left(e\right)$ is the
conditional PDF of $Y_1^\ast-Y_2^\ast$ given $\bW=\bw$ and $\{Y_1^\ast>0\}\cap
\{Y_2^\ast>0\}$, $(\bw,e) \mapsto
f_{Y^\ast_1\mid\bw,Y_2^\ast\leqslant0}\left(e\right)$ is the conditional PDF of
$Y_1^\ast$ given $\bW=\bw$ and $Y_2^\ast\leqslant0$, and $(\bw,e)\mapsto
f_{Y^\ast_2\mid\bw,Y_1^\ast\leqslant0}\left(e\right)$ is the conditional PDF of
$Y_2^\ast$ given $\bW=\bw$ and $Y_1^\ast\leqslant0$.\footnote{Here and below, we
assume that the versions of all conditional PDFs and CDFs including latent
outcomes $Y_1^*$ and $Y_2^*$ are obtained (fixed) by combining the conditional
PDF $(\bw,\be)\mapsto f_{\bepsilon\mid\bw}(\be)$ with \eqref{eq:Outcomes}.}

\citet[Theorem 2(iii)]{honore_trimmed_1992} states that if Assumptions
\ref{assu:Non-Degeneracy-TLAD}--\ref{assu:RankRegressors-TLAD} hold, the
expectations involved in defining the matrices $\bfV_0^{\tt{tlad}}$ and
$\bGamma_0^{\tt{tlad}}$ exist (in $\R^{K\times K}$), and both matrices are of
full rank, then
\begin{equation}\label{eq:AsymptoticNormality-TLAD}
  \sqrt{n}\del[1]{\widehat\btheta^{\texttt{tlad}}-\btheta_0}\rightsquigarrow\cN\left(\mathbf{0},(\bGamma_0^{\tt{tlad}})^{-1}\bfV_0^{\tt{tlad}}(\bGamma_0^{\tt{tlad}})^{-1}\right)\;\text{in}\;\R^K.
\end{equation}

Among several steps, the proof of this result in \citet{honore_trimmed_1992}
includes establishing the existence of the Hessian of the expected loss [see
\eqref{eq:PopulationLossFunction-TLAD} below] at the true parameter value
$\btheta_0$. In our notation, this task corresponds to arguing that the function
$L:\R^K\to\R$ defined by
\begin{equation}\label{eq:PopulationLossFunction-TLAD}
  L(\btheta):=\E[m^{\texttt{tlad}}(\Delta\bX^{\top}\btheta,\bY) - m^{\texttt{tlad}}(0,\bY)],\quad\btheta\in\R^{K},
\end{equation}
is twice differentiable at $\btheta=\btheta_0$.\footnote{Note that because the
function $m^{\texttt{tlad}}(t,\by)$ is Lipschitz continuous in its first
argument, it follows from Assumption \ref{assu:MomentConditions-TLAD} that the
expectation in \eqref{eq:PopulationLossFunction-TLAD} is well-defined. Also note
that, as in \citet{honore_trimmed_1992}, we work with the expected loss
\emph{difference} $\E[m^{\texttt{tlad}}(\Delta\bX^{\top}\btheta,\bY) -
m^{\texttt{tlad}}(0,\bY)]$ instead of
$\E[m^{\texttt{tlad}}(\Delta\bX^{\top}\btheta,\bY)]$ because the former loss
gives results under weaker regularity conditions.} To this end, Honor{\'e} used
the LDCT to differentiate once under the expectation, and then argued
differentiability at $\btheta_0$ of the resulting function to arrive at
$\bGamma_0^{\tt{tlad}}$ in \eqref{eq:VarianceSandwichBread-TLAD}. In the next
subsection, however, we will show by example that Assumptions
\ref{assu:Non-Degeneracy-TLAD}--\ref{assu:RankRegressors-TLAD} are not
sufficient to ensure the existence of the Hessian.

\subsection{Counterexample}\label{sec:Counterexample 2}

As in the Section \ref{sec:Counterexample} counterexample, let $K=1$,
$\alpha\equiv0$, $\btheta_0 = 0$, $\bX_1\equiv2$ and $\bX_2\equiv1$. Also, to
define the distribution of the pair $(\varepsilon_1,\varepsilon_2)$, let
$r:\mathbb R\to \mathbb R$ be a continuous function such that (i) $r(t)\geqslant 0$
for all $t\in\mathbb R$, (ii) $\int_{\mathbb R} r(t)\dif t = 1$, and (iii) $r(t)=0$
if $t\leqslant 1$ or $t\geqslant 3$. In addition, let $\mathscr E := \{0,2,4,\dots\}$,
$\mathscr O = \{1,3,5,\dots\}$, and let $\tilde h:[0,1]\to\{0,1\}$ be the function
defined by
$$
\tilde{h}(t)=\begin{cases}
1, & t\in(2^{-(k+1)},2^{-k}]\text{ for }k\in \mathscr E,\\
0, & t\in(2^{-(k+1)},2^{-k}]\text{ for }k\in \mathscr O,\\
0, & t=0.
\end{cases}
$$
Moreover, let $h:\mathbb R \to\mathbb R$ be the function defined by $h(t) =
3\tilde h(|t|) / 4$ for $t\in[-1,1]$ and $0$ otherwise. Note that $h(t)\geqslant0$
for all $t\in\mathbb R$ and
$$
\int_{\mathbb R} h(t)\dif t = \frac{3}{4}\int_{-1}^1 \tilde h(\left|t\right|)\dif t = \frac{3}{2}\int_{0}^1 \tilde h(t)\dif t = \frac{3}{2}\sum_{k\in\mathscr E}\frac{1}{2^{k+1}} = \frac{3}{2}\cdot \frac{1}{2}\cdot\frac{1}{1-\frac{1}{4}} = 1.
$$
Thus, the function $f\colon\mathbb R^2 \to \mathbb R$ defined by
$$
f(e_1,e_2) = 2h(e_1 - e_2)r(e_1 + e_2),\quad e_1,e_2\in\mathbb R,
$$
satisfies $f(e_1,e_2)\geqslant 0$ for all $e_1,e_2\in\mathbb R$ and
\begin{align*}
\int_{\R^2}f(\be)\dif \be
& =2\int_{\R}\int_{\R}h(e_1-e_2)r(e_1+e_2)\dif e_1 \dif e_2\\
&=2\int_{\R}\int_{\R}h(s)r(s+2e_2)\dif s \dif e_2
 =\int_{\R}\int_{\R}h(s)r(t)\dif s \dif t=1.
\end{align*}
Hence, $f$ is the PDF of a certain distribution on $\mathbb R^2$. Let
$(\varepsilon_1,\varepsilon_2)$ be a pair of random variables sampled from this
distribution. Because of the symmetry of the function $f$, the random variables
$\varepsilon_1$ and $\varepsilon_2$ are then exchangeable, and their common PDF
is
$$
f_{\varepsilon}(t) = 2\int_{\mathbb R}h(t - e_2)r(t+e_2)\dif e_2 = 2\int_{\mathbb R} h(u)r(u+2t)\dif u,\quad t\in\mathbb R,
$$
which implies that $0<\varepsilon_1<2$ and $0<\varepsilon_2<2$ with probability one. Moreover, the PDF of the difference $\varepsilon_1 - \varepsilon_2$ is
\begin{align*}
f_{\varepsilon_1 - \varepsilon_2}(t) 
& = \int_{\mathbb R}f(e_2 + t,e_2)\dif e_2 
 = 2\int_{\mathbb R}h(t)r(2e_2 + t)\dif e_2 = \int_{\mathbb R}h(t)r(s)\dif s = h(t),\quad t\in\mathbb R.
\end{align*}
It is then straightforward to check that Assumptions
\ref{assu:Non-Degeneracy-TLAD}--\ref{assu:RankRegressors-TLAD} are all
satisfied. Note, however, that Assumption \ref{assu:extra continuity} is {\em
not} satisfied because the function $f$ is not continuous.

Now, observe that since $\mathbb P(Y_1>0, Y_2>0) = 1$ by construction, a
calculation shows that the expected loss function $L$ in
\eqref{eq:PopulationLossFunction-TLAD} simplifies to
$$
L(\theta) = \E\left[m^{\texttt{tlad}}(\theta,\bY) - m^{\texttt{tlad}}(0,\bY)\right] = \E\left[|Y_1 - Y_2 - \theta| - |Y_1 - Y_2|\right],\quad \theta\in\mathbb R. 
$$
Since the integrand here is convex in $\theta$ and non-differentiable in
$\theta$ only at $Y_1 - Y_2$, which, for a given $\theta$, happens with
probability zero, it follows from \citet[Proposition
2.3]{bertsekas1973stochastic} that $L$ is (everywhere) differentiable with
derivative
$$
\dot{L}(\theta) = \E\left[2\cdot\mathbf{1}\{Y_1 - Y_2 \leqslant \theta\} - 1\right] = 2F_{\varepsilon_1 - \varepsilon_2}(\theta) - 1,
$$
where $F_{\varepsilon_1 - \varepsilon_2}$ denotes the CDF of the difference
$\varepsilon_1 - \varepsilon_2$. We claim that $F_{\varepsilon_1 -
\varepsilon_2}$ is non-differentiable at zero $(=\theta_0)$. Indeed, if $t_k =
2^{-k}$ for $k\in\mathscr E$, then
$$
\frac{F_{\varepsilon_1 - \varepsilon_2}(t_k) - F_{\varepsilon_1 - \varepsilon_2}(0)}{t_k} = 2^k \int_0^{2^{-k}} h(t)\dif t  = \frac{3}{4}\cdot 2^k \cdot \sum_{l\in\mathscr E : l\geqslant k}\frac{1}{2^{l+1}} = \frac{1}{2}
$$
and if $t_k = 2^{-k}$ for $k\in\mathscr O$, then
$$
\frac{F_{\varepsilon_1 - \varepsilon_2}(t_k) - F_{\varepsilon_1 - \varepsilon_2}(0)}{t_k} = 2^k \int_0^{2^{-k}} h(t)\dif t  = \frac{3}{4}\cdot 2^k \cdot \sum_{l\in\mathscr E : l\geqslant k+1}\frac{1}{2^{l+1}} = \frac{1}{4},
$$
which implies that the limit of $(F_{\varepsilon_1 - \varepsilon_2}(t) -
F_{\varepsilon_1 - \varepsilon_2}(0))/t$ as $t\to 0$ does not exist. Thus, the
function $\dot{L}$ is non-differentiable at zero ($=\theta_0$), implying that
the Hessian of $L$ at $\theta_0$ does not exist.

One can show that the TLAD estimator is still $\sqrt{n}$-consistent but not
(asymptotically) normal. Specifically, in Figure
\ref{fig:ECDF_Scaled_Estimates}, we show the (kernel) PDF of the scaled TLAD
estimates $\sqrt{n}\widehat{\theta}^{\tt{tlad}}$ based on $10,000$ Monte Carlo
samples of size $n=50,000$ using the DGP defined earlier in this
section.\footnote{The kernel density was created using the \texttt{R} package
\texttt{ggplot2} with \texttt{geom\_density}. We use a Gaussian kernel and the
\citet[Equation (3.31)]{silverman_density_1986} rule-of-thumb bandwidth (both
\texttt{geom\_density }defaults).} To facilitate comparison, we include the PDF
resulting from fitting a Gaussian distribution to the Monte Carlo dataset of
scaled estimates using maximum likelihood. The PDF of the scaled estimates looks
far from Gaussian. We therefore conclude that establishing the asymptotic
normality result for the TLAD estimator requires more than Assumptions
\ref{assu:Non-Degeneracy-TLAD}--\ref{assu:RankRegressors-TLAD}. We do so by
imposing the continuity conditions in Assumption \ref{assu:extra continuity}. 

\begin{figure}\caption{\normalsize{PDFs of Scaled TLAD Estimates (Solid) and Best Normal
Approximation (Dashed)}}
  \centering
  \includegraphics[width=\textwidth, trim={0 0 0 0.75cm}, clip]{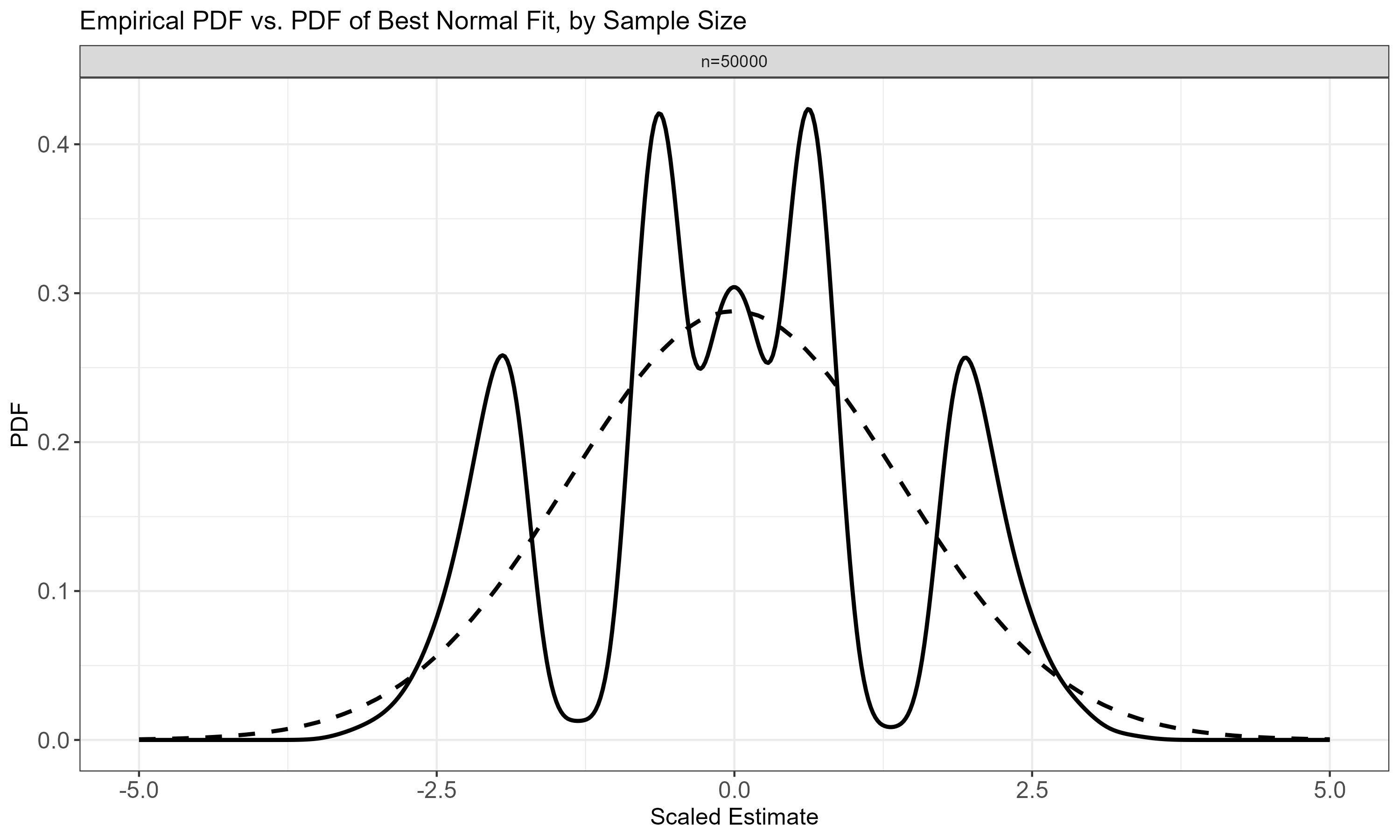}
  \label{fig:ECDF_Scaled_Estimates}
\end{figure}

\subsection{Extended Asymptotic Normality
Result}\label{sec:HessianExistence-TLAD} In this subsection, we provide a
Hessian existence argument for the TLAD estimator using  Assumptions
\ref{assu:Non-Degeneracy-TLAD}--\ref{assu:extra continuity}. We also show that,
unless the pair of latent outcomes $(Y_1^\ast,Y_2^\ast)$ belongs to the first
quadrant with conditional probability given $\bW$ equal to one almost surely,
$\bGamma_0^{\tt{tlad}}$ will \emph{not} be the Hessian of the expected loss
(because of an apparent typographical error). 
We then state the asymptotic normality result under Assumptions
\ref{assu:Non-Degeneracy-TLAD}--\ref{assu:extra continuity} with the corrected Hessian.

To state the existence theorem, let $(\bw,\by)\mapsto
f_{\bY^\ast\mid\bw}\left(\by\right)$ denote the joint PDF of the latent outcomes
$Y_1^\ast$ and $Y_2^\ast$ conditional on $\bW=\bw$ for all $\bw\in\cW$.

\begin{thm}[\textbf{Trimmed Least Absolute Deviations Hessian
Existence}]\label{thm:LossHessianExistence-TLAD} Under Assumptions
\ref{assu:Non-Degeneracy-TLAD}--\ref{assu:extra continuity}, the function
$L:\R^K\to\R$ defined by \eqref{eq:PopulationLossFunction-TLAD} is twice
differentiable at $\btheta=\btheta_0$. Its Hessian matrix $\bfH_0^{\tt{tlad}} :=
\nabla^2L(\btheta_0)$ is given by
\begin{equation}\label{eq:LossHessian-TLAD}
  \left.\begin{aligned}
    \bfH_0^{\tt{tlad}}
    &=\E\Bigg[\bigg(2\int_{0}^{+\infty}f_{\bY^{\ast}\mid\bW}\del[1]{z+\max\cbr[0]{0,\Delta\bX^\top\btheta_0},z-\min\cbr[0]{0,\Delta\bX^\top\btheta_0}}\dif z\\
    & \quad\qquad+\bone\{\Delta\bX^\top\btheta_0\geqslant0\}\int_{-\infty}^{0}f_{\bY^{\ast}\mid\bW}\del[1]{\Delta\bX^\top\btheta_0,z}\dif z\\
    & \quad\qquad+\bone\{\Delta\bX^\top\btheta_0<0\}\int_{-\infty}^{0}f_{\bY^{\ast}\mid\bW}\del[1]{z,-\Delta\bX^\top\btheta_0}\dif z\bigg)\Delta\bX\Delta\bX^\top\Bigg].
  \end{aligned}\right\}
\end{equation}
\end{thm}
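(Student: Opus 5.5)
The plan has two stages: first obtain the gradient of $L$ in a neighbourhood of $\btheta_0$, then differentiate that gradient at $\btheta_0$ after conditioning on $\bW$, so that only the densities in Assumptions \ref{assu:Regularity-TLAD} and \ref{assu:extra continuity} appear.

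\textbf{Step 1 (gradient).} For each $\by$, the map $t\mapsto m^{\texttt{tlad}}(t,\by)$ is Lipschitz with constant one and piecewise linear. Its kinks sit at $t=y_1-y_2$ when $-y_2<y_1-y_2<y_1$, and at $-y_2$ and $y_1$. Under Assumption \ref{assu:AbsoluteContinuity-TLAD}, the event $\Delta\bX^\top\btheta\in\{Y_1-Y_2,-Y_2,Y_1\}$ with $Y_1,Y_2>0$ has probability zero for every $\btheta$, except where $\Delta\bX=\bzero$, in which case the contribution vanishes. Boundary kinks with $Y_1=0$ or $Y_2=0$ are harmless, because the slope there is $\mathrm{sgn}(y_\tau)=0$ on both sides. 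Hence, by \citet[Proposition 2.3]{bertsekas1973stochastic} together with Assumption \ref{assu:MomentConditions-TLAD}, $L$ is differentiable everywhere with
\[
\nabla L(\btheta)=\E\big[\bg(\Delta\bX^\top\btheta,\bW)\Delta\bX\big],
\]
where $\bg(t,\bw)$ is the conditional expectation of the a.e.\ derivative $\dot m_1^{\texttt{tlad}}(t,\bY)$ given $\bW=\bw$. Reading the slope off \eqref{eq:TrimmedAbsLoss} gives
\[
\dot m^{\texttt{tlad}}_1(t,\by)=-\mathbf 1\{y_1>0\}\mathbf 1\{t<y_1-y_2\}+\mathbf 1\{y_2>0\}\mathbf 1\{t>y_1-y_2\}.
\]
I would check this case by case: in the middle region the slope is $\mathrm{sgn}(t-(y_1-y_2))$, and the region boundaries are consistent with it.

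\textbf{Step 2 (conditional probabilities as integrals of $f_{\bY^\ast\mid\bw}$).} Write $Y^\ast_\tau=a+\bx_\tau^\top\btheta_0+\varepsilon_\tau$ and $t=\Delta\bx^\top\btheta$. Then
\[
g(t,\bw)=-\bbP(Y_1^\ast>0,\,Y_1-Y_2>t\mid\bw)+\bbP(Y_2^\ast>0,\,Y_1-Y_2<t\mid\bw).
\]
Split according to whether $Y_2^\ast\le 0$ or $Y_1^\ast\le 0$:
\begin{itemize}
\item Both latent outcomes positive: here $Y_1-Y_2=Y_1^\ast-Y_2^\ast$, and the term is an integral of $f_{\bY^\ast\mid\bw}$ over $\{y_1,y_2>0,\ y_1-y_2\gtrless t\}$.
\item $Y_2^\ast\le 0$: the first term becomes $-\bbP(Y_1^\ast>\max\{0,t\},\,Y_2^\ast\le0\mid\bw)$.
\item $Y_1^\ast\le 0$: the second term becomes $\bbP(Y_2^\ast>\max\{0,-t\},\,Y_1^\ast\le 0\mid\bw)$.
\end{itemize}
Differentiate each piece in $t$ at $t_0(\bw)=\Delta\bx^\top\btheta_0$ using continuity of $f_{\bepsilon\mid\bw}$ (Assumption \ref{assu:extra continuity}).
\begin{itemize}
\item The first quadrant pieces give $2\int_0^\infty f_{\bY^\ast\mid\bw}(z+\max\{0,t\},\,z-\min\{0,t\})\dif z$. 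The line $y_1-y_2=t$ meets the quadrant starting at $(\max\{0,t\},-\min\{0,t\})$, and the two terms contribute equally.
\item The second piece has derivative $\mathbf 1\{t\ge0\}\int_{-\infty}^0 f_{\bY^\ast\mid\bw}(t,z)\dif z$. Its derivative vanishes for $t<0$. At $t=0$ the one-sided derivatives must be matched, which is exactly where the $\geqslant$ versus $<$ convention in \eqref{eq:LossHessian-TLAD} has to be verified. I expect that at $t=0$ the kink contributions from the two censored pieces need to be combined with the quadrant term so that the total is differentiable.
\item The third piece gives $\mathbf 1\{t<0\}\int_{-\infty}^0 f_{\bY^\ast\mid\bw}(z,-t)\dif z$ by symmetry.
\end{itemize}
Summing these yields the integrand of \eqref{eq:LossHessian-TLAD}, pointwise in $\bw$, as the derivative of $t\mapsto g(t,\bw)$ at $t_0(\bw)$.

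\textbf{Step 3 (interchange).} Bound the difference quotient
\[
\frac{g(\Delta\bx^\top\btheta,\bw)-g(\Delta\bx^\top\btheta_0,\bw)}{\Delta\bx^\top(\btheta-\btheta_0)}.
\]
Every piece above is a probability of a strip of width $|\Delta\bx^\top(\btheta-\btheta_0)|$ in the variable $\varepsilon_1-\varepsilon_2$ or $\varepsilon_\tau$. Assumption \ref{assu:Regularity-TLAD} therefore bounds the quotient by a constant multiple of $C$. Consequently
\[
\nabla L(\btheta)-\nabla L(\btheta_0)=\E\big[q(\bW,\btheta)\,\Delta\bX\Delta\bX^\top\big](\btheta-\btheta_0),
\]
with $|q|\le 3C$ and $q(\bW,\btheta)\to$ the integrand a.s.\ as $\btheta\to\btheta_0$ along any sequence. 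The case $\Delta\bX^\top(\btheta-\btheta_0)=0$ is handled trivially. Dominated convergence, using $\E\|\Delta\bX\|_2^2<\infty$, then gives Fréchet differentiability with Jacobian $\bfH_0^{\tt{tlad}}$.

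\textbf{Main obstacle.} I expect the main obstacle to be the pointwise limit at the boundary $t_0(\bw)=0$, where approaches from the two sides traverse different censoring regimes. Showing that the left and right limits agree, so that the $\mathbf 1\{\cdot\ge0\}$ and $\mathbf 1\{\cdot<0\}$ split is consistent, will require continuity of $f_{\bepsilon\mid\bw}$ at the corner and an explicit check of the one-sided quotients. A secondary obstacle is measurability of $q$, which I would cite from the supplemental lemmas on conditional PDFs.
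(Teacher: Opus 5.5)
Your architecture is essentially the paper's. You condition on $\bW$, identify $g(t,\bw)$ with the paper's $\dot\ell_1(t,\bw)$, differentiate at $t_0(\bw)=\Delta\bx^\top\btheta_0$, bound difference quotients by Assumption \ref{assu:Regularity-TLAD}, and conclude by dominated convergence. But the step you single out as the main obstacle is left unresolved, and the mechanism you propose for it would not work.

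At $t=0$ the first-quadrant piece is differentiable on its own, so any kink comes only from the two censored pieces. Your own one-sided computations give
\[
\partial_+ g(0,\bw)-\partial_- g(0,\bw)=\int_{-\infty}^{0}\bigl[f_{\bY^\ast\mid\bw}(0,z)-f_{\bY^\ast\mid\bw}(z,0)\bigr]\dif z .
\]
Continuity of $f_{\bepsilon\mid\bw}$ ensures that these one-sided derivatives exist, not that they agree. No combination with the quadrant term can cancel the difference, since that term has equal left and right derivatives at $0$.

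What closes the gap is Assumption \ref{assu:Exchangeability-TLAD}, which your proposal never uses. If $\Delta\bx^\top\btheta_0=0$, then $Y_1^\ast$ and $Y_2^\ast$ share the location $a+\bx_1^\top\btheta_0=a+\bx_2^\top\btheta_0$. Hence $f_{\bY^\ast\mid\bw}(y_1,y_2)=f_{\bY^\ast\mid\bw}(y_2,y_1)$, and the display vanishes. This is exactly what makes $g(\cdot,\bw)$ differentiable at $t_0(\bw)$ when $t_0(\bw)=0$. When $t_0(\bw)\neq 0$ no matching is needed, because $g(\cdot,\bw)$ is differentiable at every $t\neq 0$. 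It is also why the $\geqslant$ versus $<$ convention in \eqref{eq:LossHessian-TLAD} is immaterial.

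Without exchangeability, suppose $\{\Delta\bX^\top\btheta_0=0,\ \Delta\bX\neq\bzero\}$ has positive probability (e.g.\ $\btheta_0=\bzero$). Then $\nabla L$ is in general only directionally differentiable at $\btheta_0$, with a directional derivative that need not be linear in the direction. Correspondingly, the pointwise limit of your $q(\bW,\btheta)$ in Step 3 would depend on the sign of $\Delta\bX^\top(\btheta-\btheta_0)$.

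There are also three smaller points.
\begin{itemize}
\item For the quadrant piece, continuity of the joint density only gives pointwise convergence of the inner averages $\tau^{-1}\int_t^{t+\tau}f_{\bY^\ast\mid\bw}(u+y_2,y_2)\dif u$. To pass the limit through the unbounded outer integral over $y_2$, you also need continuity of $f_{\varepsilon_1-\varepsilon_2\mid\bw}$ (and of $f_{\varepsilon\mid\bw}$ for the censored pieces), both part of Assumption \ref{assu:extra continuity}. You also need a generalized dominated convergence step such as Theorem \ref{thm:GLDCT}, because the natural dominating functions change with $\tau$.
\item Your reason why censored kinks are harmless is wrong. With $y_1>0=y_2$ the loss has a kink at $t=y_1$, where the slope jumps from $-1$ to $0$; symmetrically, there is a kink at $t=-y_2$ when $y_1=0<y_2$. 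The correct argument is that $t=0$ is not a kink when exactly one outcome is censored, while these genuine kinks are null events by absolute continuity of $Y_\tau^\ast$.
\item The quotient bound is $4C$, not $3C$, because of the factor $2$ in the quadrant piece.
\end{itemize}
Your Step 3 itself is a clean direct route to Fr\'echet differentiability, once pointwise differentiability at $t_0(\bw)$ is secured. It writes $\nabla L(\btheta)-\nabla L(\btheta_0)=\E[q\,\Delta\bX\Delta\bX^\top](\btheta-\btheta_0)$ and applies dominated convergence to $q$ minus the limiting integrand, whereas the paper goes through Hadamard differentiability.
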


\begin{rem}[\textbf{Comparison with \citet{honore_trimmed_1992}}] Comparing
$\bfH_0^{\tt{tlad}}$ in \eqref{eq:LossHessian-TLAD} with the matrix
$\bGamma_0^{\tt{tlad}}$ in \eqref{eq:VarianceSandwichBread-TLAD}, and writing
out the definitions of the conditional PDFs involved, we see that the latter two
contributions to each expression are equal. (To align the two expressions, we
here interpret ``undefined'' times ``zero'' as ``zero.'') However, comparing the
\emph{first} terms of each matrix, we see that the conditional PDF
$f_{Y_1^\ast-Y_2^\ast\mid\bW,Y_1^\ast>0,Y_2^\ast>0}(\Delta\bX^\top\btheta_0)$ in
Honor{\'e}'s $\bGamma_0^{\tt{tlad}}$ is missing multiplication by the
conditional probability $\bbP\left(Y_1^\ast>0, Y_2^\ast>0\middle|\bW\right)$.
Hence, unless the pair of latent outcomes $(Y_1^\ast,Y_2^\ast)$ belongs to the
first quadrant with conditional probability given $\bW$ equal to one almost
surely, $\bGamma_0^{\tt{tlad}}$ will \emph{not} be the Hessian of the expected
loss. Of course, if $(Y_1^\ast,Y_2^\ast)$ resides in the first quadrant with
probability one, then the model involves no censoring.\hfill$\diamondsuit$
\end{rem}

We now revise the asymptotic normality statement in
\eqref{eq:AsymptoticNormality-TLAD} for the TLAD estimator based on our new
understanding of the Hessian $\bfH_0^{\tt{tlad}}$.
\begin{thm}[\textbf{Asymptotic Normality of Trimmed Least Absolute
Deviations}]\label{thm:AsymptoticNormality-TLAD} Let Assumptions
\ref{assu:Non-Degeneracy-TLAD}--\ref{assu:extra continuity} hold, and suppose
that the expectations involved in defining the matrix $\bfV_0^{\tt{tlad}}$ exist
(in $\R^{K\times K}$), and that both matrices $\bfV_0^{\tt{tlad}}$ and
$\bfH_0^{\tt{tlad}}$ are of full rank. Then
\begin{equation}\label{eq:AsymptoticNormality-TLAD-2}
  \sqrt{n}\del[1]{\widehat\btheta^{\tt{tlad}}-\btheta_0}\rightsquigarrow\cN\left(\mathbf{0},(\bfH_0^{\tt{tlad}})^{-1}\bfV_0^{\tt{tlad}}(\bfH_0^{\tt{tlad}})^{-1}\right)\;\text{in}\;\R^K.
\end{equation}
\end{thm}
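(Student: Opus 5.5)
The plan is to follow the standard route for M-estimators with convex, nonsmooth objectives: the convexity lemma of Pollard / Hjort--Pollard, or equivalently Pakes--Pollard / Newey--McFadden Theorem 7.1 for nonsmooth objectives. Consistency of $\widehat\btheta^{\tt{tlad}}$ is taken from \citet[Theorem 1]{honore_trimmed_1992}; it relies only on Assumptions \ref{assu:Non-Degeneracy-TLAD}--\ref{assu:RankRegressors-TLAD}, which we maintain. Let $\bpsi(t,\by)$ denote the (right) derivative of $m^{\tt{tlad}}(\cdot,\by)$, and set $\bg(\bY,\bX;\btheta):=\bpsi(\Delta\bX^\top\btheta,\bY)\Delta\bX$. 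Define the localized process
\[
\bbG_n(\bu):=\sum_{i=1}^n\Big[m^{\tt{tlad}}\big(\Delta\bX_i^\top(\btheta_0+\bu/\sqrt n),\bY_i\big)-m^{\tt{tlad}}\big(\Delta\bX_i^\top\btheta_0,\bY_i\big)\Big],\qquad \bu\in\R^K,
\]
which is convex in $\bu$. The goal is the quadratic approximation
\[
\bbG_n(\bu)=\bu^\top \bW_n+\tfrac12\bu^\top\bfH_0^{\tt{tlad}}\bu+o_{\bbP}(1)\quad\text{for each fixed }\bu,
\]
where $\bW_n:=n^{-1/2}\sum_i \bg(\bY_i,\bX_i;\btheta_0)$. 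The convexity lemma then upgrades pointwise convergence to uniform convergence on compacts, and since $\bfH_0^{\tt{tlad}}$ is invertible, the argmin satisfies $\sqrt n(\widehat\btheta^{\tt{tlad}}-\btheta_0)=-(\bfH_0^{\tt{tlad}})^{-1}\bW_n+o_{\bbP}(1)$.

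The proof has three steps.

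\textbf{Step 1: the score.} Show that $\E[\bg(\bY,\bX;\btheta_0)]=\bzero$ and that $\E[\bg\bg^\top]=\bfV_0^{\tt{tlad}}$. The first claim is Honoré's moment condition and follows from exchangeability (Assumption \ref{assu:Exchangeability-TLAD}). The second requires $\bpsi\in\{-1,0,1\}$ off a null set: the event $Y_1-Y_2=\Delta\bX^\top\btheta_0$ with both outcomes positive has probability zero by Assumption \ref{assu:AbsoluteContinuity-TLAD}. With these two facts, the Lindeberg--Lévy CLT gives $\bW_n\rightsquigarrow\cN(\bzero,\bfV_0^{\tt{tlad}})$, and the second moment is finite because $\E\|\bX_\tau\|_2^2<\infty$.

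\textbf{Step 2: the expected part.} Write $\E[\bbG_n(\bu)]=n[L(\btheta_0+\bu/\sqrt n)-L(\btheta_0)]$. Theorem \ref{thm:LossHessianExistence-TLAD} states that $L$ is twice differentiable at $\btheta_0$ with Hessian $\bfH_0^{\tt{tlad}}$. In addition, $\nabla L(\btheta_0)=\E[\bg]=\bzero$; the interchange of differentiation and expectation is justified by \citet{bertsekas1973stochastic}, because the kinks have probability zero. A second-order Taylor expansion then gives $\E[\bbG_n(\bu)]\to\tfrac12\bu^\top\bfH_0^{\tt{tlad}}\bu$.

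\textbf{Step 3: the stochastic remainder.} Define
\[
R_{ni}:=m^{\tt{tlad}}\big(\Delta\bX_i^\top(\btheta_0+\bu/\sqrt n),\bY_i\big)-m^{\tt{tlad}}\big(\Delta\bX_i^\top\btheta_0,\bY_i\big)-n^{-1/2}\bu^\top\bg(\bY_i,\bX_i;\btheta_0).
\]
We need $\sum_i(R_{ni}-\E R_{ni})=o_{\bbP}(1)$, and since the $R_{ni}$ are i.i.d.\ it suffices to show $n\E[R_{ni}^2]\to0$. Because $m^{\tt{tlad}}(\cdot,\by)$ is piecewise linear with slopes in $\{-1,0,1\}$, $R_{ni}$ vanishes unless a kink of $m^{\tt{tlad}}(\cdot,\bY_i)$ lies between $\Delta\bX_i^\top\btheta_0$ and $\Delta\bX_i^\top\btheta_0+\Delta\bX_i^\top\bu/\sqrt n$. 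The kinks are at $Y_1-Y_2$, $-Y_2$, and $Y_1$. On that event, $|R_{ni}|\leqslant 2|\Delta\bX_i^\top\bu|/\sqrt n$. Hence
\[
n\E[R_{ni}^2]\leqslant 4\,\E\Big[(\Delta\bX^\top\bu)^2\,\bbP\big(\text{kink in the interval}\mid\bW\big)\Big].
\]
The conditional probability inside is $O(\|\Delta\bX\|_2/\sqrt n)\wedge1$ by the density bounds of Assumption \ref{assu:Regularity-TLAD}. The delicate case is a kink at $-Y_2$ or $Y_1$ when $\Delta\bX^\top\btheta_0=0$ and $Y_\tau=0$ with positive probability. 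There the slopes on either side are $\mathrm{sgn}(y_1)$ and $1$, so the slope is unchanged when the corresponding outcome is zero; consequently the effective kink occurs only on the set where that outcome is positive, which the bounded marginal density controls. Dominated convergence with the bound $(\Delta\bX^\top\bu)^2$, which is integrable, then yields $n\E[R_{ni}^2]\to0$.

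Combining Steps 1--3 gives the quadratic approximation; the convexity lemma delivers the Bahadur representation; and Slutsky's lemma gives \eqref{eq:AsymptoticNormality-TLAD-2}. I expect the main obstacle to be Step 3, and in particular verifying carefully that $\bpsi$ has no genuine kink at $t=0$ coming from censored observations. This is exactly the phenomenon that broke the TLS analysis, and it must be handled by direct inspection of \eqref{eq:TrimmedAbsLoss} (where $\mathrm{sgn}(0)=0$ makes the loss flat across the trimming boundary when the outcome is zero), not by any continuity argument. Step 2 is essentially immediate from Theorem \ref{thm:LossHessianExistence-TLAD}, which is where Assumption \ref{assu:extra continuity} enters.
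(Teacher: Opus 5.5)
Your argument is correct, but it takes a genuinely different route from the paper's.

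\textbf{The paper's route.} The paper treats TLAD as a Z-estimator. It applies \citet[Theorem 3.3]{pakes1989simulation} and takes over from \citet{honore_trimmed_1992} the verification of every condition except (ii): consistency, the approximate zero of the nonsmooth sample score at $\widehat\btheta^{\tt{tlad}}$, stochastic equicontinuity of the score process, and the CLT for the score at $\btheta_0$. It then obtains (ii) from Theorem~\ref{thm:LossHessianExistence-TLAD}, with $\nabla L$ playing the role of $G$.

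\textbf{Your route.} You work with the convex criterion itself and use the Pollard/Hjort--Pollard convexity lemma. You therefore need only a pointwise quadratic expansion of the localized objective. You build it from three pieces: a CLT for the score; the second-order (Peano) expansion of $L$ at $\btheta_0$; and an elementary second-moment bound on the remainder. The expansion of $L$ rests on the same key input as the paper, Theorem~\ref{thm:LossHessianExistence-TLAD}, which is where Assumption~\ref{assu:extra continuity} enters. Your remainder bound is essentially the envelope $R_{\overline{\delta}}$ that the paper uses later when verifying (A.4)$'$ for the bootstrap.

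\textbf{What each buys.} Your route is self-contained. It needs no stochastic equicontinuity or VC arguments and no check that the minimizer approximately zeroes the subgradient. It also delivers $\sqrt n$-consistency on its own, so your appeal to Honor\'e's consistency theorem is superfluous. The paper's route is shorter because it recycles Honor\'e's verification of the remaining Pakes--Pollard conditions and changes only the Jacobian.

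\textbf{Two points to tighten.}

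First, the convexity lemma needs $\bfH_0^{\tt{tlad}}$ to be positive definite, not just invertible. This follows from full rank because $L$ is convex. Convexity of $m^{\tt{tlad}}(\cdot,\by)$ is clearest from \eqref{eq: mtlad alternative expressions original}.

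Second, your description of the slopes at the trimming points is not right as written. By \eqref{eq: mtlad alternative expressions original}, the only genuine kinks of $m^{\tt{tlad}}(\cdot,\bY)$ are:
\begin{itemize}
\item at $Y_1-Y_2$ on $\{Y_1>0,Y_2>0\}$;
\item at $Y_1$ on $\{Y_1>0,Y_2=0\}$;
\item at $-Y_2$ on $\{Y_1=0,Y_2>0\}$.
\end{itemize}
On each of these events the kink location has conditional density bounded by $C$ (Assumption~\ref{assu:Regularity-TLAD}), which is exactly what your Step~3 needs. In particular, the ``delicate case'' of a kink sitting at $t=0$ with positive probability never arises.
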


\begin{rem}[\textbf{Comparison with \citet{honore_pairwise_1994}}] We return to
the comparison with \citet{honore_pairwise_1994}, who considered censored
regression with cross-sectional data as discussed in Remark
\ref{rem:ComparisonWithHonorePowellJacobian}. On \emph{ibid.}~(p.~260), the
authors gave an expression for the Hessian of the expected loss at the true
parameter value $\btheta_0$ when using the trimmed absolute loss. Their
expression is (in our notation)
\begin{align*}
  &\E\Bigg[\bigg(2\int_{0}^{+\infty}f_{\varepsilon}\del[1]{z-\min\cbr[1]{\bX_1^\top\btheta_0,\bX_2^\top\btheta_0}}^2\dif z\\
  &\qquad+f_{\varepsilon}\del[1]{-\min\cbr[1]{\bX_1^\top\btheta_0,\bX_2^\top\btheta_0}}F_{\varepsilon}\del[1]{-\min\cbr[1]{\bX_1^\top\btheta_0,\bX_2^\top\btheta_0}}\bigg)
    \Delta\bX\Delta\bX^\top\Bigg],
\end{align*}
with the pairs $(Y_1,\bX_1)$ and $(Y_2,\bX_2)$ representing independent units,
where $f_{\varepsilon}$ and $F_{\varepsilon}$ denote the PDF of $\varepsilon$
and the CDF of $\varepsilon$, respectively. Upon setting $\alpha\equiv0$ in our
panel-data setting, and taking the model errors $(\varepsilon_1,\varepsilon_2)$
to be independent and identically distributed, the conditional PDF of
$(Y_1^\ast,Y_2^\ast)$ given $\bW=\bw$ factors as
$
  f_{Y_1^\ast,Y_2^\ast\mid\bw}\left(y_1^\ast,y_2^\ast\right)=f_{\varepsilon}\del[1]{y_1^\ast-\bx_1^\top\btheta_0}f_{\varepsilon}\del[1]{y_2^\ast-\bx_2^\top\btheta_0}.
$
A straightforward calculation then shows that our panel-data Hessian in
\eqref{eq:LossHessian-TLAD} reduces to the cross-sectional Hessian from
\citet{honore_pairwise_1994} in the previous display.\hfill$\diamondsuit$
\end{rem}

\subsection{Asymptotic Variance Estimation}\label{sec:LADAsymptoticNormalityAndVarianceEstimation}

We now discuss estimation of the asymptotic variance
$\bSigma_0^{\tt{tlad}}:=(\bfH_0^{\tt{tlad}})^{-1}\bfV_0^{\tt{tlad}}(\bfH_0^{\tt{tlad}})^{-1}$
in \eqref{eq:AsymptoticNormality-TLAD-2}. Since the formula for
$\bfH_0^{\tt{tlad}}$ given in \eqref{eq:LossHessian-TLAD} includes a
nonparametric conditional PDF, we propose a bootstrap estimator instead of a
plug-in estimator. Also, since a \emph{na{\"i}ve} bootstrap variance estimator
may fail to be consistent for LAD-type estimators without additional
integrability conditions \citep{BF81,GPSB84}, we instead construct a
quantile-based robust bootstrap variance estimator that avoids reliance on
bootstrap second moments. 

The \emph{robust} bootstrap variance estimator exploits the fact that quantiles
of linear combinations of a normal distribution scale with their standard
deviations. We therefore estimate bootstrap quantiles of suitably centered
parameter combinations and invert this scaling to recover the corresponding
covariance entries. This construction replaces unstable bootstrap variance
estimation based on second moments by identification of covariance entries
through the asymptotic normal geometry of the estimator.

To describe the robust bootstrap variance estimator in more detail, consider a
bootstrap sample
$\{(\widetilde{Y}_{i1},\tbX_{i1},\widetilde{Y}_{i2},\tbX_{i2})\}_{i=1}^n$.
obtained by sampling $n$ observation indices with replacement from
$\{1,2,\dots,n\}$ and extracting the corresponding observations from the
original sample. Also, let 
\begin{equation}\label{eq: bootstrap}
  \tbtheta^{\tt{tlad}} := (\widetilde{\theta}^{\tt{tlad}}_1,\dots,\widetilde{\theta}^{\tt{tlad}}_K)^\top \in \argmin_{\btheta\in\R^K}\cbr[3]{\frac{1}{n}\sum_{i=1}^n m^{\tt{tlad}}\big((\Delta\tbX_i)^\top\btheta,\tbY_i\big)}
\end{equation}
be the bootstrap analog of $\widehat\btheta^{\tt{tlad}} =
(\widehat\theta^{\tt{tlad}}_1,\dots,\widehat\theta^{\tt{tlad}}_K)^\top$, where
we introduced shorthands $\Delta\tbX_i := \tbX_{i1} - \tbX_{i2}$ and
$\tbY_i:=(\widetilde{Y}_{i1},\widetilde{Y}_{i2})$. Identification of the
covariance entries relies on the identities $\var(Z_k)=\Sigma^{\tt{tlad}}_{0kk}$
and $\var(Z_j+Z_k)=\Sigma^{\tt{tlad}}_{0jj} +\Sigma^{\tt{tlad}}_{0kk}
+2\Sigma^{\tt{tlad}}_{0jk}$ for a normal vector $\bZ
\sim\cN(\bzero,\bSigma_0^{\tt{tlad}})$. Accordingly, we recover variances and
covariances from bootstrap quantiles of the corresponding linear combinations.
Motivated by these identities, we estimate the diagonal and off-diagonal entries
of $\bSigma_0^{\tt{tlad}}=[\Sigma^{\tt{tlad}}_{0jk}]_{j,k=1}^K$ separately. For
the \emph{diagonal} elements, for each $k\in[K]$, we set
$\widehat\Sigma_{kk}^{\tt{tlad}}:=[\widehat{q}_{0.9,k}/\Phi^{-1}(0.95)]^2$,
where $\widehat{q}_{0.9,k}$ is the $0.9$ quantile of the conditional
distribution of $\sqrt n|\widetilde\theta^{\tt{tlad}}_k -
\widehat\theta^{\tt{tlad}}_k|$ given the original data and $\Phi^{-1}(0.95)$ is
the $0.95$ quantile of the standard normal distribution. For the
\emph{off-diagonal} elements, for each $(j,k)\in[K]\times[K]$ such that $j\neq
k$, we set $\widehat \Sigma_{jk}^{\tt{tlad}}:=
([\widehat{q}_{0.9,j,k}/\Phi^{-1}(0.95)]^2 - \widehat\Sigma_{jj}^{\tt{tlad}} -
\widehat\Sigma_{kk}^{\tt{tlad}})/2$, where $\widehat{q}_{0.9,j,k}$ is the $0.9$
quantile of the conditional distribution of $\sqrt
n|\widetilde\theta^{\tt{tlad}}_j + \widetilde\theta^{\tt{tlad}}_k -
\widehat\theta^{\tt{tlad}}_j - \widehat\theta^{\tt{tlad}}_k|$ given the original
data.\footnote{As the robust bootstrap covariance estimator
$\widehat\bSigma^{\tt{tlad}}$ is recovered entrywise from quantiles of linear
combinations, it need not be positive semidefinite in finite samples. One can
construct a positive semidefinite variance estimator by projecting
$\widehat\bSigma^{\tt{tlad}}$ onto the set of positive semidefinite matrices.}
In the following theorem, we prove consistency of
$\widehat\bSigma^{\tt{tlad}}:=[\widehat\Sigma_{jk}^{\tt{tlad}}]_{j,k=1}^K$ under
the assumptions of Theorem \ref{thm:AsymptoticNormality-TLAD}.

\begin{thm}[\textbf{Consistency of the Robust Bootstrap Variance Estimator for
TLAD}]\label{thm: bootstrap consistency lad} Let the assumptions of Theorem
\ref{thm:AsymptoticNormality-TLAD} hold. Then
$\widehat\bSigma^{\tt{tlad}}\to_{\bbP}\bSigma_0^{\tt{tlad}}$.
\end{thm}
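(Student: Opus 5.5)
The plan is to show that the bootstrap law of $\sqrt n(\tbtheta^{\tt{tlad}}-\widehat\btheta^{\tt{tlad}})$ consistently estimates the Gaussian limit $\cN(\bzero,\bSigma_0^{\tt{tlad}})$ in probability. Given that, each scalar linear combination $\bu^\top\sqrt n(\tbtheta^{\tt{tlad}}-\widehat\btheta^{\tt{tlad}})$ has a bootstrap quantile converging to the corresponding normal quantile. We only need $\bu=\be_k$ and $\bu=\be_j+\be_k$.

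\textbf{Step 1: bootstrap linearization.} Follow the route used for Theorem \ref{thm:AsymptoticNormality-TLAD}: a convexity-based argument in the style of Hjort--Pollard / Pollard (1991). Write the bootstrap objective in the local parameter $\bh=\sqrt n(\btheta-\btheta_0)$. Because $m^{\tt{tlad}}(\cdot,\by)$ is convex and Lipschitz, the centered bootstrap objective admits the quadratic expansion
\[
\bh^\top(\widetilde\bg_n+\widehat\bg_n)+\tfrac12\bh^\top\bfH_0^{\tt{tlad}}\bh+o_{\bbP}(1)
\]
pointwise in $\bh$, and convexity upgrades this to uniformity on compacts. Here $\widehat\bg_n=n^{-1/2}\sum_i \bpsi_i$ is the original-sample score at $\btheta_0$, with $\bpsi_i$ a subgradient of $m^{\tt{tlad}}(\Delta\bX_i^\top\btheta_0,\bY_i)$ in $\btheta$, and $\widetilde\bg_n$ is its bootstrap-centered version.

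The quadratic term comes from Theorem \ref{thm:LossHessianExistence-TLAD}, which gives twice differentiability of $L$ at $\btheta_0$. The stochastic-equicontinuity remainder is controlled by the same bracketing/VC bound as in the non-bootstrap proof. For the bootstrap counterpart, use the bootstrap empirical process results (Gin\'e--Zinn; van der Vaart--Wellner, Thm.~3.6.13). The scores are uniformly bounded by $\|\Delta\bX\|_2$, which is square integrable under Assumption \ref{assu:MomentConditions-TLAD}, so these results apply to the class of differences $m(\Delta\bX^\top(\btheta_0+\bh/\sqrt n),\bY)-m(\Delta\bX^\top\btheta_0,\bY)-\bh^\top\bpsi/\sqrt n$.

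The convexity lemma then gives
\[
\sqrt n(\tbtheta^{\tt{tlad}}-\btheta_0)=-(\bfH_0^{\tt{tlad}})^{-1}(\widehat\bg_n+\widetilde\bg_n)+o_{\bbP}(1).
\]
Subtracting the analogous representation of $\sqrt n(\widehat\btheta^{\tt{tlad}}-\btheta_0)$ yields
\[
\sqrt n(\tbtheta^{\tt{tlad}}-\widehat\btheta^{\tt{tlad}})=-(\bfH_0^{\tt{tlad}})^{-1}\widetilde\bg_n+o_{\bbP}(1).
\]

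\textbf{Step 2: bootstrap CLT and conditional weak convergence.} By the bootstrap CLT for i.i.d.\ vectors with finite second moments, $\widetilde\bg_n$ converges conditionally to $\cN(\bzero,\bfV_0^{\tt{tlad}})$ in probability. Combined with Step 1, the conditional law of $\sqrt n(\tbtheta^{\tt{tlad}}-\widehat\btheta^{\tt{tlad}})$ converges to $\cN(\bzero,\bSigma_0^{\tt{tlad}})$, in bounded-Lipschitz distance, in probability.

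Translating an unconditional $o_{\bbP}(1)$ remainder into a conditional one is routine: if $\bbP(|R_n|>\delta)\to0$, then $\bbP(|R_n|>\delta\mid\text{data})\to_{\bbP}0$ by Markov's inequality.

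\textbf{Step 3: quantiles to covariance entries.} For $\bu\neq\bzero$, the limit $|\bu^\top\bZ|$ with $\bZ\sim\cN(\bzero,\bSigma_0^{\tt{tlad}})$ has a continuous, strictly increasing CDF at its $0.9$ quantile $\sigma_{\bu}\Phi^{-1}(0.95)$, where $\sigma_{\bu}^2=\bu^\top\bSigma_0^{\tt{tlad}}\bu>0$. Positivity holds because $\bfV_0^{\tt{tlad}}$ and $\bfH_0^{\tt{tlad}}$ have full rank. Hence the bootstrap quantiles converge in probability: $\widehat q_{0.9,k}\to_{\bbP}\Phi^{-1}(0.95)\sqrt{\Sigma^{\tt{tlad}}_{0kk}}$, and similarly $\widehat q_{0.9,j,k}$ converges with variance $\Sigma^{\tt{tlad}}_{0jj}+\Sigma^{\tt{tlad}}_{0kk}+2\Sigma^{\tt{tlad}}_{0jk}$. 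The continuous mapping theorem and the identities recalled before the theorem then give entrywise consistency of $\widehat\bSigma^{\tt{tlad}}$.

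\textbf{Main obstacle.} The hard step is Step 1 for the bootstrap. The expansion must hold conditionally, with a remainder that is $o_{\bbP}(1)$ jointly over data and resampling. It rests on two ingredients. First, bootstrap stochastic equicontinuity of the nonsmooth TLAD loss increments; I expect to handle this via VC-type structure of the indicator pieces together with the bootstrap equicontinuity theorem. Second, the fact that the bootstrap objective is centered at the population $L$, not at a sample average. The latter is handled by writing the bootstrap criterion as (bootstrap empirical minus empirical) plus (empirical minus population) plus $L$, and applying Theorem \ref{thm:LossHessianExistence-TLAD} only to $L$.
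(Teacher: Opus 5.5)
Your argument is correct, but it takes a genuinely different route from the paper.

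\textbf{The paper's route.} The paper invokes Remark~3.3 of \citet{AG92}. It verifies their conditions (A.1)$'$, (A.2)$'$ and (A.4)$'$, and an $L^2$ stochastic-equicontinuity condition for the normalized remainder class $\{r(\cdot,\bdelta)\}$, proved with a VC maximal inequality. It also verifies strong consistency (A.5). The most laborious part is bootstrap consistency (A.6). This needs a separate well-separatedness argument: the growth of $\E[q(\bZ,\bdelta)]/\|\bdelta\|_2$ as $\|\bdelta\|_2\to\infty$ via the rank condition, continuity on compacts, and a bootstrap Glivenko--Cantelli step.

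\textbf{Your route.} You exploit convexity of $t\mapsto m^{\tt{tlad}}(t,\by)$. The Hjort--Pollard lemma then localizes every bootstrap minimizer from a merely pointwise quadratic approximation, so neither (A.6) nor the global identification step is needed.

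It also makes the step you flag as the main obstacle unnecessary. Let $\rho$ denote the remainder after the linear score term. For fixed $\bh$, the terms $n(P_n-P)\rho(\cdot,\bh/\sqrt n)$ and $n(\widetilde P_n-P_n)\rho(\cdot,\bh/\sqrt n)$ have mean zero (conditional mean zero for the second). Their variance, respectively the expectation of the conditional variance, is bounded by
$n\E[\rho(\bZ,\bh/\sqrt n)^2]\leqslant 4\E[(\Delta\bX^\top\bh)^2\mathbf{1}\{|e|\leqslant|\Delta\bX^\top\bh|/\sqrt n\}]$.
Here $e$ is the relevant residual $e_{12}$, $e_1$ or $e_2$ on each censoring cell, and this is the same envelope bound behind the paper's $R_{\odelta}$. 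The right-hand side tends to zero by dominated convergence, since the error distribution is continuous. So Chebyshev suffices, and no bootstrap equicontinuity theorem is needed. Had you used one, it would have to be applied to the fixed normalized class $\{\rho(\cdot,\bdelta)/\|\bdelta\|_2\}$, not to an $n$-dependent class of differences.

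\textbf{Two small points to add.} First, the convexity lemma needs $\bfH_0^{\tt{tlad}}$ to be positive definite. This follows from convexity of $L$ plus full rank, as the paper notes under (A.2)$'$. Second, the paper proves Theorem \ref{thm:AsymptoticNormality-TLAD} through Pakes--Pollard, not through a convexity argument. The Bahadur representation of $\sqrt n(\widehat\btheta^{\tt{tlad}}-\btheta_0)$ that you subtract follows either from that result or from your own argument applied to the original sample.

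\textbf{Trade-off.} The paper's route buys a general M-estimation template that does not hinge on convexity of the criterion. Yours is shorter and more elementary for this convex criterion.
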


\begin{rem}[\textbf{Reducing Computational Burden via \citet{HH17}}]
Note that calculating the quantiles $\widehat{q}_{0.9,k}$ and
$\widehat{q}_{0.9,j,k}$ requires solving the $K$-dimensional optimization
problem \eqref{eq: bootstrap} for many draws of the bootstrap sample. To reduce
computational burden, one may instead employ the alternative bootstrap procedure
of \citet{HH17}, which replaces repeated $K$-dimensional optimization by a
sequence of one-dimensional problems. This approach is particularly attractive
in our setting because the ``meat'' matrix $\bfV_0^{\tt{tlad}}$ in
\eqref{eq:VarianceSandwichMeat-TLAD} admits a consistent plug-in estimator. We
refer the reader to the original paper for further details.\hfill$\diamondsuit$
\end{rem}

\begin{rem}[\textbf{Bootstrap Inference for Convex Pairwise Differencing Estimators}]
A related recent paper is \citet{cattaneo2025robust}, which develops
distribution theory and bootstrap-based inference for a broad class of convex
pairwise differencing estimators. Their framework includes trimmed absolute loss
as a special case. Our analysis in Theorem \ref{thm: bootstrap consistency lad}
is complementary in that it considers a panel setting, where the trimmed loss is
used to eliminate an individual-specific fixed effect and no bandwidth choice is
required. In addition, while \citet{cattaneo2025robust} establish bootstrap
validity for the distributional approximation, Theorem \ref{thm: bootstrap
consistency lad} establishes consistency of a robust bootstrap variance
estimator for the asymptotic variance of the TLAD estimator under the
assumptions of Theorem \ref{thm:AsymptoticNormality-TLAD}.\hfill$\diamondsuit$
\end{rem}

\begin{rem}[\textbf{Conservative Inference}]
The robust bootstrap variance estimator $\widehat\bSigma^{\tt{tlad}}$ is
consistent for $\bSigma_0^{\tt{tlad}}$ under the assumptions of Theorem
\ref{thm:AsymptoticNormality-TLAD}. However, the construction of
$\widehat\bSigma^{\tt{tlad}}$ is based on a particular choice of quantile level
(i.e., $0.9$ for the absolute differences, corresponding to $0.95$ for the
standard normal quantiles). This choice is not essential: the consistency
argument in the proof of Theorem \ref{thm: bootstrap consistency lad} continues
to apply for any fixed interior quantile. We use the particular quantile level
above as a convenient way to avoid reliance on bootstrap second moments, which
may be unstable for LAD-type estimators without additional integrability
conditions.

For comparison, \citet{hahn2021bootstrap} show that, under their conditions,
bootstrap variance estimators based on second moments, although not necessarily
consistent in LAD-type settings, can still be asymptotically conservative for
inference on fixed linear combinations. Thus, if one is willing to forgo
consistency of the variance estimator itself and instead target conservative
inference for linear combinations of the parameter vector, the na{\"i}ve
bootstrap provides an alternative route. This conservative-inference perspective
is distinct from the goal pursued here, namely consistent estimation of the full
asymptotic covariance matrix.\hfill$\diamondsuit$
\end{rem}

\section*{Funding}
Honor{\'e} and S{\o}rensen are grateful for research support from the Aarhus
Center for Econometrics (ACE) funded by the Danish National Research Foundation
grant number DNRF186. Honor{\'e} also acknowledges support from the Gregory
C.~Chow Econometric Research Program at Princeton University.

\begin{singlespace}
  \bibliographystyle{ecta}
  \bibliography{bo}   

\begin{thebibliography}{22}
\newcommand{\enquote}[1]{``#1''}
\expandafter\ifx\csname natexlab\endcsname\relax\def\natexlab#1{#1}\fi

\bibitem[\protect\citeauthoryear{Arcones and Gin{\'e}}{Arcones and
  Gin{\'e}}{1992}]{AG92}
\textsc{Arcones, M. and E.~Gin{\'e}} (1992): \enquote{On the bootstrap of
  M-estimators and other statistical functionals,} in \emph{Exloring the Limits
  of Bootstrap}, New York: Wiley, 13--47.

\bibitem[\protect\citeauthoryear{Arellano and Honor{\'e}}{Arellano and
  Honor{\'e}}{2001}]{arellano2001panel}
\textsc{Arellano, M. and B.~Honor{\'e}} (2001): \enquote{Panel data models:
  some recent developments,} in \emph{Handbook of econometrics}, Elsevier,
  vol.~5, 3229--3296.

\bibitem[\protect\citeauthoryear{Bertsekas}{Bertsekas}{1973}]{bertsekas1973stochastic}
\textsc{Bertsekas, D.~P.} (1973): \enquote{Stochastic optimization problems
  with nondifferentiable cost functionals,} \emph{Journal of Optimization
  Theory and Applications}, 12, 218--231.

\bibitem[\protect\citeauthoryear{Bickel and Freedman}{Bickel and
  Freedman}{1981}]{BF81}
\textsc{Bickel, P. and D.~Freedman} (1981): \enquote{Some asymptotic theory for
  the bootstrap,} \emph{Annals of Statistics}, 1196--1217.

\bibitem[\protect\citeauthoryear{Billingsley}{Billingsley}{1995}]{billingsley1995probability}
\textsc{Billingsley, P.} (1995): \emph{Probability and Measure}, John Wiley \&
  Sons.

\bibitem[\protect\citeauthoryear{Cattaneo, Jansson, and Nagasawa}{Cattaneo
  et~al.}{2025}]{cattaneo2025robust}
\textsc{Cattaneo, M.~D., M.~Jansson, and K.~Nagasawa} (2025): \enquote{Robust
  Inference for Convex Pairwise Difference Estimators,} \emph{arXiv preprint
  arXiv:2510.05991}.

\bibitem[\protect\citeauthoryear{Dudley}{Dudley}{2004}]{D04}
\textsc{Dudley, R.} (2004): \emph{Real Analysis and Probability}, Cambridge
  University Press.

\bibitem[\protect\citeauthoryear{Ghosh, Parr, Singh, and Babu}{Ghosh
  et~al.}{1984}]{GPSB84}
\textsc{Ghosh, M., W.~Parr, K.~Singh, and J.~Babu} (1984): \enquote{A note on
  bootstrapping the sample median,} \emph{Annals of Statistics}, 1130--1135.

\bibitem[\protect\citeauthoryear{Hahn}{Hahn}{1995}]{H95}
\textsc{Hahn, J.} (1995): \enquote{Bootstrapping quantile regression
  estimators,} \emph{Econometric Theory}, 105--121.

\bibitem[\protect\citeauthoryear{Hahn and Liao}{Hahn and
  Liao}{2021}]{hahn2021bootstrap}
\textsc{Hahn, J. and Z.~Liao} (2021): \enquote{Bootstrap standard error
  estimates and inference,} \emph{Econometrica}, 89, 1963--1977.

\bibitem[\protect\citeauthoryear{Honor{\'e} and Hu}{Honor{\'e} and
  Hu}{2017}]{HH17}
\textsc{Honor{\'e}, B. and L.~Hu} (2017): \enquote{Poor (wo)man's bootstrap,}
  \emph{Econometrica}, 1277--1301.

\bibitem[\protect\citeauthoryear{Honor{\'e}}{Honor{\'e}}{1992}]{honore_trimmed_1992}
\textsc{Honor{\'e}, B.~E.} (1992): \enquote{Trimmed {LAD} and least squares
  estimation of truncated and censored regression models with fixed effects,}
  \emph{Econometrica}, 533--565.

\bibitem[\protect\citeauthoryear{Honore and Kyriazidou}{Honore and
  Kyriazidou}{2000}]{honore2000estimation}
\textsc{Honore, B.~E. and E.~Kyriazidou} (2000): \enquote{Estimation of
  Tobit-type models with individual specific effects,} \emph{Econometric
  Reviews}, 19, 341--366.

\bibitem[\protect\citeauthoryear{Honor{\'e} and Powell}{Honor{\'e} and
  Powell}{1994}]{honore_pairwise_1994}
\textsc{Honor{\'e}, B.~E. and J.~L. Powell} (1994): \enquote{Pairwise
  difference estimators of censored and truncated regression models,}
  \emph{Journal of Econometrics}, 64, 241--278.

\bibitem[\protect\citeauthoryear{Kosorok}{Kosorok}{2008}]{kosorok2008introduction}
\textsc{Kosorok, M.~R.} (2008): \emph{Introduction to empirical processes and
  semiparametric inference}, Springer.

\bibitem[\protect\citeauthoryear{Nolan and Pollard}{Nolan and
  Pollard}{1987}]{nolan1987_u_processes_rates}
\textsc{Nolan, D. and D.~Pollard} (1987): \enquote{U-processes: rates of
  convergence,} \emph{The Annals of Statistics}, 780--799.

\bibitem[\protect\citeauthoryear{Pakes and Pollard}{Pakes and
  Pollard}{1989}]{pakes1989simulation}
\textsc{Pakes, A. and D.~Pollard} (1989): \enquote{Simulation and the
  asymptotics of optimization estimators,} \emph{Econometrica}, 1027--1057.

\bibitem[\protect\citeauthoryear{Powell}{Powell}{1984}]{powell1984least}
\textsc{Powell, J.~L.} (1984): \enquote{Least absolute deviations estimation
  for the censored regression model,} \emph{Journal of Econometrics}, 25,
  303--325.

\bibitem[\protect\citeauthoryear{Powell}{Powell}{1986}]{powell1986symmetrically}
---\hspace{-.1pt}---\hspace{-.1pt}--- (1986): \enquote{Symmetrically trimmed
  least squares estimation for Tobit models,} \emph{Econometrica}, 1435--1460.

\bibitem[\protect\citeauthoryear{Royden and Fitzpatrick}{Royden and
  Fitzpatrick}{2023}]{royden2023real}
\textsc{Royden, H.~L. and P.~M. Fitzpatrick} (2023): \emph{Real Analysis},
  Prentice-Hall, 5th ed.

\bibitem[\protect\citeauthoryear{Silverman}{Silverman}{1986}]{silverman_density_1986}
\textsc{Silverman, B.~W.} (1986): \emph{Density {Estimation} for {Statistics}
  and {Data} {Analysis}}, vol.~26 of \emph{Monographs on {Statistics} and
  {Applied} {Probability}}, London: Chapman \& Hall.

\bibitem[\protect\citeauthoryear{Van~der Vaart and Wellner}{Van~der Vaart and
  Wellner}{1996}]{vdVW1996weak}
\textsc{Van~der Vaart, A.~W. and J.~A. Wellner} (1996): \emph{Weak Convergence
  and Empirical Processes: With Applications to Statistics}, Springer.

\end{thebibliography}
\end{singlespace}

\appendix
\part*{Appendix}
\addcontentsline{toc}{part}{Appendix}


\paragraph*{Notation.} Throughout the main and supplemental appendices, we use
the following additional notation. We write $\N=\{1,2,\dotsc\}$ for the natural
numbers and abbreviate $[k]:=\{1,2,\dotsc,k\}$ for $k\in\N$. We let $\cB_d$ and
$\lambda_d$ denote the Borel subsets of $\R^d$ and the Lebesgue measure on
$\R^d$, respectively, and abbreviate $\cB:=\cB_1$ and $\lambda:=\lambda_1$. For
any (Lebesgue integrable) function $f$, we use $\int_{\R^d}f\dif\lambda_d$,
$\int_{\R^d}f(\bu)\lambda_d(\dif\bu)$ and $\int_{\R^d}f(\bu)\dif\bu$
interchangeably to denote integration against the Lebesgue measure on $\R^d$. We
denote the underlying probability space for the data as $(\Omega,\cA,\bbP)$. 

\section{Proofs for Trimmed Least Squares (Section
\ref{sec:TLS})}\label{sec:ProofsTLS}

\subsection{Proof of Theorem \ref{thm:JacobianExistence-TLS}} 

To prove Theorem \ref{thm:JacobianExistence-TLS}, we first show that the
conditional expectation of the trimmed square loss derivative is essentially
well defined under our assumptions. The proof of the following lemma can be
found at the end of this section. To state the lemma, recall that
$f_{\bepsilon\mid\bw}(\cdot)$ denotes the version of the conditional PDF of
$\bepsilon$ given $\bW=\bw$ provided by Lemma \ref{lem: regular conditional
probability}.
\begin{lem}[\textbf{Well-Definedness of Expected Trimmed Square Loss Derivative}]\label{lem:WellDefinedExpectedTrimmedSqLossDerivative}
Let Assumptions
\ref{assu:Non-Degeneracy-TLS}--\ref{assu:RankRegressors-TLS} hold.
Then there is a subset $\cW'\subseteq\cW$ such that $\bbP(\bW\in\cW')=1$ and 
for all $t\in\R$ and $\bw=(\bx_1,\bx_2,a)\in\cW'$, 
\[
 \int_{\cE}\left|\dot{m}_{1}^{\tt{tls}}\del[1]{t,\del[0]{\max\{0,a+\bx_1^\top\btheta_0+e_1\},\max\{0,a+\bx_2^\top\btheta_0+e_2\}}}\right|f_{\bepsilon\mid\bw}(\be)\dif\be < \infty.
\]
\end{lem}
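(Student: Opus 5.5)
Since $\dot m_1^{\tt{tls}}$ is explicit, the plan is to bound it pointwise by something whose conditional integral is finite for almost every $\bw$, and to build $\cW'$ from that. From \eqref{eq:TrimmedSqLossDerivative}, for every $t\in\R$ and $\by\in[0,\infty)^2$ we have $|\dot m_1^{\tt{tls}}(t,\by)|\leqslant y_1+y_2$. On the middle branch, $t\in(-y_2,y_1)$, so $|y_2-y_1+t|\leqslant\max\{y_2,y_1\}$; the outer branches give $y_1$ or $y_2$. This bound does not depend on $t$, so it handles all $t$ at once. That uniformity matters: a null set chosen separately for each $t$ would give an uncountable union. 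With $y_\tau=\max\{0,a+\bx_\tau^\top\btheta_0+e_\tau\}\leqslant |a|+\|\bx_\tau\|_2\|\btheta_0\|_2+|e_\tau|$, the integrand is bounded by
\[
2|a|+\big(\|\bx_1\|_2+\|\bx_2\|_2\big)\|\btheta_0\|_2+|e_1|+|e_2|.
\]
Its integral against $f_{\bepsilon\mid\bw}$ is finite as soon as $\int_{\cE}(|e_1|+|e_2|)f_{\bepsilon\mid\bw}(\be)\dif\be<\infty$, using that $f_{\bepsilon\mid\bw}$ integrates to one for $\bw$ in a full-measure set (Lemma \ref{lem: regular conditional probability}).

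It remains to get a set $\cW'$ of full probability on which the conditional first absolute moments of $\varepsilon_1,\varepsilon_2$ are finite. Assumption \ref{assu:MomentConditions-TLS} does not give $\E|\varepsilon_\tau|<\infty$ directly. It only gives $\E[\varepsilon_\tau^2\|\Delta\bX\|_2^4]<\infty$, and this is the main obstacle. On $\{\Delta\bX\neq\bzero\}$, iterated expectations (with the measurable version of $f_{\bepsilon\mid\bw}$) give $\|\Delta\bx\|_2^4\int e_\tau^2 f_{\bepsilon\mid\bw}(\be)\dif\be<\infty$ for a.e.\ $\bw$. Hence $\int e_\tau^2 f_{\bepsilon\mid\bw}<\infty$, and then $\int|e_\tau|f_{\bepsilon\mid\bw}<\infty$ by Cauchy--Schwarz, for almost every $\bw$ with $\Delta\bx\neq\bzero$.

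On $\{\Delta\bX=\bzero\}$ I would not use the bound at all. There $t$ is arbitrary, but the two indices coincide and the conditional moment of $\varepsilon$ need not be finite, so I would look for a finer argument. One route is exchangeability plus a symmetric cancellation, but $|\dot m_1^{\tt{tls}}|$ does not cancel. The fallback, which I expect to be the actual route, uses the fact that a nonnegative variable $Z$ with $\E[Z\mid\bW]$ finite a.s.\ holds whenever $Z$ is a.s.\ finite, via a generalized conditional expectation. Concretely, I would try to prove finiteness of $\int|e_\tau|f_{\bepsilon\mid\bw}(\be)\dif\be$ on $\{\Delta\bx=\bzero\}$ from the joint absolute continuity in Assumption \ref{assu:continuity tls}. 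If that fails, I would define $\cW'$ by restricting to $\{\Delta\bx\neq\bzero\}$ together with the set where the conditional moment is finite, and check whether the statement is only needed for $\bw$ with $\Delta\bx\neq\bzero$. In Theorem \ref{thm:JacobianExistence-TLS} the score is multiplied by $\Delta\bX$, so units with $\Delta\bX=\bzero$ contribute nothing.

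Finally, I would verify that $\cW'$ is measurable and has probability one. This follows because $\bw\mapsto\int|e_\tau|f_{\bepsilon\mid\bw}(\be)\dif\be$ is measurable by Tonelli applied to the jointly measurable $f_{\bepsilon\mid\bw}(\be)$.
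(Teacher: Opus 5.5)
Your argument on $\{\Delta\bx\neq\bzero\}$ is sound and is essentially the paper's. It combines the $t$-free bound $|\dot m_1^{\texttt{tls}}(t,\by)|\leqslant\max\{y_1,y_2\}$, Lemma \ref{lem: regular conditional probability} applied to $g(\bw,\be)=e_\tau^2\|\Delta\bx\|_2^4$, and Cauchy--Schwarz, which together give finiteness for $P_{\bW}$-a.e.\ $\bw$ with $\Delta\bx\neq\bzero$. The paper bounds instead by $|t|+|\Delta\bx^\top\btheta_0|+|\Delta\varepsilon|$, so it only needs a conditional moment of $\varepsilon_1-\varepsilon_2$; on this set the difference does not matter.

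The gap is $\{\Delta\bx=\bzero\}$, and none of your three ideas closes it. The ``fallback'' principle is false: a nonnegative, a.s.\ finite $Z$ independent of $\bW$ with $\E[Z]=\infty$ has $\E[Z\mid\bW]=\infty$ almost surely. Absolute continuity (Assumption \ref{assu:continuity tls}) carries no moment information. Taking $\cW'\subseteq\{\Delta\bx\neq\bzero\}$ violates $\bbP(\bW\in\cW')=1$ whenever $\bbP(\Delta\bX=\bzero)>0$.

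In fact no argument can close this gap, because the lemma is false as stated. Take $K=1$ and $\alpha\equiv0$. With probability $\tfrac12$ each, let either $(\bX_1,\bX_2)=(2,1)$ with i.i.d.\ standard normal errors, or $\bX_1=\bX_2=0$ with i.i.d.\ standard Cauchy errors. Assumptions \ref{assu:Non-Degeneracy-TLS}--\ref{assu:RankRegressors-TLS} hold, since every moment in Assumption \ref{assu:MomentConditions-TLS} involving $\alpha$ or $\varepsilon_\tau$ carries the factor $\|\Delta\bX\|_2^4$. Yet the atom $\bw=(0,0,0)$ has probability $\tfrac12$, so it lies in every $\cW'$ of probability one. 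There $|\dot m_1^{\texttt{tls}}(0,\bY)|\geqslant Y_1\mathbf{1}\{Y_2=0\}$, whose conditional mean is $\tfrac12\E[\max\{0,\varepsilon_1\}]=\infty$.

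The paper's own proof has exactly this hole. It removes only $\cW_0=\{\bw: q_0(\bw)=\infty,\ \|\Delta\bx\|_2>0\}$, so $\cW'=\cW\setminus\cW_0$ still contains points with $\Delta\bx=\bzero$ and $q_0(\bw)=\infty$, where its bound is not integrable. What either argument actually proves is finiteness for a.e.\ $\bw$ with $\Delta\bx\neq\bzero$.

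That weaker claim is all the later results need. $M^{\texttt{tls}}$ enters $\bG$ only through $M^{\texttt{tls}}(\Delta\bX^\top\btheta,\bW)\Delta\bX$, so it may be set to zero on $\{\Delta\bx=\bzero\}$. Alternatively, one can add a condition such as $\E|\varepsilon_\tau|<\infty$. So your last-resort plan is the right repair, but it repairs the statement rather than proving it.
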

Fixing $\cW'$ provided by Lemma
\ref{lem:WellDefinedExpectedTrimmedSqLossDerivative}, we define a measurable
function $M^{\tt{tls}}:\R\times\cW\to\R$ by
\begin{align}
M^{\tt{tls}}(t,\bw)
&:=\E\left[\dot{m}_{1}^{\tt{tls}}\del[0]{t,\bY}\middle|\bW=\bw\right]\label{eq:ExpectedLossDerivativeDefn},
\end{align}
if $\bw\in\cW'$ and (arbitrarily) set $M^{\tt{tls}}(t,\bw):=0$ if
$\bw\in\cW\backslash\cW'$, and the conditional expectation is understood as an
integral against $f_{\bepsilon\mid\bw}(\cdot)$. With this notation, we can
express $\bG$ as
\begin{align*}
\bG(\btheta) & =\E\sbr[2]{\E\left[\dot{m}_{1}^{\tt{tls}}\del[0]{\Delta\bX^{\top}\btheta,\bY}\middle|\bW\right]\Delta\bX}=\E\sbr[1]{M^{\tt{tls}}\del[0]{\Delta\bX^{\top}\btheta,\bW}\Delta\bX}.
\end{align*}
The differentiability properties of $\bG$ (at $\btheta_0$) will by and large
follow from those of $M^{\tt{tls}}(\cdot,\bw)$. To show Theorem
\ref{thm:JacobianExistence-TLS}, we further rely on the following two lemmas,
the proofs of which can be found at the end of this section. To state the
lemmas, recall that $f_{\varepsilon\mid\bw}(\cdot)$ is the common marginal PDF
of $\varepsilon_1$ and $\varepsilon_2$ conditional on $\bW = \bw$ for
$\bw\in\mathcal W$, and that $F_{\varepsilon\mid\bw}(\cdot)$ is the
corresponding CDF.

\begin{lem}[\textbf{Expected Trimmed Square Loss
Derivative}]\label{lem:ExpectedTrimmedSqLossDerivative} Let Assumptions
\ref{assu:Non-Degeneracy-TLS}--\ref{assu:RankRegressors-TLS} hold.
Then, for all $t\in\R$ and $\bw=(\bx_1,\bx_2,a)\in\cW'$, the function
$M^{\tt{tls}}$ defined in \eqref{eq:ExpectedLossDerivativeDefn} satisfies
\begin{equation}\label{eq:ExpectedLossDerivative}
M^{\tt{tls}}(t,\bw)
=
\mu_{Y_{2}-Y_{1}}(\bw)
+
\begin{cases}
t+\int_{v_{2}(\bw)}^{v_{2}(\bw)-t}F_{\varepsilon\mid\bw}(u)\dif u, & t<0,\\
0, & t=0,\\
t-\int_{v_{1}(\bw)}^{v_{1}(\bw)+t}F_{\varepsilon\mid\bw}(u)\dif u, & t>0,
\end{cases}
\end{equation}
where $\mu_{Y_{2}-Y_{1}}(\bw):=\E[Y_2-Y_1\mid\bW=\bw]$ and
$v_\tau(\bw):=-(a+\bx_\tau^\top\btheta_0),\tau\in\{1,2\}$. 
\end{lem}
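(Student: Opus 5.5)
Fix $\bw=(\bx_1,\bx_2,a)\in\cW'$ and write $Y_\tau=\max\{0,Y_\tau^\ast\}$ with $Y_\tau^\ast=a+\bx_\tau^\top\btheta_0+\varepsilon_\tau=\varepsilon_\tau-v_\tau(\bw)$. All conditional expectations below are integrals against $f_{\bepsilon\mid\bw}$. The plan is to rewrite $\dot m_1^{\tt tls}(t,\bY)$ from \eqref{eq:TrimmedSqLossDerivative} as $Y_2-Y_1$ plus a correction term that depends on only one of the two outcomes. Subtracting $\mu_{Y_2-Y_1}(\bw)$, which is finite by Lemma \ref{lem:WellDefinedExpectedTrimmedSqLossDerivative} and the moment conditions, then reduces the claim to one-dimensional integrals of the marginal CDF.

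\textbf{Case $t=0$.} By \eqref{eq:TrimmedSqLossDerivative}, $\dot m_1^{\tt tls}(0,\by)$ equals $-y_1$ if $y_2=0$, $y_2-y_1$ if $y_1,y_2>0$, and $y_2$ if $y_1=0$. In every configuration this is $y_2-y_1$, which gives the middle line.

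\textbf{Case $t<0$.} Here $t\geqslant y_1\geqslant0$ is impossible. If $t\leqslant -y_2$, the score is $-y_1=(y_2-y_1)-y_2$; otherwise it is $y_2-y_1+t$. Hence
\[
\dot m_1^{\tt tls}(t,\by)=y_2-y_1+t-(y_2+t)\mathbf 1\{y_2+t\leqslant0\}.
\]
Taking conditional expectations, I need $\E[(Y_2+t)\mathbf 1\{Y_2\leqslant -t\}\mid\bW=\bw]$. Since $Y_2\in[0,-t]$ on this event, the layer-cake formula gives $\E[(Y_2+t)\mathbf 1\{Y_2\leqslant -t\}]=-\int_0^{-t}\bbP(Y_2\leqslant s)\dif s$. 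Moreover, $\bbP(Y_2\leqslant s\mid\bw)=F_{\varepsilon\mid\bw}(s+v_2(\bw))$ for $s\geqslant0$, because censoring at zero only adds mass at $0$. Substituting $u=s+v_2$ produces $+\int_{v_2}^{v_2-t}F_{\varepsilon\mid\bw}(u)\dif u$. This is exactly the computation in the Section \ref{sec:Counterexample} example, now done conditionally.

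\textbf{Case $t>0$.} This is symmetric. The score is $y_2-y_1+t-(t-y_1)\mathbf 1\{y_1\leqslant t\}$, and the same layer-cake step applied to $Y_1$, using $\bbP(Y_1\leqslant s\mid\bw)=F_{\varepsilon\mid\bw}(s+v_1(\bw))$, gives $t-\int_{v_1}^{v_1+t}F_{\varepsilon\mid\bw}(u)\dif u$.

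The only place the common marginal is used is in replacing the marginal CDFs of $\varepsilon_1$ and $\varepsilon_2$ by $F_{\varepsilon\mid\bw}$. This requires only conditional stationarity, consistent with Remark \ref{rem:RelaxingExchangeability-TLS}.

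\textbf{Main obstacle.} The delicate part is bookkeeping rather than any deep step.
\begin{itemize}
\item I must check that splitting the expectation into pieces is legitimate. Integrability of $Y_1$ and $Y_2$ conditionally on $\bw\in\cW'$ is needed; if Lemma \ref{lem:WellDefinedExpectedTrimmedSqLossDerivative} alone does not deliver it, I will shrink $\cW'$ using $\E|\varepsilon_\tau|\,\|\Delta\bX\|^2<\infty$ or a similar condition.
\item I must verify the boundary inequalities ($\leqslant$ versus $<$) in the indicator decomposition.
\item I must confirm the CDF identity at $s=0$, where the atom at zero from censoring sits.
\end{itemize}
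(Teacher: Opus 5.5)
Your proof is correct, and it takes a shorter route than the paper's.

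\textbf{The paper's route.} It does not split on the sign of $t$ first. Instead it decomposes $M^{\tt{tls}}$ into the three regional pieces $M_1,M_2,M_3$. It then expands $\mathbf{1}\{t\in(-Y_2,Y_1)\}$ by inclusion--exclusion, treating the atom $\{t=0,\,Y_1=Y_2=0\}$ explicitly, and tracks how $M_1$ and $M_3$ cancel against parts of $M_2$. This produces integrals of the form $\int_{v_\tau}^{v_\tau\pm t}(u-v_\tau)f_{\varepsilon\mid\bw}(u)\dif u$, which it finally integrates by parts.

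\textbf{Your route.} Splitting on the sign of $t$ makes one branch of \eqref{eq:TrimmedSqLossDerivative} vacuous. You therefore get a pointwise identity: the score equals $y_2-y_1+t$ minus a correction that involves only one outcome and is bounded by $|t|$. The layer-cake (Tonelli) step then gives the $F_{\varepsilon\mid\bw}$-integral directly, with no density and no integration by parts. In exchange, the paper's indicator expansion handles all $t$ in one formula, and that expansion is reused in the main text to compare $\bGamma_0^{\tt{tls}}$ with $\bfJ_0^{\tt{tls}}$.

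\textbf{Your integrability worry.} Because the correction is bounded by $|t|$, the only integrability you need is that of $Y_2-Y_1$ under $f_{\bepsilon\mid\bw}$. Your $t=0$ case shows the score at $t=0$ equals $y_2-y_1$, so this is exactly Lemma \ref{lem:WellDefinedExpectedTrimmedSqLossDerivative} at $t=0$. So the concern in your first bullet is unfounded, and you should not shrink $\cW'$, since that would change the statement. The paper's route, by contrast, splits off $\E[Y_1\mid\bW=\bw]$ and $\E[Y_2\mid\bW=\bw]$ separately. It therefore tacitly needs each to be finite, which Lemma \ref{lem:WellDefinedExpectedTrimmedSqLossDerivative} does not by itself deliver.

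\textbf{Your other two checks.} Both go through. For $s\geqslant0$ one has $\max\{0,\varepsilon_\tau-v_\tau\}\leqslant s$ if and only if $\varepsilon_\tau\leqslant s+v_\tau$, including at $s=0$. The choice between $<$ and $\leqslant$ in the indicators is immaterial, because at the boundary $y_2=-t$ (respectively $y_1=t$) the correction term vanishes and the two branches of the score agree.
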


\begin{lem}[\textbf{Differentiability of Expected Trimmed Square Loss
Derivative}]\label{lem:DifferentiabilityOfExpectedTrimmedSqLossDerivative} Let
Assumptions \ref{assu:Non-Degeneracy-TLS}--\ref{assu:RankRegressors-TLS} hold
and fix $\bw = (\bx_1,\bx_2,a)\in\cW'$. 
Then:
\begin{enumerate}[(1)]
\item\label{enu:MLipschitz} $M^{\tt{tls}}(\cdot,\bw)$ is Lipschitz continuous on $\R$
with Lipschitz constant equal to one.
\item\label{enu:MDiffAwayFromZero} $M^{\tt{tls}}(\cdot,\bw)$ is differentiable
at $t\neq0$ with derivative given by
\begin{equation}\label{eq:DerivativeExpectedLoss}
\dot{M}_1^{\tt{tls}}(t,\bw)=\begin{cases}
1-F_{\varepsilon\mid\bw}(v_{2}(\bw)-t), & t<0,\\
1-F_{\varepsilon\mid\bw}(v_{1}(\bw)+t), & t>0.
\end{cases}
\end{equation}
\item\label{enu:MDirDiffAtZero} $M^{\tt{tls}}(\cdot,\bw)$ is semi-differentiable at $t=0$
with left and right derivatives given by
\begin{equation}\label{eq:LeftAndRightDerivativesExpectedLossAtZero}
  \dot{M}_{1-}^{\tt{tls}}(0,\bw)	=1-F_{\varepsilon\mid\bw}(v_{2}(\bw))\quad\text{and}\quad \dot{M}_{1+}^{\tt{tls}}(0,\bw)=1-F_{\varepsilon\mid\bw}(v_{1}(\bw)),
\end{equation}
respectively.
\item\label{enu:MDiffatxtheta}
$M^{\tt{tls}}(\cdot,\bw)$ is differentiable at $t=\Delta\bx^\top\btheta_0$ with
derivative given by
\begin{equation}\label{eq:DerivativeExpectedLossAtxtheta0}
\dot{M}_1^{\tt{tls}}\del[0]{\Delta\bx^\top\btheta_0,\bw}
=1-F_{\varepsilon\mid\bw}\del[1]{\max\cbr[0]{v_1(\bw),v_2(\bw)}}.
\end{equation}
\end{enumerate}
\end{lem}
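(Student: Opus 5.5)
The statement to prove is Lemma~\ref{lem:DifferentiabilityOfExpectedTrimmedSqLossDerivative}. Everything will be read off the closed form for $M^{\tt{tls}}(\cdot,\bw)$ in Lemma~\ref{lem:ExpectedTrimmedSqLossDerivative}, with $\bw\in\cW'$ fixed throughout. Write $v_\tau=v_\tau(\bw)$ and $F=F_{\varepsilon\mid\bw}$.

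\emph{Part (1).} I would not use the formula here. Instead, use that $\dot m_1^{\tt{tls}}(\cdot,\by)$ is $1$-Lipschitz uniformly in $\by$, as noted after \eqref{eq:TrimmedSqLossDerivative}. For $t,s\in\R$, integrate against $f_{\bepsilon\mid\bw}$; the integrals are finite by Lemma~\ref{lem:WellDefinedExpectedTrimmedSqLossDerivative}. This gives
\[
|M^{\tt{tls}}(t,\bw)-M^{\tt{tls}}(s,\bw)|\leqslant |t-s|.
\]

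\emph{Part (2).} For $t>0$ the lemma gives $M^{\tt{tls}}(t,\bw)=\mu+t-\int_{v_1}^{v_1+t}F(u)\dif u$. Under Assumption~\ref{assu:continuity tls}, $F$ is a CDF of an absolutely continuous law, hence continuous. By the fundamental theorem of calculus, the derivative at $t$ is $1-F(v_1+t)$.

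For $t<0$, write $M^{\tt{tls}}(t,\bw)=\mu+t+\int_{v_2}^{v_2-t}F(u)\dif u$. Differentiating the upper limit $v_2-t$ gives $1-F(v_2-t)$. This yields \eqref{eq:DerivativeExpectedLoss}.

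\emph{Part (3).} Since $M^{\tt{tls}}(0,\bw)=\mu$, the right difference quotient is
\[
\frac{M^{\tt{tls}}(h,\bw)-M^{\tt{tls}}(0,\bw)}{h}=1-\frac1h\int_{v_1}^{v_1+h}F(u)\dif u,\qquad h>0.
\]
By continuity of $F$ at $v_1$, this tends to $1-F(v_1)$.

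For $h<0$ the quotient is $\bigl(h+\int_{v_2}^{v_2-h}F\bigr)/h=1-\frac{1}{|h|}\int_{v_2}^{v_2+|h|}F$, which tends to $1-F(v_2)$. This gives \eqref{eq:LeftAndRightDerivativesExpectedLossAtZero}.

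\emph{Part (4).} Note that $t_0:=\Delta\bx^\top\btheta_0=v_2-v_1$.
\begin{itemize}
\item If $t_0>0$, part (2) gives derivative $1-F(v_1+t_0)=1-F(v_2)$. Here $v_2>v_1$, so this equals $1-F(\max\{v_1,v_2\})$.
\item If $t_0<0$, part (2) gives $1-F(v_2-t_0)=1-F(v_1)$, and now $v_1>v_2$, so again this is $1-F(\max\{v_1,v_2\})$.
\item If $t_0=0$, then $v_1=v_2$. The one-sided derivatives from part (3) coincide, so $M^{\tt{tls}}(\cdot,\bw)$ is differentiable at $0$ with derivative $1-F(v_1)=1-F(\max\{v_1,v_2\})$.
\end{itemize}
This gives \eqref{eq:DerivativeExpectedLossAtxtheta0}.

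The main obstacle is the regularity of $F_{\varepsilon\mid\bw}$ for this particular version of the conditional law. Its continuity must hold for every $\bw\in\cW'$, not merely almost surely. It does: $F$ is defined as an integral of the jointly measurable density $f_{\bepsilon\mid\bw}$, so it is continuous in $e$. One should, however, check that $\cW'$ was chosen so that $f_{\bepsilon\mid\bw}$ integrates to one for every $\bw\in\cW'$. If that fails for some $\bw$, I would shrink $\cW'$ by a null set to ensure it.

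The point that makes part (4) work is that the kink of $M^{\tt{tls}}(\cdot,\bw)$ at $0$ disappears exactly when $t_0=0$. This is because the two one-sided derivatives differ only through $v_1$ versus $v_2$, and those coincide there.
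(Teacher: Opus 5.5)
Your proposal is correct and follows the paper's proof essentially step for step: the triangle (Jensen) inequality gives the Lipschitz bound, the fundamental theorem of calculus applied to the closed form of Lemma~\ref{lem:ExpectedTrimmedSqLossDerivative} handles $t\neq0$, averaging the continuous $F_{\varepsilon\mid\bw}$ over shrinking intervals gives the one-sided derivatives at zero, and the identity $\Delta\bx^\top\btheta_0=v_2(\bw)-v_1(\bw)$ removes the kink when $\Delta\bx^\top\btheta_0=0$. Your worry about shrinking $\cW'$ is unnecessary, since Lemma~\ref{lem: regular conditional probability}(iii) already makes $\be\mapsto f_{\bepsilon\mid\bw}(\be)$ a PDF for every $\bw\in\cW$, so $F_{\varepsilon\mid\bw}$ is continuous for all such $\bw$.
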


\begin{proof}[\sc{Proof of Theorem \ref{thm:JacobianExistence-TLS}}]
For notational convenience, abbreviate $M^{\tt{tls}}(\cdot,\bw)$ by
$M(\cdot,\bw)$. 
Fix $\bvtheta\in\R^{K}$ and let
$\{\tau_m\}_{m=1}^{\infty}\subset(0,\infty)$ and
$\{\bvtheta_m\}_{m=1}^{\infty}\subset\R^K$ be such that $\tau_{m}\to0_+$ and
$\bvtheta_m\to\bvtheta$. Then
\begin{align*}
\frac{\bG(\btheta_{0}+\tau_{m}\bvtheta_m)-\bG(\btheta_{0})}{\tau_{m}} 
&=\E\left[\frac{M(\Delta\bX^{\top}(\btheta_{0}+\tau_{m}\bvtheta_m),\bW)-M(\Delta\bX^{\top}\btheta_{0},\bW)}{\tau_{m}}\Delta\bX\right]\\
& =\E\left[\frac{M(\Delta\bX^{\top}\btheta_{0}+\tau_{m}\Delta\bX^{\top}\bvtheta_m,\bW)-M(\Delta\bX^{\top}\btheta_{0},\bW)}{\tau_{m}}\Delta\bX\right].
\end{align*}
To apply the Generalized Lebesgue Dominated Convergence Theorem (GLDCT, Theorem
\ref{thm:GLDCT}) coordinatewise, fix $j\in[K]$ and define functions
$\{f_m\}_{m=1}^\infty$ on $\Omega$ by
\[
f_{m}(\omega):=\frac{M\del[1]{\Delta\bX(\omega)^{\top}\btheta_{0}+\tau_{m}\Delta\bX(\omega)^{\top}\bvtheta_m,\bW(\omega)}
-M\del[1]{\Delta\bX(\omega)^{\top}\btheta_{0},\bW(\omega)}}{\tau_{m}}\Delta X_{j}(\omega).
\]
By measurability of $M(\cdot,\cdot)$ established above, each $f_m$ is measurable
and real-valued. By Lemma
\ref{lem:DifferentiabilityOfExpectedTrimmedSqLossDerivative}.\ref{enu:MLipschitz},
the function $M(\cdot,\bw)$ is Lipschitz continuous with Lipschitz constant one.
Hence, by Lipschitz continuity followed by the the Cauchy--Schwarz inequality,
\[
\left|f_{m}(\omega)\right|
\leqslant\left|\Delta X_{j}(\omega)\right|\envert[1]{\Delta\bX(\omega)^{\top}\bvtheta_m}
\leqslant\enVert[0]{\Delta\bX(\omega)}_2^2\enVert{\bvtheta_m}_2=:g_m(\omega).
\]
Since $\E[\|\Delta\bX\|_2^2]<\infty$ by
Assumption~\ref{assu:MomentConditions-TLS}, each $g_m$ is integrable. The
previous display therefore goes to show that $\{f_{m}\}$ is dominated by the
nonnegative integrable sequence $\{g_m\}$. Since $\bvtheta_m\to\bvtheta$, we
have $g_m\to g$ pointwise on $\Omega$, and
\[
\int_\Omega g_m \dif\bbP=\enVert{\bvtheta_m}_2\E[\|\Delta\bX\|_2^2]
\to \enVert{\bvtheta}_2\E[\|\Delta\bX\|_2^2]=
\int_\Omega g\dif\bbP<\infty,
\]
where $g(\omega):=\enVert[0]{\Delta\bX(\omega)}_2^2\enVert[0]{\bvtheta}_2$.
Lemma
\ref{lem:DifferentiabilityOfExpectedTrimmedSqLossDerivative}.\ref{enu:MDiffatxtheta}
shows that $M(\cdot,\bW(\omega))$ is differentiable at
$t=\Delta\bX(\omega)^{\top}\btheta_{0}$, thus yielding the pointwise convergence
\[
f_{m}(\omega) \to \dot{M}_{1}\del[1]{\Delta\bX(\omega)^{\top}\btheta_{0},\bW(\omega)}\Delta X_{j}(\omega)\Delta\bX(\omega)^{\top}\bvtheta=:f(\omega).
\]
Appealing to the GLDCT, stacking over the coordinates $j\in[K]$, and unpacking
the shorthand notations, we get
\begin{align*}
\frac{\bG(\btheta_{0}+\tau_{m}\bvtheta_m)-\bG(\btheta_{0})}{\tau_{m}}
&\to\E\sbr[2]{\del[2]{1-F_{\varepsilon\mid\bW}\left(\max\left\{V_1,V_2\right\}\right)}\Delta\bX\Delta\bX^\top}\bvtheta\\
&=\E\sbr[2]{\del[2]{1-F_{\varepsilon\mid\bW}\del[1]{-\alpha-\min\{\bX_1^\top\btheta_0,\bX_2^\top\btheta_0\}}}\Delta\bX\Delta\bX^\top}\bvtheta.
\end{align*}
Since the limit exists for every $\bvtheta\in\R^K$, is linear in $\bvtheta$, and
is independent of the sequences $\{\tau_m\}$ and $\{\bvtheta_m\}$, it follows
that $\bG$ is Hadamard differentiable at $\btheta_0$. Since $\R^K$ is
finite-dimensional, Hadamard differentiability is equivalent to (Fr\'echet)
differentiability. Hence, $\bG$ is differentiable at $\btheta_0$ with Jacobian
given by
\[
\nabla\bG(\btheta_{0})=\E\sbr[2]{\del[2]{1-F_{\varepsilon\mid\bW}(-\alpha-\min\{\bX_1^\top\btheta_0,\bX_2^\top\btheta_0\})}\Delta\bX\Delta\bX^\top}.
\]
To arrive at the claimed equivalent expression, \eqref{eq:JacobianMidpoint},
\eqref{eq:JacobianAlt1} and \eqref{eq:JacobianAlt2}, recall the latent
outcomes
$Y_\tau^\ast=\alpha+\bX_\tau^\top\btheta_0+\varepsilon_\tau,\tau\in\{1,2\}$, so
that $Y_\tau=\max\{0,Y_\tau^\ast\}$ and, thus,
$\mathbf{1}\{Y_\tau>0\}=\mathbf{1}\{Y_\tau^\ast>0\}$. Condition on $\bW$, and
observe that when $\bX_1^\top\btheta_0\leqslant\bX_2^\top\btheta_0$, we have
\begin{align*}
  F_{\varepsilon\mid\bW}(-\alpha-\min\{\bX_1^\top\btheta_0,\bX_2^\top\btheta_0\})
  &=\E\left[\mathbf{1}\{\varepsilon_1\leqslant-\alpha-\bX_1^\top\btheta_0\}\middle|\bW\right]
  =\E\left[\mathbf{1}\{Y_1^\ast\leqslant 0\}\middle|\bW\right].
\end{align*}
Hence, when $\bX_1^\top\btheta_0\leqslant\bX_2^\top\btheta_0$,
\begin{align*}
1-F_{\varepsilon\mid\bW}(-\alpha-\min\{\bX_1^\top\btheta_0,\bX_2^\top\btheta_0\})
&=\E\left[\mathbf{1}\{Y_1^\ast> 0\}\middle|\bW\right]=\E\left[\mathbf{1}\{Y_1>0\}\middle|\bW\right].
\end{align*}
Similarly, when $\bX_1^\top\btheta_0>\bX_2^\top\btheta_0$,
\begin{align*}
  1-F_{\varepsilon\mid\bW}(-\alpha-\min\{\bX_1^\top\btheta_0,\bX_2^\top\btheta_0\})
  &=\E\left[\mathbf{1}\{Y_2^\ast> 0\}\middle|\bW\right]=\E\left[\mathbf{1}\{Y_2>0\}\middle|\bW\right].
\end{align*}
The expression \eqref{eq:JacobianAlt1} follows by the law of iterated
expectations. The expression in \eqref{eq:JacobianAlt2} follows by the same
argument, grouping the equality event $\bX_1^\top\btheta_0=\bX_2^\top\btheta_0$
with $\mathbf{1}\{Y_2>0\}$ instead. Finally, the expression \eqref{eq:JacobianMidpoint}
is the simple average of the first two.
\end{proof}

We end this section by providing the proofs of Lemmas
\ref{lem:WellDefinedExpectedTrimmedSqLossDerivative},
\ref{lem:ExpectedTrimmedSqLossDerivative} and
\ref{lem:DifferentiabilityOfExpectedTrimmedSqLossDerivative}, in turn.
\begin{proof}[\sc{Proof of Lemma \ref{lem:ExpectedTrimmedSqLossDerivative}}] 
Fix $t\in\R$. By definition of $\dot{m}_{1}^{\tt{tls}}$ in
\eqref{eq:TrimmedSqLossDerivative} and elementary inequalities
\[
|\dot{m}_{1}^{\tt{tls}}(t,\bY)|\leqslant |t|+|\Delta Y|\leqslant |t|+|\Delta\bX^{\top}\btheta_{0}|+|\Delta\varepsilon|.
\]
Thus,
\(
\E\left[\dot{m}_{1}^{\tt{tls}}(t,\bY)\middle|\bW=\bw\right]
\)
exists (in $\R$) if
\(
\E\left[|\Delta\varepsilon|\middle|\bW=\bw\right]<\infty.
\)
Finiteness of the latter is, in turn, guaranteed if
\(
\E\left[|\Delta\varepsilon|^{2}\middle|\bW=\bw\right]<\infty.
\) 
(Again, the previous three conditional expectations should be understood as integrals
against $f_{\bepsilon\mid\bw}(\cdot)$.) Define $q_0:\cW\to[0,\infty]$ by
\[
q_{0}(\bw)=\int_{\cE}|e_{1}-e_{2}|^{2}f_{\bepsilon\mid\bw}(\be)\dif \be,\quad \bw\in\cW,
\]
so that $q_{0}(\bw)=\E[|\Delta\varepsilon|^{2}\mid\bW=\bw]$ by
definition, and set
\[
\cW_0:=\left\{\bw\in\cW\middle|q_{0}(\bw)=\infty,\,\|\Delta\bx\|_{2}>0\right\} .
\]
We will show that $P(\bW\in\cW_0)=P_{\bW}(\cW_0)=0$, so that we can take
$\cW':=\cW\backslash\cW_0$ and the asserted claim will follow.

To this end, consider
\[
g(\bw,\be)=|e_{1}-e_{2}|^{2}\cdot\|\Delta\bx\|_{2}^{4},\quad(\bw,\be)\in\cW\times\cE.
\]
By Assumption \ref{assu:MomentConditions-TLS}, $g$ is $P$-integrable:
\[
\E\left[g\left(\bW,\bepsilon\right)\right]=\E\left[|\Delta\varepsilon|^{2}\cdot\|\Delta \bX\|_{2}^{4}\right]<\infty,
\]
and so defining $q:\cW\to[0,\infty]$ by
\[
q(\bw):=\int_{\cE}|e_{1}-e_{2}|^{2}\cdot\|\Delta\bx\|_{2}^{4}f_{\bepsilon\mid\bw}(\be)\dif \be,\quad \bw\in\cW,
\]
by Lemma \ref{lem: regular conditional probability}, we see that
\[
\int_{\cW}q(\bw)P_{\bW}\left(\dif \bw\right)=\E\left[g\left(\bW,\bepsilon\right)\right]<\infty.
\]
Now, observe that
\[
\cW_0=\widetilde{\cW}_0:=\left\{\bw\in\cW\middle|q(\bw)=\infty,\,\|\Delta\bx\|_{2}>0\right\} .
\]
In turn, introducing sets
\[
\cW_{m}=\left\{ \bw\in\cW\colon\|\Delta\bx\|_{2}>1/m\right\} ,\quad m\in\N,
\]
we have
\[
\int_{\cW_{m}}q(\bw)P_{\bW}\left(\dif \bw\right)\leqslant\int_{\cW}q(\bw)P_{\bW}\left(\dif \bw\right)<\infty.
\]
Therefore, $P_{\bW}(\widetilde{\cW}_0\cap\cW_{m})=0$ for all $m\in\N$. Since
$(\widetilde{\cW}_0\cap\cW_{m})\uparrow\widetilde{\cW}_0$ as $m\to\infty$, we that
$P_{\bW}(\cW_0)=P_{\bW}(\widetilde{\cW}_0)=0$, as desired.
\end{proof}

\begin{proof}[\sc{Proof of Lemma \ref{lem:ExpectedTrimmedSqLossDerivative}}]
Fix $\bw\in\cW'$. 
As $\bw$ is held fixed throughout the proof, abbreviate $v_{1}:=v_{1}(\bw)$ and
$v_{2}:=v_{2}(\bw)$. Also, for notational convenience, abbreviate
$M^{\tt{tls}}(\cdot,\bw)$ by $M(\cdot,\bw)$. Decompose $M(t,\bw)$ as
\begin{align*}
M(t,\bw)&=
  \E\left[\left(-Y_{1}\right)\mathbf{1}\left\{ t\leqslant-Y_{2}\right\} \middle|\bW=\bw\right]\tag{\ensuremath{=:M_{1}(t,\bw)}}\\
  &\quad+\E\left[\left(t+Y_{2}-Y_{1}\right)\mathbf{1}\{ t\in\left(-Y_{2},Y_{1}\right)\} \middle|\bW=\bw\right]\tag{\ensuremath{=:M_{2}(t,\bw)}}\\
  &\quad+\E\left[Y_{2}\mathbf{1}\{ t\geqslant Y_{1}\} \middle|\bW=\bw\right].\tag{\ensuremath{=:M_{3}(t,\bw)}}
\end{align*}
As we demonstrate below, both $M_{1}(t,\bw)$ and $M_{3}(t,\bw)$ cancel against
terms in $M_{2}(t,\bw)$, so we focus on the latter. Further split $M_{2}$ into
the three parts (Parts $a, b$ and $c$):
\begin{align*}
M_{2}(t,\bw) 
  & =t\E\left[\mathbf{1}\{ t\in(-Y_{2},Y_{1})\} \middle|\bW=\bw\right]\tag{\ensuremath{=:M_{2,a}(t,\bw)}}\\
  & \quad+\E\left[Y_{2}\mathbf{1}\{ t\in(-Y_{2},Y_{1})\} \middle|\bW=\bw\right]\tag{\ensuremath{=:M_{2,b}(t,\bw)}}\\
  & \quad+\E\left[\left(-Y_{1}\right)\mathbf{1}\{ t\in(-Y_{2},Y_{1})\} \middle|\bW=\bw\right].\tag{\ensuremath{=:M_{2,c}(t,\bw)}}
\end{align*}
The indicator appearing in all three parts can be written as
\begin{align*}
\mathbf{1}\{ t\in\left(-Y_{2},Y_{1}\right)\}  & =\mathbf{1}\{ Y_{1}>t\} \cdot\mathbf{1}\{ Y_{2}>-t\} \\
  & =\left[1-\mathbf{1}\{ Y_{1}\leqslant t\} \right]\left[1-\mathbf{1}\{ Y_{2}\leqslant-t\} \right]\\
  & =1-\mathbf{1}\{ Y_{1}\leqslant t\} -\mathbf{1}\{ Y_{2}\leqslant-t\} +\mathbf{1}\{ Y_{1}\leqslant t\}\mathbf{1}\{ Y_{2}\leqslant-t\} .
\end{align*}
Since $Y_1,Y_2\geqslant 0$, the event $\{Y_1\leqslant t,\,Y_2\leqslant -t\}$ is
empty unless $t=0$, in which case it reduces to $\{Y_1=0,\,Y_2=0\}$. Hence,
\[
\mathbf{1}\{ Y_{1}\leqslant t\}\mathbf{1}\{ Y_{2}\leqslant-t\} = \mathbf{1}\{ t=0\}\mathbf{1}\{ Y_{1}=0\}\mathbf{1}\{ Y_{2}=0\} .
\]
It follows that the product $(t+Y_{2}-Y_{1})\mathbf{1}\{ Y_{1}\leqslant
t\}\mathbf{1}\{ Y_{2}\leqslant-t\}$ is identically zero, so that it can be
ignored in the derivation of Parts $a,b$ and $c$, which we turn to next.

For \textbf{Part a}, we get
\begin{align*}
M_{2,a}(t,\bw) 
  & =t\E\left[1-\mathbf{1}\{ Y_{1}\leqslant t\} -\mathbf{1}\{ Y_{2}\leqslant-t\} \middle|\bW=\bw\right]\\
  & =t\left(1-\E\left[\mathbf{1}\left\{ Y_{1}\leqslant t\right\} \middle|\bW=\bw\right]-\E\left[\mathbf{1}\left\{ Y_{2}\leqslant-t\right\} \middle|\bW=\bw\right]\right).
\end{align*}
The right-hand side expectations are, respectively,
\begin{align*}
\E\left[\mathbf{1}\{ Y_{1}\leqslant t\} \middle|\bW=\bw\right] 
  & =\E\left[\mathbf{1}\{ \max\left\{ 0,\varepsilon_{1}-v_{1}\right\} \leqslant t\} \middle|\bW=\bw\right]\\
  & =\mathbf{1}\{ t\geqslant0\}\E\left[\mathbf{1}\{ \varepsilon_{1}\leqslant v_{1}+t\} \middle|\bW=\bw\right]=\mathbf{1}\left\{ t\geqslant0\right\} F_{\varepsilon\mid\bw}\left(v_{1}+t\right)
\end{align*}
and
\begin{align*}
\E\left[\mathbf{1}\{ Y_{2}\leqslant-t\} \middle|\bW=\bw\right] & =\E\left[\mathbf{1}\{ \max\left\{ 0,\varepsilon_{2}-v_{2}\right\} \leqslant-t\}\middle|\bW=\bw\right]\\
  & =\mathbf{1}\left\{ t\leqslant0\right\}\E\left[\mathbf{1}\{ \varepsilon_{2}\leqslant v_{2}-t\} \middle|\bW=\bw\right]=\mathbf{1}\{ t\leqslant0\} F_{\varepsilon\mid\bw}(v_{2}-t),
\end{align*}
where we have used that $\varepsilon_{1}$ and $\varepsilon_{2}$ are
(conditionally) identically distributed (Assumption
\ref{assu:Exchangeability-TLS}).
\[
M_{2,a}(t,\bw)=t\left[1-\mathbf{1}\{ t\geqslant0\} F_{\varepsilon\mid\bw}(v_{1}+t)-\mathbf{1}\{ t\leqslant0\} F_{\varepsilon\mid\bw}(v_{2}-t)\right].
\]

For \textbf{Part b}, we get
\begin{align*}
M_{2,b}(t,\bw) & =\E\left[Y_{2}\left(1-\mathbf{1}\{ Y_{1}\leqslant t\} -\mathbf{1}\{ Y_{2}\leqslant-t\} \right)\middle|\bW=\bw\right]\\
  & =\E\left[Y_{2}-Y_{2}\mathbf{1}\{ Y_{1}\leqslant t\} -Y_{2}\mathbf{1}\{ Y_{2}\leqslant-t\} \middle|\bW=\bw\right]\\
  & =\E\left[Y_{2}\middle|\bW=\bw\right]-\E\left[Y_{2}\mathbf{1}\{ Y_{1}\leqslant t\} \middle|\bW=\bw\right]-\E\left[Y_{2}\mathbf{1}\{ Y_{2}\leqslant-t\} \middle|\bW=\bw\right].
\end{align*}
The \emph{first }term on the right-hand side does not depend on $t$,
and the \emph{second} is $-M_{3}(t,\bw)$. The remaining
right-hand side expectation is
\begin{align*}
\E\left[Y_{2}\mathbf{1}\left\{ Y_{2}\leqslant-t\right\}\middle|\bW=\bw\right]
  & =\E\left[\max\left\{ 0,\varepsilon_{2}-v_{2}\right\} \mathbf{1}\{ \max\left\{ 0,\varepsilon_{2}-v_{2}\right\} \leqslant-t\}\middle|\bW=\bw\right]\\
  & =\mathbf{1}\{ t\leqslant0\}\E\left[\max\left\{ 0,\varepsilon_{2}-v_{2}\right\} \mathbf{1}\left\{ \varepsilon_{2}\leqslant v_{2}-t\right\}\middle|\bW=\bw\right]\\
  & =\mathbf{1}\{ t\leqslant0\}\int_{v_{2}}^{v_{2}-t}(u-v_{2})f_{\varepsilon\mid\bw}(u)\dif u,
\end{align*}
so
\[
M_{2,b}(t,\bw)=\E\left[Y_{2}\middle|\bW=\bw\right]-M_{3}(t,\bw)-\mathbf{1}\{ t\leqslant0\}\int_{v_{2}}^{v_{2}-t}(u-v_{2})f_{\varepsilon\mid\bw}(u)\dif u.
\]

For \textbf{Part c}, we get
\begin{align*}
M_{2,c}(t,\bw) & =\E\left[\left(-Y_{1}\right)\left(1-\mathbf{1}\left\{ Y_{1}\leqslant t\right\} -\mathbf{1}\left\{ Y_{2}\leqslant-t\right\} \right)\middle|\bW=\bw\right]\\
  & =\E\left[Y_{1}\mathbf{1}\left\{ Y_{1}\leqslant t\right\} +Y_{1}\mathbf{1}\left\{ Y_{2}\leqslant-t\right\} -Y_{1}\middle|\bW=\bw\right]\\
  & =\E\left[Y_{1}\mathbf{1}\left\{ Y_{1}\leqslant t\right\} \middle|\bW=\bw\right]+\E\left[Y_{1}\mathbf{1}\left\{ Y_{2}\leqslant-t\right\}\middle|\bW=\bw\right]-\E\left[Y_{1}\middle|\bW=\bw\right].
\end{align*}
The \emph{second} term on the right is $-M_{1}(t,\bw)$
and the \emph{third} does not depend on $t$. The remaining right-hand
side expectation is
\begin{align*}
\E\left[Y_{1}\mathbf{1}\{ Y_{1}\leqslant t\} \middle|\bW=\bw\right] & =\E\left[\max\left\{ 0,\varepsilon_{1}-v_{1}\right\} \mathbf{1}\{ \max\left\{ 0,\varepsilon_{1}-v_{1}\right\} \leqslant t\} \middle|\bW=\bw\right]\\
  & =\mathbf{1}\{ t\geqslant0\} \E\left[\max\left\{ 0,\varepsilon_{1}-v_{1}\right\} \mathbf{1}\{ \varepsilon_{1}\leqslant v_{1}+t\}\middle|\bW=\bw\right]\\
  & =\mathbf{1}\{ t\geqslant0\} \int_{v_{1}}^{v_{1}+t}(u-v_{1})f_{\varepsilon\mid\bw}(u)\dif u,
\end{align*}
so
\[
M_{2,c}(t,\bw)=\mathbf{1}\{ t\geqslant0\}\int_{v_{1}}^{v_{1}+t}(u-v_{1})f_{\varepsilon\mid\bw}(u)\dif u-M_{1}(t,\bw)-\E\left[Y_{1}\middle|\bW=\bw\right].
\]
Since $M\equiv M_{1}+M_{2}+M_{3}$ and $M_{2}\equiv M_{2,a}+M_{2,b}+M_{2,c}$,
collecting terms, we see that
\begin{align*}
M(t,\bw) & =t\left[1-\mathbf{1}\{ t\geqslant0\} F_{\varepsilon\mid\bw}(v_{1}+t)-\mathbf{1}\{ t\leqslant0\} F_{\varepsilon\mid\bw}(v_{2}-t)\right]\\
  & \quad-\mathbf{1}\{ t\leqslant0\}\int_{v_{2}}^{v_{2}-t}(u-v_{2})f_{\varepsilon\mid\bw}(u)\dif u\\
  & \quad+\mathbf{1}\{ t\geqslant0\}\int_{v_{1}}^{v_{1}+t}(u-v_{1})f_{\varepsilon\mid\bw}(u)\dif u+\mu_{Y_{2}-Y_{1}}(\bw),
\end{align*}
where $\mu_{Y_{2}-Y_{1}}(\bw)=\E\left[Y_{2}-Y_{1}\middle|\bW=\bw\right]$.
Expressed piecewise, we get
\[
M(t,\bw)=\mu_{Y_{2}-Y_{1}}(\bw)+\begin{cases}
t\left[1-F_{\varepsilon\mid\bw}(v_{2}-t)\right]-\int_{v_{2}}^{v_{2}-t}(u-v_{2})f_{\varepsilon\mid\bw}(u)\dif u, & t<0,\\
0, & t=0,\\
t\left[1-F_{\varepsilon\mid\bw}(v_{1}+t)\right]+\int_{v_{1}}^{v_{1}+t}(u-v_{1})f_{\varepsilon\mid\bw}(u)\dif u, & t>0.
\end{cases}
\]
For $t<0$, integration by parts gives
\[
\int_{v_2}^{v_2-t}(u-v_2)f_{\varepsilon\mid\bw}(u)\dif u
= \left[(u-v_2)F_{\varepsilon\mid\bw}(u)\right]_{v_2}^{v_2-t}
-\int_{v_2}^{v_2-t}F_{\varepsilon\mid\bw}(u)\dif u,
\]
and substituting this identity yields \eqref{eq:ExpectedLossDerivative}; the
case $t>0$ is analogous.
\end{proof}

\begin{proof}[\sc{Proof of Lemma \ref{lem:DifferentiabilityOfExpectedTrimmedSqLossDerivative}}]
Fix $\bw\in\cW'$. 
As in the proof of Lemma \ref{lem:ExpectedTrimmedSqLossDerivative}, abbreviate
$v_{1}:=v_{1}(\bw)$, $v_{2}:=v_{2}(\bw)$ and
$M(\cdot,\bw):=M^{\tt{tls}}(\cdot,\bw)$.

\ref{enu:MLipschitz}.~The function $\dot{m}_{1}^{\tt{tls}}(\cdot,\by)$ defined
in \eqref{eq:TrimmedSqLossDerivative} is Lipschitz continuous with Lipschitz
constant equal to one regardless of $\by\in[0,\infty)\times[0,\infty)$, so $M(\cdot,\bw)$
inherits these properties via Jensen's inequality (conditional on $\bW=\bw$).

\ref{enu:MDiffAwayFromZero}.~From the expression \eqref{eq:ExpectedLossDerivative} for
$M(\cdot,\bw)$, continuity of $F_{\varepsilon\mid\bw}(\cdot)$ and the
fundamental theorem of calculus (and chain rule) imply that $M(\cdot,\bw)$ is
differentiable at every $t\neq0$ with the derivatives taking the form in
\eqref{eq:DerivativeExpectedLoss}. 

\ref{enu:MDirDiffAtZero}.~To show the semi-differentiability of $M(\cdot,\bw)$
at zero, consider first a sequence $\{t_m\}_{m=1}^\infty$ in $(0,\infty)$
converging to zero from above ($t_m\to0_+$). Fix $\epsilon>0$. Continuity of
$F_{\varepsilon\mid\bw}(\cdot)$ ensures that there is a $\delta>0$ such that
$|u-v_1|\leqslant\delta$ implies
$|F_{\varepsilon\mid\bw}(u)-F_{\varepsilon\mid\bw}(v_1)|\leqslant\epsilon$. As
$t_m\to0_+$, for $m$ large enough we have $0<t_m\leqslant\delta$, so that
\begin{align*}
\left|\frac{1}{t_m}\int_{v_{1}}^{v_{1}+t_m}F_{\varepsilon\mid\bw}(u)\dif u-F_{\varepsilon\mid\bw}(v_{1})\right|
&=\left|\frac{1}{t_m}\int_{v_{1}}^{v_{1}+t_m}\left[F_{\varepsilon\mid\bw}(u)-F_{\varepsilon\mid\bw}(v_{1})\right]\dif u\right|\\
&\leqslant\frac{1}{t_m}\int_{v_{1}}^{v_{1}+t_m}\left|F_{\varepsilon\mid\bw}(u)-F_{\varepsilon\mid\bw}(v_{1})\right|\dif u\\
&\leqslant\frac{1}{t_m}\int_{v_{1}}^{v_{1}+t_m}\epsilon\dif u=\epsilon.
\end{align*}
Since $t_m\to0_+$ and $\epsilon>0$ were arbitrary, we have shown that
$\lim_{t\to0_+}(1/t)\int_{v_{1}}^{v_{1}+t}F_{\varepsilon\mid\bw}(u)\dif u=F_{\varepsilon\mid\bw}(v_{1})$,
from which we get the \emph{right} differentiability at zero with right
derivative given by
\[
\lim_{t\to0_+}\frac{M(t,\bw)-M(0,\bw)}{t}=\lim_{t\to0_+}\left(1-\frac{1}{t}\int_{v_{1}}^{v_{1}+t}F_{\varepsilon\mid\bw}(u)\dif u\right)=1-F_{\varepsilon\mid\bw}(v_{1}).
\]
\emph{Left} differentiability at zero with the claimed left derivative follows
analogously.

\ref{enu:MDiffatxtheta}.~If $\Delta\bx^\top\btheta_0\neq 0$, then $M(\cdot,\bw)$
is differentiable at $t=\Delta\bx^\top\btheta_0$ by Item
\ref{enu:MDiffAwayFromZero}, and \eqref{eq:DerivativeExpectedLossAtxtheta0}
follows by substituting $t=\Delta\bx^\top\btheta_0$ into
\eqref{eq:DerivativeExpectedLoss}. If instead $\Delta\bx^\top\btheta_0=0$, then
$v_1(\bw)=v_2(\bw)$ and the left and right derivatives in
\eqref{eq:LeftAndRightDerivativesExpectedLossAtZero} coincide, so $M(\cdot,\bw)$
is differentiable at $t=\Delta\bx^\top\btheta_0=0$ with derivative given in
\eqref{eq:DerivativeExpectedLossAtxtheta0}.
\end{proof}

\subsection{Proof of Theorem \ref{thm:AsymptoticNormality-TLS}}

\begin{proof}[\sc{Proof of Theorem \ref{thm:AsymptoticNormality-TLS}}]
As in \citet{honore_trimmed_1992}, we set up for an application of
\citet[Theorem 3.3]{pakes1989simulation}. Following the proof of \citet[Theorem
2(iv)]{honore_trimmed_1992}, we verify all conditions of \citet[Theorem
3.3]{pakes1989simulation} except for their condition (ii). Our Theorem
\ref{thm:JacobianExistence-TLS} and Assumption
\ref{assu:JacobianInvertibility-TLS} combine to show that $\bG$ is
differentiable at $\btheta_0$ with invertible Jacobian, which is precisely the
desired condition (ii).
\end{proof}

\subsection{Proof of Theorem \ref{thm:VarianceConsistency-TLS}}

\begin{proof}[\sc{Proof of Theorem \ref{thm:VarianceConsistency-TLS}}]
Throughout the proof, we drop the superscripts from the estimators
$\widehat\btheta^{\tt{tls}}$, $\widehat\bfV^{\tt{tls}}$ and
$\widehat\bfJ^{\tt{tls}}$ for notational convenience. We next discuss the proof for
$\widehat\bfV$ and $\widehat\bfJ$, in turn.

\citet[pp.~1043--1044]{pakes1989simulation} give a strategy for proving
consistency of an estimator of $\bfV_0$ of the plug-in form. This strategy
relies on verification of the assumptions in \citet[Lemma
2.17]{pakes1989simulation}. These assumptions are for TLS all verified on
\citet[p.~564]{honore_trimmed_1992} as part of the proof of condition (iii) for
asymptotic normality. Since $\widehat\bfV$ is of the plug-in form, strong
consistency follows from the calculation on
\citet[pp.~1043--1044]{pakes1989simulation} and the strong consistency of TLS as
established in \citet[Theorem 1(iv)]{honore_trimmed_1992}. 

Define the alternative estimators of $\bfJ_0$,
\begin{align*}
\widehat\bfJ_1&:=\frac{1}{n}\sum_{i=1}^n\del[2]{
  \mathbf{1}\{Y_{i1}>0\}\mathbf{1}\{\Delta\bX_i^\top\widehat\btheta^{\tt{tls}}\leqslant0\}
 +\mathbf{1}\{Y_{i2}>0\}\mathbf{1}\{\Delta\bX_i^\top\widehat\btheta^{\tt{tls}}>0\}}\Delta\bX_i\Delta\bX_i^\top
\end{align*}
and
\begin{align*}
\widehat\bfJ_2&:=\frac{1}{n}\sum_{i=1}^n\del[2]{
  \mathbf{1}\{Y_{i1}>0\}\mathbf{1}\{\Delta\bX_i^\top\widehat\btheta^{\tt{tls}}<0\}
 +\mathbf{1}\{Y_{i2}>0\}\mathbf{1}\{\Delta\bX_i^\top\widehat\btheta^{\tt{tls}}\geqslant0\}}\Delta\bX_i\Delta\bX_i^\top,
\end{align*}
which are based on the equivalent expressions for $\bfJ_0$ given in
\eqref{eq:JacobianAlt1} and \eqref{eq:JacobianAlt2}, respectively. Note that
$\widehat\bfJ$ is the simple average of $\widehat\bfJ_1$ and $\widehat\bfJ_2$,
so that consistency of $\widehat\bfJ$ will follow from that of $\widehat\bfJ_1$
and $\widehat\bfJ_2$. We next argue the consistency of $\widehat\bfJ_1$; the
argument for $\widehat\bfJ_2$ is analogous.

For $\widehat\bfJ_1$, it suffices to show that each entry of $\widehat\bfJ_1$
converges in probability to the corresponding entry of $\bfJ_0$. Fix therefore
$(j,k)\in[K]\times[K]$, and consider the function class defined by
\begin{align*}
  \cF&:=\left\{f:\cZ\to\R\middle| f=f(\cdot,\btheta),\btheta\in\R^K\right\},\\
  f(\bz;\btheta)&:=\del[1]{\mathbf{1}\{y_1>0\}\mathbf{1}\{\Delta\bx^\top\btheta\leqslant0\}+\mathbf{1}\{y_2>0\}\mathbf{1}\{\Delta\bx^\top\btheta>0\}}\Delta x_j \Delta x_k,
\end{align*} 
with $\Delta \bx :=\bx_{1}-\bx_{2}$. We here employ linear functional notation
familiar from the empirical process literature, so that $P_n f(\cdot;\btheta)$
denotes the empirical average of $f$ over the sample $\{\bZ_i\}_{i=1}^n$ and $P
f(\cdot;\btheta)$ denotes the integral of $f(\cdot;\btheta)$ against the
distribution of $\bZ$. 

The consistency argument is divided into three steps: In Step 1 we establish a
uniform law of large numbers (ULLN) for $\cF$. In Step 2 we show that $P
f(\cdot;\btheta)\to Pf(\cdot;\btheta_0)$ as $\btheta\to\btheta_0$. In Step 3 we
use the previous two steps and the fact that $\widehat J_{1,j,k}$ is of the
(plug-in) form $P_n f(\cdot;\widehat\btheta)$, to argue the consistency
of $\widehat J_{1,j,k}$ from that of $\widehat\btheta$. 

\textbf{Step 1:} Aiming towards a ULLN, we assume that the reader is familiar
with the notions of a Vapnik-\u{C}ervonenkis (VC) class of sets and a
VC-subgraph class of functions as defined in \citet[Section
2]{pakes1989simulation} or \citet[Section 2.6]{vdVW1996weak}.\footnote{While the
two references attach different meanings to the notion of a ``subgraph,'' they
lead to equivalent definitions of a VC-subgraph class of functions, cf.
\citet[p.~141]{vdVW1996weak}.} Define the function class
\begin{align*}
  \cG&:=\left\{g:\cZ\times\R\to\R\middle|g=g(\cdot,\cdot;\gamma,\gamma_1,\gamma_2,\bdelta),(\gamma,\gamma_1,\gamma_2,\bdelta)\in\R^{3+K}\right\},\\
  g(\bz,s;\gamma,\gamma_1,\gamma_2,\bdelta)&:=\gamma s+\gamma_1y_1+\gamma_2y_2+\Delta\bx^\top \bdelta.
\end{align*}
Then $\cG$ forms a vector space of real-valued measurable functions of dimension
$3+K$. \citet[Lemma 2.6.15]{vdVW1996weak} shows that the class $\cG$ is
VC-subgraph (of VC index at most $5+K$), where the \emph{subgraph} of any function
$f:\cZ\to\R$ is defined as the area below its graph:
\[
\mathrm{subgraph}(f):=\left\{(\bz,s)\in\cZ\times\R\middle|s<f(\bz)\right\}.
\]
It then follows from \citet[Lemma 2.4]{pakes1989simulation} that the sets of the
form $\{g\geqslant r\}$ or $\{g>r\}$ with $g\in\cG$ and $r\in\R$ form a VC
class. Call it $\cC$. Consider next the function class $\cH$ defined by
\begin{align*}
\cH&:=\left\{h:\cZ\to\R\middle| h=h(\cdot;\btheta),\btheta\in\R^K\right\}\\
h(\bz;\btheta)&:=\mathbf{1}\{y_1>0\}\mathbf{1}\{\Delta\bx^\top\btheta\leqslant0\}+\mathbf{1}\{y_2>0\}\mathbf{1}\{\Delta\bx^\top\btheta>0\}.
\end{align*}
For any $\btheta\in\R^K$, the subgraph of $h(\cdot,\btheta)$ can be expressed as
\begin{align*}
\mathrm{subgraph}(h(\cdot;\btheta))
&=\left(\{y_1>0\}\cap\{\Delta\bx^\top\btheta>0\}^c\cap\{s\geqslant1\}^c\right)\\
&\quad\cup\left(\{y_2>0\}\cap\{\Delta\bx^\top\btheta>0\}\cap\{s\geqslant1\}^c\right)\cup\{s\geqslant 0\}^c\\
&=\left(\{g_1>0\}\cap\{g_2>0\}^c\cap\{g_3\geqslant1\}^c\right)\\
&\quad\cup\left(\{g_4>0\}\cap\{g_2>0\}\cap\{g_3\geqslant1\}^c\right)\cup\{g_3\geqslant 0\}^c,
\end{align*}
for appropriate choices of $g_1,\ldots,g_4\in\cG$. The \emph{first} part of the
above union is the intersection of three sets, one of which lies in $\cC$ and
two of which are complements of sets in $\cC$. The \emph{second} part of the
union is the intersection of three sets, two of which lie in $\cC$ and one of
which is the complement of a set in $\cC$. The \emph{third} part of the
union is the complement of a set in $\cC$. It therefore follows from the
permanence properties in \citet[Lemma 2.5]{pakes1989simulation} that the
subgraphs
$
\{\mathrm{subgraph}(h(\cdot;\btheta))\mid\btheta\in\R^K\}
$
form a VC class, meaning that $\cH$ is VC-subgraph. Consider the (fixed)
function $g_{j,k}:\cZ\to\R$ given by
$
  g_{j,k}(\bz):=\Delta x_j \Delta x_k.
$
Then \citet[Lemma 2.6.18]{vdVW1996weak} implies that $\cF=\cH\cdot
  g_{j,k}=\{\bz\mapsto h(\bz)g_{j,k}(\bz)\mid h\in\cH\}$ is VC-subgraph. The
  function $|g_{j,k}|$ is an envelope for $\cF$, which is integrable by
  Assumption \ref{assu:MomentConditions-TLS}.
  \citet[Lemmas 2.8 and 2.12]{pakes1989simulation}
combine to show a strong ULLN for $\cF$:
$
\sup_{f\in\cF}|(P_n-P)f|\to_{\text{a.s.}}0.
$

\textbf{Step 2:} Let $\{\btheta_m\}_{m=1}^\infty$ be a sequence in $\R^K$
converging to $\btheta_0$ as $m\to\infty$. We set up for an application of the
Lebesgue Dominated Convergence Theorem (LDCT). To this end, iterate expectations
to write $Pf(\cdot;\btheta)$ as
\[
Pf(\cdot;\btheta)=\E\sbr[2]{\del[2]{\bbP(Y_1>0\mid\bW)\mathbf{1}\{\Delta \bX^\top\btheta\leqslant0\}+\bbP(Y_2>0\mid\bW)\mathbf{1}\{\Delta\bX^\top\btheta>0\}}\Delta X_j \Delta X_k}.
\]
Consider the sequence of (integrand) functions $f_m,m\in\N$, defined on $\Omega$
by
\begin{align*}
  f_m(\omega)
  &:=\left(\bbP(Y_1>0\mid\bW)(\omega)\mathbf{1}\{\Delta \bX(\omega)^\top\btheta_m\leqslant0\}+\bbP(Y_2>0\mid\bW)(\omega)\mathbf{1}\{\Delta\bX(\omega)^\top\btheta_m>0\}\right)\\
  &\qquad \times\Delta X_j(\omega) \Delta X_k(\omega).
\end{align*}
The $f_m$ are measurable and bounded by the $\bbP$-integrable $|g_{j,k}(\bZ)|$.
Fix $\omega\in\Omega$. We consider the two cases: (i) $\Delta
\bX(\omega)^\top\btheta_0\neq0$ and (ii) $\Delta \bX(\omega)^\top\btheta_0=0$,
in turn. \emph{Case (i):} If $\Delta\bX(\omega)^\top\btheta_0\neq0$, then the
signs of $\Delta \bX(\omega)^\top\btheta_m$ and $\Delta
\bX(\omega)^\top\btheta_0$ eventually agree, so that, in particular,
\begin{align*}
  f_m(\omega)&\to \left(\bbP(Y_1>0\mid\bW)(\omega)\mathbf{1}\{\Delta \bX(\omega)^\top\btheta_0\leqslant0\}+\bbP(Y_2>0\mid\bW)(\omega)\mathbf{1}\{\Delta\bX(\omega)^\top\btheta_0>0\}\right)\\
  &\qquad \times\Delta X_j(\omega) \Delta X_k(\omega).
\end{align*}
\emph{Case (ii):} If $\Delta \bX(\omega)^\top\btheta_0=0$, then the conditional
exchangeability of $\varepsilon_1$ and $\varepsilon_2$ (Assumption
\ref{assu:Exchangeability-TLS}) implies that the conditional probabilities
$\bbP(Y_1>0\mid\bW)(\omega)$ and $\bbP(Y_2>0\mid\bW)(\omega)$ are equal. In this
case, 
\begin{align*}
  f_m(\omega)
  &:=\left(\bbP(Y_1>0\mid\bW)(\omega)\mathbf{1}\{\Delta \bX(\omega)^\top\btheta_m\leqslant0\}+\bbP(Y_2>0\mid\bW)(\omega)\mathbf{1}\{\Delta\bX(\omega)^\top\btheta_m>0\}\right)\\
  &\qquad \times\Delta X_j(\omega) \Delta X_k(\omega)\\
  &=\bbP(Y_1>0\mid\bW)(\omega)\left(\mathbf{1}\{\Delta \bX(\omega)^\top\btheta_m\leqslant0\}+\mathbf{1}\{\Delta\bX(\omega)^\top\btheta_m>0\}\right)\Delta X_j(\omega) \Delta X_k(\omega)\\
  &=\bbP(Y_1>0\mid\bW)(\omega)\Delta X_j(\omega) \Delta X_k(\omega)\\
  &=\bbP(Y_1>0\mid\bW)(\omega)\mathbf{1}\{\Delta \bX(\omega)^\top\btheta_0\leqslant0\}\Delta X_j(\omega) \Delta X_k(\omega)\\
  &=\left(\bbP(Y_1>0\mid\bW)(\omega)\mathbf{1}\{\Delta \bX(\omega)^\top\btheta_0\leqslant0\}+\bbP(Y_2>0\mid\bW)(\omega)\mathbf{1}\{\Delta\bX(\omega)^\top\btheta_0>0\}\right)\\
  &\qquad \times\Delta X_j(\omega) \Delta X_k(\omega).
\end{align*}
It follows that $f_m$ converges pointwise to $f$ defined on $\Omega$ by
\begin{align*}
  f(\omega)&:=\left(\bbP(Y_1>0\mid\bW)(\omega)\mathbf{1}\{\Delta \bX(\omega)^\top\btheta_0\leqslant0\}+\bbP(Y_2>0\mid\bW)(\omega)\mathbf{1}\{\Delta\bX(\omega)^\top\btheta_0>0\}\right)\\
  &\qquad \times\Delta X_j(\omega) \Delta X_k(\omega).
\end{align*}
The LDCT therefore goes to show that $P f(\cdot,\btheta)\to Pf(\cdot,\btheta_0)$
as $\btheta\to\btheta_0$, as desired. 

\textbf{Step 3:} The triangle inequality shows that
\begin{align*}
  \envert[1]{\widehat J_{1,j,k}-J_{0,j,k}}
  &=\envert[1]{\del[0]{P_n-P} f(\cdot;\widehat\btheta)+P\sbr[0]{f(\cdot;\widehat\btheta)-f(\cdot;\btheta_0)}}\\
  &\leqslant\envert[1]{\del[0]{P_n-P} f(\cdot;\widehat\btheta)}+\envert[1]{P\sbr[0]{f(\cdot;\widehat\btheta)-f(\cdot;\btheta_0)}}\\
  &\leqslant \sup_{f\in\cF}\envert[1]{\del[0]{P_n-P} f}+\envert[1]{P\sbr[0]{f(\cdot;\widehat\btheta)-f(\cdot;\btheta_0)}}. 
\end{align*}
The first term on the right-hand side converges to zero almost surely by Step 1.
The second term converges to zero almost surely by strong consistency of TLS
\citep[Theorem 1(iv)]{honore_trimmed_1992}, Step 2, and the continuous mapping
theorem. It follows from the previous display that $\widehat
J_{1,j,k}\to_{\text{a.s.}} J_{0,j,k}$, which finishes the proof of the (strong)
consistency of $\widehat\bfJ_1$. The (strong) consistency of $\widehat\bfJ_2$
follows by parallel reasoning, and the (strong) consistency of $\widehat\bfJ$
in turn follows from the continuous mapping theorem.
\end{proof}

\subsection{Proof of Theorem \ref{thm:ConsistencyHonoreHessianEstimator-TLS}}
\begin{proof}[\textsc{Proof of Theorem \ref{thm:ConsistencyHonoreHessianEstimator-TLS}}]
After some preliminary observations (Step 0), in Step 1 we show that the leading
term satisfies
$\widehat\bfL^{\tt{H92}}=\bfJ_0^{\tt{tls}}+o_{L^1}(1)+o_{\mathrm{a.s.}}(1)$, and
in Step 2 that the remainder term satisfies
$\widehat\bfR^{\tt{H92}}=\bzero_{K\times K}+o_{L^1}(1)+o_{\mathrm{a.s.}}(1)$. It
then follows from our decomposition of $\widehat\bGamma^{\tt{H92}}$ that
$\widehat\bGamma^{\tt{H92}}=\bfJ_0^{\tt{tls}}+o_{L^1}(1)+o_{\mathrm{a.s.}}(1)$.
To ease notation, we drop the superscripts ``H92'' and ``tls'' in the rest of
the proof.

\textbf{Step 0 (Preliminaries):}
By conditional stationarity (implied by Assumption
\ref{assu:Exchangeability-TLS}),
\begin{align*}
F_{Y_{\tau}\mid\bw}\left(y\right) & :=\bbP\left(Y_{\tau}\leqslant y\middle|\bW=\bw\right)=\bone\left\{ y\geqslant0\right\} F_{\varepsilon\mid\bw}(y-a-\bx_{\tau}^{\top}\btheta_{0}).
\end{align*}
Note that continuity of $F_{\varepsilon\mid\bw}\left(\cdot\right)$
(implied by Assumption \ref{assu:continuity tls}) implies continuity of each
$F_{Y_{\tau}\mid\bw}(\cdot)$ on $(0,\infty)$. As observed by
\citet{honore_trimmed_1992}, conditional stationarity creates a certain
symmetry. Specifically, for $y=0$, we see that
\begin{equation}
\begin{cases}
F_{Y_{1}\mid\bw}\left(0\right)=F_{Y_{2}\mid\bw}\left(-\Delta\bx^{\top}\btheta_{0}\right) & \text{if }\Delta\bx^{\top}\btheta_{0}\leqslant0,\\
F_{Y_{2}\mid\bw}\left(0\right)=F_{Y_{1}\mid\bw}\left(\Delta\bx^{\top}\btheta_{0}\right) & \text{if }\Delta\bx^{\top}\btheta_{0}\geqslant0,
\end{cases}\label{eq:OutcomeCDFsSymmetry}
\end{equation}
which we will use below. For brevity, we employ the shorthand:
\begin{itemize}
\item Conditional complementary CDFs are denoted
$\overline{F}_{Y_{\tau}\mid\bw}\left(\cdot\right):=1-F_{Y_{\tau}\mid\bw}\left(\cdot\right)$.
\item Indicator functions are occasionally abbreviated using ``Iverson braces,''
$\left\{ \cdot\right\} :=\bone\{\cdot\}$.
\end{itemize}

\textbf{Step 1 (Leading term):}
We show $\widehat{\bfL}=\bfJ_0+o_{L^1}(1)+o_{\mathrm{a.s.}}(1)$, equivalently,
$\widehat{L}_{j,k}\to J_{0,j,k}$ for each $(j,k)\in[K]\times[K]$. To this end,
fix $\left(j,k\right)\in\left[K\right]\times\left[K\right]$, and consider the
function class
\begin{align*}
\mathcal{L}_{j,k} & :=\left\{ \ell_{j,k}:\mathcal{Z}\to\R\middle|\ell_{j,k}:=\ell_{j,k}\left(\cdot;\btheta\right),\btheta\in\R^{K}\right\} 
\end{align*}
where
\[
\ell_{j,k}(\bz;\btheta):=\del[1]{\bone\{-y_{2}<\Delta\bx^{\top}\btheta<0\}+\bone\{0<\Delta\bx^{\top}\btheta<y_{1}\}}\Delta x_{j}\Delta x_{k}.
\]
With minor modifications, arguments along the lines of the proof of Theorem
\ref{thm:VarianceConsistency-TLS} show that $\mathcal{L}_{j,k}$ is VC-subgraph
with $P$-integrable envelope $\bz\mapsto|\Delta x_{j}\Delta x_{k}|$.
\citet[Lemmas 2.8 and 2.12]{pakes1989simulation} therefore combine to show the
strong ULLN
\[
\sup_{f\in\mathcal{L}_{j,k}}\left|\left(P_{n}-P\right)f\right|\overset{\mathrm{a.s.}}{\to}0,
\]
where, as in the proof of
Theorem \ref{thm:VarianceConsistency-TLS}, we use empirical process notation.
Hence,
\[
\widehat{L}_{j,k}=P_{n}\ell_{j,k}(\cdot;\widehat{\btheta})=P\ell_{j,k}(\cdot;\widehat{\btheta})+o_{\mathrm{a.s.}}\left(1\right)=L_{j,k}(\widehat{\btheta})+o_{\mathrm{a.s.}}\left(1\right).
\]
where we have defined $L_{j,k}:\R^{K}\to\R$ by
\[
L_{j,k}\left(\btheta\right):=P\ell_{j,k}\left(\cdot,\btheta\right)=\E\sbr[1]{\del[1]{\bone\{-Y_{2}<\Delta\bX^{\top}\btheta<0\}+\bone\{0<\Delta\bX^{\top}\btheta<Y_{1}\}}\Delta X_{j}\Delta X_{k}},
\]
with $\bZ:=(Y_{1},\bX_{1},Y_{2},\bX_{2})$ having distribution $P$ and being
independent of the sample and, thus, the sequence of TLS estimators. Iterating
expectations, we get
\begin{align*}
L_{j,k}\left(\btheta\right) & =\E\sbr[1]{\del[1]{\{\Delta\bX^{\top}\btheta<0\}\{-Y_{2}<\Delta\bX^{\top}\btheta\}+\{\Delta\bX^{\top}\btheta>0\}\{\Delta\bX^{\top}\btheta<Y_{1}\}}\Delta X_{j}\Delta X_{k}}\\
 & =\E\sbr[1]{\del[1]{\{\Delta\bX^{\top}\btheta<0\}\overline{F}_{Y_{2}\mid\bW}(-\Delta\bX^{\top}\btheta)+\{\Delta\bX^{\top}\btheta>0\}\overline{F}_{Y_{1}\mid\bW}(\Delta\bX^{\top}\btheta)}\Delta X_{j}\Delta X_{k}}\\
 & =\E\sbr[1]{\del[1]{\{\Delta\bX^{\top}\btheta<0\}\overline{F}_{Y_{2}\mid\bW}(|\Delta\bX^{\top}\btheta|)+\{\Delta\bX^{\top}\btheta>0\}\overline{F}_{Y_{1}\mid\bW}(|\Delta\bX^{\top}\btheta|)}\Delta X_{j}\Delta X_{k}}.
\end{align*}
Using Remark \ref{rem:JacobianAlt1} and iterating expectations, the TLS Hessian
can be written as
\begin{align*}
\mathbf{J}_{0} & =\E\sbr[1]{\del[1]{\{ Y_{1}>0\} \{ \Delta\bX^{\top}\btheta_{0}\leqslant0\} +\{ Y_{2}>0\} \{ \Delta\bX^{\top}\btheta_{0}>0\}}\Delta\bX\Delta\bX^{\top}}\\
 & =\E\sbr[1]{\del[1]{\{\Delta\bX^{\top}\btheta_{0}\leqslant0\} \left[1-F_{Y_{1}\mid\bW}\left(0\right)\right]+\{ \Delta\bX^{\top}\btheta_{0}>0\} \left[1-F_{Y_{2}\mid\bW}\left(0\right)\right]}\Delta\bX\Delta\bX^{\top}}\\
 & =\E\sbr[1]{\del[1]{\{\Delta\bX^{\top}\btheta_{0}\leqslant0\} \overline{F}_{Y_{1}\mid\bW}\left(0\right)+\{\Delta\bX^{\top}\btheta_{0}>0\} \overline{F}_{Y_{2}\mid\bW}\left(0\right)}\Delta\bX\Delta\bX^{\top}}.
\end{align*}
We will show that $L_{j,k}(\widehat{\btheta})\to J_{0,j,k}$
in $L^{1}\left(\bbP\right)$. Inserting the two expressions
and bounding from above, we get
\begin{align*}
 & \E_{\widehat{\btheta}}[|L_{j,k}(\widehat{\btheta})-J_{0,j,k}|]\\
 & =\E_{\widehat{\btheta}}\Big[\Big|\E_{\bW}\sbr[1]{\del[1]{\{\Delta\bX^{\top}\widehat{\btheta}<0\}\overline{F}_{Y_{2}\mid\bW}(|\Delta\bX^{\top}\widehat{\btheta}|)+\{\Delta\bX^{\top}\widehat{\btheta}>0\}\overline{F}_{Y_{1}\mid\bW}(|\Delta\bX^{\top}\widehat{\btheta}|)}\Delta X_{j}\Delta X_{k}}\\
 & \qquad\qquad-\E_{\bW}\sbr[1]{\del[1]{\{\Delta\bX^{\top}\btheta_{0}\leqslant0\} \overline{F}_{Y_{1}\mid\bW}\left(0\right)+\{ \Delta\bX^{\top}\btheta_{0}>0\} \overline{F}_{Y_{2}\mid\bW}\left(0\right)}\Delta X_{j}\Delta X_{k}}\Big|\Big].
\end{align*}
By Jensen's inequality and Fubini--Tonelli theorem, we can bound
from above by
\[
\E_{\widehat{\btheta}}[|L_{j,k}(\widehat{\btheta})-J_{0,j,k}|]\leqslant\int_{\mathcal{W}}g_{n}\mathrm{d}P_{\bW},
\]
where $g_{n}:\mathcal{W}\to[0,\infty)$ is defined as
\[
g_{n}\left(\bw\right):=\left[\int_{\Omega}f_{n}\left(\omega,\bw\right)\bbP\left(\mathrm{d}\omega\right)\right]\left|\Delta x_{j}\Delta x_{k}\right|,
\]
and $f_{n}:\Omega\times\mathcal{W}\to[0,\infty)$ is defined as
\begin{align*}
f_{n}\left(\omega,\bw\right) & :=\big|\{\Delta\bx^{\top}\widehat{\btheta}\left(\omega\right)<0\}\overline{F}_{Y_{2}\mid\bw}(|\Delta\bx^{\top}\widehat{\btheta}\left(\omega\right)|)+\{\Delta\bx^{\top}\widehat{\btheta}\left(\omega\right)>0\}\overline{F}_{Y_{1}\mid\bw}(|\Delta\bx^{\top}\widehat{\btheta}\left(\omega\right)|)\\
 & \qquad-\{ \Delta\bx^{\top}\btheta_{0}\leqslant0\} \overline{F}_{Y_{1}\mid\bw}\left(0\right)-\{ \Delta\bx^{\top}\btheta_{0}>0\} \overline{F}_{Y_{2}\mid\bw}\left(0\right)\big|.
\end{align*}
Note that $0\leqslant f_{n}\leqslant1$ and $0\leqslant g_{n}\leqslant g$ for the
$P_{\bW}$-integrable function $g:\bw\mapsto\left|\Delta x_{j}\Delta
x_{k}\right|$. To conclude that $\int_{\mathcal{W}}g_{n}\mathrm{d}P_{\bW}\to0$
by the LDCT, it suffices to show that $g_{n}\to0$ pointwise on $\mathcal{W}$. To
this end, we split into the three exhaustive cases for $\bw$: \textbf{Case 1:}
$\Delta\bx=\bzero$; \textbf{Case 2:} $\Delta\bx^{\top}\btheta_{0}\neq0$; and,
\textbf{Case 3:} $\Delta\bx\neq\bzero$ and $\Delta\bx^{\top}\btheta_{0}=0$. We
consider each case in turn.

\textbf{\emph{Case 1:}} Consider $\bw$ such that $\Delta\bx=\bzero$.
Then $g_{n}\left(\bw\right)=0$, which $\to0$ trivially.

For Cases 2 and 3, recall that \citet[Theorem 1(iv)]{honore_trimmed_1992} shows
strong consistency of TLS, $\widehat{\btheta}\to_{\mathrm{a.s.}}\btheta_{0}$,
which means that there is a set $A\subseteq\Omega$ such that
$\bbP\left(A\right)=1$ and $\widehat{\btheta}\left(\omega\right)\to\btheta_{0}$
for each $\omega\in A$. Fix such an $A\subseteq\Omega$.

\textbf{\emph{Case 2:}} Consider $\bw$ such that
$\Delta\bx^{\top}\btheta_{0}\neq0$. We split into the \textbf{subcases (a)}
$\Delta\bx^{\top}\btheta_{0}>0$ and \textbf{(b)}
$\Delta\bx^{\top}\btheta_{0}<0$. In either case, we set up for an application of
the bounded convergence theorem (BCT) with the limit integrand being zero
everywhere on $\Omega$.

\textbf{\emph{Case 2(a):}} Fix $\omega\in A$. Then
$\Delta\bx^{\top}\widehat{\btheta}\left(\omega\right)\to\Delta\bx^{\top}\btheta_{0}>0$,
so that $\Delta\bx^{\top}\widehat{\btheta}\left(\omega\right)>0$ eventually, in
which case
\begin{align*}
f_{n}\left(\omega,\bw\right) & = \envert[1]{\overline{F}_{Y_{1}\mid\bw}(|\Delta\bx^{\top}\widehat{\btheta}\left(\omega\right)|)-\overline{F}_{Y_{2}\mid\bw}\left(0\right)}=\envert[1]{F_{Y_{1}\mid\bw}(|\Delta\bx^{\top}\widehat{\btheta}\left(\omega\right)|)-F_{Y_{2}\mid\bw}\left(0\right)}.
\end{align*}
By continuity of $F_{Y_{1}\mid\bw}\left(\cdot\right)$
on $\left(0,\infty\right)$ and $\Delta\bx^{\top}\btheta_{0}>0$,
$F_{Y_{1}\mid\bw}(|\Delta\bx^{\top}\widehat{\btheta}\left(\omega\right)|)\to F_{Y_{1}\mid\bw}(\Delta\bx^{\top}\btheta_{0})$,
so that by (\ref{eq:OutcomeCDFsSymmetry}) we get
\[
f_{n}\left(\omega,\bw\right)\to\envert[1]{F_{Y_{1}\mid\bw}(\Delta\bx^{\top}\btheta_{0})-F_{Y_{2}\mid\bw}\left(0\right)}=0.
\]
Since $\bbP\left(A\right)=1$, the BCT implies that
\[
g_{n}\left(\bw\right)=\int_{\Omega}f_{n}\left(\omega,\bw\right)\bbP\left(\mathrm{d}\omega\right)=\int_{\Omega}\bone\left\{ \omega\in A\right\} f_{n}\left(\omega,\bw\right)\bbP\left(\mathrm{d}\omega\right)\to0.
\]

\textbf{\emph{Case 2(b):}} The argument differs only from Case 2(a) in terms of
labels, so we omit it.

\textbf{\emph{Case 3:}} From $\Delta\bx^{\top}\btheta_{0}=0$
(i.e., $\bx_{1}^{\top}\btheta_{0}=\bx_{2}^{\top}\btheta_{0}$)
one gets identical (conditional) marginal outcome CDFs, $F_{Y_{1}|\bw}(\cdot)=F_{Y_{2}|\bw}(\cdot)$,
so the triangle inequality implies
\begin{align*}
f_{n}\left(\omega,\bw\right) & =\big|\{\Delta\bx^{\top}\widehat{\btheta}\left(\omega\right)<0\}\overline{F}_{Y_{2}\mid\bw}(|\Delta\bx^{\top}\widehat{\btheta}\left(\omega\right)|)\\
 & \quad+\{\Delta\bx^{\top}\widehat{\btheta}\left(\omega\right)>0\}\overline{F}_{Y_{1}\mid\bw}(|\Delta\bx^{\top}\widehat{\btheta}\left(\omega\right)|)-\overline{F}_{Y_{1}\mid\bw}\left(0\right)\big|\\
 & =\big|\{\Delta\bx^{\top}\widehat{\btheta}\left(\omega\right)\neq0\}\overline{F}_{Y_{1}\mid\bw}(|\Delta\bx^{\top}\widehat{\btheta}\left(\omega\right)|)-\overline{F}_{Y_{1}\mid\bw}\left(0\right)\big|\\
 & \leqslant\big|\overline{F}_{Y_{1}\mid\bw}(|\Delta\bx^{\top}\widehat{\btheta}\left(\omega\right)|)-\overline{F}_{Y_{1}\mid\bw}\left(0\right)\big|+\{\Delta\bx^{\top}\widehat{\btheta}\left(\omega\right)=0\}\overline{F}_{Y_{1}\mid\bw}(|\Delta\bx^{\top}\widehat{\btheta}\left(\omega\right)|)\\
 & \leqslant\big|F_{Y_{1}\mid\bw}(|\Delta\bx^{\top}\widehat{\btheta}\left(\omega\right)|)-F_{Y_{1}\mid\bw}\left(0\right)\big|+\{\Delta\bx^{\top}\widehat{\btheta}\left(\omega\right)=0\},
\end{align*}
and, thus,
\begin{align*}
g_{n}\left(\bw\right) & =\int_{\Omega}f_{n}\left(\omega,\bw\right)\bbP\left(\mathrm{d}\omega\right)\leqslant\int_{\Omega}\big|F_{Y_{1}\mid\bw}(|\Delta\bx^{\top}\widehat{\btheta}\left(\omega\right)|)-F_{Y_{1}\mid\bw}\left(0\right)\big|\bbP\left(\mathrm{d}\omega\right)+\bbP(\Delta\bx^{\top}\widehat{\btheta}=0).
\end{align*}
For each $\omega\in A$, right continuity of CDFs and
$|\Delta\bx^{\top}\widehat{\btheta}(\omega)|\to|\Delta\bx^{\top}\btheta_0|=0$
imply that $F_{Y_{1}\mid\bw}(|\Delta\bx^{\top}\widehat{\btheta}(\omega)|)\to
F_{Y_{1}\mid\bw}(0)$. The BCT therefore shows
\begin{align*}
 & \int_{\Omega}\envert[1]{F_{Y_{1}\mid\bw}(|\Delta\bx^{\top}\widehat{\btheta}\left(\omega\right)|)-F_{Y_{1}\mid\bw}\left(0\right)}\bbP\left(\mathrm{d}\omega\right)\\
 & =\int_{\Omega}\bone\left\{ \omega\in A\right\} \envert[1]{F_{Y_{1}\mid\bw}(|\Delta\bx^{\top}\widehat{\btheta}\left(\omega\right)|)-F_{Y_{1}\mid\bw}\left(0\right)}\bbP\left(\mathrm{d}\omega\right)\to0,
\end{align*}
showing that the first term in the upper bound on $g_{n}(\bw)$ goes to zero. To
handle the second term, note that we have already established
$\sqrt{n}$-asymptotic normality (Theorem \ref{thm:AsymptoticNormality-TLS}),
meaning that
$\sqrt{n}(\widehat{\btheta}-\btheta_{0})\rightsquigarrow\boldsymbol{G}$ in
$\R^{K}$, with $\boldsymbol{G}$ denoting a non-degenerate Gaussian in
$\R^{K}$. Since $\Delta\bx^{\top}\btheta_{0}=0$, by the continuous
mapping theorem (CMT) we therefore get
$\sqrt{n}\Delta\bx^{\top}\widehat{\btheta}=\Delta\bx^{\top}\sqrt{n}(\widehat{\btheta}-\btheta_{0})\rightsquigarrow\Delta\bx^{\top}\boldsymbol{G}$
in $\R$. Since $\Delta\bx\neq\bzero$, $\Delta\bx^{\top}\boldsymbol{G}$
is a non-degenerate Gaussian in $\R$ for which
$\R\backslash\left\{ 0\right\} $ is a continuity set. It therefore
follows from the portmanteau theorem, that
\[
\bbP(\Delta\bx^{\top}\widehat{\btheta}\neq0)=\bbP(\sqrt{n}\Delta\bx^{\top}\widehat{\btheta}\neq0)\to\bbP(\Delta\bx^{\top}\boldsymbol{G}\neq0)=1,
\]
and, thus, $\bbP(\Delta\bx^{\top}\widehat{\btheta}=0)\to0.$
Hence, $g_{n}\left(\bw\right)\to0$. It follows that $g_{n}\to0$
pointwise on $\mathcal{W}$, so that domination by a $P_{\bW}$-integrable
function implies $\int_{\mathcal{W}}g_{n}\mathrm{d}P_{\bW}\to0$
via the LDCT. Hence, $\E_{\widehat{\btheta}}[|L_{j,k}(\widehat{\btheta})-J_{0,j,k}|]\to0$,
meaning that $L_{j,k}(\widehat{\btheta})\to J_{0,j,k}$
in $L^{1}\left(\bbP\right)$. Conclude that
\[
\widehat{L}_{j,k}=L_{j,k}(\widehat{\btheta})+o_{\mathrm{a.s.}}\left(1\right)=J_{0,j,k}+o_{L^{1}}\left(1\right)+o_{\mathrm{a.s.}}\left(1\right).
\]

\textbf{Step 2 (Remainder):}
We argue that $\widehat{\mathbf{R}}=\bzero_{K\times
K}+o_{L^{1}}\left(1\right)+o_{\mathrm{a.s.}}\left(1\right)$, which is equivalent
to the statement convergence element by element. To this end, fix
$\left(j,k\right)\in\left[K\right]\times\left[K\right]$, and consider the
function class
\begin{align*}
\mathcal{R}_{j,k} & :=\left\{ r_{j,k}:\mathcal{Z}\to\R\middle|r_{j,k}:=r_{j,k}\left(\cdot;\btheta\right),\btheta\in\R^{K}\right\} ,
\end{align*}
where
\[
r_{j,k}\left(\bz;\btheta\right):=\bone\{ y_{1}>0,y_{2}>0,\Delta\bx^{\top}\btheta=0\} \Delta x_{j}\Delta x_{k}.
\]
With minor modifications, arguments along the lines of the proof of Theorem
\ref{thm:VarianceConsistency-TLS} show that $\mathcal{R}_{j,k}$ is VC-subgraph
with $P$-integrable envelope $\bz\mapsto|\Delta x_{j}\Delta x_{k}|$.
\citet[Lemmas 2.8 and 2.12]{pakes1989simulation} therefore combine to show the
strong ULLN
\[
\sup_{f\in\mathcal{R}_{j,k}}\left|\left(P_{n}-P\right)f\right|\overset{\mathrm{a.s.}}{\to}0.
\]
Hence,
\[
\widehat{R}_{j,k}=P_{n}r_{j,k}(\cdot,\widehat{\btheta})=Pr_{j,k}(\cdot,\widehat{\btheta})+o_{\mathrm{a.s.}}\left(1\right)=R_{j,k}(\widehat{\btheta})+o_{\mathrm{a.s.}}\left(1\right).
\]
where we have defined $R_{j,k}:\R^{K}\to\R$ by
\[
R_{j,k}\left(\btheta\right):=Pr_{j,k}\left(\cdot,\btheta\right)=\E\sbr[1]{\bone\{ Y_{1}>0,Y_{2}>0,\Delta\bX^{\top}\btheta=0\} \Delta X_{j}\Delta X_{k}},
\]
with $\boldsymbol{Z}:=(Y_{1},\bX_{1},Y_{2},\bX_{2})$
having distribution $P$ and being independent of the sample and,
thus, the sequence of TLS estimators. By Jensen's inequality and Fubini--Tonelli
theorem, we get the upper bound
\begin{align*}
\E_{\widehat{\btheta}}[|R_{j,k}(\widehat{\btheta})|] & \leqslant\int_{\Omega}\left[\int_{\mathcal{W}}\mathbf{1}\{\Delta\bx^{\top}\widehat{\btheta}\left(\omega\right)=0\}\left|\Delta x_{j}\Delta x_{k}\right|P_{\bW}\left(\mathrm{d}\bw\right)\right]\bbP\left(\mathrm{d}\omega\right)=\int_{\mathcal{W}}g_{n}\mathrm{d}P_{\bW},
\end{align*}
with $g_{n}:\mathcal{W}\to[0,\infty)$ defined by
\[
g_{n}\left(\bw\right):=\bbP(\Delta\bx^{\top}\widehat{\btheta}=0)\left|\Delta x_{j}\Delta x_{k}\right|.
\]
Note that $0\leqslant g_{n}\leqslant g$ for the $P_{\bW}$-integrable
$g:\bw\mapsto\left|\Delta x_{j}\Delta x_{k}\right|$. To show
$\int_{\mathcal{W}}g_{n}\mathrm{d}P_{\bW}\to0$, by the LDCT, it thus suffices to
show that $g_{n}\to0$ pointwise on $\mathcal{W}$. To this end, we split into the
three exhaustive cases for $\bw$: \textbf{Case 1:} $\Delta\bx=\bzero$;
\textbf{Case 2:} $\Delta\bx^{\top}\btheta_{0}\neq0$; and, \textbf{Case 3:}
$\Delta\bx\neq\bzero$ but $\Delta\bx^{\top}\btheta_{0}=0$. We consider each case
in turn.

\textbf{\emph{Case 1:}} If $\Delta\bx=\bzero$,
then $g_{n}\left(\bw\right)=0$, which $\to0$ trivially.

\textbf{\emph{Case 2:}} Consider $\bw$ such that
$\Delta\bx^{\top}\btheta_{0}\neq0$. \citet[Theorem 1(iv)]{honore_trimmed_1992}
shows strong consistency of TLS, so
$\Delta\bx^\top\widehat{\btheta}\to\Delta\bx^\top\btheta_0\neq0$ almost surely,
which implies $\bbP(\Delta\bx^\top\widehat{\btheta}=0)\to0$.

\textbf{\emph{Case 3:}} Consider $\bw$ such that $\Delta\bx\neq\bzero$
but $\Delta\bx^{\top}\btheta_{0}=0$. Then
$\bbP(\Delta\bx^{\top}\widehat{\btheta}=0)\to0$
by the argument used in Case 3 for the leading term. It follows that
$g_{n}\to0$ pointwise on $\mathcal{W}$, so that
\[
\E_{\widehat{\btheta}}[|R_{j,k}(\widehat{\btheta})|]\leqslant\int_{\mathcal{W}}g_{n}\mathrm{d}P_{\bW}\to0
\]
by the LDCT. Hence, $R_{j,k}(\widehat{\btheta})\to0$ in
$L^{1}\left(\bbP\right)$. It follows that
$\widehat{R}_{j,k}=o_{L^{1}}\left(1\right)+o_{\mathrm{a.s.}}\left(1\right)$.
\end{proof}

\newpage
\bookmarksetup{startatroot}
\part*{Supplemental Appendix}
\addcontentsline{toc}{part}{Supplemental Appendix}
\renewcommand{\thesection}{S\arabic{section}}
\renewcommand{\theHsection}{supplement.\arabic{section}}
\setcounter{section}{0}


\section{Proofs for Trimmed Least Absolute Deviations (Section
\ref{sec:TLAD})} By appropriately modifying the conditional PDF
$(\bw,\be)\mapsto f_{\bepsilon\mid\bw}(\be)$ on $\bW$-null set of values $\bw$,
we can and will strengthen Assumptions \ref{assu:Regularity-TLAD} and
\ref{assu:extra continuity} to say that Assumption \ref{assu:Regularity-TLAD}
reads as ``There is a constant $C\in(0,\infty)$ such that $\sup_{e\in\mathbb
R}f_{\varepsilon_1 - \varepsilon_2\mid\bw}(e)\leqslant C$ and $\sup_{e\in\mathbb
R}f_{\varepsilon\mid\bw}(e)\leqslant C$ for {\em all} $\bw\in\mathcal W$,'' and
Assumption \ref{assu:extra continuity} reads as ``The functions $\be\mapsto
f_{\bepsilon\mid \bw}(\be)$, $e\mapsto f_{\varepsilon\mid\bw}(e)$, and $e\mapsto
f_{\varepsilon_1 - \varepsilon_2\mid\bw}(e)$ are continuous for {\em all}
$\bw\in\mathcal W$.'' Also, we let $(\bw,e)\mapsto F_{\Delta Y^{\ast}\mid
\bw}(e)$ and $(\bw,e)\mapsto F_{\Delta \varepsilon\mid \bw}(e)$ denote the CDF
of $\Delta Y^{\ast} = Y_1^{\ast} - Y_2^{\ast}$ and the CDF of $\Delta\varepsilon
= \varepsilon_1 - \varepsilon_2$, both conditional on $\bW = \bw$ for all
$\bw\in\mathcal W$. In addition, we let $(\bw,y)\mapsto F_{Y_1^{\ast}\mid
\bw}(y)$ and $(\bw,y)\mapsto F_{Y_2^{\ast}\mid \bw}(y)$ denote the marginal CDF
of $Y_1^{\ast}$ and the marginal CDF of $Y_2^{\ast}$, both conditional on $\bW =
\bw$ for all $\bw\in\mathcal W$. Finally, we let $(\bw,e)\mapsto
F_{\varepsilon\mid \bw}(e)$ denote the common marginal CDF of $\varepsilon_1$
and $\varepsilon_2$ conditional on $\bW = \bw$ for all $\bw\in\mathcal W$.

\subsection{Proof of Theorem \ref{thm:LossHessianExistence-TLAD}}

Observe that the function $m^{\texttt{tlad}}(t,\by)$ in
\eqref{eq:TrimmedAbsLoss} can be rewritten as
\begin{equation}\label{eq: mtlad alternative expressions original}
\left.
\begin{aligned}
m^{\texttt{tlad}}(t,\by) & =\bone\{y_{1}>0\}\bone\{y_{2}>0\}|y_{1}-y_{2}-t|\\
 & \quad+\bone\{y_{1}>0\}\bone\{y_{2}=0\}\max\{0,y_{1}-t\}\\
 & \quad+\bone\{y_{1}=0\}\bone\{y_{2}>0\}\max\{0,y_{2}+t\}
\end{aligned}
\right\}
\end{equation}
for all $t\in\mathbb R$ and $\by\in[0,\infty)^{2}$. Thus, defining $\widetilde{m}^{\texttt{tlad}}(t,\by) :=m^{\texttt{tlad}}(t,\by)-m^{\texttt{tlad}}(0,\by)$, we have
\begin{equation}\label{eq: mtlad alternative expressions}
\left.
\begin{aligned}
\widetilde{m}^{\texttt{tlad}}(t,\by) 
  & =\bone\{y_{1}>0\}\bone\{y_{2}>0\}\left(|y_{1}-y_{2}-t|-|y_{1}-y_{2}|\right)\\
  & \quad+\bone\{y_{1}>0\}\bone\{y_{2}=0\}\left(\max\{0,y_{1}-t\}-\max\{0,y_{1}\}\right)\\
  & \quad+\bone\{y_{1}=0\}\bone\{y_{2}>0\}\left(\max\{0,y_{2}+t\}-\max\{0,y_{2}\}\right).
\end{aligned}
\right\}
\end{equation}
for all $t\in\mathbb R$ and $\by\in[0,\infty)^{2}$. Then for any $t\in\R$, we have
$
\E[|\widetilde{m}^{\tt{tlad}}(t,\bY)|]\leqslant|t|<\infty.
$
Hence, using the construction in Appendix \ref{sec: measurability appendix}
along with the modification of the conditional PDF on $\bW$-null sets, we can
define a measurable (conditional shifted trimmed absolute loss) function
$\ell:\R\times\cW\to\R$ by
\begin{equation}\label{eq:ExpectedLossDefn-TLAD}
\ell(t,\bw):=\E\left[\widetilde{m}^{\tt{tlad}}(t,\bY)\middle|\bW=\bw\right].
\end{equation}
Iterating expectations, we can relate the functions $L$ in
\eqref{eq:PopulationLossFunction-TLAD} and $\ell$ in
\eqref{eq:ExpectedLossDefn-TLAD} through
\[
L(\btheta)=\E\sbr[2]{\E\sbr[1]{\widetilde{m}^{\tt{tlad}}(\Delta\bX^{\top}\btheta,\bY)\mid\bW}}=\E\sbr[1]{\ell\del[1]{\Delta\bX^{\top}\btheta,\bW}}.
\]
The differentiability properties of $L$ (at $\btheta_0$) will by and large be
deduced from those of $\ell(\cdot,\bw)$, which we turn to next. As we will
condition on $\bW=\bw$ throughout, we abbreviate these conditional expectations
$\E_{\bw}\left[\cdot\right]:=\E\left[\cdot\middle|\bW=\bw\right]$. From
$\bone\{Y_{\tau}>0\}=\bone\{Y_{\tau}^{\ast}>0\}$ and
$\bone\{Y_{\tau}=0\}=\bone\{Y_{\tau}^{\ast}\leqslant0\}$ it
follows that
\begin{align*}
\ell(t,\bw) & =\E_{\bw}\left[\bone\{Y_{1}^{\ast}>0\}\bone\{Y_{2}^{\ast}>0\}\left(|Y_{1}^{\ast}-Y_{2}^{\ast}-t|-|Y_{1}^{\ast}-Y_{2}^{\ast}|\right)\right]\\
 & \quad+\E_{\bw}\left[\bone\{Y_{1}^{\ast}>0\}\bone\{Y_{2}^{\ast}\leqslant0\}\left(\max\{0,Y_{1}^{\ast}-t\}-\max\{0,Y_{1}^{\ast}\}\right)\right]\\
 & \quad+\E_{\bw}\left[\bone\{Y_{1}^{\ast}\leqslant0\}\bone\{Y_{2}^{\ast}>0\}\left(\max\{0,Y_{2}^{\ast}+t\}-\max\{0,Y_{2}^{\ast}\}\right)\right].
\end{align*}
To show Theorem \ref{thm:LossHessianExistence-TLAD}, we rely on the following
two lemmas, the proofs of which can be found at the end of this section.
\begin{lem}[\textbf{Derivative of Expected Trimmed Absolute
Loss}]\label{lem:DerivativeExpectedTrimmedAbsLoss} Let Assumptions
\ref{assu:Non-Degeneracy-TLAD}--\ref{assu:extra continuity} hold and fix
$\bw\in\cW$. Then the function $\ell(\cdot,\bw)$ defined in
\eqref{eq:ExpectedLossDefn-TLAD} is Lipschitz continuous on $\R$ with Lipschitz
constant one and differentiable with derivative given by
\begin{equation}
  \left.\begin{aligned}
    \dot\ell_{1}(t,\bw)
     & =\E_{\bw}\sbr[1]{\bone\{Y_{1}^{\ast}>0\}\bone\{Y_{2}^{\ast}>0\}\left(2\bone\{Y_{1}^{\ast}-Y_{2}^{\ast}\leqslant t\}-1\right)}\\
     & \quad-\E_{\bw}\sbr[1]{\bone\{Y_{1}^{\ast}>0\}\bone\{Y_{2}^{\ast}\leqslant0\}\bone\{Y_{1}^{\ast}>t\}}\\
     & \quad+\E_{\bw}\sbr[1]{\bone\{Y_{1}^{\ast}\leqslant0\}\bone\{Y_{2}^{\ast}>0\}\bone\{Y_{2}^{\ast}>-t\}}.
  \end{aligned}\right\}\label{eq:DerivativeExpectedTrimmedAbsLoss}
\end{equation}
\end{lem}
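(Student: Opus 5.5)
We will treat the three summands in the representation of $\ell(t,\bw)$ separately, since each is a conditional expectation of a convex, Lipschitz-in-$t$ integrand. \textbf{Lipschitz continuity} is immediate: by \eqref{eq: mtlad alternative expressions}, each of the maps $t\mapsto|y_1-y_2-t|-|y_1-y_2|$, $t\mapsto\max\{0,y_1-t\}-\max\{0,y_1\}$ and $t\mapsto\max\{0,y_2+t\}-\max\{0,y_2\}$ is $1$-Lipschitz. For fixed $\by$ at most one of the three indicators is on. Hence $t\mapsto\widetilde m^{\tt tlad}(t,\by)$ is $1$-Lipschitz uniformly in $\by$, and so is $\ell(\cdot,\bw)$ after integrating against $f_{\bepsilon\mid\bw}$ (Jensen), exactly as in Lemma \ref{lem:DifferentiabilityOfExpectedTrimmedSqLossDerivative}.\ref{enu:MLipschitz}.

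For \textbf{differentiability} at a fixed $t$, take $h\to0$ and consider the difference quotient $[\widetilde m^{\tt tlad}(t+h,\bY)-\widetilde m^{\tt tlad}(t,\bY)]/h$. It is bounded by $1$, so by the (bounded) dominated convergence theorem under $\E_{\bw}$ it suffices that, for $\bY^\ast$-almost every realization, the integrand is differentiable at $t$ with the pointwise derivative appearing in \eqref{eq:DerivativeExpectedTrimmedAbsLoss}. The pointwise derivatives are as follows:
\begin{itemize}
\item on $\{Y_1^\ast>0,Y_2^\ast>0\}$ the derivative is $2\bone\{Y_1^\ast-Y_2^\ast\leqslant t\}-1$, valid except on $\{Y_1^\ast-Y_2^\ast=t\}$;
\item on $\{Y_1^\ast>0,Y_2^\ast\leqslant0\}$ it is $-\bone\{Y_1^\ast>t\}$, valid except on $\{Y_1^\ast=t\}$;
\item on $\{Y_1^\ast\leqslant0,Y_2^\ast>0\}$ it is $\bone\{Y_2^\ast>-t\}$, valid except on $\{Y_2^\ast=-t\}$.
\end{itemize}
Alternatively one could invoke \citet[Proposition 2.3]{bertsekas1973stochastic} as in Section \ref{sec:Counterexample 2}, since each piece is convex in $t$.

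The \textbf{main step} is to show that these exceptional sets are $\bbP(\cdot\mid\bW=\bw)$-null for every fixed $t$. Under $\bW=\bw$, $Y_\tau^\ast=a+\bx_\tau^\top\btheta_0+\varepsilon_\tau$ and $Y_1^\ast-Y_2^\ast=\Delta\bx^\top\btheta_0+\varepsilon_1-\varepsilon_2$. Because $\bepsilon$ has a density $f_{\bepsilon\mid\bw}$, the marginals $\varepsilon_\tau$ and the difference $\varepsilon_1-\varepsilon_2$ have densities $f_{\varepsilon\mid\bw}$ and $f_{\varepsilon_1-\varepsilon_2\mid\bw}$, both finite by Assumption \ref{assu:Regularity-TLAD}. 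Each exceptional event is therefore a single point for an absolutely continuous variable and has conditional probability zero. This holds for all $\bw\in\cW$ after the null-set modification described at the start of the supplement.

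Combining the three pieces yields \eqref{eq:DerivativeExpectedTrimmedAbsLoss} for every $t$, which is the claim. I expect only mild care to be needed in ensuring the exceptional set is handled uniformly in the chosen sequence $h\to0$. That is not a problem here, because the exceptional set depends only on $t$ and not on $h$.
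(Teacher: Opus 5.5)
Your proposal is correct and follows essentially the paper's argument. Lipschitz continuity comes from the uniform 1-Lipschitz bound via Jensen, and differentiability comes from interchanging differentiation and $\E_{\bw}$ once the set of $\bY^\ast$ at which the integrand fails to be differentiable at $t$ is shown to be conditionally null. The differences are minor: the paper justifies the interchange with \citet[Proposition 2.3]{bertsekas1973stochastic}, which uses convexity, and shows the exceptional set is a finite union of rays of $\lambda_2$-measure zero, whereas your bounded-convergence argument needs only the Lipschitz bound; also, the boundedness from Assumption \ref{assu:Regularity-TLAD} that you invoke is not needed for point events of an absolutely continuous variable to be null.
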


\begin{rem}[\textbf{Comparison with
\citet{honore_trimmed_1992}}]\label{rem:AltExpForDerivOfExpectedTrimmedAbsLoss}
Using the sign information, we can also cast the derivative as
\begin{align*}
\dot\ell_{1}(t,\bw)
  &=\E_{\bw}\left[\bone\{Y_{2}^{\ast}>0\}\bone\{Y_{2}^{\ast}>\max\{0,Y_{1}^{\ast}\}-t\}
    -\bone\{Y_{1}^{\ast}>0\}\bone\{Y_{1}^{\ast}>\max\{0,Y_{2}^{\ast}\}+t\}\right]\\
  &=\E_{\bw}\left[\bone\{Y_{2}>0\}\bone\{Y_{2}>Y_{1}-t\}
    -\bone\{Y_{1}>0\}\bone\{Y_{1}>Y_{2}+t\}\right].
\end{align*}
The first right-hand side (using the latent outcomes) matches the (censored
TLAD) expression in \citet[Lemma A.1]{honore_trimmed_1992}. Multiplying by
$\Delta\bx$ and iterating expectations, the second right-hand side (using the
observable outcomes) gives rise to $\bfV_0^{\tt{tlad}}$ in
\eqref{eq:VarianceSandwichMeat-TLAD}.\hfill$\diamondsuit$
\end{rem}

\begin{lem}[\textbf{Second-Order Differentiability of Expected Trimmed Absolute
Loss}]\label{lem:SecondOrderDerivativeExpectedTrimmedAbsLoss} Let Assumptions
\ref{assu:Non-Degeneracy-TLAD}--\ref{assu:extra continuity} hold and fix
$\bw\in\cW$. Then:
\begin{enumerate}[(1)]
\item\label{enu:DotEllDiffAwayFromZero} $\dot{\ell}_1(\cdot,\bw)$ is
differentiable at $t\neq0$ with derivative given by
\begin{equation}\label{eq:SecondOrderDerivativeExpectedTrimmedAbsLoss}
\ddot{\ell}_{11}(t,\bw)=\begin{cases}
  2\int_{0}^{+\infty}f_{\bY^{\ast}\mid\bw}\left(z,z-t\right)\dif z+\int_{-\infty}^{0}f_{\bY^{\ast}\mid\bw}\left(z,-t\right)\dif z, & t<0,\\
  2\int_{0}^{+\infty}f_{\bY^{\ast}\mid\bw}\left(z+t,z\right)\dif z+\int_{-\infty}^{0}f_{\bY^{\ast}\mid\bw}\left(t,z\right)\dif z, & t>0.
\end{cases}
\end{equation}
\item\label{enu:DotEllDirDiffAtZero} $\dot{\ell}_1(\cdot,\bw)$ is semi-differentiable at $t=0$ with left and
right derivatives given by
\begin{align}
  \ddot{\ell}_{11-}(0,\bw)&=2\int_{0}^{+\infty}f_{\bY^{\ast}\mid\bw}\left(z,z\right)\dif z+\int_{-\infty}^{0}f_{\bY^{\ast}\mid\bw}\left(z,0\right)\dif z
  \quad\text{and}\label{eq:SecondOrderLeftDerivativeExpectedTrimmedAbsLoss}\\
  \ddot{\ell}_{11+}(0,\bw)&=2\int_{0}^{+\infty}f_{\bY^{\ast}\mid\bw}\left(z,z\right)\dif z+\int_{-\infty}^{0}f_{\bY^{\ast}\mid\bw}\left(0,z\right)\dif z.\label{eq:SecondOrderRightDerivativeExpectedTrimmedAbsLoss}
\end{align}
\item\label{enu:DotEllDirDiffAtxtheta} $\dot{\ell}_1(\cdot,\bw)$ is
differentiable at $t=\Delta\bx^\top\btheta_0$ with derivative given by
\begin{equation}\label{eq:SecondOrderDerivativeExpectedTrimmedAbsLossAtxtheta}
\left.\begin{aligned}
  \ddot{\ell}_{11}(\Delta\bx^{\top}\btheta_{0},\bw)
  &=2\int_{0}^{+\infty}f_{\bY^{\ast}\mid\bw}\del[1]{z+\max\cbr[0]{0,\Delta\bx^\top\btheta_0},z-\min\cbr[0]{0,\Delta\bx^\top\btheta_0}}\dif z\\
  & \qquad+\bone\{\Delta\bx^\top\btheta_0\geqslant0\}\int_{-\infty}^{0}f_{\bY^{\ast}\mid\bw}\del[1]{\Delta\bx^\top\btheta_0,z}\dif z\\
  & \qquad+\bone\{\Delta\bx^\top\btheta_0<0\}\int_{-\infty}^{0}f_{\bY^{\ast}\mid\bw}\del[1]{z,-\Delta\bx^\top\btheta_0}\dif z.
\end{aligned}\right\}
\end{equation}
\end{enumerate}
\end{lem}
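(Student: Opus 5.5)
We differentiate the three terms of $\dot\ell_1(t,\bw)$ in Lemma~\ref{lem:DerivativeExpectedTrimmedAbsLoss} separately, after writing each as an explicit integral against the continuous, bounded conditional density $f_{\bY^\ast\mid\bw}$. The latent density is the shifted version of $f_{\bepsilon\mid\bw}$ and inherits continuity from Assumption~\ref{assu:extra continuity}.

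\textbf{Rewriting the three terms.} The first term equals
\[
2\bbP_{\bw}(Y_1^\ast>0,Y_2^\ast>0,Y_1^\ast-Y_2^\ast\leqslant t)-\bbP_{\bw}(Y_1^\ast>0,Y_2^\ast>0).
\]
For $t>0$ we write the first probability as
\[
\int_{0}^{\infty}\int_{0}^{\infty}\bone\{y_1\leqslant y_2+t\}f_{\bY^\ast\mid\bw}(y_1,y_2)\dif y_1\dif y_2 .
\]
Substituting $y_1=z+s$, $y_2=z$ (and $y_1=z$, $y_2=z-s$ for $t<0$) turns it into $\int_{-\infty}^{t}\phi(s)\dif s$ plus a $t$-independent piece. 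Here $\phi(s)=\int_0^\infty f_{\bY^\ast\mid\bw}(z+\max\{0,s\},z-\min\{0,s\})\dif z$.

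The second term is $-\int_{\max\{0,t\}}^{\infty}\int_{-\infty}^0 f_{\bY^\ast\mid\bw}(y_1,y_2)\dif y_2\dif y_1$. The third term is $\int_{\max\{0,-t\}}^{\infty}\int_{-\infty}^{0}f_{\bY^\ast\mid\bw}(y_1,y_2)\dif y_1\dif y_2$. Each of these is constant in $t$ on one side of zero.

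\textbf{Item (1).} By the fundamental theorem of calculus, each term is differentiable wherever its inner integrand, viewed as a function of the outer variable, is continuous. So we need continuity of
\[
s\mapsto\phi(s),\qquad y_1\mapsto\int_{-\infty}^0 f_{\bY^\ast\mid\bw}(y_1,z)\dif z,\qquad y_2\mapsto\int_{-\infty}^0 f_{\bY^\ast\mid\bw}(z,y_2)\dif z .
\]
This is the main obstacle, because pointwise continuity of $f$ alone does not let us pass limits through the integrals over unbounded ranges. Our first tool is the generalized dominated convergence theorem (Theorem~\ref{thm:GLDCT}).

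\emph{Half-line integrals.} Write $g(y_1)=\int_{-\infty}^0 f(y_1,z)\dif z$ and $h(y_1)=\int_0^\infty f(y_1,z)\dif z$. Then $g+h$ is the marginal density of $Y_1^\ast$, which is continuous by Assumption~\ref{assu:extra continuity} (a shift of $f_{\varepsilon\mid\bw}$). Fatou's lemma makes both $g$ and $h$ lower semicontinuous. If $y_{1,m}\to y_1$, then $\liminf g(y_{1,m})\geqslant g(y_1)$ and $\liminf h(y_{1,m})\geqslant h(y_1)$, while $g(y_{1,m})+h(y_{1,m})\to g(y_1)+h(y_1)$. Together these force $g(y_{1,m})\to g(y_1)$.

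\emph{The function $\phi$.} Here we use the analogous Fatou-plus-sum argument along the diagonal direction. Continuity of the density of $\varepsilon_1-\varepsilon_2$ (Assumption~\ref{assu:extra continuity}) gives continuity of the full-line integral $\int_{\R}f_{\bY^\ast\mid\bw}(z+s,z)\dif z$. Splitting it into $z>0$ and $z\leqslant0$ pieces and applying lower semicontinuity to each yields continuity of the half-line piece.

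Differentiating then gives \eqref{eq:SecondOrderDerivativeExpectedTrimmedAbsLoss}.

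\textbf{Item (2).} Since $\dot\ell_1(\cdot,\bw)$ is continuous (it is an integral of its derivative plus constants), the one-sided derivatives at $0$ are the one-sided limits of the integrands. Taking those limits, using the continuity just established, gives \eqref{eq:SecondOrderLeftDerivativeExpectedTrimmedAbsLoss} and \eqref{eq:SecondOrderRightDerivativeExpectedTrimmedAbsLoss}.

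\textbf{Item (3).} If $\Delta\bx^\top\btheta_0\neq0$, we substitute $t=\Delta\bx^\top\btheta_0$ into item (1). If $\Delta\bx^\top\btheta_0=0$, then $v_1=v_2$, so $f_{\bY^\ast\mid\bw}$ is symmetric by exchangeability (Assumption~\ref{assu:Exchangeability-TLAD}). This gives $\int_{-\infty}^0 f(z,0)\dif z=\int_{-\infty}^0 f(0,z)\dif z$, so the one-sided derivatives in item (2) agree, which yields \eqref{eq:SecondOrderDerivativeExpectedTrimmedAbsLossAtxtheta}.
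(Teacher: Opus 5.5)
Your proof is correct, but it is organized differently from the paper's. The paper never writes $\dot\ell_1(\cdot,\bw)$ as an indefinite integral in $t$. For each of the three pieces $M_a,M_b,M_c$ it forms the right and left difference quotients directly. It rewrites each one as an outer integral of an inner average of $f_{\bY^\ast\mid\bw}$ over a shrinking interval $[t,t+\tau_m]$ or $[t-\tau_m,t]$. It then passes to the limit with the generalized dominated convergence theorem (Theorem~\ref{thm:GLDCT}). The dominating averages have integrals that converge by continuity of the relevant marginal density ($f_{\Delta Y^\ast\mid\bw}$, $f_{Y_1^\ast\mid\bw}$, or $f_{Y_2^\ast\mid\bw}$). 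The kinks are handled by one-sided continuity of the truncating indicators: $\bone\{\cdot\geqslant\cdot\}$ from the right and $\bone\{\cdot>\cdot\}$ from the left.

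You instead change variables once to write $\dot\ell_1(t,\bw)$ as a constant plus $\int_0^t$ of an explicit piecewise integrand built from $\phi$ and the two half-line integrals. You prove continuity of those integrands with your Fatou-on-complementary-pieces argument, which is the same Scheff\'e-type mechanism that underlies the proof of the GLDCT. Then you invoke the fundamental theorem of calculus. The analytic inputs are identical: pointwise continuity of the joint density, plus continuity and finiteness of the marginal and difference densities. Your route is more transparent and yields slightly more, namely continuity of $\ddot\ell_{11}(\cdot,\bw)$ on $\R\backslash\{0\}$ with one-sided limits at $0$, so items (2) and (3) follow immediately.

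Three small presentational points:
\begin{itemize}
\item With your definition of $\phi$, the first probability is exactly $\int_{-\infty}^t\phi(s)\dif s$; no extra $t$-independent piece is needed.
\item The Fatou-plus-sum split for $\phi$ should use the parametrization $(z+s,z)$ for $s\geqslant0$ and $(z,z-s)$ for $s\leqslant0$. Both full-line integrals equal $f_{\Delta Y^\ast\mid\bw}(s)$. This matters because $\int_0^\infty f_{\bY^\ast\mid\bw}(z+s,z)\dif z$ in general differs from $\phi(s)$ when $s<0$.
\item Continuity of $y_2\mapsto\int_{-\infty}^0 f_{\bY^\ast\mid\bw}(z,y_2)\dif z$ uses the marginal of $Y_2^\ast$. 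That marginal is a shift of $f_{\varepsilon\mid\bw}$ because the chosen version of $f_{\bepsilon\mid\bw}$ is symmetric.
\end{itemize}
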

\begin{proof}[\sc{Proof of Theorem \ref{thm:LossHessianExistence-TLAD}}]
First, fix $\btheta\in\R^K$ and $\bvtheta\in\R^K$ and let
$\{\tau_m\}_{m=1}^\infty$ and $\{\bvtheta_m\}_{m=1}^\infty$ be such that
$\tau_m\to0_+$ and $\bvtheta_m\to\bvtheta$. Then
\begin{align*}
  \frac{L(\btheta+\tau_m\bvtheta_m)-L(\btheta)}{\tau_m}
  & =\E\left[\frac{\ell\del[1]{\Delta\bX^\top(\btheta+\tau_m\bvtheta_m),\bW}-\ell\del[1]{\Delta\bX^\top\btheta,\bW}}{\tau_m}\right]
\end{align*}
To apply the Generalized Lebesgue Dominated Convergence Theorem (GLDCT, Theorem
\ref{thm:GLDCT}), define functions $\{f_m\}_{m=1}^\infty$ on $\Omega$ by
\begin{align*}
  f_m(\omega)
  &:=\frac{\ell\del[1]{\Delta\bX(\omega)^\top\btheta+\tau_m\Delta\bX(\omega)^\top\bvtheta_m,\bW(\omega)}-\ell\del[1]{\Delta\bX(\omega)^\top\btheta,\bW(\omega)}}{\tau_m}.
\end{align*}
By measurability of $\ell$ established above, each $f_m$ is measurable. Lemma
\ref{lem:DerivativeExpectedTrimmedAbsLoss} shows that $\ell(\cdot,\bw)$ is
Lipschitz continuous with Lipschitz constant one. Hence, by Lipschitz continuity
followed by the Cauchy-Schwarz inequality,
\[
\envert[1]{f_m(\omega)}\leqslant\envert[1]{\Delta\bX(\omega)^\top\bvtheta_m}\leqslant\enVert[0]{\Delta\bX}_2\enVert[0]{\bvtheta_m}_2=:g_m(\omega).
\]
Since $\E[\|\Delta\bX\|_2]<\infty$ by Assumption
\ref{assu:MomentConditions-TLAD}, each $g_m$ is integrable. The previous display
therefore goes to show that $\{f_m\}$ is dominated by the nonnegative sequence
$\{g_m\}$. Since $\bvtheta_m\to\bvtheta$, we have $g_m\to g$ pointwise on
$\Omega$, and
\[
\int_\Omega g_m\dif\bbP = \enVert[0]{\bvtheta_m}_2\E[\|\Delta\bX\|_2]
\to
\enVert[0]{\bvtheta}_2\E[\|\Delta\bX\|_2]
=
\int_\Omega g\dif\bbP
<\infty,
\]
where $g(\omega):=\enVert[0]{\bvtheta}_2\|\Delta\bX(\omega)\|_2$. Lemma
\ref{lem:DerivativeExpectedTrimmedAbsLoss} also shows that
$\ell(\cdot,\bW(\omega))$ is differentiable at
$t=\Delta\bX(\omega)^\top\btheta$, thus yielding the pointwise convergence
\begin{align*}
  f_m(\omega)&\to \dot{\ell}_1\del[1]{\Delta\bX(\omega)^\top\btheta,\bW(\omega)}\Delta\bX(\omega)^\top\bvtheta=:f(\omega).
\end{align*}
Appealing to the GLDCT, we conclude that
\begin{align*}
  \frac{L(\btheta+\tau_m\bvtheta_m)-L(\btheta)}{\tau_m}
  &\to\E\sbr[1]{\dot{\ell}_1\del[1]{\Delta\bX^\top\btheta,\bW}\Delta\bX^\top}\bvtheta.
\end{align*}
Since the limit exists for every $\bvtheta\in\R^K$, is linear in $\bvtheta$, and
is independent of the sequences $\{\tau_m\}$ and $\{\bvtheta_m\}$, we conclude
that $L$ is Hadamard differentiable at $\btheta$. Since $\R^K$ is
finite-dimensional, Hadamard differentiability is equivalent to (Fr\'echet)
differentiability. Since $\btheta\in\R^K$ was arbitrary, $L$ is everywhere
differentiable with gradient $\nabla L:\R^K\to\R^K$ given by
$
\nabla L(\btheta)=\E\sbr[1]{\dot{\ell}_1\del[1]{\Delta\bX^\top\btheta,\bW}\Delta\bX}.
$

We next establish second-order differentiability of $L$ at $\btheta_0$. To this
end, fix $\bvtheta\in\R^K$ and let $\{\tau_m\}_{m=1}^\infty$ and
$\{\bvtheta_m\}_{m=1}^\infty$ be such that $\tau_m\to0_+$ and
$\bvtheta_m\to\bvtheta$. Then
\begin{align*}
\frac{\nabla L(\btheta_0+\tau_m\bvtheta_m)-\nabla L(\btheta_0)}{\tau_m}
&=\E\left[\frac{\dot{\ell}_1\del[1]{\Delta\bX^\top(\btheta_0+\tau_m\bvtheta_m),\bW}-\dot{\ell}_1\del[1]{\Delta\bX^\top\btheta_0,\bW}}{\tau_m}\Delta\bX\right].
\end{align*}
To apply the GLDCT coordinatewise, consider the measure space
$(\R^{2K+1},\cB_{2K+1},\mu_{\bW})$, where
$\mu_{\bW}\left(\cdot\right):=\bbP(\bW^{-1}\left(\cdot\right))$ denotes the law
of $\bW$, and fix $j\in[K]$. Define functions $\{f_m\}_{m=1}^{\infty}$ on
$\R^{2K+1}$ by
\begin{align*}
  f_m(\bw)
  &:=\bone\{\bw\in\cW\}\frac{\dot{\ell}_1\del[1]{\Delta\bx^\top(\btheta_0+\tau_m\bvtheta_m),\bw}-\dot{\ell}_1\del[1]{\Delta\bx^\top\btheta_0,\bw}}{\tau_m}\Delta x_j.
\end{align*}
Each $f_m$ is measurable and real-valued. In addition, by
\eqref{eq:DerivativeExpectedTrimmedAbsLoss} and Assumption
\ref{assu:Regularity-TLAD},
\begin{align*}
\envert{f_m(\bw)} & \leqslant \frac{\envert[1]{\dot{\ell}_1\del[1]{\Delta\bx^\top(\btheta_0+\tau_m\bvtheta_m),\bw}-\dot{\ell}_1\del[1]{\Delta\bx^\top\btheta_0,\bw}}}{\tau_m}\cdot|\Delta x_j|\\
&\leqslant 4C|\Delta\bx^\top\bvtheta_m|\cdot|\Delta x_j|
\leqslant
4C\enVert[0]{\bvtheta_m}_2\enVert[0]{\Delta\bx}_2^2
=:g_m(\bw).
\end{align*}
Since $\E[\|\Delta\bX\|_2^2]<\infty$ by Assumption
\ref{assu:MomentConditions-TLAD}, each $g_m$ is $\mu_{\bW}$-integrable. The
previous display therefore goes to show that $\{f_m\}$ is dominated by the
nonnegative sequence $\{g_m\}$. Since $\bvtheta_m\to\bvtheta$, we have $g_m\to
g$ pointwise on $\R^{2K+1}$, and
\[
\int_{\R^{2K+1}}g_m\dif\mu_{\bW} = 4C\enVert[0]{\bvtheta_m}_2\E[\|\Delta\bX\|_2^2]
\to
4C\enVert[0]{\bvtheta}_2\E[\|\Delta\bX\|_2^2]
=
\int_{\R^{2K+1}}g\dif\mu_{\bW}
<\infty,
\]
where $g(\bw):=4C\enVert[0]{\bvtheta}_2\|\Delta\bx\|_2^2$. Lemma
\ref{lem:SecondOrderDerivativeExpectedTrimmedAbsLoss}.\ref{enu:DotEllDirDiffAtxtheta}
shows that $\dot{\ell}_1(\cdot,\bw)$ is differentiable at
$t=\Delta\bx^\top\btheta_0$, thus yielding the pointwise convergence
\[
f_m(\bw)\to\bone\{\bw\in\cW\}\ddot{\ell}_{11}\del[1]{\Delta\bx^\top\btheta_0,\bw}\Delta x_j\Delta\bx^\top\bvtheta=:f(\bw).
\]
Appealing to the GLDCT, stacking over the coordinates $j\in[K]$, we conclude
that
\[
\frac{\nabla L(\btheta_0+\tau_m\bvtheta_m)-\nabla L(\btheta_0)}{\tau_m}\to\E\sbr[2]{\ddot{\ell}_{11}\del[1]{\Delta\bX^\top\btheta_0,\bW}\Delta\bX\Delta\bX^\top}\bvtheta.
\]
Since the limit exists for every $\bvtheta\in\R^K$, is linear in $\bvtheta$, and
is independent of the sequences $\{\tau_m\}$ and $\{\bvtheta_m\}$, we conclude
that $\nabla L$ is Hadamard differentiable at $\btheta_0$. As Hadamard
differentiability is here equivalent to (Fr\'echet) differentiability, we
conclude that $\nabla L$ is differentiable at $\btheta_0$. Hence, $L$ is twice
differentiable at $\btheta_0$ with Hessian given by
\[
\nabla^2L(\btheta_0)=\E\sbr[2]{\ddot{\ell}_{11}\del[1]{\Delta\bX^\top\btheta_0,\bW}\Delta\bX\Delta\bX^\top}.
\]
The expression \eqref{eq:LossHessian-TLAD} now follows from the previous display
and \eqref{eq:SecondOrderDerivativeExpectedTrimmedAbsLossAtxtheta}.
\end{proof}

We end this section by providing the proofs for Lemmas
\ref{lem:DerivativeExpectedTrimmedAbsLoss} and
\ref{lem:SecondOrderDerivativeExpectedTrimmedAbsLoss}.

\begin{proof}[\sc{Proof of Lemma \ref{lem:DerivativeExpectedTrimmedAbsLoss}}]
$\ell(\cdot,\bw)$ inherits 1-Lipchitzness from $m^{\tt{tlad}}(\cdot,\by)$ via (a
conditional version of) Jensen's inequality. For the differentiability claim,
observe that the expectand underlying $\ell(t,\bw)$ is
\begin{align*}
h(t,\by^{\ast}) & :=\bone\{y_{1}^{\ast}>0\}\bone\{y_{2}^{\ast}>0\}\left(|y_{1}^{\ast}-y_{2}^{\ast}-t|-|y_{1}^{\ast}-y_{2}^{\ast}|\right)\\
  & \qquad+\bone\{y_{1}^{\ast}>0\}\bone\{y_{2}^{\ast}\leqslant0\}\left(\max\{0,y_{1}^{\ast}-t\}-\max\{0,y_{1}^{\ast}\}\right)\\
  & \qquad+\bone\{y_{1}^{\ast}\leqslant0\}\bone\{y_{2}^{\ast}>0\}\left(\max\{0,y_{2}^{\ast}+t\}-\max\{0,y_{2}^{\ast}\}\right).
\end{align*}
For each $\by^{\ast}\in\R^{2}$, $h(\cdot,\by^{\ast})$
is finite convex on $\R$. This function \emph{fails}
to be differentiable at $t$ only when $\by^{\ast}$ lies
in the subset $N(t)$ of $\R^{2}$ defined by
\begin{align*}
N(t) & := \begin{cases}
\left\{ \by^{\ast}\in\R^{2}\middle| y_{1}^{\ast}\leqslant0,y_{2}^{\ast}=-t\right\} \cup\left\{ \by^{\ast}\in\R^{2}\middle| y_{1}^{\ast}>0,y_{2}^{\ast}>0,y_{1}^{\ast}-y_{2}^{\ast}=t\right\} , & t<0,\\
\left\{ \by^{\ast}\in\R^{2}\middle| y_{1}^{\ast}>0,y_{2}^{\ast}>0,y_{1}^{\ast}=y_{2}^{\ast}\right\} , & t=0,\\
\left\{ \by^{\ast}\in\R^{2}\middle| y_{1}^{\ast}=t,y_{2}^{\ast}\leqslant0\right\} \cup\left\{ \by^{\ast}\in\R^{2}\middle| y_{1}^{\ast}>0,y_{2}^{\ast}>0,y_{1}^{\ast}-y_{2}^{\ast}=t\right\} , & t>0.
\end{cases}
\end{align*}
For $\by^{\ast}\in\R^{2}\backslash N(t)$, we have 
\begin{align*}
\dot{h}_{1}(t,\by^{\ast}) & =\bone\{y_{1}^{\ast}>0\}\bone\{y_{2}^{\ast}>0\}\left(2\bone\{y_{1}^{\ast}-y_{2}^{\ast}\leqslant t\}-1\right)\\
  & \qquad-\bone\{y_{1}^{\ast}>0\}\bone\{y_{2}^{\ast}\leqslant0\}\bone\{y_{1}^{\ast}> t\}\\
  & \qquad+\bone\{y_{1}^{\ast}\leqslant0\}\bone\{y_{2}^{\ast}>0\}\bone\{y_{2}^{\ast}>-t\}.
\end{align*}
As $N(t)$ is the union of a finite number of rays in the plane, we have
$\lambda_{2}(N(t))=0$ for all $t\in\R$. 
Given that the distribution
$\mu_{\bY^{\ast}\mid\bw}$ of $\bY^{\ast}$ conditional on
$\bW=\bw$ is absolutely continuous with respect to $\lambda_{2}$, the points
$N(t)$ leading to expectand non-differentiability at $t$ are therefore
negligible, $\mu_{\bY^{\ast}\mid\bw}(N(t))=0$. \citet[Proposition
2.3]{bertsekas1973stochastic} therefore tells us that $\ell(\cdot,\bw)$
\emph{is} differentiable with derivative given by
\begin{align*}
  \dot\ell_{1}(t,\bw) & =\int_{\R^{2}\backslash N(t)}\dot{h}_{1}(t,\by^{\ast})\mu_{\bY^{\ast}\mid\bw}(\dif \by^{\ast})=\int_{\R^{2}\backslash N(t)}\dot{h}_{1}(t,\by^{\ast})f_{\bY^{\ast}\mid\bw}(\by^{\ast})\dif \by^{\ast}\\
    & =\E_{\bw}\sbr[1]{\bone\{Y_{1}^{\ast}>0\}\bone\{Y_{2}^{\ast}>0\}\left(2\bone\{Y_{1}^{\ast}-Y_{2}^{\ast}\leqslant t\}-1\right)}\\
    & \quad-\E_{\bw}\sbr[1]{\bone\{Y_{1}^{\ast}>0\}\bone\{Y_{2}^{\ast}\leqslant0\}\bone\{Y_{1}^{\ast}>t\}}\\
    & \quad+\E_{\bw}\sbr[1]{\bone\{Y_{1}^{\ast}\leqslant0\}\bone\{Y_{2}^{\ast}>0\}\bone\{Y_{2}^{\ast}>-t\}},
\end{align*}
which gives the asserted claim.  
\end{proof}

\begin{proof}[\sc{Proof of Lemma \ref{lem:SecondOrderDerivativeExpectedTrimmedAbsLoss}}]
For notational convenience, abbreviate $M(t,\bw):=\dot\ell_{1}(t,\bw)$ and
decompose as follows:
\begin{align*}
  M(t,\bw)
    & =\E_{\bw}\sbr[1]{\bone\{Y_{1}^{\ast}>0\}\bone\{Y_{2}^{\ast}>0\}\left(2\bone\{Y_{1}^{\ast}-Y_{2}^{\ast}\leqslant t\}-1\right)}\tag{\ensuremath{=:M_{a}(t,\bw)}}\\
    & \quad-\E_{\bw}\sbr[1]{\bone\{Y_{1}^{\ast}>0\}\bone\{Y_{2}^{\ast}\leqslant0\}\bone\{Y_{1}^{\ast}>t\}}\tag{\ensuremath{=:M_{b}(t,\bw)}}\\
    & \quad+\E_{\bw}\sbr[1]{\bone\{Y_{1}^{\ast}\leqslant0\}\bone\{Y_{2}^{\ast}>0\}\bone\{Y_{2}^{\ast}>-t\}}.\tag{\ensuremath{=:M_{c}(t,\bw)}}
\end{align*}
We next establish (at least) semi-differentiability of the contributions $a,b,$
and $c$ in turn, and provide the right $(+)$ and left $(-)$ derivatives. To this
end, fix $t\in\R$ and let $\{\tau_{m}\}_{m=1}^{\infty}$ be a sequence in
$(0,\infty)$ satisfying $\tau_{m}\to0$ $(\tau_{m}\to0_{+})$.

\noindent\textbf{Part a, \emph{right} differentiability:} Express the difference
quotient as
\begin{align}
  & \frac{M_{a}(t+\tau_{m},\bw)-M_{a}(t,\bw)}{\tau_{m}}\nonumber\\
  & =2\tau_{m}^{-1}\E_{\bw}\sbr[1]{\bone\{Y_{1}^{\ast}>0\}\bone\{Y_{2}^{\ast}>0\}\left(\bone\{\Delta Y^{\ast}\leqslant t+\tau_{m}\}-\bone\{\Delta Y^{\ast}\leqslant t\}\right)}\nonumber\\
  & =2\tau_{m}^{-1}\int_{\R^{2}}\bone\{y_{1}^{\ast}>0\}\bone\{y_{2}^{\ast}>0\}\bone\{y_{1}^{\ast}-y_{2}^{\ast}\in(t,t+\tau_{m}]\}f_{\bY^{\ast}\mid\bw}\left(\by^{\ast}\right)\dif \by^{\ast}\nonumber\\
  & =2\tau_{m}^{-1}\int_{\R}\bone\{y_{2}^{\ast}>0\}\left[\int_{\R}\bone\{y_{1}^{\ast}>0\}\bone\{y_{1}^{\ast}-y_{2}^{\ast}\in(t,t+\tau_{m}]\}f_{\bY^{\ast}\mid\bw}\left(y_{1}^{\ast},y_{2}^{\ast}\right)\dif y_{1}^{\ast}\right]\dif y_{2}^{\ast}\nonumber\\
  & =2\tau_{m}^{-1}\int_{\R}\bone\{y_{2}^{\ast}>0\}\left[\int_{\R}\bone\{y_{1}^{\ast}\geqslant0\}\bone\{y_{1}^{\ast}-y_{2}^{\ast}\in[t,t+\tau_{m}]\}f_{\bY^{\ast}\mid\bw}\left(y_{1}^{\ast},y_{2}^{\ast}\right)\dif y_{1}^{\ast}\right]\dif y_{2}^{\ast}\nonumber\\
  & =2\tau_{m}^{-1}\int_{\R}\bone\{y_{2}^{\ast}>0\}\left[\int_{\R}\bone\{u+y_{2}^{\ast}\geqslant0\}\bone\{u\in[t,t+\tau_{m}]\}f_{\bY^{\ast}\mid\bw}\left(u+y_{2}^{\ast},y_{2}^{\ast}\right)\dif u\right]\dif y_{2}^{\ast}\nonumber\\
  & =\int_{\R}2\bone\{y_{2}^{\ast}>0\}\tau_{m}^{-1}\left[\int_{[t,t+\tau_{m}]}\bone\{u\geqslant-y_{2}^{\ast}\}f_{\bY^{\ast}\mid\bw}\left(u+y_{2}^{\ast},y_{2}^{\ast}\right)\dif u\right]\dif y_{2}^{\ast}, \label{eq: cutting equations 1}
\end{align}
where we have used non-negativity to invoke Tonelli's theorem, absolute
continuity to modify the inner integral on a Lebesgue null set in $\R$ (which
changes with $y_{2}^{\ast}$), and the change of variables
$u:=y_{1}^{\ast}-y_{2}^{\ast}$. Consider the measure space
$(\R,\mathcal{B},\lambda)$ and define the (outer integrand) function $f_{m}$ by
\begin{align*}
f_{m}(y_{2}^{\ast}) & :=2\bone\{y_{2}^{\ast}>0\}\tau_{m}^{-1}\int_{[t,t+\tau_{m}]}\bone\{u\geqslant-y_{2}^{\ast}\}f_{\bY^{\ast}\mid\bw}\left(u+y_{2}^{\ast},y_{2}^{\ast}\right)\dif u.
\end{align*}
Then $f_{m}$ is non-negative and bounded from above by $g_{m}$ defined by
\[
g_{m}(y_{2}^{\ast}):=2\tau_{m}^{-1}\int_{[t,t+\tau_{m}]}f_{\bY^{\ast}\mid\bw}\left(u+y_{2}^{\ast},y_{2}^{\ast}\right)\dif u.
\]
Assumption \ref{assu:Regularity-TLAD} implies that $f_{\Delta
Y^{\ast}\mid\bw}(\cdot)=f_{\Delta
\varepsilon\mid\bw}(\cdot-\Delta\bx^\top\btheta_0)$ is bounded by a constant
$C$, so Tonelli's theorem yields
\begin{align*}
\int_{\R}g_{m}(y_{2}^{\ast})\dif y_{2}^{\ast} & =2\tau_{m}^{-1}\int_{[t,t+\tau_{m}]}\left[\int_{\R}f_{\bY^{\ast}\mid\bw}\left(u+y_{2}^{\ast},y_{2}^{\ast}\right)\dif y_{2}^{\ast}\right]\dif u\\
  & =2\tau_{m}^{-1}\int_{[t,t+\tau_{m}]}f_{\Delta Y^{\ast}\mid\bw}(u)\dif u\leqslant C,
\end{align*}
showing that $g_{m}$ (and thus $f_{m}$) is integrable. Since
$f_{\bY^{\ast}\mid\bw}(\cdot,\cdot) = f_{\bepsilon\mid\bw}(\cdot - a - \bx_1^\top\btheta_0,\cdot - a - \bx_2^\top\btheta_0)$ is 
continuous (Assumption \ref{assu:extra continuity}) and $\bone\{\cdot\geqslant-y_{2}^{\ast}\}$ is \emph{right}
continuous, both inner integrands $u\mapsto
f_{\bY^{\ast}\mid\bw}\left(u+y_{2}^{\ast},y_{2}^{\ast}\right)$
and
$u\mapsto\bone\{u\geqslant-y_{2}^{\ast}\}f_{\bY^{\ast}\mid\bw}\left(u+y_{2}^{\ast},y_{2}^{\ast}\right)$
are right continuous for each $y_{2}^{\ast}\in\R$. As
$\tau_{m}\to0_{+}$, it follows from right continuity that both
\[
g_{m}(y_{2}^{\ast})\to2f_{\bY^{\ast}\mid\bw}\left(t+y_{2}^{\ast},y_{2}^{\ast}\right)=:g(y_{2}^{\ast}),
\]
pointwise in $y_{2}^{\ast}\in\R$ and
\begin{align*}
f_{m}(y_{2}^{\ast})   & \to2\bone\{y_{2}^{\ast}>0\}\bone\{t\geqslant-y_{2}^{\ast}\}f_{\bY^{\ast}\mid\bw}\left(t+y_{2}^{\ast},y_{2}^{\ast}\right)=:f(y_{2}^{\ast})
\end{align*}
pointwise in $y_{2}^{\ast}\in\R$. Also, since $f_{\Delta
Y^{\ast}\mid\bw}(\cdot)=f_{\Delta
\varepsilon\mid\bw}(\cdot-\Delta\bx^\top\btheta_0)$ is continuous (Assumption \ref{assu:extra continuity}),
\begin{align*}
\int_{\R}g_{m}(y_{2}^{\ast})\dif y_{2}^{\ast} & =2\tau_{m}^{-1}\int_{[t,t+\tau_{m}]}f_{\Delta Y^{\ast}\mid\bw}(u)\dif u\\
  & \to2f_{\Delta Y^{\ast}\mid\bw}(t)=2\int_{\R}f_{\bY^{\ast}\mid\bw}\left(t+y_{2}^{\ast},y_{2}^{\ast}\right)\dif y_{2}^{\ast}=\int_{\R}g(y_{2}^{\ast})\dif y_{2}^{\ast}<\infty.
\end{align*}
It thus follows from the Generalized Lebesgue Dominated Convergence Theorem
(GLDCT) in Theorem \ref{thm:GLDCT} that $f$ is integrable and $\int
f_{m}\dif \lambda\to\int f\dif \lambda$. The latter convergence
translates to
\begin{align*}
\frac{M_{a}(t+\tau_{m},\bw)-M_{a}(t,\bw)}{\tau_{m}} & \to2\int_{\R}\bone\{y_{2}^{\ast}>0\}\bone\{t\geqslant-y_{2}^{\ast}\}f_{\bY^{\ast}\mid\bw}\left(t+y_{2}^{\ast},y_{2}^{\ast}\right)\dif y_{2}^{\ast},
\end{align*}
showing that $M_{a}(\cdot,\bw)$ is \emph{right} \emph{differentiable}
at $t$ with \emph{right derivative}
\[
\dot{M}_{a,1+}(t,\bw)=2\int_{\max\{0,-t\}}^{+\infty}f_{\bY^{\ast}\mid\bw}\left(t+y_{2}^{\ast},y_{2}^{\ast}\right)\dif y_{2}^{\ast}.
\]
\textbf{Part a, \emph{left} differentiability:} Express the difference quotient
as
\begin{align}
& \frac{M_{a}(t-\tau_{m},\bw)-M_{a}(t,\bw)}{(-\tau_{m})} \nonumber\\
& =-2\tau_{m}^{-1}\E_{\bw}\sbr[1]{\bone\{Y_{1}^{\ast}>0\}\bone\{Y_{2}^{\ast}>0\}\left(\bone\{\Delta Y^{\ast}\leqslant t-\tau_{m}\}-\bone\{\Delta Y^{\ast}\leqslant t\}\right)}\nonumber\\
& =\int_{\R}2\bone\{y_{2}^{\ast}>0\}\tau_{m}^{-1}\left[\int_{[t-\tau_{m},t]}\bone\{u\geqslant-y_{2}^{\ast}\}f_{\bY^{\ast}\mid\bw}\left(u+y_{2}^{\ast},y_{2}^{\ast}\right)\dif u\right]\dif y_{2}^{\ast},\label{eq: cutting equations 2}\\
& =\int_{\R}2\bone\{y_{2}^{\ast}>0\}\tau_{m}^{-1}\left[\int_{[t-\tau_{m},t]}\bone\{u>-y_{2}^{\ast}\}f_{\bY^{\ast}\mid\bw}\left(u+y_{2}^{\ast},y_{2}^{\ast}\right)\dif u\right]\dif y_{2}^{\ast},\nonumber
\end{align}
where \eqref{eq: cutting equations 2} follows by the same argument as that leading to \eqref{eq: cutting equations 1}, with $(t,t+\tau_m)$ replaced by $(t-\tau_m,t)$. We then proceed as with the proof of right differentiability, where we now use left continuity of $\mathbf 1\{\cdot > -y_2^{\ast}\}$ instead of right continuity of $\mathbf 1\{\cdot \geqslant -y_2^{\ast}\}$ to conclude that
\begin{align*}
\frac{M_{a}(t-\tau_{m},\bw)-M_{a}(t,\bw)}{(-\tau_{m})} & \to2\int_{\R}\bone\{y_{2}^{\ast}>0\}\bone\{t>-y_{2}^{\ast}\}f_{\bY^{\ast}\mid\bw}\left(t+y_{2}^{\ast},y_{2}^{\ast}\right)\dif y_{2}^{\ast}.
\end{align*}
Hence, $M_{a}(\cdot,\bw)$ is \emph{left} \emph{differentiable} at $t$
with \emph{left derivative}
\[
\dot{M}_{a,1-}(t,\bw)=2\int_{\max\{0,-t\}}^{+\infty}f_{\bY^{\ast}\mid\bw}\left(t+y_{2}^{\ast},y_{2}^{\ast}\right)\dif y_{2}^{\ast}.
\]
\textbf{Part a, \emph{two-sided} differentiability:} The left and right
derivatives exist and agree for all $t\in\R$, so $M_{a}(\cdot,\bw)$ is
differentiable with derivative given by
\begin{align*}
\dot{M}_{a,1}(t,\bw) & =2\int_{\max\{0,-t\}}^{+\infty}f_{\bY^{\ast}\mid\bw}\left(t+y_{2}^{\ast},y_{2}^{\ast}\right)\dif y_{2}^{\ast}.
\end{align*}
In case that $t<0$, using the change of variables $z:=t+y_{2}^{\ast}$, we have
$y_{2}^{\ast}=z-t$, and the range of integration becomes $[0,+\infty)$. We can
therefore express this derivative in the (more symmetric looking) form
\begin{align*}
\dot{M}_{a,1}(t,\bw) & =2\int_{0}^{+\infty}f_{\bY^{\ast}\mid\bw}\left(z+\max\{0,t\},z-\min\{0,t\}\right)\dif z.
\end{align*}
\textbf{Part b, \emph{right} differentiability:} Express the function as
\begin{align*}
M_{b}(t,\bw) & =\E_{\bw}\sbr[1]{\bone\{Y_{1}^{\ast}>0\}\bone\{Y_{2}^{\ast}\leqslant0\}\left(\bone\{Y_{1}^{\ast}\leqslant t\}-1\right)}.
\end{align*}
Then using absolute continuity, non-negativity and Tonelli's theorem, we get
\begin{align}
  & \frac{M_{b}(t+\tau_{m},\bw)-M_{b}(t,\bw)}{\tau_{m}}\nonumber\\
  & =\tau_{m}^{-1}\E_{\bw}\sbr[1]{\bone\{Y_{1}^{\ast}>0\}\bone\{Y_{2}^{\ast}\leqslant0\}\left(\bone\{Y_{1}^{\ast}\leqslant t+\tau_{m}\}-\bone\{Y_{1}^{\ast}\leqslant t\}\right)}\nonumber\\
  & =\tau_{m}^{-1}\int_{\R^{2}}\bone\{y_{1}^{\ast}>0\}\bone\{y_{2}^{\ast}\leqslant0\}\left(\bone\{y_{1}^{\ast}\leqslant t+\tau_{m}\}-\bone\{y_{1}^{\ast}\leqslant t\}\right)f_{\bY^{\ast}\mid\bw}\left(\by^{\ast}\right)\dif \by^{\ast}\nonumber\\
  & =\int_{\R}\bone\{y_{2}^{\ast}\leqslant0\}\left[\tau_{m}^{-1}\int_{\R}\bone\{y_{1}^{\ast}>0\}\bone\{y_{1}^{\ast}\in(t,t+\tau_{m}]\}f_{\bY^{\ast}\mid\bw}\left(y_{1}^{\ast},y_{2}^{\ast}\right)\dif y_{1}^{\ast}\right]\dif y_{2}^{\ast}\nonumber\\
  & =\int_{\R}\bone\{y_{2}^{\ast}\leqslant0\}\left[\tau_{m}^{-1}\int_{\R}\bone\{y_{1}^{\ast}\geqslant0\}\bone\{y_{1}^{\ast}\in[t,t+\tau_{m}]\}f_{\bY^{\ast}\mid\bw}\left(y_{1}^{\ast},y_{2}^{\ast}\right)\dif y_{1}^{\ast}\right]\dif y_{2}^{\ast}\nonumber\\
  & =\int_{\R}\bone\{y_{2}^{\ast}\leqslant0\}\left[\tau_{m}^{-1}\int_{[t,t+\tau_{m}]}\bone\{y_{1}^{\ast}\geqslant0\}f_{\bY^{\ast}\mid\bw}\left(y_{1}^{\ast},y_{2}^{\ast}\right)\dif y_{1}^{\ast}\right]\dif y_{2}^{\ast}.\label{eq: cutting equations 3}
\end{align}
Consider the measure space $(\R,\mathcal{B},\lambda)$ and define the
(outer integrand) function $f_{m}$ by
\begin{align*}
f_{m}(y_{2}^{\ast}) & :=\bone\{y_{2}^{\ast}\leqslant0\}\tau_{m}^{-1}\int_{[t,t+\tau_{m}]}\bone\{y_{1}^{\ast}\geqslant0\}f_{\bY^{\ast}\mid\bw}\left(y_{1}^{\ast},y_{2}^{\ast}\right)\dif y_{1}^{\ast}.
\end{align*}
Then $f_{m}$ is non-negative and bounded from above by $g_{m}$ defined by
\[
g_{m}(y_{2}^{\ast}):=\tau_{m}^{-1}\int_{[t,t+\tau_{m}]}f_{\bY^{\ast}\mid\bw}\left(y_{1}^{\ast},y_{2}^{\ast}\right)\dif y_{1}^{\ast}.
\]
As $f_{Y_{1}^{\ast}\mid\bw}(\cdot)=f_{\varepsilon_{1}\mid\bw}(\cdot-a - \bx_1^{\top}\btheta_0)$ is
assumed bounded by a constant $C$ (Assumption \ref{assu:Regularity-TLAD}),
Tonelli's theorem yields
\begin{align*}
\int_{\R}g_{m}(y_{2}^{\ast})\dif y_{2}^{\ast} & =\tau_{m}^{-1}\int_{[t,t+\tau_{m}]}\left[\int_{\R}f_{\bY^{\ast}\mid\bw}\left(y_{1}^{\ast},y_{2}^{\ast}\right)\dif y_{2}^{\ast}\right]\dif y_{1}^{\ast}\\
  & =\tau_{m}^{-1}\int_{[t,t+\tau_{m}]}f_{Y_{1}^{\ast}\mid\bw}(y_{1}^{\ast})\dif y_{1}^{\ast}\leqslant C,
\end{align*}
showing that $g_{m}$ (and thus $f_{m}$) is integrable. Since
$f_{\bY^{\ast}\mid\bw}(\cdot,\cdot) = f_{\bepsilon\mid\bw}(\cdot - a - \bx_1^\top\btheta_0,\cdot - a - \bx_2^\top\btheta_0)$ is 
continuous (Assumption \ref{assu:extra continuity}) and $\bone\{\cdot\geqslant0\}$ is \emph{right} continuous,
both inner integrands $y_{1}^{\ast}\mapsto
f_{\bY^{\ast}\mid\bw}\left(y_{1}^{\ast},y_{2}^{\ast}\right)$
and
$y_{1}^{\ast}\mapsto\bone\{y_{1}^{\ast}\geqslant0\}f_{\bY^{\ast}\mid\bw}\left(y_{1}^{\ast},y_{2}^{\ast}\right)$
are right continuous for each $y_{2}^{\ast}\in\R$. As
$\tau_{m}\to0_{+}$, it follows from right continuity that both
\[
g_{m}(y_{2}^{\ast})\to f_{\bY^{\ast}\mid\bw}\left(t,y_{2}^{\ast}\right)=:g(y_{2}^{\ast}),
\]
pointwise in $y_{2}^{\ast}\in\R$ and
\begin{align*}
f_{m}(y_{2}^{\ast})  & \to\bone\{y_{2}^{\ast}\leqslant0\}\bone\{t\geqslant0\}f_{\bY^{\ast}\mid\bw}\left(t,y_{2}^{\ast}\right)=:f(y_{2}^{\ast})
\end{align*}
pointwise in $y_{2}^{\ast}\in\R$. Also, since $f_{Y_{1}^{\ast}\mid\bw}(\cdot)=f_{\varepsilon_{1}\mid\bw}(\cdot- a - \bx_1^{\top}\btheta_0)$ is continuous (Assumption \ref{assu:extra continuity}),
\begin{align*}
\int_{\R}g_{m}(y_{2}^{\ast})\dif y_{2}^{\ast} & =\tau_{m}^{-1}\int_{[t,t+\tau_{m}]}f_{Y_{1}^{\ast}\mid\bw}(y_{1}^{\ast})\dif y_{1}^{\ast} \to f_{Y_{1}^{\ast}\mid\bw}(t)=\int_{\R}f_{\bY^{\ast}\mid\bw}\left(t,y_{2}^{\ast}\right)\dif y_{2}^{\ast}=\int_{\R}g(y_{2}^{\ast})\dif y_{2}^{\ast}<\infty.
\end{align*}
The GLDCT (Theorem \ref{thm:GLDCT}) therefore shows that $f$ is integrable and
$\int f_{m}\dif \lambda\to\int f\dif \lambda$, the latter convergence
meaning that
\begin{align*}
\frac{M_{b}(t+\tau_{m},\bw)-M_{b}(t,\bw)}{\tau_{m}} & \to\int_{\R}\bone\{y_{2}^{\ast}\leqslant0\}\bone\{t\geqslant0\}f_{\bY^{\ast}\mid\bw}\left(t,y_{2}^{\ast}\right)\dif y_{2}^{\ast}.
\end{align*}
Hence, $M_{b}(\cdot,\bw)$ is \emph{right} \emph{differentiable} at
$t$ with \emph{right derivative}
\[
\dot{M}_{b,1+}(t,\bw)=\bone\{t\geqslant0\}\int_{-\infty}^{0}f_{\bY^{\ast}\mid\bw}\left(t,z\right)\dif z.
\]
\textbf{Part b, \emph{left} differentiability:} Express the difference quotient as
\begin{align}
  & \frac{M_{b}(t-\tau_{m},\bw)-M_{b}(t,\bw)}{\left(-\tau_{m}\right)}\nonumber\\
  & =\left(-\tau_{m}\right)^{-1}\E_{\bw}\sbr[1]{\bone\{Y_{1}^{\ast}>0\}\bone\{Y_{2}^{\ast}\leqslant0\}\left(\bone\{Y_{1}^{\ast}\leqslant t-\tau_{m}\}-\bone\{Y_{1}^{\ast}\leqslant t\}\right)}\nonumber\\
  & =\int_{\R}\bone\{y_{2}^{\ast}\leqslant0\}\left[\tau_{m}^{-1}\int_{[t-\tau_{m},t]}\bone\{y_{1}^{\ast}\geqslant 0\}f_{\bY^{\ast}\mid\bw}\left(y_{1}^{\ast},y_{2}^{\ast}\right)\dif y_{1}^{\ast}\right]\dif y_{2}^{\ast}\label{eq: cutting equations 4}\\
  & =\int_{\R}\bone\{y_{2}^{\ast}\leqslant0\}\left[\tau_{m}^{-1}\int_{[t-\tau_{m},t]}\bone\{y_{1}^{\ast}>0\}f_{\bY^{\ast}\mid\bw}\left(y_{1}^{\ast},y_{2}^{\ast}\right)\dif y_{1}^{\ast}\right]\dif y_{2}^{\ast},\nonumber
\end{align}
where \eqref{eq: cutting equations 4} follows by the same argument as that leading to \eqref{eq: cutting equations 3}, with $(t,t+\tau_m)$ replaced by $(t-\tau_m,t)$. We then proceed as with the proof of right differentiability, where we now use left continuity of $\mathbf 1\{\cdot > 0\}$ instead of right continuity of $\mathbf 1\{\cdot \geqslant 0\}$ to conclude that
\begin{align*}
\frac{M_{b}(t-\tau_{m},\bw)-M_{b}(t,\bw)}{(-\tau_{m})} & \to\int_{\R}\bone\{y_{2}^{\ast}\leqslant0\}\bone\{t>0\}f_{\bY^{\ast}\mid\bw}\left(t,y_{2}^{\ast}\right)\dif y_{2}^{\ast}.
\end{align*}
Hence, $M_{b}(\cdot,\bw)$ is \emph{left} \emph{differentiable}
at $t$ with \emph{left derivative}
\[
\dot{M}_{b,1-}(t,\bw)=\bone\{t>0\}\int_{-\infty}^{0}f_{\bY^{\ast}\mid\bw}\left(t,z\right)\dif z.
\]
\textbf{Part c, \emph{right} differentiability:} Express the function as
\[
M_{c}(t,\bw)=\E_{\bw}\sbr[1]{\bone\{Y_{1}^{\ast}\leqslant0\}\bone\{Y_{2}^{\ast}>0\}\left(1-\bone\{Y_{2}^{\ast}\leqslant-t\}\right)}.
\]
Then using absolute continuity, non-negativity and Tonelli's theorem, we get
\begin{align}
  & \frac{M_{c}(t+\tau_{m},\bw)-M_{c}(t,\bw)}{\tau_{m}}\nonumber\\
  & =\tau_{m}^{-1}\E_{\bw}\sbr[1]{\bone\{Y_{1}^{\ast}\leqslant0\}\bone\{Y_{2}^{\ast}>0\}\left(\bone\{Y_{2}^{\ast}\leqslant-t\}-\bone\{Y_{2}^{\ast}\leqslant-t-\tau_{m}\}\right)}\nonumber\\
  & =\tau_{m}^{-1}\int_{\R^{2}}\bone\{y_{1}^{\ast}\leqslant0\}\bone\{y_{2}^{\ast}>0\}\bone\{y_{2}^{\ast}\in(-t-\tau_{m},-t]\}f_{\bY^{\ast}\mid\bw}\left(\by^{\ast}\right)\dif \by^{\ast}\nonumber\\
  & =\int_{\R}\bone\{y_{1}^{\ast}\leqslant0\}\left[\tau_{m}^{-1}\int_{\R}\bone\{y_{2}^{\ast}>0\}\bone\{y_{2}^{\ast}\in(-t-\tau_{m},-t]\}f_{\bY^{\ast}\mid\bw}\left(y_{1}^{\ast},y_{2}^{\ast}\right)\dif y_{2}^{\ast}\right]\dif y_{1}^{\ast}\nonumber\\
  & =\int_{\R}\bone\{y_{1}^{\ast}\leqslant0\}\left[\tau_{m}^{-1}\int_{[-t-\tau_{m},-t]}\bone\{y_{2}^{\ast}>0\}f_{\bY^{\ast}\mid\bw}\left(y_{1}^{\ast},y_{2}^{\ast}\right)\dif y_{2}^{\ast}\right]\dif y_{1}^{\ast}.\label{eq: cutting equations 5}
\end{align}
Consider the measure space $(\R,\mathcal{B},\lambda)$ and define the
(outer integrand) function $f_{m}$ by
\begin{align*}
f_{m}(y_{1}^{\ast}) & :=\bone\{y_{1}^{\ast}\leqslant0\}\tau_{m}^{-1}\int_{[-t-\tau_{m},-t]}\bone\{y_{2}^{\ast}>0\}f_{\bY^{\ast}\mid\bw}\left(y_{1}^{\ast},y_{2}^{\ast}\right)\dif y_{2}^{\ast}.
\end{align*}
Then $f_{m}$ is non-negative and bounded from above by $g_{m}$ defined by
\[
g_{m}(y_{1}^{\ast}):=\tau_{m}^{-1}\int_{[-t-\tau_{m},-t]}f_{\bY^{\ast}\mid\bw}\left(y_{1}^{\ast},y_{2}^{\ast}\right)\dif y_{2}^{\ast}.
\]
As $f_{Y_{2}^{\ast}\mid\bw}(\cdot)=f_{\varepsilon_{2}\mid\bw}(\cdot - a - \bx_2^{\top}\btheta_0)$ is bounded by $C$ (Assumption \ref{assu:Regularity-TLAD}),
Tonelli's theorem yields
\begin{align*}
\int_{\R}g_{m}(y_{1}^{\ast})\dif y_{1}^{\ast} & =\tau_{m}^{-1}\int_{[-t-\tau_{m},-t]}\left[\int_{\R}f_{\bY^{\ast}\mid\bw}\left(y_{1}^{\ast},y_{2}^{\ast}\right)\dif y_{1}^{\ast}\right]\dif y_{2}^{\ast}\\
  & =\tau_{m}^{-1}\int_{[-t-\tau_{m},-t]}f_{Y_{2}^{\ast}\mid\bw}(y_{2}^{\ast})\dif y_{2}^{\ast}\leqslant C,
\end{align*}
showing that $g_{m}$ (and thus $f_{m}$) is integrable. Since
$f_{\bY^{\ast}\mid\bw}(\cdot,\cdot) = f_{\bepsilon\mid\bw}(\cdot - a - \bx_1^\top\btheta_0,\cdot - a - \bx_2^\top\btheta_0)$ is 
continuous (Assumption \ref{assu:extra continuity}) and $\bone\{\cdot>0\}$ is \emph{left} continuous, both inner
integrands $y_{2}^{\ast}\mapsto
f_{\bY^{\ast}\mid\bw}\left(y_{1}^{\ast},y_{2}^{\ast}\right)$
and
$y_{2}^{\ast}\mapsto\bone\{y_{2}^{\ast}>0\}f_{\bY^{\ast}\mid\bw}\left(y_{1}^{\ast},y_{2}^{\ast}\right)$
are left continuous for each $y_{1}^{\ast}\in\R$. As $\tau_{m}\to0_{+}$,
it follows from left continuity that both
\[
g_{m}(y_{1}^{\ast})\to f_{\bY^{\ast}\mid\bw}\left(y_{1}^{\ast},-t\right)=:g(y_{1}^{\ast})
\]
pointwise in $y_{1}^{\ast}\in\R$ and
\begin{align*}
f_{m}(y_{1}^{\ast})  & \to\bone\{y_{1}^{\ast}\leqslant0\}\bone\{-t>0\}f_{\bY^{\ast}\mid\bw}\left(y_{1}^{\ast},-t\right)=:f(y_{1}^{\ast})
\end{align*}
pointwise in $y_{1}^{\ast}\in\R$. Also, since
$f_{Y_{2}^{\ast}\mid\bw}(\cdot)=f_{\varepsilon_{2}\mid\bw}(\cdot - a -
\bx_2^{\top}\btheta_0)$ is continuous (Assumption \ref{assu:extra continuity}),
\begin{align*}
\int_{\R}g_{m}(y_{1}^{\ast})\dif y_{1}^{\ast} & =\tau_{m}^{-1}\int_{[-t-\tau_{m},-t]}f_{Y_{2}^{\ast}\mid\bw}(y_{2}^{\ast})\dif y_{2}^{\ast}\\
  & \to f_{Y_{2}^{\ast}\mid\bw}(-t)=\int_{\R}f_{\bY^{\ast}\mid\bw}\left(y_{1}^{\ast},-t\right)\dif y_{1}^{\ast}=\int_{\R}g(y_{1}^{\ast})\dif y_{1}^{\ast}<\infty.
\end{align*}
The GLDCT (Theorem \ref{thm:GLDCT}) now shows that $f$ integrable and $\int
f_{m}\dif \lambda\to\int f\dif \lambda$, the latter convergence
meaning that
\begin{align*}
\frac{M_{c}(t+\tau_{m},\bw)-M_{c}(t,\bw)}{\tau_{m}} & \to\int_{\R}\bone\{y_{1}^{\ast}\leqslant0\}\bone\{-t>0\}f_{\bY^{\ast}\mid\bw}\left(y_{1}^{\ast},-t\right)\dif y_{1}^{\ast},
\end{align*}
showing that $M_{c}(\cdot,\bw)$ is \emph{right} \emph{differentiable}
at $t$ with \emph{right derivative}
\[
\dot{M}_{c,1+}(t,\bw)=\bone\{t<0\}\int_{-\infty}^{0}f_{\bY^{\ast}\mid\bw}\left(z,-t\right)\dif z.
\]
\textbf{Part c, \emph{left} differentiability:} Express the difference quotient as
\begin{align}
  & \frac{M_{c}(t-\tau_{m},\bw)-M_{c}(t,\bw)}{\left(-\tau_{m}\right)}\nonumber\\
   & =\left(-\tau_{m}\right)^{-1}\E_{\bw}\sbr[1]{\bone\{Y_{1}^{\ast}\leqslant0\}\bone\{Y_{2}^{\ast}>0\}\left(\bone\{Y_{2}^{\ast}\leqslant-t\}-\bone\{Y_{2}^{\ast}\leqslant-t+\tau_{m}\}\right)}\nonumber\\
  & =\int_{\R}\bone\{y_{1}^{\ast}\leqslant0\}\left[\tau_{m}^{-1}\int_{[-t,-t+\tau_{m}]}\bone\{y_{2}^{\ast}>0\}f_{\bY^{\ast}\mid\bw}\left(y_{1}^{\ast},y_{2}^{\ast}\right)\dif y_{2}^{\ast}\right]\dif y_{1}^{\ast} \label{eq: cutting equations 6}\\ 
  & =\int_{\R}\bone\{y_{1}^{\ast}\leqslant0\}\left[\tau_{m}^{-1}\int_{[-t,-t+\tau_{m}]}\bone\{y_{2}^{\ast}\geqslant0\}f_{\bY^{\ast}\mid\bw}\left(y_{1}^{\ast},y_{2}^{\ast}\right)\dif y_{2}^{\ast}\right]\dif y_{1}^{\ast}.\nonumber
\end{align}
where \eqref{eq: cutting equations 6} follows by the same argument as that leading to \eqref{eq: cutting equations 5}, with $(-t-\tau_m,-t)$ replaced by $(-t,-t+\tau_m)$. We then proceed as with the proof of right differentiability, where we now use right continuity of $\mathbf 1\{\cdot \geqslant 0\}$ instead of left continuity of $\mathbf 1\{\cdot > 0\}$ to conclude that
\begin{align*}
\frac{M_{c}(t-\tau_{m},\bw)-M_{c}(t,\bw)}{\left(-\tau_{m}\right)} & \to\int_{\R}\bone\{y_{1}^{\ast}\leqslant0\}\bone\{-t\geqslant0\}f_{\bY^{\ast}\mid\bw}\left(y_{1}^{\ast},-t\right)\dif y_{1}^{\ast}.
\end{align*}
Hence, $M_{c}(\cdot,\bw)$ is \emph{left} \emph{differentiable} at $t$
with \emph{left derivative}
\[
\dot{M}_{c,1-}(t,\bw)=\bone\{t\leqslant0\}\int_{-\infty}^{0}f_{\bY^{\ast}\mid\bw}\left(z,-t\right)\dif z.
\]
\textbf{Harvesting our results:} We see that $M(\cdot,\bw)$ is
semi-differentiable with \emph{right} derivative given by
\begin{align*}
\dot{M}_{1+}(t,\bw) & =\dot{M}_{a,1+}(t,\bw)+\dot{M}_{b,1+}(t,\bw)+\dot{M}_{c,1+}(t,\bw)\\
  & =2\int_{0}^{+\infty}f_{\bY^{\ast}\mid\bw}\left(z+\max\{0,t\},z-\min\{0,t\}\right)\dif z\\
  & \qquad+\bone\{t\geqslant0\}\int_{-\infty}^{0}f_{\bY^{\ast}\mid\bw}\left(t,z\right)\dif z+\bone\{t<0\}\int_{-\infty}^{0}f_{\bY^{\ast}\mid\bw}\left(z,-t\right)\dif z\\
  & =\begin{cases}
2\int_{0}^{+\infty}f_{\bY^{\ast}\mid\bw}\left(z,z-t\right)\dif z+\int_{-\infty}^{0}f_{\bY^{\ast}\mid\bw}\left(z,-t\right)\dif z, & t<0,\\
2\int_{0}^{+\infty}f_{\bY^{\ast}\mid\bw}\left(z,z\right)\dif z+\int_{-\infty}^{0}f_{\bY^{\ast}\mid\bw}\left(0,z\right)\dif z, & t=0,\\
2\int_{0}^{+\infty}f_{\bY^{\ast}\mid\bw}\left(z+t,z\right)\dif z+\int_{-\infty}^{0}f_{\bY^{\ast}\mid\bw}\left(t,z\right)\dif z, & t>0,
\end{cases}
\end{align*}
and \emph{left} derivative given by
\begin{align*}
\dot{M}_{1-}(t,\bw) & =\dot{M}_{a,1-}(t,\bw)+\dot{M}_{b,1-}(t,\bw)+\dot{M}_{c,1-}(t,\bw)\\
  & =2\int_{0}^{+\infty}f_{\bY^{\ast}\mid\bw}\left(z+\max\{0,t\},z-\min\{0,t\}\right)\dif z\\
  & \qquad+\bone\{t>0\}\int_{-\infty}^{0}f_{\bY^{\ast}\mid\bw}\left(t,z\right)\dif z+\bone\{t\leqslant0\}\int_{-\infty}^{0}f_{\bY^{\ast}\mid\bw}\left(z,-t\right)\dif z\\
  & =\begin{cases}
2\int_{0}^{+\infty}f_{\bY^{\ast}\mid\bw}\left(z,z-t\right)\dif z+\int_{-\infty}^{0}f_{\bY^{\ast}\mid\bw}\left(z,-t\right)\dif z, & t<0,\\
2\int_{0}^{+\infty}f_{\bY^{\ast}\mid\bw}\left(z,z\right)\dif z+\int_{-\infty}^{0}f_{\bY^{\ast}\mid\bw}\left(z,0\right)\dif z, & t=0,\\
2\int_{0}^{+\infty}f_{\bY^{\ast}\mid\bw}\left(z+t,z\right)\dif z+\int_{-\infty}^{0}f_{\bY^{\ast}\mid\bw}\left(t,z\right)\dif z, & t>0.
\end{cases}
\end{align*}

Thus, for $t\neq0$ the left and right derivatives
agree, meaning that $M(\cdot,\bw)[=\dot{\ell}_1(\cdot,\bw)]$ is (fully)
differentiable. For $t=0$, $M(\cdot,\bw)$ is generally only
semi-differentiable with left and right derivatives as in
\eqref{eq:SecondOrderLeftDerivativeExpectedTrimmedAbsLoss} and
\eqref{eq:SecondOrderRightDerivativeExpectedTrimmedAbsLoss}, respectively. However, for $\bw$ such that
$\Delta\bx^\top\btheta_0=0$, the (conditional) exchangeability of
$\varepsilon_1$ and $\varepsilon_2$ implies symmetry of
$f_{\bY^{\ast}\mid\bw}(\cdot,\cdot)$ so that the left and right derivatives
agree also at $\Delta\bx^\top\btheta_0=0$. Hence, $M(\cdot,\bw)$ is
differentiable at $\Delta\bx^\top\btheta_0$.
\end{proof}

\subsection{Proof of Theorem \ref{thm:AsymptoticNormality-TLAD}}
\begin{proof}[\sc{Proof of Theorem \ref{thm:AsymptoticNormality-TLAD}}] 
As in \citet{honore_trimmed_1992}, we set up for an application of
\citet[Theorem 3.3]{pakes1989simulation}. Following the proof of \citet[Theorem
2(iii)]{honore_trimmed_1992}, we verify all conditions of \citet[Theorem
3.3]{pakes1989simulation} except for their condition (ii). Our Theorem
\ref{thm:LossHessianExistence-TLAD} shows that the Hessian
$\bfH_0^{\tt{tlad}}=\nabla^2L(\btheta_0)$ of the population (trimmed absolute)
loss $L$ at $\btheta_0$ exists. Since this matrix is assumed to be of full rank,
the desired condition (ii) follows with the gradient mapping $\nabla
L:\R^K\to\R^K$ playing the role of $G$ in \citet[Theorem
3.3]{pakes1989simulation}.
\end{proof}

\subsection{Proof of Theorem \ref{thm: bootstrap consistency lad}}
\begin{proof}[\sc{Proof of Theorem \ref{thm: bootstrap consistency lad}}] 
We apply \citet[Remark~3.3]{AG92} to establish bootstrap consistency for the
TLAD estimator. In our notation, this remark delivers conditional weak
convergence (in the bounded-Lipschitz metric, in probability) of the bootstrap
law of the shifted and scaled
$\sqrt{n}(\tbtheta^{\tt{tlad}}-\widehat{\btheta}^{\tt{tlad}})$ to the
same Gaussian limit as in Theorem \ref{thm:AsymptoticNormality-TLAD}, provided
that conditions (A.1)$'$, (A.2)$'$, (A.4)$'$, (A.5), and (A.6) of \citet{AG92}
hold, and provided that their stochastic equicontinuity condition
\begin{equation}\label{eq: AG92 stochastic equicontinuity condition}
  \lim_{\overline{\delta}\to0_{+}}\limsup_{n\to\infty}
  \E\sbr[2]{\sup_{\|\bdelta\|_2\leqslant\overline{\delta}}|\bbG_n(r(\cdot,\bdelta))|}=0,
\end{equation}
holds for the empirical process $\bbG_n=\sqrt n(P_n-P)$ and certain remainder
functions $r(\cdot,\bdelta)$ which we define in \eqref{eq: remainder function
for TLAD bootstrap consistency} below.

\citet[Example~4.10]{AG92} gives a convenient route to verify (A.5) and (A.6)
via auxiliary conditions (B.1)--(B.4). However, their condition (B.2) is too
restrictive in our setting. In particular, it fails when the criterion depends
on regressors through a nontrivial linear index in dimension $K>1$.\footnote{The
problem with the strength of condition (B.2) in \cite{AG92} was overlooked in
\cite{H95}, who used \citet{AG92} results to establish bootstrap validity for
the quantile regression estimator. However, the conclusions in \cite{H95} are
valid as condition (B.2) can be bypassed using an argument parallel to the one
we use below.} We therefore verify (A.5) and (A.6) directly, using the TLAD
structure and a Glivenko--Cantelli argument, while checking (A.1)$'$, (A.2)$'$,
(A.4)$'$, and \eqref{eq: AG92 stochastic equicontinuity condition} by empirical
process methods. Once these conditions are established, \citet[Remark~3.3]{AG92}
yields the desired bootstrap convergence. The consistency of
$\widehat{\bSigma}^{\tt{tlad}}$ then follows by applying the quantile mapping
implied by the Gaussian limit to the one- and two-dimensional linear
combinations used in the construction of $\widehat{\bSigma}^{\tt{tlad}}$.

Setting up for \citet[Remark 3.3]{AG92}, let $\bZ:=(Y_1,\bX_1,Y_2,\bX_2)$,
$\mathcal Z:=[0,\infty)\times \R^K\times[0,\infty)\times\R^K$, and for any
$\bz=(y_1,\bx_1, y_2,\bx_2)\in\mathcal Z$, let $\Delta\bx := \bx_1 - \bx_2$ and
$\by := (y_1,y_2)$. Also, let $\bZ_i:=(Y_{i1},\bX_{i1},Y_{i2},\bX_{i2})$ and
$\tbZ_i:=(\widetilde{Y}_{i1},\tbX_{i1},\widetilde{Y}_{i2},\tbX_{i2})$ for all
$i\in[n]$. We first verify conditions (A.1)$'$, (A.2)$'$, (A.4)$'$ and
\eqref{eq: AG92 stochastic equicontinuity condition} from \cite{AG92}, in turn,
with their $S$, $\Theta$, $x$, $\theta\mapsto g(x,\theta)$, and $G$ replaced by
our $\mathcal Z$, $\R^K-\{\btheta_0\}$, $\bz$,
$\bdelta\mapsto-\widetilde{m}^{\texttt{tlad}}(\Delta\bx^\top(\btheta_0+\bdelta),\by)$,
and
$\bdelta\mapsto\E[-\widetilde{m}^{\texttt{tlad}}(\Delta\bX^\top(\btheta_0+\bdelta),\bY)]$,
respectively, with $\widetilde m^{\texttt{tlad}}$ defined in \eqref{eq: mtlad
alternative expressions} and $\Delta\bX:=\bX_1 - \bX_2$. Note that their
$\theta$ plays the role of a deviation $\btheta - \btheta_0$ from our
$\btheta_0$. We denote such deviations by $\bdelta$.

\medskip\noindent\textbf{Condition (A.1)}$\boldsymbol{'}$: \citet[Lemma
B.2]{honore_trimmed_1992} (for his $S_n^\circ$) shows that $\btheta_0$ uniquely
minimizes $\btheta\mapsto\E[\widetilde
m^{\texttt{tlad}}(\Delta\bX^\top\btheta,\bY)]$, which means that
$\bdelta\mapsto\E$$[-\widetilde
m^{\texttt{tlad}}(\Delta\bX^\top(\btheta_0+\bdelta),\bY)]$ is uniquely maximized
at $\bdelta=\mathbf{0}$. In particular, (A.1)$'$ holds. 

\medskip\noindent\textbf{Condition (A.2)}$\boldsymbol{'}$ holds by optimality of
$\btheta_0$, second-order differentiability of the convex TLAD population loss
$L$ at $\btheta_0$ (Theorem \ref{thm:LossHessianExistence-TLAD}), and the
assumption that the Hessian $\bfH_0^{\tt{tlad}}=\nabla^2 L(\btheta_0)$ is
nonsingular. Since $L$ is convex, $\nabla^2 L(\btheta_0)$ is positive
semidefinite, and nonsingularity therefore implies that $\bfH_0^{\tt{tlad}}$ is
positive definite.

\medskip\noindent\textbf{Condition (A.4)}$\boldsymbol{'}$: To verify (A.4)$'$,
define the function $\Delta:\cZ\to\R^K$ by
\begin{align*}
\Delta(\bz)& := \bone\{y_{1}>0\}\bone\{y_{2}>0\}(\bone\{y_1-y_2<\Delta\bx^\top\btheta_0\} - \bone\{y_1-y_2>\Delta\bx^\top\btheta_0\})\Delta\bx\\
 & \quad-\bone\{y_{1}>0\}\bone\{y_{2}=0\}\bone\{y_1>\Delta\bx^\top\btheta_0\}\Delta\bx\\
 & \quad+\bone\{y_{1}=0\}\bone\{y_{2}>0\}\bone\{y_{2}>-\Delta\bx^\top\btheta_0\}\Delta\bx
\end{align*}
and define the function $r:\cZ\times\R^K\to\R$ by
\begin{equation}\label{eq: remainder function for TLAD bootstrap consistency}
  r(\bz,\bdelta):= 
  \begin{cases}
    \left[\widetilde m^{\texttt{tlad}}(\Delta\bx^\top(\btheta_0+\bdelta),\by) - \widetilde m^{\texttt{tlad}}(\Delta\bx^\top\btheta_0,\by) - \bdelta^\top\Delta(\bz)\right]/\|\bdelta\|_2, &\bdelta\neq\mathbf{0}, \\
    0, &\bdelta=\mathbf{0}.
  \end{cases}
\end{equation}
We will show that the function classes
$\cR_{\odelta}:=\{r(\cdot,\bdelta)|\bdelta\in\R^K,\|\bdelta\|_2\leqslant\odelta\},\odelta\in(0,\infty)$,
admit square integrable envelopes
$R_{\odelta}:\cZ\to[0,\infty),\odelta\in(0,\infty)$, satisfying
$\lim_{\odelta\to0_+}\E[R_{\odelta}(\bZ)^2]=0$.

To this end, denote $\Delta y := y_1 - y_2$, and split $r(\cdot,\bdelta)$ into
three parts as
\begin{align*}
r(\bz,\bdelta) & = r_{11}(\bz,\bdelta) + r_{10}(\bz,\bdelta) + r_{01}(\bz,\bdelta),
\end{align*}
according to the three cases $\{y_1>0,y_2>0\}$, $\{y_1>0,y_2=0\}$, and
$\{y_1=0,y_2>0\}$ in the definition of both $\widetilde m^{\texttt{tlad}}$ and
$\Delta$. Abbreviate $e_{12}:=\Delta y - \Delta\bx^\top\btheta_0$, $e_1:= y_1 -
\Delta\bx^\top\btheta_0$, and $e_2:= y_2 + \Delta\bx^\top\btheta_0$, which are
all linear combinations of $y_1,y_2$, and $\Delta\bx$. Then inserting the
definitions of $\widetilde m^{\texttt{tlad}}$ and $\Delta(\cdot)$, we can
express the three parts as, respectively,
\begin{align*}
  r_{11}(\bz,\bdelta) & = 
  \bone\{y_{1}>0,y_{2}>0\}\left[|e_{12} - \Delta\bx^\top\bdelta|-|e_{12}|+\left(\bone\{e_{12}>0\} - \bone\{e_{12}<0\}\right)\Delta\bx^\top\bdelta\right]/\|\bdelta\|_2,\\
  r_{10}(\bz,\bdelta) & = 
  \bone\{y_{1}>0,y_{2}=0\} \left[\max\{0,e_1 - \Delta\bx^\top\bdelta\}-\max\{0,e_1\}+\bone\{e_1>0\}\Delta\bx^\top\bdelta\right]/\|\bdelta\|_2,\\
  r_{01}(\bz,\bdelta) & = 
  \bone\{y_{1}=0,y_{2}>0\}\left[\max\{0,e_2 + \Delta\bx^\top\bdelta\}-\max\{0,e_2\}-\bone\{e_2>0\}\Delta\bx^\top\bdelta\right]/\|\bdelta\|_2.
\end{align*}
if $\bdelta\neq\mathbf{0}$, and each part is understood as identically zero if
$\bdelta=\mathbf{0}$. We upper bound $r_{11}(\cdot,\bdelta)$,
$r_{10}(\cdot,\bdelta)$, and $r_{01}(\cdot,\bdelta)$ in turn. Splitting into the
two cases $|u|>|t|$ and $|u|\leqslant|t|$, one can derive the elementary
inequalities
\begin{align*}
  &\left||u-t| - |u| + (\bone\{u>0\} - \bone\{u<0\})t\right| 
  \leqslant 2|t|\bone\{|u|\leqslant|t|\},\quad u,t\in\R,\\
 & \left|\max\{0,u-t\}-\max\{0,u\}+\bone\{u>0\}t\right|
  \leqslant 2|t|\bone\{|u|\leqslant|t|\},\quad u,t\in\R.
\end{align*}
The \emph{first} elementary inequality followed by the Cauchy-Schwarz inequality
yields
\[
|r_{11}(\bz,\bdelta)|\leqslant 2 \|\Delta\bx\|_2\bone\{y_1>0,y_2>0\}\bone\{|e_{12}|\leqslant \|\Delta\bx\|_2\|\bdelta\|_2\},
\]
while the \emph{second} elementary inequality followed by the Cauchy-Schwarz
inequality yields
\begin{align*}
|r_{10}(\bz,\bdelta)| & \leqslant 2\|\Delta\bx\|_2\bone\{y_1>0,y_2=0\}\bone\{|e_1|\leqslant \|\Delta\bx\|_2\|\bdelta\|_2\}\quad\text{and}\\
|r_{01}(\bz,\bdelta)| & \leqslant 2\|\Delta\bx\|_2\bone\{y_1=0,y_2>0\}\bone\{|e_2|\leqslant \|\Delta\bx\|_2\|\bdelta\|_2\}.
\end{align*}
Combining these three bounds, we see that $\cR_{\odelta}$ admits the envelope
\begin{equation}\label{eq: R odelta envelope}
\left.
\begin{aligned}
R_{\odelta}(\bz) & := 2\|\Delta\bx\|_2\bone\{y_1>0,y_2>0\}\bone\{|\Delta y - \Delta\bx^\top\btheta_0|\leqslant \|\Delta\bx\|_2\odelta\}\\
 & \quad + 2\|\Delta\bx\|_2\bone\{y_1>0,y_2=0\}\bone\{|y_1 - \Delta\bx^\top\btheta_0|\leqslant \|\Delta\bx\|_2\odelta\}\\
 & \quad + 2\|\Delta\bx\|_2\bone\{y_1=0,y_2>0\}\bone\{|y_2 + \Delta\bx^\top\btheta_0|\leqslant \|\Delta\bx\|_2\odelta\}.
\end{aligned}
\right\}
\end{equation}
The functions $\{R_{\odelta}(\cdot)^2\}_{\odelta\in(0,\infty)}$ are dominated by
$\bz\mapsto 4\|\Delta\bx\|_2^2$, which is integrable by Assumption
\ref{assu:MomentConditions-TLAD}. Consider the expectation of the square of the
first contribution to $R_{\odelta}(\bz)$ in \eqref{eq: R odelta envelope}.
Iterating expectations, we can express this expectation as
\begin{align*}
& \E\left[4\|\Delta\bX\|_2^2\bone\{Y_1>0,Y_2>0\}\bone\{|\Delta Y - \Delta\bX^\top\btheta_0|\leqslant \|\Delta\bX\|_2\odelta\}\right]\\
&= \E\left[4\|\Delta\bX\|_2^2\bbP\del[1]{Y_1>0,Y_2>0,|\Delta Y - \Delta\bX^\top\btheta_0|\leqslant \|\Delta\bX\|_2\odelta\mid\bW}\right].
\end{align*}
Using Assumption \ref{assu:Regularity-TLAD}, the inner probability is almost
surely bounded by
\begin{align*}
\bbP\del[1]{|\Delta\varepsilon|\leqslant \|\Delta\bX\|_2\odelta\mid\bW}=\int_{-\|\Delta\bX\|_2\odelta}^{\|\Delta\bX\|_2\odelta}f_{\Delta\varepsilon\mid\bW}(t)\dif t
\leqslant 2C\|\Delta\bX\|_2\odelta,
\end{align*}
which goes to zero as $\odelta\to0_+$. Similar reasoning applies to the squares
of the second and third contributions to $R_{\odelta}(\bz)$ in \eqref{eq: R
odelta envelope}. It therefore follows from the Lebesgue Dominated Convergence
Theorem that $\E[R_{\odelta}(\bZ)^2]\to0$ as $\odelta\to0_+$. Hence, (A.4)$'$
holds.

\medskip\noindent\textbf{Condition \eqref{eq: AG92 stochastic equicontinuity
condition}}: To verify \eqref{eq: AG92 stochastic equicontinuity condition}, we
will show that the function class
$\cR:=\cR_{\infty}=\{r(\cdot,\bdelta)|\bdelta\in\R^K\}$ is VC-subgraph. The
desired \eqref{eq: AG92 stochastic equicontinuity condition} will then follow
from a (local) maximal inequality for VC-subgraph classes and the
$L^2(P)$-continuity provided by Condition (A.4)'. To argue that $\cR$ is
VC-subgraph, define the function class
\begin{align*}
  \cG&:=\left\{g:\cZ\times\R\to\R\middle|g=g(\cdot,\cdot;\gamma,\gamma_1,\gamma_2,\bdelta),(\gamma,\gamma_1,\gamma_2,\bdelta)\in\R^{3+K}\right\},\\
  g(\bz,s;\gamma,\gamma_1,\gamma_2,\bdelta)&:=\gamma s+\gamma_1y_1+\gamma_2y_2+\Delta\bx^\top \bdelta.
\end{align*}
Then $\cG$ forms a vector space of real-valued measurable functions of dimension
$3+K$. \citet[Lemma 2.6.15]{vdVW1996weak} shows that $\cG$ is VC-subgraph (of VC
index at most $5+K$), where the \emph{subgraph} of any function $f:\cZ\to\R$ is
defined as the area below its graph:
\[
\mathrm{subgraph}(f):=\left\{(\bz,s)\in\cZ\times\R\middle|s<f(\bz)\right\}.
\]
It then follows from \citet[Lemma 2.4]{pakes1989simulation} that the sets of the
form $\{g\geqslant r\}$ or $\{g>r\}$ with $g\in\cG$ and $r\in\R$ form a VC
class. Call it $\cC$. 

For any $\bdelta\in\R^K$, the subgraph of $r(\cdot,\bdelta)$ can be
expressed as a
\begin{align*}
 \text{subgraph}(r(\cdot,\bdelta))
 &=\left(\{y_1>0,y_2>0\}\cap\text{subgraph}(r_{11}(\cdot,\bdelta))\right)\\
 &\quad\cup\left(\{y_1>0,y_2=0\}\cap\text{subgraph}(r_{10}(\cdot,\bdelta))\right)\\
 &\quad\cup\left(\{y_1=0,y_2>0\}\cap\text{subgraph}(r_{01}(\cdot,\bdelta))\right)\\
 &\quad\cup\left(\{y_1=0,y_2=0\}\cap\{s\geqslant0\}^c\right).
\end{align*}
For each case, the corresponding subgraph can be expressed as a union of (as
most) four intersections of (at most three) sets in $\cC$. For example, for the
case $\{y_1>0,y_2>0\}$, when $\bdelta\neq\bzero$, the subgraph of
$r_{11}(\cdot,\bdelta)$ is given by the solutions $(\bz,s)$ to the inequality
\[
  |e_{12} - \Delta\bx^\top\bdelta|-|e_{12}|+\left(\bone\{e_{12}>0\} - \bone\{e_{12}<0\}\right)\Delta\bx^\top\bdelta - \|\bdelta\|_2 s > 0.
\]
For given signs of $e_{12}=\Delta y - \Delta\bx^\top\btheta_0$ and $e_{12} -
\Delta\bx^\top\bdelta$, the above inequality is (piecewise) linear in
$s,y_1,y_2$, and $\Delta\bx$, and thus of the form $\{g_1>0\}$ with $g_1\in\cG$.
The signs (nonnegative or negative) of $e_{12}$ and $e_{12} -
\Delta\bx^\top\bdelta$ can themselves be expressed in the form
$\{g_2\geqslant0\}$ (or $\{g_2>0\}$) and $\{g_3\geqslant0\}$ (or $\{g_3>0\}$)
with $g_2,g_3\in\cG$, which means that the subgraph of $r_{11}(\cdot,\bdelta)$
can be written in the form $C_1\cap C_2 \cap C_3$ with $C_1,C_2,C_3\in\cC$. The
other cases are similar. It follows from the VC permanence properties in
\citet[Lemma 2.5]{pakes1989simulation} that the subgraphs of $\cR$ form a VC
class, meaning that $\cR$ is VC-subgraph. 

The $\{\cR_{\odelta}\}_{\odelta\in(0,\infty)}$ inherit the VC-subgraph property
from $\cR$.  It therefore follows from \citet[Theorem 2.14.1]{vdVW1996weak}
(with their $p=1$) that for any $n\in\N$ and any $\odelta\in(0,\infty)$, we have
$
  \E[\|\bbG_n\|_{\cR_{\odelta}}]\leqslant A J(1,\cR_{\odelta})\sqrt{\E[R_{\odelta}(\bZ)^2]},
$
where $A\in(0,\infty)$ is a universal constant, and the uniform covering
integral $J(1,\cR_{\odelta})$ \citep[see][p.~239]{vdVW1996weak} of
$\cR_{\odelta}$ is relative to the envelope $R_{\odelta}(\cdot)$ in \eqref{eq: R
odelta envelope}. As the $\{\cR_{\odelta}\}_{\odelta\in(0,\infty)}$ also inherit
a common VC index $V(\cR)$ from $\cR$, the VC-subgraph property implies
$\sup_{\odelta\in(0,\infty)}J(1,\cR_{\odelta})<\infty$. As we have already
argued $\E[R_{\odelta}(\bZ)^2]\to0$ as $\odelta\to0_+$, \eqref{eq: AG92
stochastic equicontinuity condition} follows from the previous display.

Next, we verify conditions (A.5) and (A.6).

\medskip\noindent\textbf{Condition (A.5)} amounts to
$\widehat{\btheta}^{\texttt{tlad}}\to_{\mathrm{a.s.}}\btheta_0$. \citet[Theorem
1(iii)]{honore_trimmed_1992} establishes strong consistency of TLAD under
conditions implied by the assumptions of our extended asymptotic normality
result (Theorem \ref{thm:AsymptoticNormality-TLAD}), so (A.5) holds.

\medskip\noindent\textbf{Condition (A.6)}: Let $b(\bdelta):=\|\bdelta\|_2$ and
define
\[
q(\bz,\bdelta)
:=
\widetilde m^{\texttt{tlad}}
(\Delta\bx^\top(\btheta_0+\bdelta),\by)
-\widetilde m^{\texttt{tlad}}
(\Delta\bx^\top\btheta_0,\by),
\quad
(\bz,\bdelta)\in\cZ\times\R^K,
\]
so that $\bzero$ uniquely minimizes $\bdelta\mapsto\E[q(\bZ,\bdelta)]$, the
minimum being zero. Condition (A.6) will follow once this population objective
is uniformly separated from its minimum away from $\bzero$ (Step 1) and the
bootstrap criterion uniformly approximates its population counterpart (Step 2). 

\textbf{Step 1:} Fix $\epsilon\in(0,\infty)$ and denote
$
C_\epsilon:=\{\bdelta\in\R^K:\|\bdelta\|_2\geqslant\epsilon\}.
$
We first show that
\begin{equation}\label{eq:key_identifiability_A6}
a_\epsilon
:=
\inf_{\bdelta\in C_\epsilon}
\E[q(\bZ,\bdelta)]/b(\bdelta)
>0,
\end{equation}
which amounts to well-separatedness of $\btheta_0$. In what follows, we first
decompose $\E[q(\bZ,\bdelta)]$, then extract its dominant component as
deviations from $\btheta_0$ grow in magnitude, and finally argue
well-separatedness using our rank condition.

Using the representation \eqref{eq: mtlad alternative expressions} of the
shifted TLAD loss, $\E[q(\bZ,\bdelta)]$ decomposes into three contributions:
\begin{align*}
\E[q(\bZ,\bdelta)]
&=
\E[
\bone\{Y_1>0,Y_2>0\}
(
|\Delta\varepsilon-\Delta\bX^\top\bdelta|
-|\Delta\varepsilon|
)
]
\\
&\quad+
\E[
\bone\{Y_1>0,Y_2=0\}
(
\max\{0,E_1-\Delta\bX^\top\bdelta\}-\max\{0,E_1\}
)
]
\\
&\quad+
\E[
\bone\{Y_1=0,Y_2>0\}
(
\max\{0,E_2+\Delta\bX^\top\bdelta\}-\max\{0,E_2\}
)
],
\end{align*}
where $E_1:=Y_1-\Delta\bX^\top\btheta_0$ and $E_2:=Y_2+\Delta\bX^\top\btheta_0$.
Let $\{\bdelta_\ell\}_{\ell=1}^\infty\subset\R^K\backslash\{\bzero\}$ be any
sequence satisfying $\|\bdelta_\ell\|_2\to\infty$. Then, as $\ell\to\infty$,
$
|\Delta\varepsilon-\Delta\bX^\top\bdelta_\ell|
=
|\Delta\bX^\top\bdelta_\ell|+o(\|\bdelta_\ell\|_2)\;\text{a.s.}
$
and therefore
$
|\Delta\varepsilon-\Delta\bX^\top\bdelta_\ell|
-|\Delta\varepsilon|
-|\Delta\bX^\top\bdelta_\ell|
\to o(\|\bdelta_\ell\|_2)
\;\text{a.s.}
$
Moreover, the elementary inequality
\(
\envert[1]{
|u-t|-|u|-|t|
}
\leqslant 2|u|
\)
and the Cauchy-Schwarz inequality together yield
\[
\left|
\frac{
|\Delta\varepsilon-\Delta\bX^\top\bdelta_\ell|
-|\Delta\varepsilon|
-|\Delta\bX^\top\bdelta_\ell|
}{\|\bdelta_\ell\|_2}
\right|
\leqslant
2\|\Delta\bX\|_2.
\]
Since $\E[\|\Delta\bX\|_2]<\infty$
(Assumption~\ref{assu:MomentConditions-TLAD}), by Lebesgue Dominated
Convergence Theorem,
\begin{align}\label{eq:A6_interior_expansion}
&\E[
\bone\{Y_1>0,Y_2>0\}
(
|\Delta\varepsilon-\Delta\bX^\top\bdelta_\ell|
-|\Delta\varepsilon|
)]\nonumber\\
&\qquad=
\E[\bone\{Y_1>0,Y_2>0\}
|\Delta\bX^\top\bdelta_\ell|]
+o(\|\bdelta_\ell\|_2)
\end{align}
as $\ell\to\infty$. Similar reasoning applied to the latter two parts of
$\E[q(\bZ,\bdelta)]$ gives both
\begin{align}
&\E[\bone\{Y_1>0,Y_2=0\}
(\max\{0,E_1-\Delta\bX^\top\bdelta_\ell\}-\max\{0,E_1\})]
\nonumber\\
&\qquad=
\E[\bone\{Y_1>0,Y_2=0\}
\max\{0,-\Delta\bX^\top\bdelta_\ell\}]
+o(\|\bdelta_\ell\|_2)\quad\text{and}
\label{eq:A6_boundary1}
\\
&\E[\bone\{Y_1=0,Y_2>0\}
(\max\{0,E_2+\Delta\bX^\top\bdelta_\ell\}-\max\{0,E_2\})]
\nonumber\\
&\qquad=
\E[\bone\{Y_1=0,Y_2>0\}
\max\{0,\Delta\bX^\top\bdelta_\ell\}]
+o(\|\bdelta_\ell\|_2)
\label{eq:A6_boundary2}
\end{align}
as $\ell\to\infty$. Importantly, these dominated convergence arguments hold
along any diverging sequence $\|\bdelta_\ell\|\to\infty$. Consequently, the
expansion is uniform in the sense that
\begin{align*}
&\lim_{M\to\infty}\sup_{\|\bdelta\|_2>M}
\bigg|\frac{\E[q(\bZ,\bdelta)]}{b(\bdelta)}
-\bigg(\frac{\E[\bone\{Y_1>0,Y_2>0\}
|\Delta\bX^\top\bdelta|]}{\|\bdelta\|_2}\\
&+\frac{\E[\bone\{Y_1>0,Y_2=0\}
\max\{0,-\Delta\bX^\top\bdelta\}]}{\|\bdelta\|_2}+\frac{\E[\bone\{Y_1=0,Y_2>0\}
\max\{0,\Delta\bX^\top\bdelta\}]}{\|\bdelta\|_2}
\bigg)
\bigg|=0
\end{align*}
(If not, there would be a constant $\eta>0$ and a sequence $\{\bdelta_\ell\}$ with
$\|\bdelta_\ell\|_2\to\infty$ such that the remainder is at least $\eta$ for all
$\ell$, contradicting the above expansions.)

The rank condition (Assumption~\ref{assu:RankRegressors-TLAD}) implies
$
\inf_{\|\bdelta\|_2=1}
\E[\bone\{Y_1>0,Y_2>0\}
|\Delta\bX^\top\bdelta|]
>0.
$
Since the dominant components in \eqref{eq:A6_boundary1} and
\eqref{eq:A6_boundary2} are both nonnegative, it follows from the above
decomposition that there is a constant $M\in(0,\infty)$ such that
$
\inf_{\|\bdelta\|_2>M}
\E[q(\bZ,\bdelta)]/b(\bdelta)>0.
$
Noting that $t\mapsto m^{\texttt{tlad}}(t,\by)$ is 1-Lipschitz uniformly in
$\by\in[0,\infty)\times[0,\infty)$, from the Jensen and Cauchy-Schwarz
inequalities and Assumption \ref{assu:MomentConditions-TLAD}, we deduce that
$\bdelta\mapsto\E[q(\bZ,\bdelta)]$ is Lipschitz continuous with Lipschitz
constant $\E[\|\Delta\bX\|_2]<\infty$. In particular,
$\bdelta\mapsto\E[q(\bZ,\bdelta)]$ is continuous. Since $\bdelta=\bzero$ is the
unique minimizer and $\E[q(\bZ,\bzero)]=0$, $\bdelta\mapsto\E[q(\bZ,\bdelta)]$
must be bounded away from zero also on the compact set
$\{\epsilon\leqslant\|\bdelta\|_2\leqslant M\}$. Combining the two regions
yields \eqref{eq:key_identifiability_A6}.

\textbf{Step 2:} Next, define the function class
$\cF:=\{f:\cZ\to\R|f=f(\cdot,\bdelta),\bdelta\in\R^K\}$ by
\[
 f(\bz,\bdelta):=
 \begin{cases}
  q(\bz,\bdelta)/b(\bdelta), & \bdelta \neq \bzero, \\
  0, & \bdelta = \bzero.
 \end{cases}
\] 
Arguments analogous to those used in verifying stochastic equicontinuity show
that the class $\{f(\cdot,\bdelta):\bdelta\in\R^K\}$ is VC-subgraph with
square-integrable envelope $\bz\mapsto\|\Delta\bx\|_2$. Hence it is $P$-Donsker.
It follows from \citet[Theorem~2.1(i)]{AG92} and the triangle inequality that
$$
\bbP\left( \sup_{\bdelta\in C_{\epsilon}}\envert[3]{\frac{1}{n}\sum_{i=1}^n \frac{q(\tbZ_i,\bdelta)}{b(\bdelta)} - \E\left[\frac{q(\bZ,\bdelta)}{b(\bdelta)}\right]} > \frac{a_{\epsilon}}{4} \middle| \{\bZ_i\}_{i=1}^n \right) \overset{\text{a.s.}}{\to}0.
$$
Combining this convergence result with \eqref{eq:key_identifiability_A6}, we
obtain
$
\bbP(\inf_{\bdelta\in C_{\epsilon}} n^{-1}\sum_{i=1}^n q(\tbZ_i,\bdelta)/b(\bdelta) > a_{\epsilon}/2 | \{\bZ_i\}_{i=1}^n )\overset{\mathrm{a.s.}}{\to}1,
$
and so
$
\bbP( \inf_{\bdelta\in C_{\epsilon}} n^{-1}\sum_{i=1}^n q(\tbZ_i,\bdelta) > \epsilon a_{\epsilon}/2 | \{\bZ_i\}_{i=1}^n)\overset{\mathrm{a.s.}}{\to}1.
$
On the other hand, we have
$
\bbP( n^{-1}\sum_{i=1}^n q(\tbZ_i,\bzero) \leqslant \E[q(\bZ,\bzero)] + \epsilon a_{\epsilon}/2 | \{\bZ_i\}_{i=1}^n )\overset{\mathrm{a.s.}}{\to}1.
$
Combining these convergence results and using the fact that $\E[q(\bZ,\bzero)] = 0$ shows that
$
\bbP(\|\tbtheta^{\texttt{tlad}} - \btheta_0\|_2 < \epsilon | \{\bZ_i\}_{i=1}^n)
\overset{\mathrm{a.s.}}{\to}1.
$
Since $\epsilon\in(0,\infty)$ was arbitrary, the previous display produces (A.6)
via the triangle inequality and strong consistency of
$\widehat\btheta^{\texttt{tlad}}$ from (A.5).

\medskip\noindent\textbf{Arcones and Gin{\'e} Remark 3.3 application:} As explained above,
\citet[Remark~3.3]{AG92} implies that the conditional law of $\sqrt
n(\tbtheta^{\texttt{tlad}}-\widehat\btheta^{\tt{tlad}})$ given the
original sample converges in probability in the BL metric to
$\cN(\bzero,\bSigma_0^{\tt{tlad}})$. Fix $\bpsi\in\R^K\backslash\{\bzero\}$ and
define $g:\R^K\to\R$ by $g(\bh)=|\bpsi^\top \bh|$. Since $g$ is Lipschitz, the
continuous mapping property for BL convergence
\citep[e.g.,][Proposition~10.7(i)]{kosorok2008introduction} yields BL
convergence of the conditional law of $g(\sqrt
n(\tbtheta^{\texttt{tlad}}-\widehat\btheta^{\tt{tlad}}))$ to that of
$|\cN(0,\bpsi^\top\bSigma_0^{\tt{tlad}}\bpsi)|$. Because $|\cN(0,\sigma^2)|$ has
a continuous and strictly increasing distribution function on $(0,\infty)$ for
$\sigma^2>0$, its quantiles are unique, and thus the corresponding conditional
quantiles converge in probability. Applying this with $\bpsi$ equal to
coordinate vectors and pairwise sums of coordinate vectors yields the
consistency of $\widehat\bSigma^{\tt{tlad}}$.
\end{proof}



\section{Tools}\label{sec:Tools}

\subsection{A Generalized Dominated Convergence Theorem}
We state and prove a generalized version of the Lebesgue Dominated
Convergence Theorem (LDCT), closely related to
\citet[Theorem 4.11]{royden2023real}, which is stated without proof.
We include a proof for completeness.

Let $(X,\mathcal A,\mu)$ be a measure space. All statements hold
$\mu$-almost everywhere (a.e.) unless stated otherwise.

\begin{thm}
[\textbf{Generalized Lebesgue Dominated Convergence Theorem}]
\label{thm:GLDCT}
Suppose $\{f_n\}_{n\in\N}$ are measurable and $\{g_n\}_{n\in\N}$ are integrable,
$f_n\to f$ a.e., $g_n\to g$ a.e.,
$|f_n|\leqslant g_n$ a.e.\ for all $n$, and
$
\int g_n\dif\mu \to \int g\dif\mu <\infty .
$
Then:
\begin{inparaenum}[(i)]
\item\label{enu:fnInt} $\{f_n\}_{n\in\N}$ are integrable,
\item\label{enu:fInt} $f$ is integrable,
\item\label{enu:fnIntConv} $\int f_n\dif\mu \to \int f\dif\mu$.
\end{inparaenum}
\end{thm}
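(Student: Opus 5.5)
The plan is to reduce to Fatou's lemma, mimicking the standard proof of the LDCT, with the fixed dominating function replaced by the convergent sequence $\{g_n\}$. Parts (i) and (ii) are quick. For (i), each $f_n$ is measurable and $|f_n|\leqslant g_n$ a.e. with $g_n$ integrable, so $\int|f_n|\dif\mu\leqslant\int g_n\dif\mu<\infty$. For (ii), $f$ is an a.e. limit of measurable functions; after modifying on a null set (or completing the measure), $f$ is measurable. Passing to the limit in $|f_n|\leqslant g_n$ gives $|f|\leqslant g$ a.e. The hypothesis $\int g\dif\mu<\infty$ is needed here, together with $g\geqslant 0$ a.e., which follows from $g_n\geqslant|f_n|\geqslant0$. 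Hence $f$ is integrable.

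For (iii), I will apply Fatou's lemma to the two nonnegative (a.e.) sequences $g_n+f_n$ and $g_n-f_n$. Both converge a.e., to $g+f$ and $g-f$ respectively, so
\[
\int (g+f)\dif\mu\leqslant\liminf_{n}\int(g_n+f_n)\dif\mu=\int g\dif\mu+\liminf_n\int f_n\dif\mu .
\]
The equality uses $\int g_n\dif\mu\to\int g\dif\mu$ and the finiteness of all integrals involved, which parts (i) and (ii) supply. Likewise,
\[
\int (g-f)\dif\mu\leqslant\int g\dif\mu-\limsup_n\int f_n\dif\mu .
\]
Since $\int g\dif\mu$ is finite, I can subtract it from both inequalities. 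This gives $\limsup_n\int f_n\dif\mu\leqslant\int f\dif\mu\leqslant\liminf_n\int f_n\dif\mu$, which is the claimed convergence.

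The only delicate point is bookkeeping rather than a real obstacle. I must split $\int(g_n\pm f_n)\dif\mu$ as $\int g_n\dif\mu\pm\int f_n\dif\mu$ only after integrability of $f_n$ is known, which is part (i). I must cancel $\int g\dif\mu$ only because it is finite, which is the hypothesis. Finally, the exceptional null sets from the countably many a.e. statements should be collected into a single null set on which all pointwise relations hold.
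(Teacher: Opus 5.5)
Your proof is correct, and parts (i)--(ii) match the paper's. You are in fact more careful than the paper about the measurability of $f$, which the paper takes for granted.

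For (iii) you use a different decomposition. You apply Fatou's lemma twice, to the nonnegative sequences $g_n+f_n$ and $g_n-f_n$, and sandwich $\int f\dif\mu$ between $\limsup_n\int f_n\dif\mu$ and $\liminf_n\int f_n\dif\mu$. The paper instead applies Fatou's lemma once, to $h_n:=g_n+g-|f_n-f|\geqslant0$, which converges a.e. to $2g$. Cancelling the finite quantity $2\int g\dif\mu$ then gives $\limsup_n\int|f_n-f|\dif\mu\leqslant0$.

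The paper's route therefore yields the stronger conclusion that $f_n\to f$ in $L^1(\mu)$. Item (iii) then follows from $\envert{\int f_n\dif\mu-\int f\dif\mu}\leqslant\int|f_n-f|\dif\mu$. Your route gives only convergence of the integrals directly, but it works with the $f_n$ themselves and never needs to form $|f_n-f|$.

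If you want the $L^1$ statement from your argument, apply your result to the sequence $|f_n-f|\to0$. Use the dominating sequence $g_n+g$, whose integrals converge to $2\int g\dif\mu<\infty$; this gives $\int|f_n-f|\dif\mu\to0$.
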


\begin{proof}[\sc{Proof of Theorem \ref{thm:GLDCT}}]
Since $|f_n|\leqslant g_n$ a.e.\ and $|f_n|\geqslant 0$, we have $g_n\geqslant
0$ a.e.\ for all $n$. Taking limits on the a.e.\ set where $f_n\to f$ and
$g_n\to g$ yields $g\geqslant 0$ a.e.

\ref{enu:fnInt}.
Because $|f_n|\leqslant g_n$ a.e.\ and $g_n$ is integrable, we have
$
\int |f_n|\dif\mu \leqslant \int g_n\dif\mu <\infty,
$
so each $f_n$ is integrable.

\ref{enu:fInt}.
Similarly, $|f|\leqslant g$ a.e.\ and $\int g\dif\mu<\infty$, hence
$
\int |f|\dif\mu \leqslant \int g\dif\mu <\infty,
$
so $f$ is integrable.

\ref{enu:fnIntConv}. By the triangle inequality, $|f_n-f|\leqslant
|f_n|+|f|\leqslant g_n+g$ a.e. Define $h_n:=g_n+g-|f_n-f|$, so $h_n\geqslant 0$
a.e.\ and $h_n\to 2g$ a.e.\ Then Fatou's lemma gives
\[
2\int g\dif\mu
=\int \liminf_{n\to\infty} h_n\dif\mu
\leqslant \liminf_{n\to\infty}\int h_n\dif\mu
= \liminf_{n\to\infty}\Big(\int g_n\dif\mu+\int g\dif\mu-\int |f_n-f|\dif\mu\Big).
\]
Using $\int g_n\dif\mu\to\int g\dif\mu$ yields
$
2\int g\dif\mu \leqslant 2\int g\dif\mu - \limsup_{n\to\infty}\int |f_n-f|\dif\mu,
$
and so it follows that $\int |f_n-f|\dif\mu\to 0$. Hence $\int f_n\dif\mu\to\int f\dif\mu$.
\end{proof}

\subsection{Conditional PDF Existence and Measurability Considerations}\label{sec: measurability appendix} Recall
that we denote by $\mathcal W\subseteq\mathbb R^{K}\times\mathbb R^K\times
\mathbb R$ the support of $\bW = (\bX_1,\bX_2,\alpha)$ and set $\mathcal E =
\mathbb R\times\mathbb R$. In addition, denote by $\mathcal B_{\mathcal
W}:=\{B\cap\mathcal W\mid B\in\mathcal B_{2K+1}\}$ the restriction of
$\mathcal B_{2K+1}$ to $\mathcal W$ and set $\mathcal B_{\mathcal E}= \mathcal
B_2$. To handle measure-theoretic issues in this paper, we assume that $\bW \in
\mathcal W$ and $\bepsilon = (\varepsilon_1,\varepsilon_2)\in \mathcal E$ are
defined on a product probability space $(\mathcal W\times\mathcal E,\mathcal
B_{\mathcal W}\otimes\mathcal B_{\mathcal E},P)$ as coordinate projection maps:
$\bW(\bw,\be)=\bw$ and $\bepsilon(\bw,\be)=\be$ for all $(\bw,\be)\in\mathcal
W\times\mathcal E$, where $P:\mathcal B_{\mathcal W}\otimes\mathcal B_{\mathcal
E}\to[0,1]$ is the joint distribution of the pair $(\bW,\bepsilon)$. We also let
$P_{\bW}:\mathcal B_{\mathcal W}\to[0,1]$ denote the marginal distribution of
$\bW$. 

\begin{lem}\label{lem: regular conditional probability} Let Assumptions
\ref{assu:continuity tls} and \ref{assu:Exchangeability-TLS} hold. Then there
exists a measurable function $(\bw,\be)\mapsto f_{\bepsilon\mid \bw}(\be)$,
mapping $\mathcal W\times\mathcal E$ to $[0,\infty)$, such that (i)
$f_{\bepsilon\mid \bw}(e_1,e_2) = f_{\bepsilon\mid \bw}(e_2,e_1)$ for all
$\bw\in\mathcal W$ and $\be = (e_1,e_2)\in\mathcal E$; (ii) for any
$P$-integrable function $(\bw,\be)\mapsto g(\bw,\be)$,
\begin{equation}\label{eq: i am exhausted}
\E[g(\bW,\bepsilon)] = \int_{\mathcal W}\int_{\mathcal E} g(\bw,\be)f_{\bepsilon\mid \bw}(\be)\lambda_2(\dif \be)P_{\bW}(\dif \bw);
\end{equation}
and, (iii) for all $\bw\in\cW$, $\be\mapsto f_{\bepsilon\mid \bw}(\be)$ is a
PDF.
\end{lem}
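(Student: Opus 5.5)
The plan is to build the kernel from a regular conditional distribution and then symmetrize it. Because $\mathcal W\times\mathcal E$ is a Borel subset of a Euclidean space, disintegration (e.g.\ \citet[Theorem 10.2.2]{D04}) gives a probability kernel $\kappa(\bw,\cdot)$ on $\mathcal B_{\mathcal E}$. This kernel is measurable in $\bw$ and satisfies $P(A\times B)=\int_A\kappa(\bw,B)P_{\bW}(\dif\bw)$ for measurable rectangles. Assumption \ref{assu:continuity tls} says $\kappa(\bw,\cdot)\ll\lambda_2$ for $P_{\bW}$-a.e.\ $\bw$.

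\textbf{Step 1: a jointly measurable density.} I obtain a jointly measurable Radon--Nikodym derivative with a martingale/differentiation construction. Let $\cB^{(k)}$ be the finite $\sigma$-fields generated by dyadic squares of side $2^{-k}$ inside $[-k,k]^2$. Define
\[
h_k(\bw,\be):=\sum_{Q}\frac{\kappa(\bw,Q)}{\lambda_2(Q)}\bone\{\be\in Q\}.
\]
Each $h_k$ is jointly measurable. Set $h:=\limsup_k h_k$ on the measurable set where the limit exists and is finite, and $h:=0$ elsewhere. By the Lebesgue differentiation theorem (equivalently, martingale convergence), for every $\bw$ with $\kappa(\bw,\cdot)\ll\lambda_2$, $h(\bw,\cdot)$ is a version of $\dif\kappa(\bw,\cdot)/\dif\lambda_2$. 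Let $\cW_1$ be the exceptional $P_{\bW}$-null set where absolute continuity fails, or where $\int h(\bw,\be)\dif\be\neq1$; this set is measurable by Tonelli. On $\cW_1$, replace $h(\bw,\cdot)$ by a fixed symmetric density such as the standard bivariate normal.

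\textbf{Step 2: symmetrize.} Define
\[
f_{\bepsilon\mid\bw}(e_1,e_2):=\tfrac12\big[h(\bw,(e_1,e_2))+h(\bw,(e_2,e_1))\big].
\]
This is measurable, nonnegative, symmetric for every $\bw$, and integrates to one for every $\bw$, which gives (i) and (iii). To show it remains a version of the conditional density, use Assumption \ref{assu:Exchangeability-TLS}. For every $B\in\mathcal B_{\mathcal E}$ with swapped image $B^\sigma$, one has $\kappa(\bw,B)=\kappa(\bw,B^\sigma)$ for $P_{\bW}$-a.e.\ $\bw$. Since $\mathcal B_{\mathcal E}$ is countably generated, a $\pi$-system argument over a countable generating algebra puts the null set outside all $B$ at once. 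Therefore, off a null set, $h(\bw,\cdot)$ and $h(\bw,\sigma(\cdot))$ are both densities of $\kappa(\bw,\cdot)$, so they coincide $\lambda_2$-a.e., and their average does too. Fold this null set into $\cW_1$ before the replacement in Step 1.

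\textbf{Step 3: the integral identity (ii).} For indicators of rectangles $A\times B$, \eqref{eq: i am exhausted} holds by the disintegration property together with $\kappa(\bw,B)=\int_B f_{\bepsilon\mid\bw}\dif\lambda_2$ off a null set. A Dynkin $\pi$--$\lambda$ argument extends it to all of $\mathcal B_{\mathcal W}\otimes\mathcal B_{\mathcal E}$. Monotone convergence then extends it to nonnegative measurable $g$, and splitting $g=g^+-g^-$ handles $P$-integrable $g$, with Tonelli guaranteeing that the iterated integrals are finite.

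The main obstacle is Step 1: ensuring joint measurability of the density while keeping properties (i) and (iii) for \emph{every} $\bw$, not just almost every $\bw$. The dyadic-ratio construction handles measurability, and redefining on a null set handles the pointwise requirements. Countable generation is the point to watch in Step 2, since exchangeability is only given one set $B$ at a time.
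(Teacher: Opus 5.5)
Your proof is correct. It differs from the paper's at the key step, namely how a jointly measurable density is produced.

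\emph{The paper's route.} The paper never differentiates the kernel.
\begin{enumerate}
\item It shows $P\ll\bar P:=P_{\bW}\otimes\lambda_2$: a $\bar P$-null set $N$ has $\lambda_2$-null $\bw$-sections for a.e.\ $\bw$, so $\kappa(\bw,N_{\bw})=0$ for a.e.\ $\bw$.
\item It takes a Radon--Nikodym derivative $h$ of $P$ with respect to $\bar P$, which is jointly measurable automatically.
\item It uses exchangeability integrated over $A\in\mathcal B_{\mathcal W}$, plus a change of variables, to see that $h(\bw,\tau(\be))$ is another such derivative, and then symmetrizes.
\item Only afterwards does it recover the fiberwise identity $\kappa(\bw,B)=\int_B f_{\bepsilon\mid\bw}\dif\lambda_2$ for all $B$ off a single null set. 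This uses a countable $\pi$-system of rational rectangles.
\item Property (ii) is then taken from \citet[Theorem 10.2.1]{D04}.
\end{enumerate}

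\emph{Your route.} Your dyadic ratios build the fiberwise Radon--Nikodym derivative explicitly, essentially Doob's martingale construction. Joint measurability then holds by construction. Moreover, $h(\bw,\cdot)$ is a density of $\kappa(\bw,\cdot)$ for \emph{every} $\bw$ at which the kernel is absolutely continuous, not merely a.e.\ for each fixed $B$. So the countable $\pi$-system step is needed only to collect the exchangeability exceptional sets under one null set. You then prove (ii) directly by $\pi$--$\lambda$ and monotone convergence.

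\emph{Trade-off.} The paper needs nothing beyond the abstract Radon--Nikodym theorem and Tonelli. Yours imports the dyadic martingale (differentiation) theorem, but it gives a canonical version and pointwise-in-$\bw$ identification for free.

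\emph{One point to make explicit.} You should state why $\{\bw:\kappa(\bw,\cdot)\not\ll\lambda_2\}$ is measurable. The dyadic limit is the density of the absolutely continuous part of \emph{any} finite measure. Hence this set is exactly $\{\bw:\int h(\bw,\be)\dif\be\neq1\}$, and that identification is what your appeal to Tonelli actually needs.
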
 

This lemma implies that the function $(\bw,\be)\mapsto
f_{\bepsilon\mid\bw}(\be)$ is a version of the joint PDF of $\bepsilon =
(\varepsilon_1,\varepsilon_2)$ conditional on $\bW = \bw$ for all
$\bw\in\mathcal W$ in the sense that for any $P$-integrable function
$(\bw,\be)\mapsto g(\bw,\be)$,
\begin{equation}\label{eq: version of conditional expectation}
\bw\mapsto \int_{\mathcal E} g(\bw,\be)f_{\bepsilon | \bw}(\be)\lambda_2(\dif \be)\text{ is a version of }\E[g(\bW,\bepsilon)|\bW].
\end{equation}
Throughout the paper, we deal with conditional expectations of functions indexed
by continuous parameters. This could lead to measurability problems as
conditional expectations are defined only up to probability zero sets and those
sets could be parameter-dependent. To ensure that all objects based on
conditional expectations in the paper are suitably measurable, we therefore fix
a version of each conditional expectation given $\bW=\bw$ as the integral over
$\be\mapsto f_{\bepsilon\mid\bw}(\be)$ against the Lebesgue measure $\lambda_2$,
which is justified by \eqref{eq: version of conditional expectation}.

\begin{proof}[\sc{Proof of Lemma \ref{lem: regular conditional probability}}]
Let $\kappa:\mathcal W\times \mathcal B_{\mathcal E}\to[0,1]$ be a (regular)
conditional distribution of $\bepsilon = (\varepsilon_1,\varepsilon_2)$ given
$\bW$. In other words, $\kappa$ is a function such that
(i) for all $\bw\in\mathcal W$, $\kappa(\bw,\cdot)$ is a probability measure
on $(\mathcal E,\mathcal B_{\mathcal E})$;
(ii) for all $B\in\mathcal B_{\mathcal E}$, $\kappa(\cdot,B)$ is measurable;
(iii) for all $A\in\mathcal B_{\mathcal W}$ and $B\in\mathcal B_{\mathcal E}$,
$P(A\times B)=\int_A \kappa(\bw,B)P_{\bW}(\dif \bw)$.

Existence of the function $\kappa$ follows, for example, from Theorems 10.2.1
and 10.2.2 in \citet{D04}. Assumption \ref{assu:continuity tls} then means that
for $P_{\bW}$-almost all $\bw$, the probability measure $\kappa(\bw,\cdot)$ is
absolutely continuous with respect to the Lebesgue measure $\lambda_2$.
Assumption \ref{assu:Exchangeability-TLS} in turn means that for
$P_{\bW}$-almost all $\bw$, the probability measure $\kappa(\bw,\cdot)$
satisfies $\kappa(\bw,\tau(B)) = \kappa(\bw,B)$ for all $B\in\mathcal
B_{\mathcal E}$, where $\tau:\mathcal E\to\mathcal E$ is the swap function
defined by $\tau(e_1,e_2) = (e_2,e_1)$ for all $(e_1,e_2)\in\mathcal E$.

Define the product measure $\bar P := P_{\bW}\otimes\lambda_2$ on $(\mathcal
W\times\mathcal E,\mathcal B_{\mathcal W}\otimes\mathcal B_{\mathcal E})$, and
let $N\in\mathcal B_{\mathcal W}\otimes \mathcal B_{\mathcal E}$ be any
measurable set in $\mathcal W\times\mathcal E$ satisfying $\bar P(N) = 0$. For
all $\bw\in\mathcal W$, let $N_{\bw} := \{\be\in\mathcal E\mid(\bw,\be)\in N\}$
denote the $\bw$-section of $N$. Then, by Tonelli's theorem,
$
\int_{\mathcal W}\lambda_2(N_{\bw})P_{\bW}(\dif \bw) = \bar P(N) = 0.
$
Thus, for $P_{\bW}$-almost all $\bw$, $\lambda_2(N_{\bw})
= 0$. Hence, for $P_{\bW}$-almost all $\bw$, $\kappa(\bw,N_{\bw}) = 0$ by
Assumption \ref{assu:continuity tls} and the union bound (to handle multiple
exceptional sets), so that
$$
P(N) = \int_{\mathcal W} \int_{N_{\bw}} \kappa(\bw,\dif \be)P_{\bW}(\dif \bw) = \int_{\mathcal W}\kappa(\bw,N_{\bw})P_{\bW}(\dif \bw) = 0,
$$ 
where the first equality follows from Part II of Theorem 10.2.1 in \citet{D04}.
Thus, $P$ is absolutely continuous with respect to $\bar P$, and there exists
a measurable function $(\bw,\be)\mapsto h(\bw,\be)\in[0,\infty)$, i.e.~a
(Radon-Nikodym) density of $P$ with respect to $\bar P$, such that
\begin{align}
\int_A \kappa(\bw,B)P_{\bW}(\dif \bw)
& = P(A\times B) = \int_{A\times B} h(\bw,\be) \bar P(\dif \bw,\dif \be) \nonumber\\
& = \int_{A}\int_{B} h(\bw,\be)\lambda_2(\dif \be)P_{\bW}(\dif \bw) \label{eq: kruchu-verchu}
\end{align}
for all $A\in\mathcal B_{\mathcal W}$ and $B\in\mathcal B_{\mathcal E}$, where
the first equality follows from the definition of $\kappa$, the second from the
Radon-Nikodym theorem, and the third from Tonelli's theorem. Also,
\begin{align*}
\int_A\kappa(\bw,B)P_{\bW}(\dif \bw) 
& = \int_A\kappa(\bw,\tau(B))P_{\bW}(\dif \bw) \\
& = \int_{A}\int_{\tau(B)} h(\bw,\be)\lambda_2(\dif \be)P_{\bW}(\dif \bw)  
 = \int_{A}\int_{B} h(\bw,\tau(\be))\lambda_2(\dif \be)P_{\bW}(\dif \bw),
\end{align*}
again for all $A\in\mathcal B_{\mathcal W}$ and $B\in\mathcal B_{\mathcal E}$,
where the first equality follows from Assumption \ref{assu:Exchangeability-TLS},
the second from the same argument as that leading to \eqref{eq: kruchu-verchu},
and the third from the change of variable formula.
Therefore, setting
$$
f_{\bepsilon\mid \bw}(\be) := \frac{h(\bw,\be) + h(\bw,\tau(\be))}{2},\quad\text{for all }(\bw,\be)\in\mathcal W\times\mathcal E,
$$
we thus obtain
$$
\int_A \kappa(\bw,B)P_{\bW}(\dif \bw) = \int_{A}\int_{B} f_{\bepsilon\mid\bw}(\be)\lambda_2(\dif \be)P_{\bW}(\dif \bw)
$$
for all $A\in\mathcal B_{\mathcal W}$ and $B\in\mathcal B_{\mathcal E}$. We
claim that
\begin{equation}\label{eq: dudley almost there}
\kappa(\bw,B) = \int_B f_{\bepsilon\mid\bw}(\be)\lambda_2(\dif \be),\quad\text{for all }B\in\mathcal B_{\mathcal E}\text{ for $P_{\bW}$-almost all $\bw$.}
\end{equation}
To prove this, fix $B\in\cB_\cE$ and consider the sets 
\[
  A:=A\left(B\right):=\left\{\bw\in\cW\middle|\kappa(\bw,B) \neq \int_B f_{\bepsilon\mid\bw}(\be)\lambda_2(\dif \be)\right\},
\]
and
\[
  A_m:=A_m\left(B\right):=\left\{\bw\in\cW\middle|\left|\kappa(\bw,B) - \int_B f_{\bepsilon\mid\bw}(\be)\lambda_2(\dif \be)\right|>\frac{1}{m}\right\},\quad m\in\N,
\]
so that $A=\cup_{m=1}^{\infty}A_m$. Decompose $A_m$ into two disjoint sets,
\begin{align*}
  A_m^{>0}&:=A_m^{>0}\left(B\right):=\left\{\bw\in\cW\middle|\kappa(\bw,B) > \int_B f_{\bepsilon\mid\bw}(\be)\lambda_2(\dif \be) + \frac{1}{m}\right\},\\
  A_m^{<0}&:=A_m^{<0}\left(B\right):=\left\{\bw\in\cW\middle|\kappa(\bw,B) < \int_B f_{\bepsilon\mid\bw}(\be)\lambda_2(\dif \be) - \frac{1}{m}\right\}.
\end{align*}
Since $B\in\cB_{\cE}$ and $A_m^{>0}\in\cB_{\cW}$, we know that
\[
  0 = \int_{A_m^{>0}}\left[\kappa(\bw,B) - \int_B f_{\bepsilon\mid\bw}(\be)\lambda_2(\dif \be)\right]P_{\bW}(\dif \bw) \geqslant \frac{1}{m}P_{\bW}(A_m^{>0}),
\]
which implies that $P_{\bW}(A_m^{>0}) = 0$ for all $m\in\N$. Similar reasoning
shows $P_{\bW}(A_m^{<0}) = 0$ for all $m\in\N$, and thus $P_{\bW}(A_m) = 0$ for
all $m\in\N$. Since $A\in\cB_{\cW}$ and $A=\bigcup_{m=1}^\infty A_m$, the union
bound implies that $P_{\bW}(A) = 0$. Hence, for any $B\in\cB_{\cE}$,
$\kappa(\cdot,B)$ and $\bw\mapsto\int_B f_{\bepsilon\mid\bw}(\be)\lambda_2(\dif
\be)$ can differ on at most a $P_{\bW}$-null set. We next argue that the
exceptional set can be chosen independently of $B\in\cB_{\cE}$. To this end,
consider the sets $\cC:=\cR\cup\{\cE\}$, where $\cR$ is short for the rational
rectangles
$
 \cR := \left\{\left(p,q\right] \times \left(r,s\right] \middle| p,q,r,s \in \mathbb{Q},p<q,r<s\right\}.
$
Since $\cC$ is closed under (non-empty) intersection, it is a $\pi$-system.
Since $\cC$ is countable, our previous calculation and the union bound combine
to show that $P_{\bW}(\cup_{B\in\cC} A(B))  = 0$. Fix
$\bw\in\cW\backslash\cup_{B\in\cC} A(B)$. Then the probability measure $B\mapsto
\kappa(\bw,B)$ and the (non-negative) measure $B\mapsto \int_B
f_{\bepsilon\mid\bw}(\be)\lambda_2(\dif \be)$ agree on the $\pi$-system $\cC$.
Since $\cE\in\cC$ implies
$
1 = \kappa(\bw,\cE) = \int_{\cE} f_{\bepsilon\mid\bw}(\be)\lambda_2(\dif \be),
$
we see that $B\mapsto \int_B f_{\bepsilon\mid\bw}(\be)\lambda_2(\dif \be)$ is
in fact a probability measure on $(\cE,\cB_{\cE})$ with $\be \mapsto
f_{\bepsilon\mid\bw}(\be)$ being the associated (Lebesgue) PDF. As these two
probability measures agree on the $\pi$-system $\cC$, and the rational
rectangles $\cR(\subset\cC)$ generate the Borel $\sigma$-algebra on $\cE$, it
follows that the two probability measures agree on all of $\cB_{\cE}$
\citep[Theorem 3.3]{billingsley1995probability}. We have thus established the
claim in \eqref{eq: dudley almost there}.

The function $(\bw,\be)\mapsto f_{\bepsilon\mid \bw}(\be)$ defined above
is measurable and satisfies $f_{\bepsilon\mid \bw}(e_1,e_2) = f_{\bepsilon\mid
\bw}(e_2,e_1)$ for all $\bw\in\mathcal W$ and $(e_1,e_2)\in\mathcal E$ by
construction. Also, for any $P$-integrable function $(\bw,\be)\mapsto
g(\bw,\be)$, \eqref{eq: i am exhausted} holds by \eqref{eq: dudley almost there}
and Theorem 10.2.1 in \citet{D04}. In addition,
$f_{\bepsilon\mid\bw}(\be)\in[0,\infty)$ for all $(\bw,\be)\in\mathcal
W\times\mathcal E$ by construction. 

For the third and final claim, observe that, as part of the argument
establishing \eqref{eq: dudley almost there}, we showed that $\be\mapsto
f_{\bepsilon\mid\bw}(\be)$ is a (Lebesgue) PDF for each
$\bw\in\cW\backslash\cup_{B\in\cC} A(B)$ with $\cup_{B\in\cC} A(B)$ being a
$P_{\bW}$-null set. For (the exceptional) $\bw\in\cup_{B\in\cC} A(B)$, if any,
we redefine $\be\mapsto f_{\bepsilon\mid\bw}(\be)$ as
$\be\mapsto\varphi(e_1)\varphi(e_2)$ with $\varphi$ being the standard normal
PDF. This modification completes the proof.
\end{proof}

\end{document}